\newcommand{\eat}[1]{}
\newcommand{\noi}{\noindent}
\newcommand{\Vol}{\mathrm{Vol}}
\long\def\comment#1{}
\begin{document}

\markboth{G. Yuan et al.}{Optimizing Batch Linear Queries under Exact and Approximate Differential Privacy}

\title{Optimizing Batch Linear Queries under Exact and Approximate Differential Privacy}
\author{
Ganzhao Yuan \affil{South China University of Technology, China}
Zhenjie Zhang \affil{Advanced Digital Sciences Center, Singapore}
Marianne Winslett \affil{Advanced Digital Sciences Center, Singapore; University of Illinois at Urbana-Champaign, USA}
Xiaokui Xiao \affil{Nanyang Technological University, Singapore}
Yin Yang \affil{Hamad Bin Khalifa University, Qatar}
Zhifeng Hao \affil{South China University of Technology, China; Guangdong University of Technology, China}
}

\begin{abstract}
Differential privacy is a promising privacy-preserving paradigm for
statistical query processing over sensitive data. It works by
injecting random noise into each query result, such that it is
provably hard for the adversary to infer the presence or absence of
any individual record from the published noisy results. The main
objective in differentially private query processing is to maximize
the accuracy of the query results, while satisfying the privacy
guarantees. Previous work, notably \cite{LHR+10}, has suggested that with an appropriate strategy, processing a batch of correlated
queries as a whole achieves considerably higher accuracy
than answering them individually. However, to our knowledge there is currently no practical solution to find such a strategy for an arbitrary query batch; existing methods either return strategies of poor quality (often worse than naive methods) or require prohibitively expensive computations for even moderately large domains. Motivated by this, we propose low-rank mechanism (LRM), the first practical differentially private technique for answering batch linear queries with high accuracy. LRM works for both exact (i.e., $\epsilon$-) and approximate (i.e., ($\epsilon$, $\delta$)-) differential privacy definitions. We derive the utility guarantees of LRM, and provide guidance on how to set the privacy parameters given the user's utility expectation. Extensive experiments using real data demonstrate that our proposed method consistently outperforms state-of-the-art query processing solutions under differential
privacy, by large margins.

\end{abstract}

\category{H.2.8}{Database Management}{Database Applications--Statistical databases}

\terms{Theory, Algorithms, Experimentation}

\keywords{Linear Counting Query, Differential Privacy, Low-Rank, Matrix Approximation, Augmented Lagrangian Multiplier Algorithm}

\acmformat{G. Yuan, Z. Zhang, M. Winslett, X. Xiao, Y. Yang, Z. Hao. Optimizing Batch Linear Queries under Exact and Approximate Differential Privacy}

\begin{bottomstuff}
Yuan is supported by NSF-61402182. Zhang, Winslett, Xiao are supported by SERC 102-158-0074 from Singapore's A*STAR. Xiao is also supported by SUG Grant M58020016
from Nanyang Technological University and AcRF Tier 2 Grant ARC19/14 from Ministry of Education, Singapore. Hao is supported by NSF-(61100148, 61202269, 61472089), and Key Technology Research and Development Programs of Guangdong Province (2010B050400011,2013B051000076). We thank Li and Miklau for sharing us code for the Adaptive Mechanism, and the anonymous referees for their valuable comments.

Author's addresses:
Yuan (corresponding author), Department of Mathematics, South China University of Technology, Guangzhou, China, ecgzhyuan@scut.edu.cn.
Zhang, Advanced Digital Sciences Center, Singapore, zhenjie@adsc.com.sg.
Winslett, Advanced Digital Sciences Center, Singapore; Department of Computer Science, University of Illinois at Urbana-Champaign, IL, USA, winslett@illinois.edu.
Xiao, School of Computer Engineering, Nanyang Technological University, Singapore, xkxiao@ntu.edu.sg.
Yang, Division of Information and Communication Technologies, College of Science, Engineering and Technology, Hamad Bin Khalifa University, Qatar, yyang@qf.org.qa.
Hao, Faculty of Computer Science, Guangdong University of Technology, China; School of Computer Science and Engineering, South China University of Technology, China, mazfhao@scut.edu.cn.

\end{bottomstuff}

\maketitle
\newpage
\section{Introduction}\label{sec:intro} 
Differential privacy \cite{DMNS06} is an emerging
paradigm for publishing statistical information over sensitive data,
with strong and rigorous guarantees on individuals' privacy. Since
its proposal, differential privacy has attracted extensive research
efforts, such as in cryptography \cite{DMNS06}, algorithms
\cite{DRV10,HT10,MT07}, database management \cite{DWHL11,HRMS10,LHR+10,RN10,XBHG11,XWG10,PYZWY13}, data mining
\cite{BLS+10,FS10}, social network analysis \cite{rastogi2009relationship,hay2009accurate,sala2011sharing} and machine learning \cite{BLR08,CMS11,RBHT09}.
The main idea of differential privacy is to inject random noise into
aggregate query results, such that the adversary cannot infer, with
high confidence, the presence or absence of any given record $r$ in
the dataset, even if the adversary knows all other records in the
dataset besides $r$. The adversary's maximum confidence in inferring private
information is controlled by a user-specified parameter $\epsilon$, called the \emph{privacy budget}. Given $\epsilon$, the main goal of query processing under differential privacy is to
maximize the utility/accuracy of the (noisy) query answers, while
satisfying the above privacy requirements.

This work focuses on a common class of queries called \emph{linear
counting queries}, which is the basic operation in many statistical
analyses. Similar ideas apply to other types of linear queries,
e.g., linear sums. Figure \ref{fig:intro-example}(a) illustrates an
example electronic medical record database, where each record
corresponds to an individual. Figure \ref{fig:intro-example}(b)
shows the exact number of HIV+ patients in each state, which we
refer to as \emph{unit counts}. A linear counting query in this
example can be any linear combination of the unit counts. For
instance, let $x_{NY}$, $x_{NJ}$, $x_{CA}$, $x_{WA}$ be the patient
counts in states NY, NJ, CA, and WA respectively; one possible
linear counting query is $x_{NY}+x_{NJ}+x_{CA}+x_{WA}$, which
computes the total number of HIV+ patients in the four states listed
in our example. Another example linear counting query is
$x_{NY}/19+x_{NJ}/8+x_{CA}/37$, which calculates the weighted
average of patient counts in states NY, NJ and CA, with weights set
according to their respective population sizes. In general, we are
given a database with $n$ unit counts, and a batch $QS$ of $m$
linear counting queries. The goal is to answer all queries in $QS$
under differential privacy, and maximize the expected
overall accuracy of the queries.

\begin{figure} [htb]
\centering
\begin{tabular}{cc}
{\includegraphics[height=0.2\textwidth]{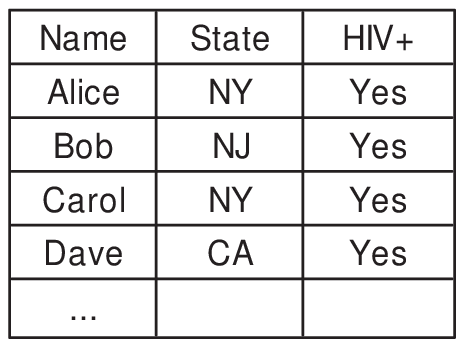}} &
{\includegraphics[height=0.2\textwidth]{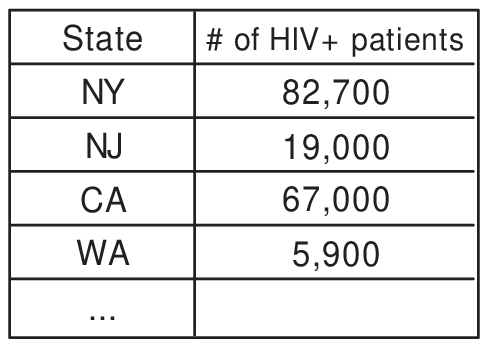}} \\
(a) Patient records & (b) Statistics on HIV+ patients
\end{tabular}
\caption{Example medical record database}
\label{fig:intro-example}
\end{figure}

Straightforward approaches to answering a batch of linear counting queries usually lead to sub-optimal result accuracy. Consider processing the query set $Q=\{q_1, q_2, q_3\}$ under the $\epsilon$-differential privacy definition, detailed in Section 3. One naive solution, referred to as \emph{noise on result} (\emph{NOR}), is to process each query independently, e.g., using the Laplace mechanism \cite{DMNS06}. This method fails to exploit the \emph{correlations} between different queries. Consider a batch of three different queries $q_1=x_{NY}+x_{NJ}+x_{CA}+x_{WA}$, $q_2=x_{NY}+x_{NJ}$, $q_3=x_{CA}+x_{WA}$. Clearly, the three queries are correlated since $q_1=q_2+q_3$. Thus, an alternative strategy for answering these queries is to process only $q_2$ and $q_3$, and use their sum to answer $q_1$. As will be explained in Section 3, the amount of noise added to query results depends upon the \emph{sensitivity} of the query set, which is defined as the maximum possible total change in query results caused by adding or removing a single record in the original database. Under $\epsilon$-differential privacy, the sensitivity of the query set $\{q_2, q_3\}$ is 1, because adding/removing a patient record in Figure \ref{fig:intro-example}(a) affects at most one of $q_2$ and $q_3$ (i.e., $q_2$ if the record is associated with state NY or NJ, and $q_3$ if the state is CA or WA), by exactly 1. On the other hand, the query set $\{q_1, q_2, q_3\}$ has a sensitivity of $2$ (under the $\epsilon$-differential privacy definition), since a record in the above 4 states affects both $q_1$ and one of $q_2$ and $q_3$. According to the Laplace mechanism, the variance of the added noise to each query is $2\Delta^2/\epsilon^2$, where $\Delta$ is the sensitivity of the query set, and $\epsilon$ is the user-specified privacy budget. Therefore, processing $\{q_1, q_2, q_3\}$ directly incurs a noise variance of $(2\times 2^2)/\epsilon^2$ for each query; on the other hand, executing $\{q_2, q_3\}$ leads to a noise variance of $(2\times 1^2)/\epsilon^2$ for each of $q_2$ and $q_3$, and their sum $q_1=q_2+q_3$ has a noise variance of $(2\times2)/\epsilon^2=4/\epsilon^2$. Clearly, the latter method obtains higher accuracy for all queries.

Another simple solution, referred to as \emph{noise on data} (\emph{NOD}), is to process each unit count under differential privacy, and combine them to answer the given linear counting queries. Continuing the example, this method computes the noisy counts for $x_{NY}$, $x_{NJ}$, $x_{CA}$ and $x_{WA}$, and uses their linear combinations to answer $q_1$, $q_2$, and $q_3$. This approach overlooks the correlations between different unit counts. In our example, $x_{NY}$ and $x_{NJ}$ (and similarly, $x_{CA}$ and $x_{WA}$) are either both present or both absent in every query, and, thus, can be seen as a single entity. Processing them as independent queries incurs unnecessary accuracy costs when re-combining them. In the example, NOD adds noise with variance $2/\epsilon^2$ to each unit count, and their combinations to answer $q_1$, $q_2$, and $q_3$ have noise variance $8/\epsilon^2$, $4/\epsilon^2$ and $4/\epsilon^2$, respectively. NOD's result utility is also worse than the above-mentioned strategy of processing $q_2$ and $q_3$, and adding their results to answer $q_1$.

In general, the query set $Q$ may exhibit complex correlations among different queries and among different unit counts. As a consequence, it is non-trivial to obtain the best strategy to answer $Q$ under differential privacy. For instance, consider the following query set:
\begin{eqnarray*}
q_1&=&2x_{NJ}+x_{CA}+x_{WA}\\
q_2&=&x_{NJ}+2x_{WA}\\
q_3&=&x_{NY}+2x_{CA}+2x_{WA}
\end{eqnarray*}

NOR is clearly a poor choice, since it incurs a sensitivity of 5 under the $\epsilon$-differential privacy definition
(e.g., a record of state WA affects $q_1$ by 1, and $q_2$ and $q_3$
by 2 each). The sensitivity of NOD remains 1, and it answers $q_1$,
$q_2$, and $q_3$ with noise variance $2\times(2^2+1^2+1^2)/\epsilon^2$,
$2\times(1^2+2^2)/\epsilon^2$ and $2\times(1^2+2^2+2^2)/\epsilon^2$ respectively, leading to a
sum-square error (SSE) of $40/\epsilon^2$. The optimal strategy in
terms of SSE in this case computes the noisy results of $q'_1 = x_{NY}/8 + x_{WA}$, $q'_2= -{3}x_{NY}/8 - x_{CA}$ and $q'_3 = x_{NY}/4 - x_{NJ}$.
Then, it obtains the results for $q_1$, $q_2$, and $q_3$ as follows.
\begin{eqnarray*}
q_1&=&q'_1 - q'_2 - 2q'_3\\
q_2&=&2q'_1 - q'_3\\
q_3&=&2q'_1 - 2q'_2
\end{eqnarray*}


The sensitivity of the above method is also 1, because (i) adding/removing a record of state NJ, CA and WA can only affect queries $q'_3$, $q'_2$ and $q'_1$, respectively, by at most 1; (ii) adding/removing a record of state NY causes the results of $q'_1$, $q'_2$ and $q'_3$ to change by at most 1/8, 3/8, and 1/4, respectively, leading to a maximum total change of 1/8+3/8+1/4=1. We introduce the formal definition of sensitivity later in Section \ref{sec:pre}. Hence, independent random noise of variance $2 \times 1^2 / \epsilon^2  = 2 / \epsilon^2$ is injected to the results of each of $q'_1$, $q'_2$ and $q'_3$. Their combination $q_1=q'_1 - q'_2 - 2q'_3$ thus has a noise variance of $2\times(1^2+(-1)^2+(-2)^2)/\epsilon^2 = 12 /\epsilon^2$. Similarly, combining $q'_1-q'_3$ to answer $q_2$ and $q_3$ as above incur a noise variance of $2\times(2^2+(-1)^2)/\epsilon^2 = 10/\epsilon^2$ and $2\times(2^2+(-2)^2)/\epsilon^2 = 16/\epsilon^2$ respectively. The SSE for queries $q_1-q_3$ is thus $12 /\epsilon^2 + 10 /\epsilon^2 + 16 /\epsilon^2 = 38 /\epsilon^2$.

Observe that the there is no simple pattern in
the query set or the optimal strategy. Since there is an infinite
space of possible strategies, searching for the best one is a
challenging problem.

Li et al. \cite{LHR+10} first formalize the above observations (i.e., answering a correlated query set with an effective strategy) into the \emph{matrix mechanism}. However, as we explain in Section \ref{sec:related:matrix}, the original matrix mechanism lacks a practical implementation, because the solutions in \cite{LHR+10} for finding a good strategy are either inefficient (which incur prohibitively high computational costs for even moderately large domains), or ineffective (which rarely obtain strategies that outperform naive methods NOD/NOR). Later, Li and Miklau \cite{li2012adaptive} propose the adaptive mechanism, which can be seen as an implementation of the matrix mechanism. This method, however, still incurs some drawbacks as discussed in Section \ref{sec:related:matrix}, which limit its accuracy. Motivated by this, we
propose the first practical realization of the matrix mechanism, called the \emph{low-rank mechanism} (\emph{LRM}), based on the theory of low-rank matrix approximation. LRM applies to both $\epsilon$-differential privacy and ($\epsilon$, $\delta$)-differential privacy, two most commonly used differential privacy definitions today. We analyze the utility of LRM under ($\xi$, $\eta$)-usefulness \cite{BLR08}, a popular utility measure. Extensive experiments demonstrate that LRM significantly outperforms existing solutions in terms of result accuracy, sometimes by orders of magnitude.


The rest of the paper is organized as follows. Section
\ref{sec:related} reviews previous studies on differential privacy.
Section \ref{sec:pre} provides formal definitions for our problem.
Section \ref{sec:formulation} presents the mechanism formulation of
LRM under $\epsilon$-differential privacy. Section \ref{sec:opt} discusses
how to solve the optimization problem in LRM. Section \ref{sec:approx} extends LRM to answer queries under $(\epsilon,\delta)$-differential privacy. Section \ref{sec:exp}
verifies the superiority of our proposal through an extensive
experimental study. Finally, Section \ref{sec:concl} concludes the
paper.


\section{Related Work}\label{sec:related}

Section \ref{sec:related:general} surveys general-purpose mechanisms for enforcing differential privacy. Section \ref{sec:related:matrix} presents two methods that are closely related to the proposed solution, namely the matrix mechanism and the adaptive mechanism.

\subsection{Differential Privacy Mechanisms}\label{sec:related:general}

Differential privacy was first formally presented in \cite{DMNS06},
though some previous studies have informally used similar models,
e.g., \cite{DN03}. The Laplace mechanism \cite{DMNS06} is the first generic mechanism for enforcing differential privacy, which works
when the output domain is a multi-dimensional Euclidean space. McSherry and Talwar \cite{MT07} propose the exponential mechanism, which applies to
any problem with a measurable output space. The generality of the exponential mechanism makes it an important tool in the design of many other
differentially private algorithms, e.g., \cite{CPS+12,XZXYY12,XZXYY13,MT07}.


The original definition of differential privacy is $\epsilon$-differential privacy, which focuses on providing a strong and rigorous definition of privacy. Besides this, another popular definition is ($\epsilon$, $\delta$)-differential privacy, which can be seen as an approximate version of $\epsilon$-differential privacy. In many applications, ($\epsilon$, $\delta$)-differential privacy provides a similarly strong privacy definition, while enabling simpler and/or more accurate algorithms. One basic mechanism for enforcing ($\epsilon$, $\delta$)-differential privacy is the Gaussian mechanism, which injects Gaussian noise to the query results calibrated to the $\mathcal{L}_2$ sensitivity of the queries \cite{DworkKMMN06}. \cite{hardt2012beating} employ $k$ Gaussian measurements strategy to compute the low rank approximations of large matrices. However, ($\epsilon$, $\delta$)-differential privacy might be unsatisfactory in certain situations. For example, \cite{de2012lower} demonstrate that ($\epsilon$, $\delta$)-differential privacy is weaker than $\epsilon$-differential privacy in terms of mutual information even when $\delta$ is negligible. The proposed solution applies to both definitions of differential privacy. We present details of these two privacy definitions in Section \ref{sec:pre}.

Linear query processing is of particular interest in both the theory and database communities, due to its wide range of applications. To minimize the error of linear queries under differential privacy requirements, several methods try to build a synopsis of the original database, such as Fourier transformations \cite{RN10}, wavelets \cite{XWG10} and hierarchical trees \cite{HRMS10}. The compressive mechanism \cite{LZMY11} reduces the amount of noise necessary to satisfy differential privacy, for datasets with a sparse representation. By publishing a noisy synopsis under $\epsilon$-differential privacy, these methods are capable of answering an arbitrary number of linear queries. However, most of these methods obtain good accuracy only when the query selection criterion is a continuous range; meanwhile, since these methods are not workload-aware, their performance for a specific workload tends to be sub-optimal.

Workload-aware algorithms address this problem, which optimize the overall accuracy of a set of given linear queries. This work falls into this category. Notable workload-aware methods include (i) Multiplicative Weights / Exponential Mechanism (MWEM) \cite{hardt2012simple},(ii) the Matrix Mechanism \cite{LHR+10} and (iii) the Adaptive Mechanism \cite{li2012adaptive}. MWEM publishes a synthetic dataset optimized towards the given linear query set. In particular, it provides a beautiful theoretical bound on the maximum error of the given queries, which grows sublinearly to the number of records in the dataset, and logarithmically with the number of queries. In practice, however, this bound tends to be loose as it is derived from worst-case scenarios. {Meanwhile, the target problem of MWEM is different from ours, as we focus on answering a given set of linear queries rather than publishing synthetic data. Nevertheless, MWEM can be applied to our problem, and we compare it against the proposed solution in the experiments.} The Matrix Mechanism and the Adaptive Mechanism share some common features as the proposed solution, and we explain them in detail in Section \ref{sec:related:matrix}. 2.2. It is worth mentioning that as our experiments shows, the proposed solution outperforms all previous methods in terms of overall error, on a variety of datasets and workload types.

Recently, \cite{nikolov2013geometry} proposes a workload decomposition method that injects \emph{correlated} Gaussian noise to the query results to satisfy ($\epsilon$, $\delta$)-differential privacy. They prove that their solution provides an $\mathcal{O}((\log m)^2)$ approximation to the optimal mechanism, where $m$ is the number of queries. However, this method is infeasible in practice, since it involves computing minimum enclosing ellipsoids (MEE), for which the current best algorithm takes $m^{\mathcal{O}(m)} n$ time, where $n$ is the number of unit counts. \cite{nikolov2013geometry} suggests using approximation method for computing MEE, e.g. Khachiyan's algorithm \cite{Todd20071731}. This approximation algorithm still takes high order polynomial time to converge, which makes it prohibitively expensive for practical applications.

Several theoretical studies have derived lower bounds for the noise level for processing linear queries under differential privacy \cite{DN03,HT10}. Notably, Dinur and Nissim \cite{DN03} prove that any perturbation mechanism with maximal noise of scale $\mathcal{O}(n)$ cannot possibly preserve personal privacy, if the adversary is allowed to ask all possible linear queries, and has exponential computation capacity. By reducing the computation capacity of the adversary to polynomial-bounded Turing machines, they show that an error scale $\Omega(\sqrt{n})$ is necessary to protect any individual' privacy. More recently, Hardt and Talwar \cite{HT10} have significantly
tightened the error lower bound for answering a batch of linear
queries under differential privacy. Given a batch of $m$ linear
queries, they prove that any $\epsilon$-differential privacy
mechanism leads to squared error of at least
$\Omega(\epsilon^{-2}m^3\Vol(W))$, where $\Vol(W)$ is the volume of
the convex body obtained by transforming the $\mathcal{L}_1$-unit
ball into $m$-dimensional space using the linear transformations in
the workload $W$. This paper extends their
analysis to low-rank workload matrices.

Another related line of research concerns answering queries \emph{interactively} under differential privacy. In this setting, the system process queries one at a time, without knowing any future query. Clearly, this problem is more difficult that the non-interactive setting described so far, where the system knows all queries in the workload in advance. Most notably, Hardt et al. propose the Private Multiplicative Weights Mechanism (PMWM) \cite{hardt2010multiplicative}, whose error is asymptotically optimal with respect to the number of queries answered. The MWEM method described above \cite{hardt2012simple} applies similar ideas to the non-interactive setting. Besides PMWM, Hardt et al. \cite{HT10} propose the $K$-norm Mechanism whose error level almost reaches the lower bound derived in the same paper. Roth et al. introduce the Median Mechanism \cite{RothR10} for answering arbitrary queries interactively. However, both the $K$-norm Mechanism and the Median Mechanism rely on uniform sampling in a high-dimensional convex body \cite{DyerFK91}, which theoretically takes polynomial time, but is usually too expensive to be applied in practice.

Besides linear queries, differential privacy is also applicable to more complex queries in various research areas, due to its strong privacy guarantee. In the field of data mining, Friedman and Schuster \cite{FS10} propose the first algorithm for building a decision tree under differential privacy. Mohammed et al. \cite{MCF+11} study the same problem, and propose an improved solution based on a generalization strategy coupled with the exponential mechanism. Ding et al. \cite{DWHL11} investigate the problem of differentially private data cube publication. They present a randomized materialized view selection algorithm, which reduces the overall error, and preserves data consistency.

In the database literature, a plethora of methods have been proposed to optimize the accuracy of differentially private query processing. A tutorial on database-related differential privacy technologies can be found in \cite{YZMWX12}. Cormode et al. \cite{CPS+12} investigate the problem of multi-dimensional indexing under differential privacy, with the novel idea of assigning different amounts of privacy budget to different levels of the index. Peng et al.
\cite{PYZWY12} propose the DP-tree, which obtains improved accurate for higher dimensional data. Xu et al. \cite{XZXYY12,XZXYY13} optimize the procedure of building a differentially private histogram, whose method combines dynamic programming for optimal histogram computation and the exponential mechanism. \cite{LiQSC12} study the problem of how to perform frequent itemset mining on transaction databases while satisfying differential privacy, with the novel approach of constructing a basis set and then using it to find the most frequent patterns.

In addition, differential privacy for modeling security in social networks has also received much attention in recent literature. \cite{rastogi2009relationship} considers answering subgraph counting queries in a social network. Their solution assumes a Bayesian adversary whose prior is drawn from a distribution. They compute a high probability upper bound on the local sensitivity of the data and then answer by adding noise proportional to that bound. \cite{hay2009accurate} shows how to privately approximate the degree distribution in the edge adjacency model of a graph. Also, \cite{sala2011sharing} develop a differentially private graph model based on \emph{dk-series} reconstruction. Their approach mainly extracts a graph's detailed structure into degree correlation statistics and inject noise into the resulting dataset and generates a synthetic graph.

Lastly, differential privacy is also becoming a hot topic in the machine
learning community, especially for learning tasks involving
sensitive information, e.g., medical records. In \cite{CMS11},
Chaudhuri et al. propose a generic differentially private learning
algorithm, which requires strong convexity of the objective
function. Rubinstein et al. \cite{RBHT09} study the problem of SVM
learning on sensitive data, and propose an algorithm to perturb the
kernel matrix with performance guarantees, when the gradient of the loss function
satisfies the Lipschitz continuity property. Zhang et al. propose functional mechanism and for a large class of
optimization-based analyses \cite{zhang2012functional}. Later, they propose the PrivGene framework, which combines genetic algorithms and an enhanced version of exponential mechanism for differentially private model fitting \cite{zhang2013privgene}. General differential privacy techniques have also been applied to real
systems, such as network trace analysis \cite{MM10} and private recommender systems \cite{MM09}.

\subsection{Matrix Mechanism and Adaptive Mechanism}\label{sec:related:matrix}

In the seminal work of \cite{LHR+10}, Li et al. propose the matrix mechanism (MM), which formalizes the intuition that a batch of correlated linear queries can be answered more accurately under $\epsilon$-differential privacy, by processing a different set of queries (called the \emph{strategy}) and combining their results. Specifically, given a workload of linear counting queries, MM first constructs a \emph{workload matrix} $W$ of size $m$$\times$$n$, where $m$ is the number of queries, and $n$ is the number of unit counts. The construction of the workload matrix is elaborated further in Section 3. After that, MM searches for a \emph{strategy matrix} $A$ of size $r$$\times$$n$, where $r$ is a positive integer. Intuitively, $A$ corresponds to another set of linear queries, such that every query in $W$ can be expressed as a linear combination of the queries in $A$. The matrix mechanism then answers the queries in $A$ under $\epsilon$-differential privacy, and subsequently uses their noisy results to answer queries in $W$.

The main challenge for applying the matrix mechanism to practical workloads is to identify an appropriate strategy matrix $A$. Ref. \cite{LHR+10} provides two algorithms for this purpose. The first, based on iteratively solving a pair of related semidefinite programs, incurs $\mathcal{O}(m^3n^3)$ computational overhead, which is prohibitively expensive even for moderately large values of $m$ and $n$. The second solution (called \emph{approximate matrix mechanism} (\emph{AMM})) computes an $\mathcal{L}_2$ approximation of the optimal strategy matrix $A$. This method, though faster than the first one, still requires high CPU costs and memory consumption, and scales poorly with the domain size and query set cardinality. In order to test the approximate matrix mechanism with large data and query sets in our experiments, we have devised an improved solution, which we call the \emph{exponential smoothing mechanism} (\emph{ESM}), based on the problem formulation of approximate matrix mechanism in \cite{LHR+10}. ESM is at least as accurate as the method in \cite{LHR+10}, and yet much more efficient. Hence, in our experiments we use ESM in place of AMM. Appendix \ref{sec:appedinex:esm} provides details of ESM.

There are, however, two main drawbacks of ESM (and also vanilla AMM). First, the $\mathcal{L}_2$ approximation of the optimal strategy matrix often has poor quality. In fact, due to this problem, in our experiments we found that under $\epsilon$-differential privacy, the accuracy of ESM is often no better than the naive solution NOD that injects noise directly into the unit counts. A second and more subtle problem is that the formulation of the optimization program in AMM involves matrix inverse operators, which can cause numerical instability when the final solution (i.e., the strategy matrix) is of low rank, as explained in Appendix \ref{sec:appedinex:esm}. The proposed low-rank mechanism avoids both problems, and achieves significantly higher result accuracy as shown in our experiments.

The idea of matrix mechanism naturally extends to ($\epsilon$, $\delta$)-differential privacy, using the Gaussian mechanism instead of the Laplace mechanism as the fundamental building block. In this case, the optimization program is defined using $\mathcal{L}_2$ form, and the AMM formulation is equivalent to that of MM, meaning that AMM and ESM now solve the exact optimization program. Hence, in theory, AMM can obtain optimal results. However, in practice, both ESM and the AMM implementation in \cite{LHR+10} often fail to converge to the optimal strategy matrix, due to numerical instability incurred by the matrix inverse operator in the AMM formulation.

Recently, \cite{li2012adaptive} Li et al. propose another implementation of AMM, called the adaptive mechanism (AM). For any given workload $W$, AM attempts to find the best strategy matrix by computing the optimal nonnegative weights for the eigenvectors of the workload matrix $W$. Since the strategy matrix may have one or more columns whose $\mathcal{L}_2$-norm are less than the sensitivity, they refine the strategy matrix by appending some completing columns to the candidate strategy matrix without raising the sensitivity. Therefore, this post-processing step can reduce the expected error. AM incurs two serious drawbacks. First, it involves solving a complicated semidefinite program, and it is not known whether their solution to the program converges to the optimal solution. Second and more importantly, such multistep strategy in AM does not offer any guarantee on optimality. The proposed method LRM is free from these problems, and obtains significantly better performance as we show in the experiments. Appendix \ref{sec:appedinex:am} provides details of AM.

\section{Preliminaries}\label{sec:pre}

We focus on answering a batch of linear counting queries $Q=\{q_1,q_2,\ldots,q_m\}$ over a sensitive database $D$. Each query $q_i \in Q$ is a linear combination of \emph{unit counts} in the data domain, denoted as $x_1,x_2,\ldots,x_n$. In the example of Figure \ref{fig:intro-example}, the sensitive database $D$ contains records corresponding to individual HIV+ patients; each unit count is the number of such patients in a state of the US; each query in the example is a linear combination of these state-level patient counts. Our goal is to answer $Q$ with minimum overall error, while satisfying differential privacy. In particular, we consider two definitions of differential privacy, namely $\epsilon$-differential privacy (i.e., the original definition of differential privacy) and ($\epsilon$,$\delta$)-differential privacy (a popular formulation of approximate differential privacy). Our solutions use the Laplace mechanism (resp., the Gaussian mechanism) as a fundamental building block to enforce $\epsilon$- (resp., ($\epsilon$, $\delta$)-) differential privacy. In the following, Section 3.1 presents the definition of $\epsilon$-differential privacy and the Laplace mechanism. Section 3.2 covers ($\epsilon$, $\delta$)-differential privacy and the Gaussian mechanism. Section 3.3 describes naive approaches to answering a batch of linear counting queries. Section 3.4 explains important properties of low-rank matrices that are used in our solutions. Table \ref{tab:notation} summarizes frequently used notations throughout the paper.
\begin{table}[!t]
\tbl{Summary of frequent notations\label{tab:notation}}{%
\begin{tabular}{|c|c|}
\hline
Symbol & Meaning\\
\hline
\hline $D$ & input database \\
\hline $n$ & number of unit counts \\
\hline $Q$ & input query set\\
\hline $m$ & number of queries in $Q$ \\
\hline $W$ & workload matrix, i.e., the matrix representation of $Q$ \\
\hline $B,L$ & a decomposition of $W$ satisfying $W\approx B\cdot L$\\
\hline $s$ & rank of workload matrix $W$\\
\hline $r$ & number of columns in $B$ (also number of rows in $L$) \\
\hline $Q(D)$ & exact answer of $Q$ on database $D$ \\
\hline $\Delta(Q)$ & $\mathcal{L}_1$ sensitivity of $Q$\\
\hline $\Theta(Q)$ & $\mathcal{L}_2$ sensitivity of $Q$\\
\hline $\epsilon,\delta$ & privacy parameters \\
\hline $\xi,\eta$ & utility parameters \\
\hline $\kappa(W)$ & generalized condition number of matrix $W$\\
\hline $\rho(W)$ & $\rho$-coherence of matrix $W$\\
\hline $|||X|||_1$ & maximum absolute column sum of matrix $X$ \\
\hline $|||X|||_2$ & spectral norm, maximum singular value of matrix $X$ \\
\hline $|||X|||_{\infty}$ & maximum absolute row sum of matrix $X$ \\
\hline $\|X\|_{*}$ & nuclear norm, sum of the singular values of matrix $X$ \\
\hline $\|X\|_{F}$ & Frobenius norm, square root of the sum of squared elements of matrix $X$\\
\hline
\end{tabular}}
\end{table}

\subsection{$\epsilon$-Differential Privacy and the Laplace Mechanism}

The basic idea behind the privacy guarantee of differential privacy is the indistinguishability between \emph{neighbor databases}. Two databases $D$ and $D'$ are called neighbor databases, iff. $D'$ can be obtained by adding or removing exactly one record from $D$. In the example of Figure \ref{fig:intro-example}, a neighbor database can be obtained by removing an individual from the original data, or by adding another one. For linear counting queries, the essential difference between two neighbor databases $D$ and $D'$ is that they differ on exactly one unit count, by exactly one. Formally, let $\{x_1,x_2,\ldots,x_n\}$ be the set of unit counts corresponding to $D$ and $\{x'_1,x'_2,\ldots,x'_n\}$ be the unit counts for $D'$. Then, there exists an $i$, $1 \leq i \leq n$, such that $x_j = x'_j$ for all $j \neq i$, and $|x_i-x'_i|=1$.

Given a set of queries $Q$, a randomized mechanism $M$ for answering $Q$ satisfies $\epsilon$-differential privacy, iff. for every possible pair of
neighbor databases $D$ and $D'$, the following inequality holds:

\begin{equation}
\forall R:~ \Pr(M(Q,D)= R)\leq e^{\epsilon} \Pr(M(Q,D')= R)
\end{equation}
where $R$ is any possible output of $M$, and $M(Q, D)$ (resp. $M(Q, D')$) is the output of $M$ given query set $Q$ and input database $D$ (resp., $D'$). This inequality indicates that given an output $R$ of $M$, the adversary can only have limited confidence for inferring whether the input database is $D$ or $D'$, regardless of his/her background knowledge. Since $D$ and $D'$ can be any two neighbor databases that differ in any record, the above inequality also limits the adversary's confidence for inferring the presence or absence of a record in the input database; hence, it provides plausible deniablity to any individual involved in the sensitive data.

The Laplace mechanism \cite{DMNS06} is a fundamental solution for enforcing $\epsilon$-differential privacy, based on the concept of \emph{$\mathcal{L}_1$ sensitivity}. Given a query set $Q$, its $\mathcal{L}_1$ sensitivity $\Delta(Q)$ is the maximum $\mathcal{L}_1$ distance between the exact results of $Q$ on any pair of neighbor databases $D$ and $D'$, Formally, we have:

\begin{equation}
\Delta(Q)=\max_{D,D'}\|Q(D),Q(D')\|_1
\end{equation}

Note that in the above equation, $D$ and $D'$ can be any pair of neighbor databases. Hence, $\Delta(Q)$ is a property of the query set $Q$ and the data domain, and it does not depend upon the actual sensitive data $D$. In the example of Figure \ref{fig:intro-example}, the $\mathcal{L}_1$ sensitivity of a single query $q_1=x_{NY}+x_{NJ}+x_{CA}+x_{WA}$ is 1, because any two neighbor databases $D$ and $D'$ differ on only one unit count (which can be one of $x_{NY}$, $x_{NJ}$, $x_{CA}$ or $x_{WA}$) by exactly 1. If we include $q_2=x_{NY}+x_{NJ}$ and $q_3=x_{CA}+x_{WA}$ to the query set $Q$, the $\mathcal{L}_1$ sensitivity of $Q=\{q_1, q_2, q_3\}$ is 2, because a change of 1 on any of $x_{NY}$, $x_{NJ}$, $x_{CA}$ or $x_{WA}$ affects the result of $q_1$ by 1, and either one (but not both) of $q_2$ and $q_3$ by 1, leading to a $\mathcal{L}_1$ distance of 2.

Given a database $D$ and a query set $Q$, the Laplace mechanism (denoted as $M_{Lap}$) outputs a randomized result set $R$ that follow the Laplace distribution with mean $Q(D)$ and scale $\frac{\Delta(Q)}{\epsilon}$, i.e.,

\begin{equation}
\Pr(M_{Lap}(Q,D)=R)\propto \exp\left(\frac{\epsilon}{\Delta(Q)}\|R-Q(D)\|_1\right)
\end{equation}

This is equivalent to adding independent Laplace noise to the exact result of each query in $Q$, i.e., $M(Q, D) = Q(D)+ Lap\left(\frac{\Delta(Q)}{\epsilon}\right)^m$, where $m$ is the number of queries in $Q$, and $Lap\left(\frac{\Delta(Q)}{\epsilon}\right)$ is a random variable
following zero-mean Laplace distribution with scale
$\lambda=\frac{\Delta(Q)}{\epsilon}$. The probability density function
of zero-mean Laplace distribution is:
\begin{equation}\label{pdf:laplace}
f(x) = \frac{1}{2\lambda} \exp\left(-\frac{\|x\|_1}{\lambda}\right)
\end{equation}

According to properties of the Laplace distribution, the variance of $Lap(\lambda)$ is $2\lambda^2 = \frac{2\Delta(Q)^2}{\epsilon^2}$. Since the Laplace noise injected to each of the $m$ query results is independent, the overall expected squared error of the query answers obtained by the Laplace mechanism is $\frac{2m\Delta(Q)^2}{\epsilon^2}$. In our running example in Figure \ref{fig:intro-example}, to answer the query set $Q=\{q_1=x_{NY}+x_{NJ}+x_{CA}+x_{WA}, q_2=x_{NY}+x_{NJ}, q_3=x_{CA}+x_{WA}\}$ under $\epsilon$-differential privacy, a direct application of the Laplace mechanism injects independent, zero-mean Laplace noise of scale $\frac{2}{\epsilon}$ to the exact result of each of $q_1$, $q_2$ and $q_3$, since the $\mathcal{L}_1$ sensitivity for this set of queries is 2, as discussed in Section \ref{sec:intro}. The overall squared error for $Q$ is thus $\frac{2 \times 3 \times 2^2}{\epsilon^2} = \frac{24}{\epsilon^2}$.

\subsection{($\epsilon$, $\delta$)-Differential Privacy and the Gaussian Mechanism}

$\epsilon$-differential privacy can be difficult to enforce, especially for queries with high $\mathcal{L}_1$ sensitivity, or those whose $\mathcal{L}_1$ sensitivity is difficult to analyze. Hence, relaxed versions of $\epsilon$-differential privacy have been studied in the past, among which a popular definition is the ($\epsilon$, $\delta$)-differential privacy, also called approximate differential privacy. This definition involves an additional parameter $\delta$, which is a non-negative real number controlling how closely this definition approximates $\epsilon$-differential privacy. Formally, let $Range(M)$ be the set of all possible outputs of a mechanism $M$. A randomized mechanism $M$ satisfies ($\epsilon$, $\delta$)-differential privacy, iff. for any two neighbor databases $D$ and $D'$, the following holds:

\begin{equation}
\forall \textbf{R} \subseteq Range(M):~ \Pr(M(Q,D) \in \textbf{R})\leq e^{\epsilon} \Pr(M(Q,D') \in \textbf{R}) + \delta
\end{equation}
where $\textbf{R}$ is any set of possible results of $M$. It can be derived that when $\delta=0$, ($\epsilon$, $\delta$)-differential privacy is equivalent to $\epsilon$-differential privacy. Accordingly, since $\delta$ is non-negative, any mechanism that satisfies $\epsilon$-differential privacy also satisfies ($\epsilon$, $\delta$)-differential privacy for any value of $\delta$. When $\delta>0$, ($\epsilon$, $\delta$)-differential privacy relaxes $\epsilon$-differential privacy by ignoring outputs of $M$ with very small probability (controlled by parameter $\delta$). In other words, an ($\epsilon$, $\delta$)-differentially private mechanism satisfies $\epsilon$-differential privacy with a probability controlled by $\delta$.

A basic mechanism for enforcing ($\epsilon$, $\delta$)-differential privacy is the \emph{Gaussian mechanism} \cite{dwork2006our}, which involves the concept of \emph{$\mathcal{L}_2$ sensitivity}. For any two neighbor databases $D$ and $D'$, the $\mathcal{L}_2$ sensitivity $\Theta(Q)$ of a query set $Q$ is defined as:

\begin{equation}\label{eq:L2sensitivity}
\Theta(Q) = \max_{D,D'} \|Q(D),Q(D')\|_2
\end{equation}

In the running example shown in Figure \ref{fig:intro-example}, the $\mathcal{L}_2$ sensitivity for the query set $Q=\{q_1=x_{NY}+x_{NJ}+x_{CA}+x_{WA}, q_2=x_{NY}+x_{NJ}, q_3=x_{CA}+x_{WA}\}$ is $\sqrt{2}$, since the exact results of $q_1$ (as well as one of $q_2$ and $q_3$) differ by at most 1 for any two neighbor databases, leading to an $\mathcal{L}_2$ sensitivity of $\sqrt{1^2+1^2}=\sqrt{2}$. Similar to $\mathcal{L}_1$ sensitivity, the $\mathcal{L}_2$ sensitivity $\Theta(Q)$ depends on the data domain $\mathds{D}$ and the query set $Q$, not the actual data. Given a database $D$ and a query set $Q$, the Gaussian mechanism (denoted by $M_{Gau}$) outputs a random result that follows the Gaussian distribution with mean $Q(D)$ and magnitude $\sigma = \frac{\Theta(Q)}{h(\epsilon,\delta)}$, where $h(\epsilon,\delta)=\frac{\epsilon}{\sqrt{8 \ln (2/ \delta )}}$. This is equivalent to adding $m$-dimensional independent Gaussian noise $Gau \left(\frac{\Theta(Q)}{h(\epsilon,\delta)}\right)^m$, in
which $Gau \left(\frac{\Theta(Q)}{h(\epsilon,\delta)}\right)$ is a random variable
following a zero-mean Gaussian distribution with scale
$\sigma=\frac{\Theta(Q)}{h(\epsilon,\delta)}$. The probability density function
of zero-mean Gaussian distribution is:

\begin{equation}\label{pdf:gaussian}
g(x) = \sqrt{ \frac{1}{2\pi \sigma^2}}
\exp \left(-\frac{\|x\|_2^2}{2\sigma^2}\right)
\end{equation}

According to properties of the Gaussian distribution, the variance of $Gau(\sigma)$ is $\sigma^2 = \frac{\Theta(Q)^2}{h(\epsilon,\delta)^2}$. Since independent Gaussian noise is injected to each of the $m$ query results, the total expected squared error for the query set is $\frac{m\Theta(Q)^2}{h(\epsilon,\delta)^2}$. In our running example in Figure \ref{fig:intro-example}, to answer the query set $Q=\{q_1=x_{NY}+x_{NJ}+x_{CA}+x_{WA}, q_2=x_{NY}+x_{NJ}, q_3=x_{CA}+x_{WA}\}$ under ($\epsilon$, $\delta$)-differential privacy, a direct application of the Gaussian mechanism injects independent, zero-mean Laplace noise of scale $\frac{\sqrt{2}}{h(\epsilon, \delta)}$ to the exact result of each of $q_1$, $q_2$ and $q_3$, since the $\mathcal{L}_2$ sensitivity for this set of queries is $\sqrt{2}$, according to Equation (\ref{eq:L2sensitivity}). The overall squared error for $Q$ is thus $\frac{3 \times (\sqrt{2})^2}{(h(\epsilon, \delta))^2} = \frac{48\ln(2/\delta)}{\epsilon^2}$.

\subsection{Naive Solutions for Answering a Batch of Linear Counting Queries}\label{sec:pre:query}

This paper focuses on answering a batch of linear counting queries, each of which is a linear combination of the unit counts of the input database $D$. Formally, given a weight vector $(w_1,w_2,\ldots,w_n)^T\in \mathbb{R}^{n}$, a linear counting query can be expressed as:

$$q(D)=w_1x_1+w_2x_2+\ldots+w_nx_n$$

We aim to answer a batch of $m$ linear queries, $Q=\{q_1,q_2,\ldots,q_m\}$. The query set $Q$
thus can be represented by a \emph{workload matrix} $W$ with $m$ rows
and $n$ columns. Each entry $W_{ij}$ in $W$ is the weight in query $q_i$ on the $j$-th unit count $x_j$. Since we do not use any other information of the input database $D$ besides the unit counts, in the following we abuse the notation by using $D$ to represent the vector of unit counts, i.e., $D=(x_1,x_2,\ldots,x_n)^T\in \mathbb{R}^{n}$. Hence, the query batch $Q$
can be answered by:
$$Q(D)=W D=\left(\sum_j W_{1j}x_j,\ldots,\sum_j W_{mj}x_j\right)^T\in\mathbb{R}^{m\times1}$$

Two naive solutions for enforcing differential privacy on a query batch are as follows.

\noindent\textbf{Noise on data (NOD). } The main idea of NOD is to add noise to each unit count. Then, the set of noisy unit counts are published, which can be used to answer any linear counting query. Because two neighbor databases differ on exactly one unit count, by exactly 1, both the $\mathcal{L}_1$ and the $\mathcal{L}_2$ sensitivity for the set of unit counts is 1, according to their respective definitions. NOD employs the Laplace mechanism to enforce $\epsilon$-differential privacy (or the Gaussian mechanism to enforce ($\epsilon$, $\delta$)-differential privacy) on the published unit counts, and then combines the noisy unit counts to answer the query batch $Q$. Let $M_{NOD, \epsilon}$ and $M_{NOD, (\epsilon, \delta)}$ denote the NOD mechanism for enforcing $\epsilon$-differential privacy and $(\epsilon, \delta)$-differential privacy, respectively. We have:

\begin{equation}
M_{NOD, \epsilon}(Q,D) = W\left(D+Lap\left(\frac{1}{\epsilon}\right)^n\right) \nonumber
\end{equation}

\begin{equation}\label{eqn:m_d:app}
M_{NOD, (\epsilon, \delta) (Q, D)}= W \left(D+ Gau \left( \frac{1}{ h(\epsilon,\delta)} \right)^n\right) \nonumber
\end{equation}
where $h(\epsilon,\delta)=\frac{\epsilon}{\sqrt{8 \ln (2/ \delta )}}$ as in the Gaussian mechanism.

Based on the analysis of the Laplace and Gaussian mechanisms, the expected squared error for $M_{NOD, \epsilon}$ and $M_{NOD, (\epsilon, \delta)}$ is
$\frac{2}{\epsilon^2}\sum_{i,j}W_{ij}^2$ and $\frac{1}{(h(\epsilon,\delta))^2}\sum_{i,j}W_{ij}^2$, respectively. For both privacy definitions, the error of NOD is
proportional to the squared sum of the entries in $W$.

\noindent\textbf{Noise on results (NOR). } NOR simply applies the Laplace mechanism (for $\epsilon$-differential privacy) or the Gaussian mechanism (for ($\epsilon$, $\delta$)-differential privacy) directly on the query set $Q$. Recall that each query $q_i \in Q$ is a linear combination of the unit counts, i.e., $q_i = \sum_j W_{ij}x_j$. Meanwhile, two neighbor databases differ on exactly one unit count, by exactly 1. Therefore, the sensitivity (both $\mathcal{L}_1$ and $\mathcal{L}_2$) of $q_i$ is $\max_j W_{ij}$, i.e., the maximum unit count weight in $q_i$. Regarding $Q$, its $\mathcal{L}_1$ sensitivity is $\Delta(Q)=\max_j \sum_i |W_{ij}|$, i.e., the highest column absolute sum \cite{LHR+10}. Similarly, its $\mathcal{L}_2$ sensitivity is $\Theta(Q) = \max_j \sqrt{\sum_i{W_{ij}^2}}$, i.e., the highest column $\mathcal{L}_2$ norm value\cite{LHR+10}. Thus, $M_{NOR, \epsilon}$ and $M_{NOR, (\epsilon, \delta)}$ output the following results.

\begin{equation}\label{eqn:m_r}
M_{NOR, \epsilon}(Q,D)= W D+Lap\left(\frac{\Delta(Q)}{\epsilon}\right)^m \nonumber
\end{equation}
\begin{equation}\label{eqn:m_r:app}
M_{NOR, (\epsilon, \delta) (Q, D)}= W  D+ Gau \left( \frac{\Theta(Q)}{h(\epsilon,\delta)} \right)^m \nonumber
\end{equation}
where $\Delta(Q)=\max_j \sum_i |W_{ij}|$, $\Theta(Q) = \max_j \sqrt{\sum_i{W_{ij}^2}}$, and $h(\epsilon,\delta)=\frac{\epsilon}{\sqrt{8 \ln (2/ \delta )}}$.

Similar to the analysis of the Laplace and the Gaussian mechanisms, the expected squared error of the $M_{NOR, \epsilon}$ on query $Q$ is $\frac{2m \Delta(Q)^2}{\epsilon^2} = \frac{2m \max_j \sum_i W_{ij}^2}{\epsilon^2}$, and the expected squared error of $M_{NOR, (\epsilon, \delta)}$ is $\frac{m\Theta(Q)^2}{h(\epsilon,\delta)^2} = \frac{m\max_j \sum_i
W^2_{ij}}{h(\epsilon,\delta)^2}$. An interesting observation is that under ($\epsilon$, $\delta$)-differential privacy, NOR obtains lower expected squared error than NOD, iff. $m\max_j\sum_i W^2_{ij}<\sum_{j}\sum_i W^2_{ij}$. Note that when $m\geq n$, this inequality can never hold, implying that NOR is more effective for when the number of queries $m$ is smaller than the number of unit counts $n$.

\subsection{Low-Rank Matrices and Matrix Norms}

The rank of a real-value matrix $W$ is the number of non-zero \emph{singular values} obtained by performing \emph{singular value decomposition} (SVD) of $W$. SVD decomposes $W$ of size $m \times n$ into the product of three matrices: $W=U \Sigma V$. $U$ and $V$ are row-wise and column-wise orthogonal matrices respectively, and $\Sigma$ is a diagonal matrix with positive real diagonal values, which are the singular values of $W$. Let $s$ be the number of such singular values, i.e., the rank of $W$. Then, Matrices $U$, $\Sigma$, and $V$ are of sizes $m\times s$, $s\times s$, and $s\times n$ respectively. SVD guarantees that $s \leq \min\{m,n\}$.

A matrix $W$ of size $m \times n$ whose rank is less than $\min\{m,n\}$ is called a \emph{low-rank matrix}. This happens when the rows and columns of $W$ are correlated. In the running example of Figure \ref{fig:intro-example}, the workload matrix corresponding to the query set $Q=\{q_1=x_{NY}+x_{NJ}+x_{CA}+x_{WA}, q_2=x_{NY}+x_{NJ}, q_3=x_{CA}+x_{WA}\}$ is a low-rank matrix, since the queries in $Q$ are correlated (i.e., $q_1=q_2+q_3$, and the unit counts are also correlated (e.g., $x_{NY}$ and $x_{NJ}$). The main idea of the proposed low-rank mechanism is to exploit the low-rank property of the workload matrix to reduce the necessary amount of noise required to satisfy differential privacy.

An important concept used in the proposed solution is the matrix norm, which is an extension of the notion of vector norms to matrices. Two common definitions of the matrix norm are: (i) Entrywise norm, which treats a matrix $W$ of size $m\times n$ simply as a vector of size $m\times n$ consisting of all entries of $W$, and applies one of the vector norm definitions. For example, applying the $\mathcal{L}_2$-norm to all entries in $W$ obtains $\|W\|_2= {(\sum_{i=1}^{m} \sum_{j=1}^{n} |W_{ij}|^2)}^{1/2}$, which is also called the \emph{Frobenius norm}, written as $\|W\|_F$. (ii) Induced norm (or Operator norm), defined by $|||W|||_p= \max_{x\neq 0}  \|Wx\|_p / \|x\|_p$, where $x$ is a vector of size $n$, and $\|x\|_p$ is the $\mathcal{L}_p$ norm of $x$. Notably, $|||W|||_1$ is simply the maximum absolute column sum of $W$, and $|||W|||_{\infty}$ is simply the maximum absolute row sum of the matrix $W$.

\section{Workload Decomposition}\label{sec:formulation}

Recall that the example in Figure \ref{fig:intro-example} shows that sometimes it is best to answer a batch of linear counting queries $Q$ \emph{indirectly}, by first answering a set of intermediate linear counting queries under differential privacy, and combine their results to answer $Q$. The proposed low-rank mechanism (LRM) follows this idea. Specifically, given a workload matrix $W$ corresponding to the query set $Q$, LRM decomposes $W$ into the product of two matrices $W=BL$. $B$ is of size $m\times r$ and $L$ is of size $r\times n$. Here, $r$ is a parameter to be determined which specifies the number of intermediate queries; $L$ corresponds to the set of intermediate linear counting queries to answer under differential privacy; $B$ indicates how the results of these intermediate queries are combined to answer $Q$. The main challenge lies in how to choose the best decomposition that minimizes the overall error of $Q$, as there is a vast search space for possible decompositions. In this section, we model the search for the optimal matrix decomposition as a constrained optimization program, which is solved in the next section. For the ease of presentation, we focus on $\epsilon$-differential privacy in this and the next section, and defer the discussion of ($\epsilon$, $\delta$)-differential privacy until Section \ref{sec:approx}. In addition, we provide asymptotic error bounds for LRM in Appendix \ref{sec:appedinex:asymptotic:error:bound}.

In the following, Section \ref{sec:4.1} formalize LRM and the optimization program of workload decomposition. Section \ref{sec:utility} analyzes the result utility of LRM with the optimal workload decomposition, and discusses the selection of the privacy parameter $\epsilon$. Finally, Section \ref{sec:relaxation} presents a relaxed optimization program for workload decomposition which can further improve the accuracy of LRM for certain workloads.

\subsection{Optimization Program Formulation}\label{sec:4.1}

We first formalize LRM under $\epsilon$-differential privacy. Given $W$ and its decomposition $W=BL$, LRM first applies the Laplace mechanism to the intermediate queries specified by $L$. Let $\Delta(L)$ denote the $\mathcal{L}_1$ sensitivity of these intermediate queries. Similar to the case of NOR discussed in Section \ref{sec:pre:query}, $\Delta(L)$ is the maximum sum of absolute values of a column in $L$, which is:

$$\Delta(L)=\max_j \sum_i |L_{ij}|$$

Applying the Laplace mechanism, we obtain the noisy results of the intermediate queries:

$$LD+Lap\left(\frac{\Delta(L)}{\epsilon}\right)^r$$

where $D$ denotes the vector of unit counts. Next, LRM multiplies matrix $B$ with the noisy intermediate results, which essentially recombines the intermediate results to answer $Q$. Let $M_{LRM, \epsilon}(Q,D)$ denote LRM under $\epsilon$-differential privacy, we have:

\begin{equation}\label{eqn:part_mech}
M_{LRM, \epsilon}(Q,D)=B\left(LD+Lap\left(\frac{\Delta(L)}{\epsilon}\right)^r\right)
\end{equation}

Since $W=BL$, we have $Q(D)=WD=BLD$. Hence, the output $M_{LRM, \epsilon}(Q,D)$ can be seen as the sum of two components: $BLD$ and $B \cdot Lap\left(\frac{\Delta(L)}{\epsilon}\right)$. The former is the exact result of $Q$, and the latter is the noise added in order to satisfy differential privacy. Next we analyze the error of LRM. First we define the \emph{scale} of a decomposition, as follows.

\begin{definition}\label{def:scale}\textbf{Scale of a workload decomposition.} Given a workload decomposition $W=BL$, its scale $\Phi(B)$ is the squared sum of the entries in $B$, i.e., $\Phi(B)=\sum_{i,j}B_{ij}^2$.
\end{definition}

Meanwhile, we call $\Delta(L)$ the $\mathcal{L}_1$ \emph{sensitivity} of the decomposition $W=BL$. The following lemma shows that the expected squared error of LRM is linear to the scale of the decomposition, and quadratic to the $\mathcal{L}_1$ sensitivity of the decomposition.

\begin{lemma}\label{lem:decomp_error}
The expected squared error of $M_{LRM, \epsilon}(Q,D)$ using decomposition $W=BL$ is $\frac{2\Phi(B)\Delta(L)^2}{\epsilon^2}$.
\end{lemma}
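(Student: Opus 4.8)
The plan is to compute the expected squared error directly from the output expression \eqref{eqn:part_mech}. Since $W=BL$ implies $Q(D)=BLD$, subtracting the exact answer from the mechanism's output leaves exactly the noise term $B\cdot Lap\!\left(\tfrac{\Delta(L)}{\epsilon}\right)^r$. So the expected squared error is $\mathbb{E}\bigl[\|B\cdot N\|_2^2\bigr]$, where $N=(N_1,\dots,N_r)^T$ is a vector of $r$ independent zero-mean Laplace random variables, each with scale $\lambda=\Delta(L)/\epsilon$ and hence variance $2\lambda^2 = 2\Delta(L)^2/\epsilon^2$.

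The key step is to expand $\|BN\|_2^2 = \sum_{i=1}^m \bigl(\sum_{k=1}^r B_{ik}N_k\bigr)^2$ and take expectations. Using linearity of expectation and the fact that the $N_k$ are independent with mean zero, all cross terms $\mathbb{E}[N_kN_{k'}]$ for $k\neq k'$ vanish, so $\mathbb{E}\bigl[\|BN\|_2^2\bigr] = \sum_{i=1}^m\sum_{k=1}^r B_{ik}^2\,\mathbb{E}[N_k^2] = \sum_{i,k} B_{ik}^2 \cdot \tfrac{2\Delta(L)^2}{\epsilon^2}$. Pulling the constant out of the double sum gives $\tfrac{2\Delta(L)^2}{\epsilon^2}\sum_{i,k}B_{ik}^2 = \tfrac{2\Phi(B)\Delta(L)^2}{\epsilon^2}$ by Definition~\ref{def:scale}, which is the claimed quantity. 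For completeness I would also remark that $\Delta(L) = \max_j\sum_i|L_{ij}|$ is precisely the $\mathcal{L}_1$ sensitivity of the intermediate query set $L$ (by the same reasoning as for NOR in Section~\ref{sec:pre:query}, since neighbor databases differ in one unit count by exactly one), so the Laplace mechanism applied to $L$ with scale $\Delta(L)/\epsilon$ is indeed $\epsilon$-differentially private, and the post-processing by $B$ preserves this.

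There is no real obstacle here — the only thing to be careful about is the bookkeeping with the two index ranges ($m$ rows of $B$, $r$ columns / noise components) and the clean use of independence to kill the cross terms; everything else is a one-line variance computation. One could alternatively phrase it via the trace identity $\mathbb{E}[\|BN\|_2^2] = \operatorname{tr}\bigl(B\,\mathrm{Cov}(N)\,B^T\bigr) = \operatorname{tr}\bigl(B(2\lambda^2 I)B^T\bigr) = 2\lambda^2\|B\|_F^2$, which immediately yields the same answer; I would likely present the elementary expansion since it needs no matrix-calculus background.
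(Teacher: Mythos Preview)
Your proposal is correct and follows essentially the same approach as the paper: subtract the exact answer to isolate the noise term $B\cdot Lap(\Delta(L)/\epsilon)^r$, then compute its expected squared norm using independence and the Laplace variance $2\Delta(L)^2/\epsilon^2$, arriving at $\Phi(B)\cdot 2\Delta(L)^2/\epsilon^2$. The paper's proof is simply a terser version of yours---it jumps directly from the noise expression to $\bigl(\sum_{ij}B_{ij}^2\bigr)\tfrac{2\Delta(L)^2}{\epsilon^2}$ without writing out the cross-term cancellation---so your added detail and the trace-identity alternative are welcome elaborations rather than a different route.
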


\begin{proof}
According to Equation (\ref{eqn:part_mech}), $M_{LRM, \epsilon}(Q,D)-Q(D) = B\cdot
Lap\left(\frac{\Delta(L)}{\epsilon}\right)^r$. The expected squared error of the mechanism is thus $\left(\sum_{ij}B^2_{ij}\right)\frac{2(\Delta(L))^2}{\epsilon^2}$. Since
$\Phi(B)=\sum_{ij}B^2_{ij}$, the error can be rewritten as $\frac{2\Phi(B)(\Delta(L))^2}{\epsilon^2}$.
\end{proof}

Therefore, to find the best workload decomposition, it suffice to solve the optimal $B$ and $L$ that minimize $\Phi(B)\left(\Delta(L)\right)^2$, while satisfying $W=BL$. However, this optimization program is difficult to solve, because (i) the objective function involves the product of $\Phi(B)$ and the square of $\Delta(L)$, and (ii) $\Delta(L)$ may not be differentiable. To address this problem, we first prove an important property of workload decomposition, which implies that the exact value of $\Delta(L)$ is not important.

\begin{lemma}\label{lem:rescale}
Given a workload decomposition $W=BL$, we can always construct another decomposition $W=B'L'$ satisfying (i) $\Delta(L')=1$ and (ii) ($B'$, $L'$) lead to the same expected squared error of $M_{LRM, \epsilon}$ as ($B$, $L$), i.e.,
$$\Phi(B)\Delta(L)^2=\Phi(B')\left(\Delta(L')\right)^2=\Phi(B')$$
\end{lemma}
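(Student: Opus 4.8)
The plan is to exploit the fact that both $\Phi(\cdot)$ and $\Delta(\cdot)$ are homogeneous under scalar multiplication, so that a single global rescaling of $B$ and $L$ can normalize $\Delta(L)$ to $1$ without changing either the product $BL$ or the error quantity $\Phi(B)\Delta(L)^2$. Concretely, assuming $W\neq 0$ (so that $L\neq 0$ and hence $c:=\Delta(L)>0$), I would define
\begin{equation}
L'=\tfrac{1}{c}L,\qquad B'=cB .\nonumber
\end{equation}
The first thing to verify is that this is still a valid decomposition of the same workload: $B'L'=(cB)(\tfrac1c L)=BL=W$.

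Next I would check the two claimed properties. For the sensitivity, recall $\Delta(L)=\max_j\sum_i|L_{ij}|$; this expression is positively homogeneous of degree one, so $\Delta(L')=\Delta(\tfrac1c L)=\tfrac1c\Delta(L)=1$, establishing (i). For the error, using $\Phi(B)=\sum_{i,j}B_{ij}^2$ from Definition~\ref{def:scale} I get $\Phi(B')=\sum_{i,j}(cB_{ij})^2=c^2\Phi(B)=\Delta(L)^2\Phi(B)$. Combining this with $\Delta(L')=1$ gives the chain
\begin{equation}
\Phi(B)\Delta(L)^2=\Phi(B')=\Phi(B')\bigl(\Delta(L')\bigr)^2,\nonumber
\end{equation}
which is exactly (ii); by Lemma~\ref{lem:decomp_error} the two decompositions therefore yield the same expected squared error for $M_{LRM,\epsilon}$.

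There is essentially no hard step here — the result is a one-line scaling argument — so the only point requiring care is the degenerate case $\Delta(L)=0$. Since $\Delta(L)=0$ forces $L=0$ and hence $W=BL=0$, this corresponds to the trivial all-zero workload, which can either be excluded by assumption or handled separately (any $B'$, $L'$ with $L'=0$ trivially satisfies the error identity, though $\Delta(L')=1$ cannot be achieved). In the body of the proof I would simply remark that the interesting case has $W\neq 0$, and note that the invariance can be viewed more conceptually as the objective $\Phi(B)\Delta(L)^2$ being constant along the orbit $(B,L)\mapsto(cB,\tfrac1c L)$, $c>0$, so that fixing $\Delta(L)=1$ loses no generality when optimizing the workload decomposition.
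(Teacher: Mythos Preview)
Your proposal is correct and essentially identical to the paper's own proof: both define $B'=\Delta(L)\,B$ and $L'=\tfrac{1}{\Delta(L)}L$ and verify the two claims by direct computation using the homogeneity of $\Delta(\cdot)$ and $\Phi(\cdot)$. Your additional remarks on the degenerate case $\Delta(L)=0$ and the orbit interpretation are nice touches not present in the paper, but the core argument is the same one-line rescaling.
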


\begin{proof}
We obtain $B'$ and $L'$ by $B'=\Delta(L) B$, $L' = \frac{1}{\Delta(L)}L$. Based on the definition of $\mathcal{L}_1$ sensitivity, we have
$$\Delta(L')=\max_j\sum_i |L'_{ij}|=\max_j\sum_i\left|\frac{L_{ij}}{\Delta(L)}\right|=\frac{1}{\Delta(L)}\Delta(L)=1$$
Meanwhile, according to Definition \ref{def:scale}, we have:
$$\Phi(B')=\sum_{ij}(B'_{ij})^2=\sum_{ij}\Delta(L)^2(B_{ij})^2=\Phi(B)\Delta(L)^2$$
This leads to the conclusion of the lemma.
\end{proof}

It follows from the above lemma is that there must be an optimal decomposition with $\mathcal{L}_1$ sensitivity equal to 1, because we can always apply Lemma \ref{lem:rescale} to transform an optimal decomposition whose $\mathcal{L}_1$ sensitivity is not 1 to another optimal decomposition whose $\mathcal{L}_1$ sensitivity is 1. Therefore, it suffices to fix $\Delta(L)$ to 1 in the optimization program. Meanwhile, according to properties of the matrix trace, we have $\Phi(B)=\mbox{tr}(B^TB)$. Thus, we arrive at the following theorem.

\begin{theorem}\label{the:main}
Given the workload $W$, a workload decomposition $W=BL$ minimizes the expected squared error of the queries, if $(B,L)$ is the optimal solution to the following program:

\begin{equation}\label{eqn:opt-problem}
\begin{split}
\min_{B,L} \frac{1}{2}\mbox{tr}(B^TB)\\
\mbox{s.t.~~}  W=BL\\
\forall j \sum_i^r |L_{ij}|\leq 1
\end{split}
\end{equation}
\end{theorem}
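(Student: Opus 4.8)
The plan is to derive Theorem~\ref{the:main} directly from Lemma~\ref{lem:decomp_error}, Lemma~\ref{lem:rescale}, and the trace identity $\Phi(B)=\mbox{tr}(B^TB)$. First I would observe that by Lemma~\ref{lem:decomp_error} the expected squared error of $M_{LRM,\epsilon}$ at a decomposition $W=BL$ equals $\tfrac{2}{\epsilon^2}\Phi(B)\Delta(L)^2$, and since $\tfrac{2}{\epsilon^2}>0$ does not depend on $(B,L)$, a decomposition minimizes the expected error if and only if it minimizes the product $\Phi(B)\Delta(L)^2$ over all decompositions of $W$. Likewise the factor $\tfrac12$ in~(\ref{eqn:opt-problem}) is an irrelevant positive constant, so it suffices to show that any optimal solution of program~(\ref{eqn:opt-problem}) also minimizes $\Phi(B)\Delta(L)^2$ over all decompositions of $W$.

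To this end I would show that the optimal value of~(\ref{eqn:opt-problem}), call it $\tfrac12\Phi^\star$, satisfies $\Phi^\star=\inf_{W=BL}\Phi(B)\Delta(L)^2$. For ``$\le$'': given any decomposition $W=BL$, Lemma~\ref{lem:rescale} produces $(B',L')$ with $W=B'L'$, $\Delta(L')=1$ (hence feasible for~(\ref{eqn:opt-problem})) and $\Phi(B')=\Phi(B)\Delta(L)^2$; its objective value is $\tfrac12\mbox{tr}((B')^TB')=\tfrac12\Phi(B')=\tfrac12\Phi(B)\Delta(L)^2$, so $\Phi^\star\le\Phi(B)\Delta(L)^2$ for every decomposition, giving $\Phi^\star\le\inf_{W=BL}\Phi(B)\Delta(L)^2$. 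For ``$\ge$'': any $(B,L)$ feasible for~(\ref{eqn:opt-problem}) has $\Delta(L)\le1$, so it is a decomposition of $W$ with $\Phi(B)\Delta(L)^2\le\Phi(B)=\mbox{tr}(B^TB)$; taking the infimum over the feasible set yields $\inf_{W=BL}\Phi(B)\Delta(L)^2\le\Phi^\star$. The two infima therefore coincide, and for an optimal $(B^\star,L^\star)$ of~(\ref{eqn:opt-problem}) we get
$$\inf_{W=BL}\Phi(B)\Delta(L)^2\ \le\ \Phi(B^\star)\Delta(L^\star)^2\ \le\ \Phi(B^\star)\ =\ \Phi^\star\ =\ \inf_{W=BL}\Phi(B)\Delta(L)^2,$$
so every inequality is an equality, $(B^\star,L^\star)$ attains $\min_{W=BL}\Phi(B)\Delta(L)^2$, and by the first paragraph it minimizes the expected squared error. (The feasible set is nonempty: $B=W$, $L=I_n$ gives $\Delta(L)=1$.)

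The only genuinely subtle point — and the one a careful writeup must address with a sentence — is the passage from Lemma~\ref{lem:rescale}, which naturally yields the \emph{equality} $\Delta(L')=1$, to the \emph{inequality} constraint $\sum_i|L_{ij}|\le1$ actually imposed in~(\ref{eqn:opt-problem}). This relaxation is harmless because the objective is strictly monotone under the rescaling $B\mapsto cB,\ L\mapsto c^{-1}L$ with $0<c<1$: a feasible point with $\Delta(L)<1$ and $B\neq0$ can be improved by scaling $L$ up until its column-sum sensitivity reaches $1$, which strictly decreases $\tfrac12\mbox{tr}(B^TB)$; hence every optimum of~(\ref{eqn:opt-problem}) automatically has $\Delta(L)=1$ (the degenerate case $B=0$ forces $W=0$, where the statement is trivial). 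A second, purely technical caveat, which the statement sidesteps by saying ``if $(B,L)$ is the optimal solution,'' is whether the infimum is attained; granting existence of a minimizer, the displayed chain of (in)equalities is exactly what is needed and no further argument is required.
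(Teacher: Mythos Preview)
Your proposal is correct and follows exactly the approach the paper intends: the paper in fact omits a formal proof of Theorem~\ref{the:main}, stating that it ``is already clear from the discussions above,'' namely Lemma~\ref{lem:decomp_error}, Lemma~\ref{lem:rescale}, and the identity $\Phi(B)=\mbox{tr}(B^TB)$. Your write-up is a careful unpacking of that same argument, and your treatment of the inequality-versus-equality constraint and the existence caveat are welcome clarifications that the paper glosses over.
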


The constant factor $1/2$ in the objective function above simplifies the notations in the following sections; it does not affect the optimal solution of the program. We omit the proof since it is already clear from the discussions above. Solving the above optimization program is rather difficult, since it involves a non-linear objective function and complex constraints. We present a relaxation of the problem in Section \ref{sec:relaxation}, and our solution in Section \ref{sec:opt}.

\subsection{Utility Analysis and Budget Selection}\label{sec:utility}

In practice, users are often unsure about how to set the privacy parameter $\epsilon$ involved in $\epsilon$-differential privacy. Instead, setting the desired \emph{utility} level of the query results is much more intuitive. Given the user-specified utility, this subsection derives the smallest $\epsilon$ value for LRM that satisfies the utility requirement. Note that smaller values of $\epsilon$ corresponds to stronger privacy protection. We use a common definition of query result utility called ($\xi$, $\eta$)-usefulness \cite{BLR08}, as follows.

\begin{definition}\label{def:usefulness}
Given a mechanism $M$, query set $Q$, sensitive data $D$, and parameters $\xi>0$ and $0<\eta<1$, we say that $M$ is $(\xi$, $\eta)$-useful with respect
to $Q$ and $D$ under the $\|\cdot\|_*$-norm if the following inequality holds:
$$ \Pr\left(\|M(Q,D) - Q(D)\|_* \geq \xi \right) \leq \eta$$
\end{definition}
where $\|\cdot\|_*$-norm can be any vector norm definition. In our analysis, we consider the $\|\cdot\|_1$-norm and the $\|\cdot\|_{\infty}$-norm.

Given user specified values of $\xi$ and $\eta$, we now derive the minimum value for $\epsilon$ with which LRM achieves ($\xi$, $\eta$)-usefulness. The derivation uses Markov's inequality and the Chernoff bound, as follows.

\begin{lemma}\label{lem:chernoff}
Markov's Inequality and the Chernoff Bound \cite{billingsley2012probability}. Given a non-negative random variable $X$ and $t>0$, the following inequality holds:
$$\Pr(X \geq t) \leq \frac{\mathds{E}[X]}{t}$$
Moreover, for any $s\geq 0$, we have:
$$\Pr(X\geq t) = \Pr(e^{sX}\geq e^{st}) \leq  \frac{\mathds{E}[e^{sX}]}{e^{st}}$$
\end{lemma}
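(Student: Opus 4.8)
The plan is to prove the two inequalities in sequence, obtaining the Chernoff-style bound as an immediate corollary of Markov's inequality applied to a transformed variable.

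First I would establish Markov's inequality via the standard indicator-function argument. Let $\mathds{1}_{\{X \geq t\}}$ denote the indicator of the event $\{X \geq t\}$. Because $X$ is non-negative, we have the pointwise bound $t\,\mathds{1}_{\{X \geq t\}} \leq X$: on $\{X \geq t\}$ the left side equals $t \leq X$, and on its complement the left side is $0 \leq X$. Taking expectations, using monotonicity and linearity of expectation together with $\mathds{E}[\mathds{1}_{\{X \geq t\}}] = \Pr(X \geq t)$, gives $t\,\Pr(X \geq t) \leq \mathds{E}[X]$; dividing by $t > 0$ yields the first inequality.

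Next, for the exponential (Chernoff) bound, I would exploit that for any fixed $s \geq 0$ the map $u \mapsto e^{su}$ is non-decreasing, so the events $\{X \geq t\}$ and $\{e^{sX} \geq e^{st}\}$ coincide (for $s = 0$ both transformed quantities equal $1$, and the bound reduces to the trivial $\Pr(X \geq t) \leq 1$). Hence $\Pr(X \geq t) = \Pr(e^{sX} \geq e^{st})$. Applying the Markov inequality just proved to the non-negative random variable $e^{sX}$ with threshold $e^{st} > 0$ gives $\Pr(e^{sX} \geq e^{st}) \leq \mathds{E}[e^{sX}]/e^{st}$, which together with the event identity is exactly the claimed bound.

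There is no genuine obstacle here: the statement is a classical textbook fact, and the only points meriting care are (a) that non-negativity of $X$ is precisely what validates the pointwise bound $t\,\mathds{1}_{\{X \geq t\}} \leq X$, and (b) that the identity $\Pr(X \geq t) = \Pr(e^{sX} \geq e^{st})$ relies only on monotonicity (not strict monotonicity) of the exponential, so it remains valid at $s = 0$. The lemma will be invoked in the utility analysis only through these two inequalities, instantiated with the Laplace noise $B \cdot Lap(\Delta(L)/\epsilon)^r$ that LRM adds in Equation (\ref{eqn:part_mech}).
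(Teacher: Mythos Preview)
Your proof is correct and is the standard textbook argument. The paper itself does not give a proof of this lemma at all: it simply states the result with a citation to \cite{billingsley2012probability} and immediately moves on to use it in the utility theorem. So there is nothing to compare against; your indicator-function derivation of Markov's inequality followed by applying it to the non-negative variable $e^{sX}$ is exactly the classical route and fills in what the paper leaves to the reference.
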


The minimum $\epsilon$ value is given in the following theorem.

\begin{theorem}
\textbf{Utility of LRM under $\epsilon$-differential privacy.} Given query set $Q$, database $D$, and user-specified parameters $\xi>0$ and $0<\eta<1$, (i) $M_{LRM, \epsilon}$ with the optimal decomposition $W=BL$ solved from Program (\ref{eqn:opt-problem}) returns ($\xi$, $\eta$)-useful results of $Q$ on $D$ under the $\|\cdot\|_1$-norm, when the privacy parameter $\epsilon$ satisfies  $\epsilon \geq \left(2|||B|||_1 (s\cdot \ln2 - \ln \eta) \right) / \xi$. (ii) Meanwhile, $M_{LRM, \epsilon}$ with the optimal decomposition achieves ($\xi$, $\eta$)-usefulness under the $\|\cdot\|_{\infty}$-norm, when $\epsilon \geq \left(2|||B|||_{\infty} (  \sum_{i=1}^s \ln(\frac{i}{i-0.5}) - \ln \eta) \right) / \xi$.
\end{theorem}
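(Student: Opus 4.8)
The plan is to bound, for each coordinate of the error vector $M_{LRM,\epsilon}(Q,D) - Q(D) = B \cdot Lap(1/\epsilon)^r$ (recall $\Delta(L)=1$ after the normalization in Lemma~\ref{lem:rescale}), the quantity $\Pr(\|B \cdot Z\|_* \geq \xi)$ where $Z \sim Lap(1/\epsilon)^r$, and then convert the resulting condition on $\epsilon$ into the stated inequalities. For the $\|\cdot\|_\infty$-norm this is cleanest: $\|BZ\|_\infty \geq \xi$ iff some coordinate $(BZ)_i = \sum_j B_{ij} Z_j$ has $|(BZ)_i|\geq\xi$. Each $(BZ)_i$ is a weighted sum of independent Laplace variables; I would use the moment generating function of the Laplace distribution, $\mathds{E}[e^{s\,Lap(\lambda)}] = 1/(1-s^2\lambda^2)$ for $|s\lambda|<1$, apply the Chernoff bound from Lemma~\ref{lem:chernoff} to $e^{s|(BZ)_i|}$ (or to $e^{\pm s(BZ)_i}$ and union-bound the two tails), and optimize over $s$. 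The weights $|B_{ij}|$ appearing in the exponent can all be bounded by $|||B|||_\infty$ (the max absolute row sum), which is where that norm enters. A union bound over the $m$ coordinates of $W$ — but really over the $s = \mathrm{rank}(W)$ effectively independent directions, since $W=BL$ has rank at most $r$ and the interesting bound is in terms of $s$ — gives a factor that produces the telescoping-looking sum $\sum_{i=1}^s \ln\frac{i}{i-0.5}$. I expect this sum arises from iterating the Chernoff/union-bound argument across the $s$ singular directions one at a time, each step contributing a $\ln\frac{i}{i-0.5}$ term rather than a crude $\ln s$; reproducing that exact constant is the step I would be most careful about.

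For the $\|\cdot\|_1$-norm, I would proceed similarly but now $\|BZ\|_1 = \sum_i |(BZ)_i| = \sum_i |\sum_j B_{ij}Z_j|$. Bounding this directly is wasteful; instead I would note $\|BZ\|_1 \leq |||B|||_1 \|Z\|_1$ is also wasteful in the wrong direction, so the right move is to work with $\sum_j |Z_j|$ weighted by column sums: $\|BZ\|_1 \leq \sum_j (\sum_i |B_{ij}|) |Z_j| \leq |||B|||_1 \sum_j |Z_j|$, and then $\sum_j |Z_j|$ is a sum of $r$ i.i.d. $|Lap(1/\epsilon)|$ random variables (a Gamma-type sum). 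Applying the Chernoff bound of Lemma~\ref{lem:chernoff} to $e^{s\sum_j|Z_j|}$ with MGF $\mathds{E}[e^{s|Lap(\lambda)|}] = 1/(1-s\lambda)$ for $s\lambda<1$, then optimizing $s$, yields a tail of the form $(\text{something})^{\text{rank}}$, and the exponent $s\cdot\ln 2$ term comes from bounding $r$ (or the effective rank $s$) times a $\ln 2$ factor that falls out of the optimization $\min_s \{-r\ln(1-s\lambda) - s\xi/|||B|||_1\}$ evaluated at its worst-case relevant value. I would then solve the inequality "tail $\leq \eta$" for $\epsilon$, arriving at $\epsilon \geq 2|||B|||_1(s\ln 2 - \ln\eta)/\xi$.

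The main obstacle, I expect, is getting the \emph{exact} constants — the $2$ in front, the $s\ln 2$ versus $\sum_i \ln\frac{i}{i-0.5}$ discrepancy between the two norms, and why $s = \mathrm{rank}(W)$ appears rather than $r$ or $m$ — rather than merely an order-of-magnitude bound. This requires being disciplined about (a) at which point one optimizes over the free Chernoff parameter $s$, (b) whether one union-bounds over all $m$ query coordinates or exploits the rank-$s$ structure of $W$ to reduce the union to $s$ terms (likely via an SVD/orthogonal-projection argument that decouples the noise into $s$ independent one-dimensional tails), and (c) choosing the optimal $s$ as a function of $\xi, |||B|||, \eta$ so that the threshold on $\epsilon$ comes out in closed form. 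The probabilistic core is routine Laplace-MGF-plus-Chernoff; the bookkeeping of constants and the rank reduction are the delicate parts.
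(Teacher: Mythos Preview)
Your treatment of part~(i) is essentially the paper's argument: bound $\|BZ\|_1 \le |||B|||_1\,\|Z\|_1$, recognize $\|Z\|_1$ as a sum of $r$ i.i.d.\ exponentials (i.e., an Erlang variable), apply the Chernoff bound of Lemma~\ref{lem:chernoff} with the exponential MGF $(1-t/\epsilon)^{-1}$, and solve for $\epsilon$. The paper does not optimize over the Chernoff parameter; it simply takes $t=\epsilon/2$, which yields the factor $2^r$ and hence the $r\ln 2$ term. (The theorem \emph{statement} writes $s$, but the proof in the paper derives everything in terms of $r$; there is no rank-reduction step, and your speculation about reducing from $r$ or $m$ down to $s=\mathrm{rank}(W)$ via an SVD/orthogonal-projection argument is not needed and not what produces the constant.)

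For part~(ii), however, your plan has a genuine gap. You propose to Chernoff-bound each output coordinate $(BZ)_i=\sum_j B_{ij}Z_j$ separately and then union-bound over coordinates (or over ``singular directions''), hoping that the telescoping sum $\sum_{i=1}^{r}\ln\frac{i}{i-0.5}$ emerges from iterating this across $s$ directions. It does not. The paper's route is parallel to part~(i): first bound $\|BZ\|_\infty \le |||B|||_\infty\,\|Z\|_\infty$, so that the only probabilistic object to control is $Y=\|Z\|_\infty=\max_{1\le i\le r}|Z_i|$, the maximum of $r$ i.i.d.\ $\mathrm{Exp}(\epsilon)$ variables. The key identity --- which your proposal is missing --- is the R\'enyi/memoryless representation of exponential order statistics:
\[
\max_{1\le i\le r}|Z_i| \;\stackrel{d}{=}\; E_r + E_{r-1} + \cdots + E_1,\qquad E_i \sim \mathrm{Exp}(i\epsilon)\ \text{independent}.
\]
This turns $Y$ into a sum of independent exponentials with \emph{different} rates, so $\mathds{E}[e^{tY}]=\prod_{i=1}^{r}\frac{i\epsilon}{i\epsilon-t}$; plugging $t=\epsilon/2$ gives exactly $\prod_{i=1}^{r}\frac{i}{i-1/2}$, whence the $\sum_{i=1}^{r}\ln\frac{i}{i-0.5}$ term after taking logs and solving for $\epsilon$. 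No union bound over output coordinates, no SVD decoupling, and no per-direction iteration is involved --- the entire sum comes from the order-statistics decomposition of a single maximum of $r$ exponentials. Your coordinate-wise union-bound approach could be made to work but would produce a cruder $\ln m$ (or $\ln r$) factor, not the stated constant.
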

\begin{proof}(i) We first prove the utility of LRM under the $\|\cdot\|_1$-norm. Let $X$ be the Laplace noise vector injected to the results of intermediate queries corresponding to $L$. We have:
\begin{eqnarray}
\|M_{P}(Q,D)-Q(D) \|_1 &=& \|B(LD+X)-WD\|_1 \nonumber\\
&=& \|B \cdot X\|_1 = |||B \cdot X|||_1 \leq |||B|||_1 \cdot |||X|||_1 = |||B|||_1 \cdot \|X\|_1\nonumber
\end{eqnarray}

According to the Laplace mechanism, $X_1,X_2,\cdots,X_r$ are i.i.d. random variables following the zero-mean Laplace distribution with scale $\Delta(L)/\epsilon$. Since $L$ is obtained by solving Program (\ref{eqn:opt-problem}), we have $\Delta(L)=1$. Therefore, the scale of each of the Laplace variable $X_i, 1\leq i \leq r$ is $1/\epsilon$. According to properties of the Laplace distribution, $|X_i|$ follows the exponential distribution with rate parameter equal to $\epsilon$. Let $Y=\|X\|_1=|X_1|+|X_2|+\cdots+|X_r|$. Then, according to properties of the exponential distribution, $Y$ follows the Erlang distribution. Specifically, the probability distribution function of $Y$ is:

\begin{equation}
\Pr\left(Y=x\right)= \frac{\epsilon^r x^{r-1}  e^{- \epsilon x}}{(r-1)!}dx \nonumber
\end{equation}

For any positive number $t$ such that $\mathds{E}[e^{tY}]$ exists, we have:
\begin{eqnarray}
\mathds{E}[e^{tY}] = \int_{0}^{\infty}{ e^{tx} \cdot \frac{
\epsilon^{r} x^{r-1} e^{-\epsilon x } }{(r-1)!} } dx =
(1- \frac{t}{\epsilon})^{-r}, t < \epsilon \nonumber
\end{eqnarray}
Moreover, for any real number $c$, according to Lemma \ref{lem:chernoff}, we have:

$$\Pr (Y > c) = \Pr(e^{tY}>e^{tc}) \leq \frac{ \mathds{E} [e^{tY}] }{ e^{ct}} = \frac{(1-\frac{t}{\epsilon})^{-r}} {e^{ct}} $$

Setting $t = \frac{\epsilon}{2}$ and $c=\frac{ \xi}{|||B|||_1}$, we obtain:
$$\Pr (Y > \frac{ \xi}{|||B|||_1}) \leq \frac{(\frac{1}{2})^{-r}} {e^{\frac{\xi \epsilon}{2|||B|||_1}}} $$

Therefore, we have:

\begin{eqnarray}
& \|M_{P}(Q,D)-Q(D)\|_1 \leq |||B|||_1\cdot Y \nonumber\\
& \Rightarrow \forall \xi, \Pr( \|M_{P}(Q,D)-Q(D)\|_1
\geq \xi) \leq \Pr( Y \geq \frac{ \xi }{|||B|||_1}) \leq \frac{(\frac{1}{2})^{-r}} {e^{\frac{\xi \epsilon}{2|||B|||_1}}} \nonumber\\
\end{eqnarray}
When $\epsilon \geq  \left(2|||B|||_1 \left(r\cdot
\ln2 - \ln \eta\right) \right)/\xi$, the above probability is thus bound by $\eta$. This finishes the proof for claim (i) in the theorem.

(ii) Next we focus on the $\|\cdot\|_{\infty}$-norm. Let $X$ denote the same meaning as in the proof of part (i). Then, we have:

\begin{eqnarray}
& \|M_{P}(Q,D)-Q(D)\|_{\infty}  = \|B \cdot X\|_{\infty} \leq |||B|||_{\infty} \cdot \|X\|_{\infty}\nonumber
\end{eqnarray}

The inequality above holds due to the fact that: $\|Rx\|_{\infty} \leq |||R|||_{\infty}\cdot \|x\|_{\infty}$ for any matrix $R$ and vector $x$. Let $Y=\|X\|_{\infty}=\max{(|X_1|,|X_2|,\cdots,|X_r|)}$. Similar to part (i) of the proof, each $|X_i|, 1 \leq i \leq r$ follows the exponential distribution with rate $\epsilon$. According to the memoryless property of the exponential distribution, we create a chain of variables, as follows:
\begin{equation}
Y = \max{(|X_1|,|X_2|,\cdots,|X_r|)} = X_{\lambda=r \epsilon} + X_{\lambda=(r-1) \epsilon} + \cdots + X_{\lambda= \epsilon}
\end{equation}
where each $X_{\lambda = x}$ denotes an independent exponential random variable with
rate $x$. Intuitively, $X_{\lambda=r \epsilon}$ models the distribution of the smallest value among $|X_1|$, $|X_2|$, $\cdots$, $|X_r|$; $X_{\lambda=(r-i+1) \epsilon}, 1 < i \leq r$ models the difference between the $i$-th smallest value and the $(i-1)$-th smallest value among $|X_1|$, $|X_2|$, $\cdots$, $|X_r|$. The sum thus yields the maximum value among $|X_1|$, $|X_2|$, $\cdots$, $|X_r|$.

Similar to the part (i) of the proof, we further derive:
\begin{eqnarray}
\mathds{E}[e^{tY}] =  \mathds{E}[e^{t\left(X_{\lambda=r \epsilon} +
X_{\lambda=(r-1) \epsilon } + \cdots + X_{\lambda= \epsilon}\right)}] =
\mathds{E}[e^{t\left(X_{\lambda=r \epsilon}\right)}]\cdot
\mathds{E}[e^{t\left(X_{\lambda=(r-1)
\epsilon}\right)}]\cdot\cdots\cdot\mathds{E}[e^{t\left(X_{\lambda=\epsilon}\right)}]
\nonumber
\end{eqnarray}
Because $\mathds{E}[e^{tX_{\lambda=a}}]=\int_{0}^{\infty}{ e^{tx} \cdot a e^{-ax} } dx = \frac{a}{a-t}$ for any $t<a$, we reach:

$$\forall t<\epsilon, \mathds{E}[e^{tY}] = \prod_{i=1}^{r} \frac{i \epsilon}{i \epsilon-t}$$.

Finally, according to Lemma \ref{lem:chernoff}, we have the following inequality:
\begin{eqnarray}
\Pr(Y > c) &&= \Pr(e^{t \cdot Y }>e^{tc}) \nonumber \\
&&\leq \frac{\mathds{E}[e^{t(Y)}]}{ e^{ct}} \nonumber \\
&&= \prod_{i=1}^r \frac{i \epsilon}{i \epsilon -t}   / e^{ct} \nonumber
\end{eqnarray}

With the choice of $t=\frac{\epsilon}{2}$ and
$c=\frac{ \xi }{|||B|||_{\infty}}$, we obtain:
\begin{eqnarray}
& \|M_{P}(Q,D)-Q(D)\|_{\infty} \leq  |||B|||_{\infty}\cdot \|X\|_{\infty} \nonumber\\
& \Rightarrow \forall \xi, \Pr( \|M_{P}(Q,D)-Q(D)\|_{\infty}
>\xi) \leq \Pr( \|X\|_{\infty} > \frac{ \xi }{|||B|||_{\infty}}) \nonumber\\
& \Rightarrow \forall \xi,  \Pr(\|M_{P}(Q,D)-Q(D)\|_{\infty} >\xi
) \leq  \left( \prod_{i=1}^{r} \frac{i \epsilon}{i \epsilon-t} \right)
/ e^{ct} = \left( \prod_{i=1}^{r} \frac{i \epsilon }{i
\epsilon - \epsilon /2}\right)    / e^{\frac{ \xi  \epsilon
}{2|||B|||_{\infty}}} \nonumber
\end{eqnarray}
When $\epsilon \geq \left(2|||B|||_{\infty} \left(  \sum_{i=1}^r \ln(\frac{i}{i-0.5})   - \ln \eta\right) \right)/\xi$, the above probability is bounded by $\eta$.
\end{proof}

%

\subsection{Relaxed Workload Decomposition}\label{sec:relaxation}

Program \ref{eqn:opt-problem} is rather difficult to solve, since it contains a non-linear objective and complex constraints. To devise a stable numerical solution, we relax the formulation so that $BL$ does not necessarily match $W$ exactly, but within a small error tolerance. To do this, we introduce a new parameter $\gamma$ to bound the difference between $W$ and $BL$ in terms of the Frobenius norm. This leads to the following optimization program:
\begin{equation}\label{eqn:relaxed-problem}
\begin{split}
\min_{B,L}~\frac{1}{2} \mbox{tr}(B^TB)\\
\mbox{s.t.~~}  \|W-BL\|_F\leq\gamma\\
\forall j \sum_i^r |L_{ij}|\leq 1
\end{split}
\end{equation}

The following theorem analyzes the error of LRM with the optimal decomposition obtained by solving Program (\ref{eqn:relaxed-problem}).

\begin{theorem}\label{the:new_error}
The expected squared error of $M_{LRM, \epsilon}(Q,D)$ using the optimal decomposition $(B,L)$ solved from Program (\ref{eqn:relaxed-problem}) is at most
$$2\mbox{tr}(B^TB)/\epsilon^2+\gamma\sum_{i}x^2_{i}$$
\end{theorem}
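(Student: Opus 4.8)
The plan is to decompose the total squared error into two sources: the noise that LRM injects to satisfy differential privacy, and the systematic bias introduced by the fact that $BL$ no longer equals $W$ exactly. Writing $M_{LRM,\epsilon}(Q,D) = B(LD + X)$ where $X$ is the vector of Laplace noise of scale $\Delta(L)/\epsilon = 1/\epsilon$ (using $\Delta(L) \le 1$ from the constraint of Program~(\ref{eqn:relaxed-problem})), I would compute
\begin{equation*}
M_{LRM,\epsilon}(Q,D) - Q(D) = B(LD + X) - WD = (BL - W)D + BX.
\end{equation*}
Since the noise $X$ is zero-mean and independent of the deterministic term $(BL-W)D$, the expected squared $\mathcal{L}_2$ error splits additively: $\mathds{E}\|(BL-W)D + BX\|_2^2 = \|(BL-W)D\|_2^2 + \mathds{E}\|BX\|_2^2$, with no cross term.

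For the noise term, I would reuse the computation in the proof of Lemma~\ref{lem:decomp_error}: $\mathds{E}\|BX\|_2^2 = \Phi(B)\cdot 2(\Delta(L))^2/\epsilon^2 \le 2\,\mathrm{tr}(B^TB)/\epsilon^2$, using $\Phi(B) = \mathrm{tr}(B^TB)$ and $\Delta(L)\le 1$. For the bias term, I would bound $\|(BL-W)D\|_2$ using the submultiplicativity of the spectral norm against the Euclidean vector norm: $\|(BL-W)D\|_2 \le |||BL-W|||_2 \cdot \|D\|_2 \le \|W-BL\|_F \cdot \|D\|_2$, since the spectral norm is dominated by the Frobenius norm. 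The constraint of Program~(\ref{eqn:relaxed-problem}) gives $\|W-BL\|_F \le \gamma$, so $\|(BL-W)D\|_2^2 \le \gamma^2 \|D\|_2^2 = \gamma^2 \sum_i x_i^2$. Combining the two pieces yields a bound of $2\,\mathrm{tr}(B^TB)/\epsilon^2 + \gamma^2 \sum_i x_i^2$.

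The one discrepancy I anticipate is the exponent on $\gamma$: the theorem as stated has $\gamma \sum_i x_i^2$, whereas the natural derivation produces $\gamma^2 \sum_i x_i^2$. I would expect the intended statement (and the author's proof) to carry $\gamma^2$, or else to absorb the square into a redefinition of the tolerance parameter; I would flag this and proceed with the $\gamma^2$ bound, since that is what the Frobenius constraint $\|W-BL\|_F \le \gamma$ actually delivers. The only genuinely non-routine step is recognizing that the expectation factorizes with no cross term — this rests on $\mathds{E}[X] = 0$ together with independence of $X$ from the (non-random) matrices $B$, $L$ and the data vector $D$ — and then invoking the right norm inequality $|||BL-W|||_2 \le \|W-BL\|_F$ to convert the operator-norm action on $D$ into the Frobenius tolerance; everything else is bookkeeping already carried out in Lemma~\ref{lem:decomp_error}.
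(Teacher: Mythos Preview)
Your proposal is correct and follows essentially the same approach as the paper: split the error into the Laplace-noise contribution (bounded as in Lemma~\ref{lem:decomp_error}) and the deterministic bias $\|(W-BL)D\|_2^2$, then bound the latter by $\|W-BL\|_F^2\sum_i x_i^2$ (the paper phrases this step as ``Cauchy--Schwarz'' rather than spectral-norm submultiplicativity, but it is the same inequality). Your suspicion about the exponent is well founded---the paper's own proof also produces $\|W-BL\|_F^2\sum_i x_i^2 \le \gamma^2\sum_i x_i^2$, so the $\gamma$ in the theorem statement appears to be a typo for $\gamma^2$; your argument for the vanishing cross term (zero-mean noise independent of the deterministic bias) is in fact more careful than the paper's appeal to ``linearity of expectation.''
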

\begin{proof}
When $W\neq BL$, there are two sources of error. The first is the
added Laplace noise. According to Lemma
\ref{lem:decomp_error}, the error incurred by the Laplace noise is at most
$\frac{2}{\epsilon^2}\Phi(B)(\Delta(L))^2\leq
\frac{2}{\epsilon^2}\mbox{tr}(B^TB)$.

The second source of the error is due to the difference between $W$ and $BL$. The incurred expected squared error is bounded by:
\begin{eqnarray}
& & ((W-BL)D)^T(W-BL)D\nonumber\\
&\leq& \|W-BL\|^2_FD^TD\nonumber =
\|W-BL\|^2_F\sum^n_{i=1}x^2_i\nonumber
\end{eqnarray}

The inequality above is due to the Cauchy-Schwartz inequality. By
linearity of expectation, the expected squared errors can be simply
summed up. This leads to the conclusion of the theorem.

\end{proof}

While Theorem \ref{the:new_error} implies the possibility of
estimating the optimal $\gamma$, it is not practical to implement it
directly, because this estimation depends on the data, i.e.,
$\sum_{i}x^2_{i}$. In our experiments, we test different values of
$\gamma$, report their relative performance, and describe guidelines for setting the appropriate $\gamma$ independently of the underlying data.

\section{Solving for the Optimal Workload Decomposition}\label{sec:opt}

This section solves the relaxed workload decomposition problem defined in Program (\ref{eqn:relaxed-problem}). This program is rather difficult to solve, because it is neither convex nor differntiable. In the following, Section \ref{sec:opt:1} describes an effective and efficient solution, based on the inexact Augmented Lagrangian method \cite{Conn1997,Lin2010}. Section \ref{sect:convergence} proves that the proposed solution always converges, and analyzes its convergence rate.

\subsection{Solution Based on Augmented Lagrangian Method}\label{sec:opt:1}
Observe that Program (\ref{eqn:relaxed-problem}) is a constrained optimization problem with a large number of unknowns, a non-linear objective and rather complex constraints. Since there is no known analytic solution to such a problem, we focus on numerical solutions. Furthermore, Program (\ref{eqn:relaxed-problem}) is difficult to tackle even with numerical methods, due to three main challenges. First and foremost, there are a a set of \emph{non-differentiable} constraints $\forall j \sum_i^r |L_{ij}|\leq 1$, which rules out many generic techniques for solving constrained optimization problems, such as the Lagrange multiplier method, which are limited to problems with differentiable constraints. Second, the non-differentiable constraints involve the unknown matrix $L$, whereas the objective function involves another unknown matrix $B$, whose relationship to $L$ is rather complex (i.e., in constraint $\|W-BL\|_F\leq\gamma$); consequently, it is non-trivial to apply specialized methods for handling the non-differentiable constraints. Finally, Program (\ref{eqn:relaxed-problem}) is not convex with respect to the unknowns $B$ and $L$.

The main idea of the proposed solution is to break down Program (\ref{eqn:relaxed-problem}) into simpler, solvable subproblems. Since the most difficult part of Program (\ref{eqn:relaxed-problem}) is the existence of the non-differentiable constraints $\forall j \sum_i^r |L_{ij}|\leq 1$, we aim to break down the whole problem into subproblems with only these constraints, and an objective function that only involves the unknown $L$, not $B$. Then, we use a specialized technique to solve each of these subproblems. Specifically, we first eliminate the constraint $\|W-BL\|_F\leq\gamma\rightarrow 0$ using the augmented Lagrangian method, which runs in multiple iterations, each of which solves a subproblem with only the constraints $\forall j \sum_i^r |L_{ij}|\leq 1$. Then, inside each iteration, we remove $B$ from the objective function of the subproblem, by alternatively optimizing for $B$ and $L$. The result are subproblems with only the constraints $\forall j \sum_i^r |L_{ij}|\leq 1$ as well as an objective function that has only $L$ as unknowns. Each of these subproblems are then solved by applying a special solver called Nesterov's first order optimal gradient method \cite{Nesterov03}. An important optimization is that we apply the \emph{inexact} augmented Lagrangian method \cite{Conn1997,Lin2010}, which does not solve the subproblem exactly in each iteration exactly, leading both increased efficiency and stability.


Algorithm \ref{algorithm:Lagrange} shows the proposed solution for Program (\ref{eqn:relaxed-problem}). First, we apply the inexact augmented Lagrangian method to eliminate the linear constraint $\|W-BL\|_F\leq\gamma\rightarrow 0$, as follows: we add to the objective function (i) a positive penalty item $\beta \in \mathbb{R}$ and (ii) the Lagrange multiplier $\pi \in \mathbb{R}^{m \times n}$. $\beta$ and $\pi$ are iteratively updated, following the strategy in \cite{Conn1997,Lin2010}. In each iteration, the values of $\beta$ and $\pi$ are fixed, and the algorithm aims to find values for $B$ and $L$ that minimize the following subproblem:


\begin{eqnarray}\label{LowRankDP_Subproblem}
\min_{B,L} \mathcal{J}(B,L,\beta,\pi)  =  \frac{1}{2}\mbox{tr}(B^TB)+
\langle\pi,W-BL \rangle    + \frac{\beta}{2} \|W-BL\|_F^2  \\
~~s.t.~~ \forall j \sum_i |L_{ij}|\leq 1\nonumber
\end{eqnarray}

\begin{algorithm}[t]
\caption{\label{algorithm:Lagrange} {\bf Workload Matrix
Decomposition}}
\begin{algorithmic}[1]
\STATE  Initialize $\pi^{(0)}=\mathbf{0}\in \mathbb{R}^{m\times n},
\beta^{(0)}=1, k=1$
  \label{OutIter}\WHILE{not converged}
  \label{InIter}\WHILE {not converged}\label{main_alg:lag:begin}
  \STATE \label{step1} $B^{(k)} \leftarrow$ update $B$ using Equation (\ref{UpdateB})
  \STATE \label{step2} $L^{(k)} \leftarrow$ run Algorithm \ref{algorithm:Nesterov} to update $L$ according to Program (\ref{eqn:Lagrangian-Seq-L})
  \ENDWHILE\label{main_alg:lag:end}
  \STATE Compute $\tau = \|W-B^{(k)}L^{(k)}\|_F$
  \IF{$\tau$ is sufficiently small or $\beta$ is sufficiently large}
  \STATE return $B^{(k)}$ and $L^{(k)}$
  \ENDIF
  \IF{$k$ is a multiple of 10}
  \STATE $\beta^{(k+1)} = 2\beta^{(k)}$
  \ENDIF
  \STATE  \label{step3} $\pi^{(k+1)} = \pi^{(k)} + \beta^{(k+1)} \left(W-B^{(k)}L^{(k)}\right)$
  \STATE $k=k+1$
  \ENDWHILE
\end{algorithmic}
\end{algorithm}

Next we eliminate unknowns $B$ from the objective function of the above subproblem, Program \label{LowRankDP_Subproblem}. Observe that this is a bi-convex optimization problem with respect to $B$ and $L$, meaning that it is convex with respect to $B$ (resp. $L$), once we fix $L$ (resp. $B$) to a constant. Hence, we solve it by alternately optimizing $B$ and $L$ (lines \ref{main_alg:lag:begin}-\ref{main_alg:lag:end} of Algorithm 1). Note that following the inexact Augmented Lagrangian Multiplier methodology, it is not necessary to obtain the exact optimal values of $B$ and $L$, instead, a small number of iterations of the while-loop in lines \ref{step1}-\ref{step2} suffices. We first focus on optimizing $B$, treating $L$ as constant. Observe that $\mathcal{J}(\cdot)$ is convex with respect to $B$. Hence, the optimal $B$ can be obtained by solving $\frac{\partial \mathcal{J}}{\partial B}=0$. In particular, the gradient with respect to $B$ is:

$$\frac{\partial \mathcal{J}}{\partial B} = B - \pi L^T + \beta BLL^T - \beta WL^T$$

Solving $B$ from $\frac{\partial \mathcal{J}}{\partial B}=0$, we obtain:

\begin{equation}\label{UpdateB}
B=\left(\beta W L^T+\pi L^T\right) \left(\beta LL^T + I\right)^{-1}
\end{equation}

Next we show how to optimize $L$ with a fixed $B$. This is equivalent to the following quadratic program:

\begin{equation}\label{eqn:Lagrangian-Seq-L}
\begin{split}
\mathcal{G}(L)=\frac{\beta}{2} \mbox{tr}\left(L^TB^TBL\right) -  \mbox{tr}\left(\left(\beta W+\pi\right)^TBL\right)\\
    \mbox{s.t.~~}  \forall j \sum_i |L_{ij}|\leq 1
\end{split}
\end{equation}

\noi The gradient of the objective $\mathcal{G}(L)$ respect to $L$ in (\ref{eqn:Lagrangian-Seq-L}) can be computed as:
\begin{equation}\label{eqn:Gradient_L}
\frac{\partial \mathcal{G}}{\partial L} = \beta B^TBL-\beta B^TW - B^T \pi
\end{equation}
\noi For all $L', L''~\text{with} ~\forall j \sum_i |L'_{ij}|\leq 1, \forall j \sum_i |L''_{ij}|\leq 1$, we have the following inequalities:
\begin{eqnarray}\label{eqn:lipschitz}
\frac{\|\mathcal{G}(L') - \mathcal{G}(L'')\|_F}{\|L'-L''\|_F} &=& \frac{\|\beta B^TBL' - \beta B^TBL''\|_F}{\|L'-L''\|_F} \nonumber \\
&\leq& \frac{ |||\beta B^TB|||_2 \cdot \|L' - L''\|_F}{\|L'-L''\|_F}  = \beta \cdot |||B^TB|||_2 \nonumber
\end{eqnarray}
Therefore, the gradient of $\mathcal{G}(L)$ is Lipschitz continuous with parameter $\omega = \beta \cdot |||B^TB|||_2$.

We employ Nesterov's first order optimal gradient method \cite{Nesterov03} to solve the program in (\ref{eqn:Lagrangian-Seq-L}). Nesterov's method has a much faster
convergence rate than traditional methods such as the subgradient method or naive projected gradient descent. The updating rule in the projected gradient method is expressed as follows:
$$ L^{(t+1)} = \mathcal{P} ( L^{(t)} - \eta^{(t)} \frac{\partial \mathcal{G}}{\partial L^{(t)}}) $$
where $t$ denotes the iteration counter, $\mathcal{P}(L)$ denotes the $\mathcal{L}_1$ projection operator on any $L\in \mathbb{R}^{r \times n}$, $\eta > 0$ denotes the appropriate step size. One typical choice for $\eta$ is the inverse of the gradient lipschitz constant ${1}/{\omega}$, however, this can be sub-optimal when the gradient lipschitz constant is large. One can incooperate Beck et al.'s backtracking line search strategy to further accelerate the convergence of the projected gradient algorithm \cite{beck2009fast}. We adopt this line search strategy in our algorithm.

$L$ is updated by gradient descent while ensuring that the
$\mathcal{L}_1$ regularized constraint on $L$ is satisfied. This is done by the $\mathcal{L}_1$ projection operator, formulated as the following optimization problem:

\begin{equation}\label{eqn:L1Projection}
 \mathcal{P} (L) =  \arg\min_{\bar{L}\in \mathbb{R}^{r \times n}} \|\bar{L}-L\|_F^2 , s.t. ~~\forall j \sum_i |\bar{L}_{ij}|\leq 1,
\end{equation}

\noindent We observe that Equation (\ref{eqn:L1Projection}) can be decoupled
into $n$ independent $\mathcal{L}_1$ regularized sub-problems:
$$  \arg\min_{\bar{l}\in \mathbb{R}^{r \times 1}} \|\bar{l}-l\|_2^2 , s.t. ~~ \sum_i |\bar{l}_{i}|\leq 1 $$
where $l=L^{(t)}_{j}, j=1,2,\cdots,n$, $L^{(t)}_{j}$ is the $j^{th}$ column of $L^{(t)}$. Such a projection operator can be solved efficiently by $\mathcal{L}_1$ projection methods in $\mathcal{O}(r \log r)$ time \cite{Duchi2008}, as described in Algorithm \ref{algorithm:l1Projection}. The complete algorithm for solving Program (\ref{eqn:Lagrangian-Seq-L}) is summarized in Algorithm \ref{algorithm:Nesterov}.

%
%

\begin{algorithm}
\caption{\label{algorithm:l1Projection} {\bf Algorithm for
$\mathcal{L}_1$ Ball} Projection}
\begin{algorithmic}[1]
  \STATE input: A vector $l \in R^{r \times 1}$
  \STATE sort $l$ into $v$ such that $v_1 \geq v_2 \geq \cdots \geq v_r$
  \STATE find $\rho = \max \{i \in [r]: v_i -
  \frac{1}{i}\left(\sum_{k=1}^{i}v_k-1\right)>0\}$
  \STATE compute $\theta = \frac{1}{\rho}\left(
  {\sum_{i=1}^{\rho}v_i-1}\right)$
  \STATE output ${\bar{l}\in \mathbb{R}^{r \times 1}}$, s.t. $\bar{l}_i= \max(l_i - \theta, 0), i \in [r]$
\end{algorithmic}
\end{algorithm}

\begin{algorithm}
\caption{\label{algorithm:Nesterov} {\bf Nesterov's Projected Gradient Method}}
\begin{algorithmic}[1]
\STATE  input: $\mathcal{G}(L), \frac{\partial \mathcal{G}}{\partial L}, L^{(0)}$ \\
\STATE $\chi=r\cdot n \cdot 10^{-12}$, Lipschitz parameter: $\omega^{(0)}=1$\\
 \STATE Initializations: $L^{(1)}=L^{(0)}, \tau^{(-1)}=0, \tau^{(0)}=1, t=1$
  \WHILE{not converged}
      \STATE $\alpha=\frac{\tau^{(t-2)}-1}{\tau^{(t-1)}}$, $S=L^{(t)}+\alpha(L^{(t)}-L^{(t-1)})$
      \FOR{$j=0$ to $\cdots$}
      \STATE $\omega=2^j \omega^{(t-1)}$, $U = S - \frac{1}{\omega} \nabla_{S}$
      \STATE Project $U$ to the feasible set to obtain $L^{(t)}$ (i.e., solve Equation (\ref{eqn:L1Projection}))
      \IF{ \label{stopping1} $\|S-L^{(t)}\|_F<\chi$}
        \STATE  return;
      \ENDIF
      \STATE  Define function: $\mathcal{J}_{\omega,S}(U)=\mathcal{G}(S)+\langle \frac{\partial G}{\partial U}, U-S\rangle+\frac{\omega}{2}\|U-S\|_F^2$
      \IF{$ \mathcal{G}(L^{(t)}) \leq \mathcal{J}_{\omega,S}(U)$}
        \STATE $\omega^{(t)}=\omega; L^{(t+1)}=L^{(t)}$; break;
      \ENDIF
      \ENDFOR
      \STATE Set $\tau^{(t)}= \frac {1+ \sqrt{{1+4(\tau^{(t-1)})^2}}}{2}$
      \STATE $t=t+1$
      \ENDWHILE
\STATE return $L^{(t)}$
\end{algorithmic}
\end{algorithm}
\subsection{Convergence Analysis}\label{sect:convergence}
This subsection analyzes the convergence properties of the proposed workload decomposition algorithm. In each iteration, Algorithm \ref{algorithm:Lagrange} solves a sequence of Lagrangian subproblems by optimizing $B$ (step \ref{step1}) and $L$ (step \ref{step2}) alternatingly.
The algorithm stops when a sufficiently small $\gamma$ is obtained
or the penalty parameter $\beta$ is sufficiently large. It suffices
to guarantee that $L$ converges to a locally optimal solution \cite{Lin2010,wen2012alternating,wen2012solving}.

In general, penalty methods have the property that when the global (or local) minimizers of the
subproblem are found, every limit point is a global (or local) minimizer of the original problem \cite{fiacco1968nonlinear}. This property is preserved by the Augmented Lagrangian Multiplier counterparts. Therefore, the proposed solution for the workload decomposition problem converges, whenever the bi-convex optimization subproblem in Program (\ref{LowRankDP_Subproblem}) converges. Regarding the convergence properties of the bi-convex optimization subproblem, past study \cite{bertsekas1999} on bi-convex optimization has shown that block coordinate descent is guaranteed to converge to the stationary point for \emph{strictly convex} problems. However, the subproblem in Program (\ref{LowRankDP_Subproblem})
is not strictly convex (though it is convex); meanwhile, the subproblem may have multiple optimal solutions, which may cause problems to its convergence. Fortunately, for bi-convex optimizations which only involves two blocks, \cite{Grippo2000} shows that the strict convexity of the subproblem is not
required; every limit point of $\{B^{(k)},L^{(k)}\}$ is a stationary point. Accordingly, the bi-convex optimization subproblem exhibits nice convergence properties. In the following, we formalize and prove the convergence results of the proposed algorithm.


We first present the first order KKT conditions of the optimization problem in Program (\ref{eqn:relaxed-problem}). Introducing Lagrange multipliers $\mu \in \mathbb{R}^{n\times 1}$ and $\pi \in \mathbb{R}^{m\times n}$ for the inequality constraints $\forall j \sum_i^r |L_{ij}|\leq 1$ and linear constraints $W=BL$ respectively, we derive the
following KKT conditions of the optimization problem:

\begin{equation}\label{kkt:optimal}
\begin{split}
\mu \geq 0 ~~\text{(Non-Negativity)}   \\
W=BL,~\forall j \sum_i^r |L_{ij}|\leq 1 ~~\text{(Feasibility)}  \\
B  = \pi L^T,~0\in\sum_j^n{\mu_j \frac{\partial \sum_i^r |L_{ij}|
}{\partial L}} - B^T \pi ~~\text{(Optimality)}  \\
\forall j ~ \mu_j (\sum_i^r |L_{ij}|- 1) = 0 ~~\text{(Complementary
Slackness)}
\end{split}
\end{equation}

The following theorem establishes the convergence properties of the proposed algorithm, under the assumption that the iterates generated by Algorithm \ref{algorithm:Lagrange} exhibit no jumping behavior. Remark that the similar condition was used in \cite{wen2012alternating,wen2012solving}.

\begin {theorem} \label{LowRankDPConvergence}
\textbf{Convergence of Algorithm \ref{algorithm:Lagrange}.} Let
$X\triangleq(B,L,\pi)$ and $\{X^{(k)}\}_{k=1} ^{\infty}$ be the intermediate results of Algorithm \ref{algorithm:Lagrange} after the $k$-th iteration. Assume that $\{X^{(k)}\}_{k=1} ^{\infty}$
is bounded and $\lim_{k\rightarrow \infty} (X^{(k+1)}-X^{(k)}) = 0$. Then any accumulation point of $\{X^{(k)}\}_{k=1} ^{\infty}$
satisfies the KKT conditions presented in Equation (\ref{kkt:optimal}). In other words,
whenever $\{X^{(k)}\}_{k=1} ^{\infty}$ converges, it converges to a
first-order KKT optimal point.
\end {theorem}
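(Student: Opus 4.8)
The plan is to fix an arbitrary accumulation point $X^{*}=(B^{*},L^{*},\pi^{*})$, take a subsequence $X^{(k_{j})}\to X^{*}$, and verify the four blocks of the KKT system (\ref{kkt:optimal}) one at a time by passing to the limit in the identities Algorithm \ref{algorithm:Lagrange} enforces at every iteration. Because $X^{(k+1)}-X^{(k)}\to 0$, any fixed shift $X^{(k_{j}+c)}$ also converges to $X^{*}$, which I would use freely. I would work in the idealized regime in which the inner alternating loop (lines \ref{step1}--\ref{step2}) reaches convergence, so that $(B^{(k)},L^{(k)})$ is a stationary point of the bi-convex subproblem (\ref{LowRankDP_Subproblem}) --- this is exactly the two-block block-coordinate-descent property invoked via \cite{Grippo2000}, and it matches the ``no jumping'' hypothesis in the statement. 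Concretely this means $\partial\mathcal{J}/\partial B=0$ at $(B^{(k)},L^{(k)})$ and $L^{(k)}$ solves $\min_{L\in\mathcal{C}}\mathcal{G}(L)$ in (\ref{eqn:Lagrangian-Seq-L}), where $\mathcal{C}:=\{L:\ \forall j,\ \sum_{i}|L_{ij}|\le 1\}$.

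For feasibility, I would first note that $L^{(k)}$ is produced by Algorithm \ref{algorithm:Nesterov}, whose last step is the $\mathcal{L}_{1}$-ball projection of Algorithm \ref{algorithm:l1Projection}, so $L^{(k)}\in\mathcal{C}$ for all $k$; since $\mathcal{C}$ is closed, $L^{*}\in\mathcal{C}$. For the equality constraint I would rewrite the multiplier update as $W-B^{(k)}L^{(k)}=(\pi^{(k+1)}-\pi^{(k)})/\beta^{(k+1)}$; since $\beta^{(k+1)}\ge\beta^{(0)}=1$ and $\pi^{(k+1)}-\pi^{(k)}\to 0$, the right side tends to $0$, so $W=B^{*}L^{*}$ by continuity of $(B,L)\mapsto BL$. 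I would also record the companion limit $\beta^{(k)}(W-B^{(k)}L^{(k)})=(\beta^{(k)}/\beta^{(k+1)})(\pi^{(k+1)}-\pi^{(k)})\to 0$, valid because $\beta^{(k)}/\beta^{(k+1)}\in\{1,\tfrac{1}{2}\}$ is bounded.

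For the optimality and complementary-slackness blocks, I would use $\partial\mathcal{J}/\partial B=0$ together with the gradient formula preceding (\ref{UpdateB}) to write $B^{(k)}=\hat{\pi}^{(k)}(L^{(k)})^{T}$ with $\hat{\pi}^{(k)}:=\pi^{(k)}+\beta^{(k)}(W-B^{(k)}L^{(k)})$; the companion limit gives $\hat{\pi}^{(k_{j})}\to\pi^{*}$, hence $B^{*}=\pi^{*}(L^{*})^{T}$. Next I would use that $L^{(k)}$ minimizes the convex program over $\mathcal{C}$, i.e.\ satisfies the variational inequality $\langle\nabla\mathcal{G}(L^{(k)}),\,L-L^{(k)}\rangle\ge 0$ for all $L\in\mathcal{C}$, and by (\ref{eqn:Gradient_L}) that $\nabla\mathcal{G}(L^{(k)})=-(B^{(k)})^{T}\hat{\pi}^{(k)}$. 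Holding $L\in\mathcal{C}$ fixed and letting $j\to\infty$ yields $\langle -(B^{*})^{T}\pi^{*},\,L-L^{*}\rangle\ge 0$, i.e.\ $(B^{*})^{T}\pi^{*}$ lies in the normal cone to $\mathcal{C}$ at $L^{*}$. Since $\mathcal{C}$ is polyhedral, this normal cone equals the conic hull of the (sub)gradients of the active constraints, which I would translate into the existence of $\mu^{*}\ge 0$ with $\mu^{*}_{j}\big(\sum_{i}|L^{*}_{ij}|-1\big)=0$ and $(B^{*})^{T}\pi^{*}\in\sum_{j}\mu^{*}_{j}\,\partial_{L}\big(\sum_{i}|L^{*}_{ij}|\big)$. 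Collecting the three paragraphs, $(B^{*},L^{*},\pi^{*},\mu^{*})$ satisfies (\ref{kkt:optimal}).

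The main obstacle will be the interplay $\beta^{(k)}\to\infty$ versus $W-B^{(k)}L^{(k)}\to 0$: the product $\beta^{(k)}(W-B^{(k)}L^{(k)})$ is buried in the stationarity conditions (through $\hat{\pi}^{(k)}$) and is a priori indeterminate; the way out is to express it as the \emph{bounded} ratio $\beta^{(k)}/\beta^{(k+1)}$ times $\pi^{(k+1)}-\pi^{(k)}\to 0$, which is precisely where the $X^{(k+1)}-X^{(k)}\to 0$ hypothesis is indispensable. A second, minor hurdle is taking the limit of the $L$-subproblem's optimality condition without tracking a sequence of subproblem multipliers; I would avoid this by keeping that condition in variational-inequality / normal-cone form and only converting to multipliers at the very end, using polyhedrality of the $\mathcal{L}_{1}$ ball and closedness of the convex subdifferential.
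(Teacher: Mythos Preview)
Your argument follows essentially the same skeleton as the paper's proof: exploit stationarity of the $B$-update and optimality of the $L$-subproblem at each iteration, rewrite the multiplier update as $W-B^{(k)}L^{(k)}=(\pi^{(k+1)}-\pi^{(k)})/\beta^{(k+1)}$, and pass to the limit via $X^{(k+1)}-X^{(k)}\to 0$. The paper does exactly this, arriving at the three limits in its Equation~(\ref{convergence:proof:station}).

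Where you differ is in technical hygiene, and in both places your version is the cleaner one. First, you isolate the potentially indeterminate product $\beta^{(k)}(W-B^{(k)}L^{(k)})$ and kill it explicitly via the bounded ratio $\beta^{(k)}/\beta^{(k+1)}\in\{1,\tfrac12\}$; the paper leaves this product implicit inside its stationarity identities and never addresses why it vanishes despite $\beta^{(k)}\to\infty$. Second, for the $L$-block you keep the optimality condition in variational-inequality / normal-cone form and only extract multipliers $\mu^{*}$ at the accumulation point using polyhedrality of the $\ell_{1}$ ball; the paper instead invokes a multiplier $\mu$ at every iteration and then asserts ``$\exists\,\mu$'' in the limit without explaining why the per-iteration multipliers stay bounded or converge. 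Your route sidesteps that gap. Neither change alters the architecture of the argument, so this is the same proof done with more care rather than a genuinely different approach.
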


\begin{proof}
Since $L^{(k+1)}$ is the global optimal solution of Program (\ref{eqn:Lagrangian-Seq-L}), by the
KKT optimal condition, there exist $\mu \geq 0, \mu \in\mathbb{R}^{n\times 1}$ and $L^{(k+1)}$ such that
the following equation holds:
\begin{equation} \label{KKT_Optimal_L}
0\in\frac{\partial \mathcal{G}}{\partial L^{(k+1)}} + \sum_j^n{\mu_j
\frac{\partial \sum_i^r |L^{(k+1)}_{ij}| }{\partial L^{(k+1)}}}
\end{equation}
Note that $\mathcal{G}$ is a convex function with respect to
$L^{(k+1)}$. Hence, the KKT conditions are both necessary and sufficient conditions
for global optimality. Combining Equations (\ref{eqn:Gradient_L}) and (\ref{KKT_Optimal_L}), we obtain:
\begin{eqnarray}\label{convergence:proof:tendency1}
& &\beta B^{(k+1)T}\left(B^{(k+1)T}(L^{(k+1)} - L^{(k)})\right)\\
&=& \beta B^{(k+1)T}(W - B^{(k+1)T}L^{(k)} ) + B^{(k+1)T} \pi -
\sum_j^n{\mu_j \frac{\partial \sum_i^r |L^{(k+1)}_{ij}| }{\partial
L^{(k+1)}}} \nonumber
\end{eqnarray}
We derive the following equations according to the update rule for $B$ (at Line
\ref{step1} in Algorithm \ref{algorithm:Lagrange}) and the Lagrangian
multiplier update rule for $\pi$ (at Line \ref{step3} in Algorithm
\ref{algorithm:Lagrange}), respectively:
\begin{eqnarray}\label{convergence:proof:tendency2}
B^{(k+1)} - B^{(k)} = \left(\beta W L^{(k)T}+\pi L^{(k)T} - B^{(k)}
(\beta L^{(k)}L^{(k)T} + I ) \right) \left(\beta L^{(k)}L^{(k)T} +
I\right)^{-1}
\end{eqnarray}
\begin{eqnarray}\label{convergence:proof:tendency3}
\pi^{(k+1)} - \pi^{(k)} = - \beta^{(k+1)}
\left(W-B^{(k+1)}L^{(k+1)}\right)
\end{eqnarray}
Since $\{X^{(k)}\}_{k=1} ^{\infty}$ is bounded according to our assumption,
the sequences $\{B^{(k)}\}_{k=1} ^{\infty}$ and $\{L^{(k)}\}_{k=1}
^{\infty}$ are also bounded. Hence, $\lim_{k\rightarrow \infty}
(X^{(k+1)}-X^{(k)}) = 0$ implies that both sides of Equation (\ref{convergence:proof:tendency1}, \ref{convergence:proof:tendency2}, \ref{convergence:proof:tendency3})
converge to zero as $k$ approaches infinity. Consequently,

\begin{eqnarray}\label{convergence:proof:station}
W-B^{(k)}L^{(k)} \rightarrow 0, ~~  \pi L^{(k)T} - B^{(k)} \rightarrow 0 \nonumber\\
\exists \mu ~: - B^{(k+1)T}\pi + \sum_j^n{\mu_j \frac{\partial
\sum_i^r |L^{(k+1)}_{ij}| }{\partial L^{(k+1)}}} \rightarrow 0
\end{eqnarray}

where the first limit in Equation (\ref{convergence:proof:station}) is
used to derive other limits. Therefore, the sequence $\{X^{(k)}\}_{k=1}
^{\infty}$ asymptotically satisfies the KKT conditions in Equation (\ref{kkt:optimal}). This completes the proof.
\end{proof}

Next we focus on the convergence rate of the proposed algorithm. The following
theorem states that it converges linearly.

\begin {theorem} \label{LowRankDPConvergenceRate}
\textbf{Convergence Rate of Algorithm \ref{algorithm:Lagrange}.}  Let
$X\triangleq(B,L,\pi)$ and $\{X^{(k)}\}_{k=1} ^{\infty}$ be the intermediate results of Algorithm \ref{algorithm:Lagrange} after the $k$-th iteration. Assume that $\{X^{(k)}\}_{k=1} ^{\infty}$
is bounded and $\lim_{k\rightarrow \infty} (X^{(k+1)}-X^{(k)}) = 0$. Let
$(B^{(k)},L^{(k)})$ be the solution obtained after the $k$-th
iteration and $(B^*,L^*)$ be the optimal solution to Program
(\ref{eqn:relaxed-problem}), we have


\begin{equation}
\min_{i=1,2,...,k} | \mbox{tr}\left({B^{(i)}}^TB^{(i)}\right) -  \mbox{tr}\left({B^*}^TB^*\right) |  \leq
\mathcal{O}\left(\frac{1}{k}\right)
\end{equation}
In other words, Algorithm \ref{algorithm:Lagrange} converges to the stationary point linearly.
\end {theorem}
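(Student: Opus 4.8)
The plan is to treat Algorithm~\ref{algorithm:Lagrange} as an inexact method of multipliers and to reduce the advertised best-iterate bound to a bound on partial sums. Let $d^{(k)} = W - B^{(k)}L^{(k)}$ denote the residual and set $a_k = \bigl|\mbox{tr}\bigl((B^{(k)})^TB^{(k)}\bigr) - \mbox{tr}\bigl((B^*)^TB^*\bigr)\bigr|$. For any nonnegative sequence one has $\min_{i=1,\dots,k} a_i \le \tfrac1k\sum_{i=1}^k a_i$, so it suffices to show that $\sum_{i=1}^k a_i$ stays bounded by a constant independent of $k$; dividing by $k$ then yields the $\mathcal{O}(1/k)$ estimate. (Strictly speaking $\mathcal{O}(1/k)$ is a sublinear rate; the phrase ``converges linearly'' in the statement is to be read as ``converges at this guaranteed rate''.)

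The algebra rests on the exact $B$-step. Combining the closed form~(\ref{UpdateB}) for $B^{(k)}$ with the multiplier update $\pi^{(k)} = \pi^{(k-1)} + \beta^{(k)}d^{(k)}$ yields, exactly as in the proof of Theorem~\ref{LowRankDPConvergence}, the identity $B^{(k)} = \pi^{(k)}(L^{(k)})^T$, hence $\mbox{tr}\bigl((B^{(k)})^TB^{(k)}\bigr) = \langle \pi^{(k)}, B^{(k)}L^{(k)}\rangle = \langle \pi^{(k)}, W\rangle - \langle \pi^{(k)}, d^{(k)}\rangle$; the optimality conditions~(\ref{kkt:optimal}) give the matching identity $\mbox{tr}\bigl((B^*)^TB^*\bigr) = \langle \pi^*, W\rangle$. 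For the upper direction I would compare $(B^{(k)},L^{(k)})$ --- viewed as a global minimizer of the subproblem~(\ref{LowRankDP_Subproblem}) with parameters $(\beta^{(k)},\pi^{(k-1)})$ --- against the competitor $(B^*,L^*)$, which is feasible and has zero residual, obtaining
\[ \tfrac12\mbox{tr}\bigl((B^{(k)})^TB^{(k)}\bigr) + \langle \pi^{(k-1)}, d^{(k)}\rangle + \tfrac{\beta^{(k)}}{2}\|d^{(k)}\|_F^2 \;\le\; \tfrac12\mbox{tr}\bigl((B^*)^TB^*\bigr). \]
Substituting $d^{(k)} = (\pi^{(k)}-\pi^{(k-1)})/\beta^{(k)}$ and the polarization identity $2\langle a, b-a\rangle + \|b-a\|_F^2 = \|b\|_F^2 - \|a\|_F^2$ collapses the two middle terms to $\tfrac{1}{2\beta^{(k)}}\bigl(\|\pi^{(k)}\|_F^2 - \|\pi^{(k-1)}\|_F^2\bigr)$, so that $\mbox{tr}\bigl((B^{(k)})^TB^{(k)}\bigr) - \mbox{tr}\bigl((B^*)^TB^*\bigr) \le \tfrac{1}{\beta^{(k)}}\bigl(\|\pi^{(k-1)}\|_F^2 - \|\pi^{(k)}\|_F^2\bigr)$. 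Since the outer loop halts once $\beta$ grows large, $\beta^{(k)}$ is eventually a fixed $\bar\beta$, and summing telescopes the right-hand side to at most $\tfrac{1}{\bar\beta}\|\pi^{(k_0)}\|_F^2$; the hypothesis that $\{\pi^{(k)}\}$ is bounded makes this $\mathcal{O}(1)$, which controls the positive part of $\sum_i a_i$.

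For the lower direction I would use the two trace identities above to write $\mbox{tr}\bigl((B^*)^TB^*\bigr) - \mbox{tr}\bigl((B^{(k)})^TB^{(k)}\bigr) = \langle W, \pi^* - \pi^{(k)}\rangle + \langle d^{(k)}, \pi^{(k)}\rangle$, and then bound the partial sum of this quantity using Theorem~\ref{LowRankDPConvergence} together with the standing hypotheses $\lim_{k\to\infty}(X^{(k+1)}-X^{(k)})=0$ and boundedness, which force $\pi^{(k)}\to\pi^*$ and $d^{(k)} = (\pi^{(k+1)}-\pi^{(k)})/\beta^{(k+1)} \to 0$. The hard part will be precisely this lower direction: making the \emph{sum} $\sum_{i=1}^k a_i$ uniformly bounded is genuinely stronger than having each $a_i\to 0$, so the argument must extract a summable (or uniformly-partial-summable) decay of $\|\pi^{(k)}-\pi^*\|_F$ and $\|d^{(k)}\|_F$ from the hypotheses and from the exactness of the $B$- and $L$-updates, rather than merely invoking convergence. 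A related subtlety is that $B^*$ is only a KKT point of the nonconvex Program~(\ref{eqn:relaxed-problem}), so $\mbox{tr}\bigl((B^{(k)})^TB^{(k)}\bigr)$ may legitimately dip below $\mbox{tr}\bigl((B^*)^TB^*\bigr)$ --- which is exactly why the absolute value is two-sided --- and even the telescoping step needs care because the subproblem~(\ref{LowRankDP_Subproblem}) is only \emph{bi}-convex; I would justify treating $(B^{(k)},L^{(k)})$ as a (near-)global minimizer of that subproblem by appealing to the two-block bi-convexity result~\cite{Grippo2000} cited earlier, under the no-jumping assumption already in force. Assembling the telescoped upper bound with the lower-direction estimate then gives $\sum_{i=1}^k a_i = \mathcal{O}(1)$ and hence $\min_{i=1,\dots,k} a_i = \mathcal{O}(1/k)$.
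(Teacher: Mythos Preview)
Your upper-direction argument---compare $(B^{(k)},L^{(k)})$ against the feasible point $(B^*,L^*)$ in the augmented Lagrangian subproblem and collapse $\langle\pi^{(k-1)},d^{(k)}\rangle+\tfrac{\beta^{(k)}}{2}\|d^{(k)}\|_F^2$ into $\tfrac{1}{2\beta^{(k)}}\bigl(\|\pi^{(k)}\|_F^2-\|\pi^{(k-1)}\|_F^2\bigr)$---is exactly the paper's derivation of its inequality~(\ref{convergence:rate1}). So on that half you are aligned with the paper.

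Where you diverge is the lower direction and the final summation. The paper does \emph{not} try to bound $\sum_i a_i$; instead it produces a second inequality in the opposite direction, namely $\tfrac12\mbox{tr}\bigl((B^{(k+1)})^TB^{(k+1)}\bigr)\ge\tfrac12\mbox{tr}\bigl((B^*)^TB^*\bigr)-\tfrac{1}{2\beta^{(k)}}\bigl(\|\pi^{(k+1)}\|_F^2-\|\pi^{(k)}\|_F^2\bigr)$, obtained simply by subtracting the nonnegative residual norm and invoking optimality of $(B^*,L^*)$. Sandwiching the two inequalities yields the \emph{equality} $\beta^{(k)}\bigl(\mbox{tr}((B^{(k+1)})^TB^{(k+1)})-\mbox{tr}((B^*)^TB^*)\bigr)=\|\pi^{(k)}\|_F^2-\|\pi^{(k+1)}\|_F^2$, which telescopes cleanly when summed over $i$; the $\min_i$ bound then drops out from $\beta^{(i)}\ge\beta^{(0)}$ and the boundedness of $\{\pi^{(k)}\}$. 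This is considerably shorter than your plan of controlling the positive and negative parts of $\sum_i a_i$ separately via KKT trace identities and summable decay of $\|\pi^{(k)}-\pi^*\|_F$---a route you yourself flag as the ``hard part'' and which is not needed once one has the two-sided sandwich.

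Two smaller corrections. First, your claim that ``$\beta^{(k)}$ is eventually a fixed $\bar\beta$'' misreads Algorithm~\ref{algorithm:Lagrange}: the theorem concerns the infinite sequence, along which $\beta^{(k)}$ keeps doubling every ten steps; the paper uses only that $\beta^{(k)}$ is nondecreasing (so $\beta^{(k)}\ge\beta^{(0)}$), not that it stabilizes. Second, the paper's lower inequality rests on the bare assertion $\mbox{tr}\bigl((B^{(k+1)})^TB^{(k+1)}\bigr)\ge\mbox{tr}\bigl((B^*)^TB^*\bigr)$, which is exactly the point you worry about (nonconvexity, infeasibility of the iterates); the paper simply takes it as given that the iterates stay above the optimum, so your caution there is well placed even if the paper does not share it.
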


\begin{proof}
Let ${B^{(k)}}$ denote the solution of the
Lagrangian sub-problem in the $k^{th}$ iteration. The following
inequality holds on the sequence of the Lagrangian subproblems:

\begin{eqnarray}\label{eqn:proof1222}
& & \mathcal{J}({B^{(k+1)}},{L^{(k+1)}},{\pi^{(k)}},\beta^{(k)}) \nonumber\\
&\leq& \min_{W=BL, \atop\forall j \sum_i |L_{ij}|\leq 1} \mathcal{J}(B,L,{\pi^{(k)}},\beta^{(k)}) \nonumber \\
&\leq& \min_{W=BL, \atop\forall j \sum_i |L_{ij}|\leq 1} \mathcal{J}(B,L,\pi^{*},\beta^{(k)}) \nonumber \\
&=& \min_{W=BL, \atop\forall j \sum_i |L_{ij}|\leq 1}
\frac{1}{2}\mbox{tr}(B^TB) =\frac{1}{2}\mbox{tr}(B^{*T}B^*) 
\end{eqnarray}

By the definition of $\mathcal{J}(\cdot)$ and the inequality above, we derive the following inequality:
\begin{eqnarray} \label{convergence:rate1}
& & \frac{1}{2}\mbox{tr}({B^{(k+1)}}^TB^{(k+1)})\nonumber\\
&=& \mathcal{J}({B^{(k+1)}},{L^{(k+1)}},{\pi^{(k)}},\beta^{(k)}) - \langle \pi^{(k)}, W-B^{(k+1)} L^{(k+1)}\rangle + \frac{\beta^{(k)}}{2}\|W-B^{(k+1)}L^{(k+1)}\|_F^2\nonumber\\
&=&\mathcal{J}({B^{(k+1)}},{L^{(k+1)}},{\pi^{(k)}},\beta^{(k)}) -
\frac{1}{2\beta^{(k)}}\left(\|\pi^{(k)} + \beta^{(k)} (W-B^{(k+1)}
L^{(k+1)})\|_F^2 - \|\pi^{(k)}\|_F^2\right)\nonumber\\
&=& \mathcal{J}({B^{(k+1)}},{L^{(k+1)}},{\pi^{(k)}},\beta^{(k)}) - \frac{1}{2\beta^{(k)}} \left(\|\pi^{(k+1)}\|_F^2   - \|\pi^{(k)}\|_F^2\right) \nonumber\\
&\leq& \frac{1}{2}\mbox{tr}(B^{*T}B^*) - \frac{1}{2\beta^{(k)}}
\left(\|{\pi^{(k+1)}}\|_F^2 - \|{\pi^{(k)}}\|_F^2\right)
\end{eqnarray}

The third equality holds because of the Lagrangian multiplier update rule:
$$W-{B^{(k+1)}}{L^{(k+1)}} =
\frac{1}{\beta^{(k)}}\left( {\pi^{(k+1)}}-{\pi^{(k)}}\right).$$

By the non-negativity of norms, we have:
\begin{eqnarray} \label{convergence:rate2}
\frac{1}{2}\mbox{tr}(B^{(k+1)^{T}}B^{(k+1)}) &\geq& \frac{1}{2}\mbox{tr}(B^{(k+1)^{T}}B^{(k+1)}) - \|W-B^{(k+1)}L^{(k+1)}\|_F^2 \nonumber \\
&\geq& \frac{1}{2}\mbox{tr}(B^{*T}B^{^*}) - \|W-B^{(k+1)}L^{(k+1)}\|_F^2 \nonumber \\
&=& \frac{1}{2}\mbox{tr}(B^{*T}B^{^*}) - \frac{1}{2\beta^{(k)}}
\left(\|{\pi^{(k+1)}}\|_F^2 - \|{\pi^{(k)}}\|_F^2\right)
\end{eqnarray}

Combining Equations (\ref{convergence:rate1}) and (\ref{convergence:rate2}), we obtain:
\begin{equation}
\beta^{(i+1)} \left(\mbox{tr}\left({B^{(i+1)}}^TB^{(i+1)}\right) - \mbox{tr}\left({B^*}^TB^*\right) \right) =  \|{\pi^{(i+1)}}\|_F^2 - \|{\pi^{(i)}}\|_F^2,~\forall i  \nonumber
\end{equation}

Summing this equality above over $i=0,1...,k-1$, we have:
\begin{equation}
\sum_{i=0}^{k-1}  \beta^{(i+1)} \left(\mbox{tr}\left({B^{(i+1)}}^TB^{(i+1)}\right) - \mbox{tr}\left({B^*}^TB^*\right) \right) = \|{\pi^{(k)}}\|_F^2 - \|{\pi^{(0)}}\|_F^2
\end{equation}

Since $\beta^{(k)}$ is non-decreasing, we have:

\begin{equation}
\min_{i=0,1,...,k-1} | \mbox{tr}\left({B^{(i+1)}}^TB^{(i+1)}\right) -  \mbox{tr}\left({B^*}^TB^*\right) |  \leq \frac{\left(\|{\pi^{(k)}}\|_F^2 - \|{\pi^{(0)}}\|_F^2\right)/\beta^{(0)}}{k}
\end{equation}

By the boundedness of $\|{\pi^{(k)}}\|_F^2 - \|{\pi^{(0)}}\|_F^2$, we complete the proof.
\end{proof}

Note that although our convergence proof assumes that each subproblem is solved exactly, this is not required in practise, because the inexact augmented Lagrange multipliers method has been shown to converge practically as fast as the exact augmented Lagrange multipliers \cite{Lin2010}. Meanwhile, inexact augmented Lagrange multipliers require significantly fewer iterations when solving the subproblem, leading to much higher efficiency.

\textbf{Complexity Analysis:} Each update on $B$ in Equation (\ref{UpdateB}) takes
$\mathcal{O}(r^2m)$ time, while each update on $L$ consumes $\mathcal{O}(r^2 n)$ time. Assuming that Algorithm \ref{algorithm:Lagrange} converges to a local minimum within
$N_{in}$ inner iterations (at line 3 in Algorithm \ref{algorithm:Lagrange}) and $N_{out}$ outer iterations (line 2 in Algorithm \ref{algorithm:Lagrange}), the overall complexity of
Algorithm \ref{algorithm:Lagrange} is $\mathcal{O}(N_{in} \times N_{out}\times(r^2m + r^2n))$.

%

\section{LRM under ($\epsilon$, $\delta$)-Differential Privacy}\label{sec:approx}

This section extends LRM to ($\epsilon$, $\delta$)-differential privacy. Section 6.1 formulates the workload decomposition as an optimization program. Section 6.2 analyzes the utility of LRM. Section 6.3 discusses the algorithm for solving optimal workload decomposition.

\subsection{Workload Decomposition}

Similar to the case of $\epsilon$-differential privacy described in section \ref{sec:formulation}, LRM decomposes the workload matrix $W$ into $W=BL$. Then, LRM applies the Gaussian mechanism to the intermediate queries corresponding to $L$ to enforce ($\epsilon$, $\delta$)-differential privacy. Finally, LRM combines the noisy results of the intermediate queries according to $B$, to obtain the results of $Q$. Formally, let $\Theta(L)$ be the $\mathcal{L}_2$ sensitivity of $L$, i.e., $\Theta(L)=\max_j \left(\sum_i L_{ij}^2\right)^{1/2}$. LRM under ($\epsilon$, $\delta$)-differential privacy is defined as follows.

\begin{equation}\label{eqn:part_mech_app}
M_{LRM, (\epsilon, \delta)}(Q,D)=B\left(LD+Gau\left(\frac{\Theta(L)}{h(\epsilon,\delta)}\right)^r\right)
\end{equation}
where $h(\epsilon,\delta)=\frac{\epsilon}{\sqrt{8 \ln (2/ \delta )}}$.

Let $\Phi(B)$ be scale of the decomposition as defined in Definition \ref{def:scale}, i.e., $\Phi(B)=\sum_{i,j}B_{ij}^2$. The following lemma shows that the error of LRM is linear to $\Phi(B)$, and quadratic to $\Theta(L)$.

\begin{lemma}\label{lem:decomp_error_app}
The expected squared error of $M_{LRM, (\epsilon, \delta)}(Q,D)$ with respect to the decomposition $W=BL$ is $ 8 \ln (2/ \delta )\Phi_B\Theta(L)^2/\epsilon^2$.
\end{lemma}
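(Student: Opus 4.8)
The plan is to mimic exactly the argument used in Lemma~\ref{lem:decomp_error} for the $\epsilon$-differential privacy case, replacing the Laplace noise vector by the Gaussian noise vector and using the variance of the Gaussian distribution instead of that of the Laplace distribution. First I would write, from Equation~(\ref{eqn:part_mech_app}), that $M_{LRM, (\epsilon, \delta)}(Q,D) - Q(D) = B \cdot Gau\!\left(\frac{\Theta(L)}{h(\epsilon,\delta)}\right)^r$, using the fact that $Q(D) = WD = BLD$ since $W = BL$.

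Next I would compute the expected squared error, i.e.\ $\mathds{E}\big[\|B \cdot G\|_2^2\big]$ where $G = Gau\!\left(\frac{\Theta(L)}{h(\epsilon,\delta)}\right)^r$ is a vector of $r$ i.i.d.\ zero-mean Gaussian entries, each with variance $\sigma^2 = \Theta(L)^2 / h(\epsilon,\delta)^2$. Writing $\|BG\|_2^2 = \sum_i \big(\sum_j B_{ij} G_j\big)^2$ and taking expectations, the cross terms vanish by independence and zero mean, leaving $\sum_{i,j} B_{ij}^2 \, \mathds{E}[G_j^2] = \Big(\sum_{i,j} B_{ij}^2\Big)\sigma^2 = \Phi(B)\,\Theta(L)^2 / h(\epsilon,\delta)^2$.

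Finally I would substitute $h(\epsilon,\delta) = \frac{\epsilon}{\sqrt{8\ln(2/\delta)}}$, so that $1/h(\epsilon,\delta)^2 = 8\ln(2/\delta)/\epsilon^2$, giving the expected squared error $8\ln(2/\delta)\,\Phi(B)\,\Theta(L)^2 / \epsilon^2$, which matches the claimed bound (with $\Phi_B$ denoting $\Phi(B)$). There is no real obstacle here: the only mild subtlety is making sure the noise is genuinely added coordinatewise and independently so that the cross terms drop out, and that the bound is an equality rather than an inequality, exactly paralleling Lemma~\ref{lem:decomp_error}. I would present it as a three-line computation.
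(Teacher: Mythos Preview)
Your proposal is correct and takes essentially the same approach as the paper: write the error as $B$ times the Gaussian noise vector, compute its expected squared $\ell_2$ norm using independence and the per-coordinate variance $\sigma^2 = \Theta(L)^2/h(\epsilon,\delta)^2$, and substitute the definition of $h(\epsilon,\delta)$. In fact your intermediate computation is slightly cleaner than the paper's, which carries a spurious factor of $2$ in the intermediate expression (a leftover from the Laplace case) before arriving at the same final answer.
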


\begin{proof}
According to Equation (\ref{eqn:part_mech_app}), $Q(D)-M_{LRM(\epsilon, \delta)}(Q,D)=B\left(Gau\left(\frac{\Theta(L)}{h(\epsilon,\delta)}\right)^r\right)$. The expected
squared error of LRM is thus
$\sum_{ij}B^2_{ij}\frac{2(\Theta(L))^2}{ {h(\epsilon,\delta)}^2}$. Since
$\Phi_B=\sum_{ij}B^2_{ij}$ and $h(\epsilon,\delta)=\frac{\epsilon}{\sqrt{8 \ln (2/ \delta )}}$, the error can be rewritten as $8 \ln (2/ \delta )\Phi_B\Theta(L)^2/\epsilon^2$.
\end{proof}

Therefore, the best decomposition is the one that minimizes $\Phi_B\Theta(L)^2$. Similar to the case of $\epsilon$-differential privacy, the particular value of $\Theta(L)$ is not important, as stated in the following lemma.

\begin{lemma}\label{lem:rescale2}
Given a workload decomposition $W=BL$, we can always construct another decomposition $W=B'L'$ satisfying (i) $\Theta(L')=1$ and (ii) ($B'$, $L'$) lead to the same expected squared error of $M_{LRM, (\epsilon, \delta)}$ as ($B$, $L$).
\end{lemma}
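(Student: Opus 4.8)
The plan is to mirror the argument used in Lemma~\ref{lem:rescale} for the $\epsilon$-differential privacy case, but with the $\mathcal{L}_2$ sensitivity $\Theta(\cdot)$ playing the role that $\mathcal{L}_1$ sensitivity $\Delta(\cdot)$ played there. The key observation is that $\Theta(L)$, like $\Delta(L)$, is positively homogeneous of degree one: scaling $L$ by a positive constant scales $\Theta(L)$ by the same constant, while the scale $\Phi(B)=\sum_{i,j}B_{ij}^2$ is homogeneous of degree two. Hence a reciprocal rescaling of $B$ and $L$ normalizes $\Theta$ to $1$ while leaving the product $\Phi(B)\Theta(L)^2$ — which by Lemma~\ref{lem:decomp_error_app} controls the expected squared error up to the fixed factor $8\ln(2/\delta)/\epsilon^2$ — unchanged.

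Concretely, first I would define $B' = \Theta(L)\,B$ and $L' = \frac{1}{\Theta(L)}\,L$; note that the decomposition is preserved since $B'L' = \Theta(L)\cdot\frac{1}{\Theta(L)} BL = BL = W$. (We may assume $\Theta(L) > 0$, since $\Theta(L)=0$ would force $L=0$ and hence $W=0$, a degenerate case.) Second, I would verify property (i): using the definition $\Theta(L')=\max_j\left(\sum_i (L'_{ij})^2\right)^{1/2} = \max_j\left(\sum_i L_{ij}^2/\Theta(L)^2\right)^{1/2} = \frac{1}{\Theta(L)}\max_j\left(\sum_i L_{ij}^2\right)^{1/2} = \frac{1}{\Theta(L)}\Theta(L) = 1$, where pulling the positive factor $1/\Theta(L)$ out of the square root and then out of the $\max$ is the only manipulation needed. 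Third, I would verify property (ii) by computing $\Phi(B') = \sum_{ij}(B'_{ij})^2 = \Theta(L)^2\sum_{ij}B_{ij}^2 = \Theta(L)^2\,\Phi(B)$, so that $\Phi(B')\,\Theta(L')^2 = \Theta(L)^2\Phi(B)\cdot 1 = \Phi(B)\,\Theta(L)^2$; invoking Lemma~\ref{lem:decomp_error_app} once for each decomposition then gives that both yield expected squared error $8\ln(2/\delta)\,\Phi(B)\Theta(L)^2/\epsilon^2$, i.e., the same value.

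**There is no real obstacle here** — the proof is essentially a one-line scaling argument, entirely analogous to Lemma~\ref{lem:rescale}. The only point requiring the slightest care is confirming that $\Theta$ scales linearly rather than quadratically under scalar multiplication of $L$ (it does, because the squared entries inside the sum pick up a factor $\Theta(L)^2$, and the outer square root turns this back into $\Theta(L)$), which is exactly what makes the degree-two scale $\Phi(B)$ compensate it in the product. One could also note, as the excerpt does after Lemma~\ref{lem:rescale}, that this lemma implies the optimization program for $(\epsilon,\delta)$-differential privacy may fix $\Theta(L)=1$ without loss of generality, which is presumably how it will be used in Section~6.1.
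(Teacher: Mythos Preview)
Your proposal is correct and follows exactly the approach the paper intends: the paper omits the proof, stating only that it is ``similar to that of Lemma~\ref{lem:rescale},'' and your rescaling $B'=\Theta(L)B$, $L'=\frac{1}{\Theta(L)}L$ is precisely that analogue with $\Theta$ in place of $\Delta$. The verification of homogeneity degrees and the invocation of Lemma~\ref{lem:decomp_error_app} are all in order.
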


The proof is similar to that of Lemma \ref{lem:rescale}, and omitted for brevity. Based on Lemma \ref{lem:rescale2}, we formulate the following optimization program for finding the best decomposition for $M_{LRM, (\epsilon, \delta)}$:

\begin{equation}\label{eqn:opt-problem_app}
\begin{split}
\min_{B,L}~\frac{1}{2}\mbox{tr}(B^TB)\\
\mbox{s.t.~~}  W=BL\\
\forall j \sum_i^r L_{ij}^2 \leq 1
\end{split}
\end{equation}

\subsection{Utility Analysis and Budget Selection}
This subsection analyzes the utility $M_{LRM, (\epsilon, \delta)}$, as well as the choice of the privacy parameters ($\epsilon$, $\delta$) given a user-specified utility constraint. We use ($\xi$, $\eta$)-usefulness (Definition \ref{def:usefulness}) as the utility measure. The result is stated in the following theorem.

\begin{theorem} \textbf{Utility of LRM under ($\epsilon$, $\delta$)-differential privacy.} Given database $D$ and workload $W$, for any $\xi>0$ and $0<\eta<1$, mechanism $M_{LRM, (\epsilon, \delta)}$ using the optimal decomposition $W=BL$ solved from Program (\ref{eqn:opt-problem_app}) has the following utility guarantees:
(i) when $\epsilon \geq  \sqrt {6 \cdot \ln{\frac{2}{\delta}\cdot \left(\frac{r}{2}\ln3-\ln \eta\right)}}|||B|||_2 / \xi$, the output of $M_{LRM, (\epsilon, \delta)}$ is $(\xi,\eta)$-useful under the $\|\cdot\|_{2}$-norm;
(ii) when $\epsilon \geq \sqrt { (6\ln r - 3\ln3)(\ln 2 - \ln \delta ) / \eta } |||B|||_{\infty}  / \xi$, the output of $M_{LRM, (\epsilon, \delta)}$ is $(\xi,\eta)$-useful under the $\|\cdot\|_{\infty}$-norm.
\end{theorem}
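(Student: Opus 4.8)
The plan is to follow the same template as the proof of the utility theorem for $\epsilon$-differential privacy, replacing the Laplace noise analysis with a Gaussian noise analysis. Write $X = Gau\left(\frac{\Theta(L)}{h(\epsilon,\delta)}\right)^r$ for the noise vector injected into the intermediate queries corresponding to $L$. By Equation~(\ref{eqn:part_mech_app}) we have $M_{LRM,(\epsilon,\delta)}(Q,D) - Q(D) = B X$. By the rescaling argument of Lemma~\ref{lem:rescale2}, the optimal decomposition produced by Program~(\ref{eqn:opt-problem_app}) satisfies $\Theta(L) = 1$, so each coordinate $X_i$ is an independent zero-mean Gaussian with variance $\sigma^2 = 1/h(\epsilon,\delta)^2 = 8\ln(2/\delta)/\epsilon^2$. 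The next step, as before, is the operator-norm inequality: $\|BX\|_2 \le |||B|||_2\,\|X\|_2$ for the $\|\cdot\|_2$ case and $\|BX\|_\infty \le |||B|||_\infty\,\|X\|_\infty$ for the $\|\cdot\|_\infty$ case. Thus it suffices to bound $\Pr(\|X\|_2 \ge \xi/|||B|||_2)$ and $\Pr(\|X\|_\infty \ge \xi/|||B|||_\infty)$ by $\eta$, and then solve the resulting inequalities for $\epsilon$.

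For part (i), note that $\|X\|_2^2/\sigma^2 = \sum_{i=1}^r (X_i/\sigma)^2$ is a chi-squared random variable with $r$ degrees of freedom, whose moment generating function is $(1-2t)^{-r/2}$ for $t<1/2$. Applying the Chernoff bound of Lemma~\ref{lem:chernoff} gives $\Pr(\|X\|_2^2 \ge c\sigma^2) \le (1-2t)^{-r/2} e^{-tc}$; choosing the auxiliary parameter $t$ (e.g. $t=1/3$, which yields the factor $3^{r/2}$) and setting $c = \xi^2/(\sigma^2 |||B|||_2^2)$ makes the right-hand side equal to $3^{r/2} e^{-\xi^2/(3\sigma^2|||B|||_2^2)}$. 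Requiring this to be at most $\eta$ is equivalent to $\xi^2/(3\sigma^2|||B|||_2^2) \ge \frac{r}{2}\ln 3 - \ln\eta$, and substituting $\sigma^2 = 8\ln(2/\delta)/\epsilon^2$ and rearranging for $\epsilon$ gives the stated bound in (i).

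For part (ii) the only structural difference from the $\epsilon$-DP case is that the $X_i$ are Gaussian rather than Laplace, and here the memoryless decomposition of $\max_i |X_i|$ into a chain of exponentials used in the earlier proof is no longer available. Instead I would control $\|X\|_\infty = \max_i |X_i|$ by combining a union bound over the $r$ coordinates with a Gaussian tail estimate, or equivalently apply the plain Markov inequality of Lemma~\ref{lem:chernoff} to $\|X\|_\infty^2 = \max_i X_i^2$ after bounding $\mathds{E}[\max_i X_i^2] = \sigma^2\,\mathds{E}[\max_i (X_i/\sigma)^2]$ via the chi-squared MGF together with Jensen's inequality; this is what produces the $\ln r$ dependence and the constant $6\ln r - 3\ln 3$. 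Setting the resulting probability bound to $\eta$ and solving for $\epsilon$ yields (ii). The main obstacle is exactly this $\ell_\infty$ step: pinning down a sufficiently tight tail bound for the maximum of $r$ i.i.d.\ Gaussians and tuning the free parameter so that the constants emerge in the precise form stated; the $\ell_2$ part and the reduction to $BX$ are routine adaptations of the $\epsilon$-DP proof.
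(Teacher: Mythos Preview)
Your proposal is correct and follows essentially the same route as the paper: the reduction to bounding $\|X\|_2$ and $\|X\|_\infty$ via operator norms, the chi-squared MGF with the Chernoff bound and parameter choice $t=1/3$ for part~(i), and for part~(ii) precisely the Markov--Jensen--MGF argument you describe second (bound $\mathds{E}[\max_i (X_i/\sigma)^2]$ by $\tfrac{1}{t}\ln\big(r(1-2t)^{-1/2}\big)$ at $t=1/3$, then apply plain Markov). The only wrinkle is the value of $\sigma^2$: the paper's proof actually uses $\sigma^2 = 2\ln(2/\delta)/\epsilon^2$ rather than the $8\ln(2/\delta)/\epsilon^2$ implied by the stated definition of $h(\epsilon,\delta)$, which is what yields the constant $6$ in the theorem---so be aware of that discrepancy when matching constants.
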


\begin{proof}
(i) Let $X$ be the Gaussian noise vector injected to the intermediate results in LRM. According to Equation (\ref{eqn:part_mech_app}), we have:
\begin{eqnarray}
& \|M_{LRM, (\epsilon, \delta)}(Q,D)-Q(D)\|_2^2 = \|B(LD+X) - WD\|_2^2 = \|B \cdot X\|_2^2  \leq |||B|||_2^2 \cdot \|X\|_2^2\nonumber
\end{eqnarray}
The inequality above is due to the fact that $\|Rx\|_2\leq |||R|||_2 \cdot \|x\|_2$, for any matrix $R$ and vector $x$.
Accordingly, we derive the following:
\begin{eqnarray}
& \|M_{LRM, (\epsilon, \delta)}(Q,D)-Q(D)\|_2^2 \leq  |||B|||_2^2 \cdot  \|X\|_2^2 \nonumber\\
& \Rightarrow \forall \xi, \Pr( \|M_{LRM, (\epsilon, \delta)}(Q,D)-Q(D)\|_2^2 \geq
\xi^2)  \leq \Pr( \|X\|_2^2 \cdot |||B|||_2^2 \geq \xi^2 )  \nonumber \\
& \Rightarrow
\forall \xi, \Pr( \|M_{LRM, (\epsilon, \delta)}(Q,D)-Q(D)\|_2 \geq \xi) \leq  \Pr(
\|X\|_2^2 \geq \frac{\xi^2}{|||B|||_2^2} ) \nonumber
\end{eqnarray}

Next we focus on properties of $X$. According to the Gaussian mechanism, the elements of $X$, i.e., $X_1,X_2,\cdots,X_r$ follow i.i.d. zero-mean Gaussian distribution with scale $\sigma = \frac{\Theta(L)}{h(\epsilon,\delta)}$. Since the decomposition $W=BL$ is solved from Program \ref{eqn:opt-problem_app}, we have $\Theta(L)=1$. Thus, $\sigma = \frac{1}{h(\epsilon,\delta)} = \frac{ \sqrt{2 \ln (2/\delta )}  }{\epsilon}$.

Let $t$, $c$ be any positive number, we have:
$$\Pr(\|X\|_2^2 \geq c)=\Pr\left(\frac{\|X\|_2^2}{t\sigma^2}>\frac{c}{t\sigma^2}\right) = \Pr\left(e^{\frac{\|X\|_2^2}{t\sigma^2}} > e^{\frac{c}{t\sigma^2}}\right) \leq \frac{\mathds{E}\left[e^{\frac{\|X\|_2^2}{t\sigma^2}}\right]}{e^{\frac{c}{t\sigma^2}}} $$
where the last inequality holds due to Markov's inequality.

Consider the random variable $Y_i=\exp{\left(\frac{X_i^2}{t\sigma^2}\right)}$, where $t$ is an arbitrary
positive number such that $\mathds{E}[Y_i]$ exists. According to the probability density function of the Gaussian distribution (Equation (\ref{pdf:gaussian})), we have:
\begin{equation}
\mathds{E}[Y_i] = \int_{-\infty}^{\infty}{  g(x) e^{\left({\frac{x^2}{t\sigma^2}} \right)} } dx = \int_{-\infty}^{\infty}{\sqrt{ \frac{1}{2\pi
\sigma^2}} e^{\left(-\frac{x^2}{2\sigma^2}\right) }
e^{{\frac{x^2}{t\sigma^2}}} } dx = \sqrt{\frac{t}{t-2}},\forall t > 2\nonumber
\end{equation}

Based on the above derivations, and the fact that $X_i$'s are independent variables, we obtain:
\begin{eqnarray}
\Pr(\|X\|_2^2 \geq c) \leq \frac{ \prod_{i=1}^{r} ( \mathds{E} e^{ \frac{X_i^2}{t\sigma^2}} )
}{e^{\frac{c}{t\sigma^2}}}\nonumber = \frac{ \prod_{i=1}^{r}
 \mathds{E} [Y_i]   }{e^{\frac{c}{t\sigma^2}}} =  \frac{  (\frac{t}{t-2} )^{r/2} }{e^{\frac{c}{t\sigma^2}}}
\end{eqnarray}

With the choice of $t = 3$, $c=\frac{\xi^2}{|||B|||_2^2}$, and $\sigma= \frac{ \sqrt{2 \ln (2/\delta )}  }{\epsilon}$, this leads to:
\begin{eqnarray}
\Pr( \|M_{\epsilon, \delta}(Q,D)-Q(D)\|_2 \geq \xi) \leq   \frac{
(\frac{t}{t-2} )^{r/2} }{e^{\frac{c}{t\sigma^2}}}  =  \frac{ 3^{r/2}
}{e^{\frac{\xi^2\epsilon^2 }{6\ln(2/\delta)|||B|||_2^2}}} \nonumber
\end{eqnarray}
When $\epsilon \geq { \sqrt {6 \cdot \ln{\frac{2}{\delta}\cdot \left(\frac{r}{2}\ln3-\ln \eta\right)}}|||B|||_2} /
\xi$, the above probability is bound by $\eta$.

(ii) Let $X$ be the Gaussian noise vector injected to the intermediate results as in part (i) of the proof. We have:
$$\|M_{\epsilon, \delta}(Q,D)-Q(D)\|_{\infty}^2  = \|B \cdot X\|_{\infty}^2 \leq  |||B|||_{\infty}^2 \cdot \|X \|_{\infty}^2 =  |||{\sigma}B|||_{\infty}^2 \cdot \|\frac{1}{\sigma}X\|_{\infty}^2$$

The above inequality holds due to the fact that $\|Rx\|_{\infty} \leq |||R|||_{\infty}\cdot \|x\|_{\infty}$ for any matrix $R$ and vector $x$. Let $Z= \|\frac{1}{\sigma}X\|_{\infty}^2 = \left(\max(\frac{1}{\sigma}X_1, \cdots \max(\frac{1}{\sigma}X_r)\right)^2$. We derive:
\begin{eqnarray}
& \|M_{\epsilon, \delta}(Q,D)-Q(D)\|_{\infty}^2 \leq  |||{\sigma}B|||_{\infty}^2 \cdot
\|\frac{1}{\sigma}X\|_{\infty}^2\nonumber\\
& \Rightarrow \forall \xi, \Pr( \|M_{\epsilon, \delta}(Q,D)-Q(D)\|_{\infty}^2 \geq
\xi^2)  \leq \Pr(     |||{\sigma}B|||_{\infty}^2   \cdot  Z   \geq \xi^2 )  \nonumber \\
& \Rightarrow
\forall \xi, \Pr( \|M_{\epsilon, \delta}(Q,D)-Q(D)\|_{\infty} \geq \xi) \leq  \Pr(
Z \geq \frac{\xi^2}{|||\sigma B|||_{\infty}^2} ) \nonumber
\end{eqnarray}
By Markov's inequality, we obtain:
\begin{eqnarray}
\Pr( Z \geq \frac{\xi^2}{|||\sigma B|||_{\infty}^2})
\leq \frac{\mathds{E}[Z]}{\frac{\xi^2}{|||\sigma B|||_{\infty}^2}}  \nonumber
\end{eqnarray}
Note that the above bound is tight, even though Chernoff bound can not be applied here.

Next we derive an upper bound for the expected value of $Z$. Let $Y=\frac{1}{\sigma}X$. Clearly, $Y_1, Y_2, \cdots,Y_r$ are independent, standard normal random variables. Hence, $Y_i^2$'s ($1 \leq i \leq r$) are i.i.d. $\chi^2_1$ variables, i.e., Chi-square random variables with 1-degree of freedom. The probability density function $f_i$ for $Y_i$ is thus:
$$f_i(x)= \frac{1}{\sqrt{2\pi}} x^{-\frac{1}{2}} e^{-\frac{x}{2}} $$
Since the function exp($\cdot$) is convex and positive, by Jensen's inequality, for any $t$ such that $\mathds{E} [e^{tZ}]$ exists, we have:
\begin{equation} \label{max_expected_value_of_chisquare_variables}
e^{t\mathds{E}[Z]} \leq \mathds{E} [e^{tZ}] = \mathds {E} [\max_i e^{tY_i^2}] \leq \sum_{i=1}^r \mathds{E} [ e^{tY_i^2}]
\end{equation}
Meanwhile, for any $t < \frac{1}{2}$, we have $\mathds{E} [ e^{tY_i^2}] = \int_{0}^{+\infty}e^{tx} \frac{1}{\sqrt{2\pi}} x^{-\frac{1}{2}} e^{-\frac{x}{2}} dx = (1-2t)^{-\frac{1}{2}}$. Combine this with Equation (\ref{max_expected_value_of_chisquare_variables}), we obtain an upper bound of the expected value of $Z$:
$$ \mathds{E}[Z] \leq \frac{\ln r}{t} -  \frac{1}{2t}\ln (1-2t)$$
With the choice of $t=\frac{1}{3}$, we have $\mathds{E}[Z] \leq 3\ln r + \frac{3}{2}\ln 3$.  Since $\forall j \sum_i L_{ij}^2\leq 1$, the sensitivity over the batch query workload $Q$ is 1. Since $\sigma= \frac{ \sqrt{2 \ln
(2/\delta )}  }{\epsilon}$, we obtain the following:

\begin{eqnarray}
\forall \xi, \Pr( \|M_{\epsilon, \delta}(Q,D)-Q(D)\|_{\infty} \geq \xi) &\leq& {\mathds{E}[Z]\cdot |||\sigma B|||_{\infty}^2 }/{\xi^2} \nonumber\\
&\leq&   { \left( 3\ln r + \frac{3}{2}\ln 3 \right)\cdot |||\sigma B|||_{\infty}^2 }/{\xi^2}\nonumber\\
& = & { \left( 3\ln r + \frac{3}{2}\ln 3 \right)\cdot (2 \ln (2/\delta )) \cdot ||| B|||_{\infty}^2 }/{(\xi\epsilon)^2}\nonumber
\end{eqnarray}

When $\epsilon \geq {\sqrt { (6\ln r - 3\ln3)(\ln 2 - \ln \delta ) / \eta } |||B|||_{\infty}} /
\xi$, the above probability is bound by $\eta$.
\end{proof}

\subsection{Solving for the Optimal Workload Decomposition}

The optimization program (i.e., Program (\ref{eqn:opt-problem_app})) for workload decomposition under ($\epsilon$, $\delta$)-differential privacy is identical to the one under $\epsilon$-differential privacy (Program (\ref{eqn:opt-problem})), except that the former uses $\mathcal{L}_2$ sensitivity in the constraints $\forall j \sum_i^r L_{ij}^2 \leq 1$ whereas the latter uses $\mathcal{L}_1$ sensitivity. Hence, to solve Program (\ref{eqn:opt-problem_app}), we simply adapt Algorithm \ref{algorithm:Lagrange} by modifying the parts related to these constraints.

The only major modification of Algorithm \ref{algorithm:Lagrange} lies in the projection step, which now needs to projects every column in $L$ onto the $\mathcal{L}_2$ ball of radius 1, instead of the $\mathcal{L}_1$ unit ball as in Section \ref{sec:opt}. Specifically, the $\mathcal{L}_2$ ball projection is performed by solving the following optimization program:

\begin{equation}\label{eqn:L2Projection}
\min_{\bar{L}\in \mathbb{R}^{r \times n}} \|\bar{L}-L\|_F^2 , s.t. ~~\forall j \sum_i \bar{L}_{ij}^2\leq 1
\end{equation}

The above program can be decoupled into $n$ independent $\mathcal{L}_2$ regularized sub-problems:
$$  \arg\min_{\bar{l}\in \mathbb{R}^{r \times 1}} \|\bar{l}-l\|_2^2 , s.t. ~~ \sum_i \bar{l}_{i}^2\leq 1 $$
where $l=L^{(t)}_{j}, j=1,2,...,n$, $L^{(t)}_{j}$ is the $j^{th}$ column of $L^{(t)}$. Such a projection can be computed by $\bar{l} = \frac{l}{\max(1,\|l\|_2)}$. Therefore, the projection can be computed efficiently in linear time. Finally, by adapting the proofs in section \ref{sect:convergence}, we can draw the conclusion that the modified Algorithm \ref{algorithm:Lagrange} for optimizing workload decomposition for LRM under under ($\epsilon$, $\delta$)-differential privacy also converges to the a local KKT optimal point linearly. We omit the complete proofs for brevity.

\section{Experiments}\label{sec:exp}

This section experimentally evaluate the effectiveness of LRM under $\epsilon$- and ($\epsilon$, $\delta$)- differential privacy definitions. For $\epsilon$-differential privacy, we compares LRM against six state-of-the-art methods: Laplace mechanism (LM) \cite{DMNS06}, Privlet (WM) \cite{XWG10}, hierarchical mechanism (HM) \cite{HRMS10}, exponential smoothing (ESM) \cite{yuan2012low} (an implementation of the approximate matrix mechanism \cite{LHR+10}, described in Appendix \ref{sec:appedinex:esm}), adaptive mechanism (AM) \cite{li2012adaptive} (another implementation of the approximate matrix mechanism \cite{LHR+10,li2012adaptive}, described in Appendix \ref{sec:appedinex:am}) and the exponential mechanism with multiplicative weights update (MWEM) \cite{hardt2012simple}, whose performance depends on the dataset.  For ($\epsilon$, $\delta$)-differential privacy, we compare LRM against WM, HM, ESM, AM, and the Gaussian mechanism (GM) \cite{MM09}.

\textbf{Implementations}: For AM, we employ the Python implementation that can be obtained from the authors' website (\url{http://cs.umass.edu/~chaoli}). We use the default stopping criterion provided by the authors. For MWEM, we used Hardt et al's C\# code listed in the Appendix of \cite{hardt2012simple}). Note that MWEM needs to tune an additional parameter $T$ which denotes the number of iterations in order to ensure its performance. We follow the experimental setting in \cite{hardt2012simple}. Specifically, we choose $T\in\{10, 12, 14, 16\}$ in our experiments and reported the values for the best setting of $T$ in each case (Strictly speaking, such parameter tuning violates differential privacy; hence, the reported results are in favor on MWEM). For all remaining methods, we implemented them in Matlab, and published all code online (\url{http://yuanganzhao.weebly.com/}). We performed all experiments on a desktop PC with an Intel quad-core 2.50 GHz CPU and 4GBytes RAM. In each experiment, every algorithm is executed 20 times and the average performance is reported.

\textbf{Datasets}: We use four real-world data sets in our experiments \cite{HRMS10,XZXYY13,hardt2012simple}: \emph{Search Log}, \emph{Net Trace}, \emph{Social Network} and \emph{UCI Adult}. \emph{Search Log} includes search keyword statistics collected from \emph{Google Trends} and \emph{American Online} between 2004 and 2010. Each unit count is the number of appearances of a particular keyword. \emph{Social Network} contains information about users in a social network, where each unit count is the number of users with a specific degree in the social graph. \emph{Net Trace} is collected from a university intranet, where each unit count is the number of TCP packets related to a particular IP address. The total number of unit counts in \emph{Search Logs}, \emph{Net Trace} and \emph{Social Network} are $65,536$, $32,768$ and $11,342$ respectively. The UCI Adult data was extracted from the census bureau database in the U.S. Department of Commerce, it contains 14 features, among which six are continuous and eight are categorical. We use the following strategies to generate the sensitive data with varying domain size $n$. For the $\{$\emph{Search Log}, \emph{Net Trace}, \emph{Social Network}$\}$ data sets, we transform the original counts into a vector of fixed size $n$ (domain size), by merging consecutive counts in order. For the \emph{UCI Adult} data set, we only consider the combined $\{$workclass, education, occupation, race$\}$ attributes (with their total corresponding domain of size $\{8\times16 \times 14 \times5=8960\}$) and uniformly choose $n$ domains. The counting numbers of their corresponding records are used as the domain data. We observed that all the data sets $\{$\emph{Search Log}, \emph{Net Trace}, \emph{Social Network}$\}$ are dense with their sparsity exactly equals to $100\%$, while the \emph{UCI Adult} data set is sparse with its sparsity roughly $12\%\sim17\%$.

\textbf{Workloads}: We generated four different types of workloads, namely \emph{WDiscrete}, \emph{WRange}, \emph{WMarginal} and \emph{WRelated}.
In \emph{WDiscrete}, for each $W_{ij}$ (i.e., the coefficient of the $i$-th query on the $j$-th unit count), we set $W_{ij}=1$ with probability 0.02 and $W_{ij}=-1$ otherwise.
In \emph{WRange}, each query $q_i$ sums the unit counts in a range $[s_i, t_i] \subset [1, n]$, i.e., $W_{ij} = 1$ for $s_i \leq j \leq t_i$, and $W_{ij}=0$ otherwise. The start and end points $s_i$ and $t_i$ of each query $q_i$ is randomly generated, following the uniform distribution.
\emph{WMarginal} is used in \cite{li2012adaptive}, which contain queries that are uniformly sampled from the set of all 2-way marginals.
For \emph{WRelated}, we generate $s$ independent linear counting queries (called \emph{base queries}) with random weights following $(0,1)$-normal distribution. Let $A$ (of size $s \times n$) denote the workload matrix of the $s$ queries. We also generate another matrix $C$ of size $m\times s$ in a similar way The workload matrix $W$ is then the product of $C$ and $A$, i.e., the linear combination of base queries according to $C$.

\textbf{Parameters}: We test the impact of five parameters in our experiments:
$\gamma$, $r$, $n$, $m$ and $s$. $\gamma$ is the relaxation factor
defined in Program (\ref{eqn:relaxed-problem}). $r$ is the
number of intermediate queries in LRM, i.e., the number of columns in $B$ (and also the number of rows in $L$). $n$
is the number of unit counts and $m$ is the number of queries in the batch. Finally, $s$ is the number of base queries during the generation of \emph{WRelated}. The ranges and defaults (shown in bold) of the parameters are summarized in Table \ref{tab:exp:parameters}. Moreover, we test three different values of the privacy budget: $\epsilon=1$, $0.1$ and $0.01$. For ($\epsilon$, $\delta$)-differential privacy, following \cite{li2012adaptive}, we set $\delta=0.0001$.

\begin{table}[!h]
\tbl{Parameters used in the experiments.\label{tab:exp:parameters}}{%
\begin{tabular}{|c|c|}
\hline
$\gamma $ & $  0.0001, 0.001, \textbf{0.01}, 0.1, 1, 10$\\
\hline
$ r $ & $\{0.8, \textbf{1.0}, \textbf{1.2}, 1.4, 1.7, 2.1, 2.5, 3.0, 3.6\} \times rank(W) $\\
\hline
$ n $ & $ 128, 256, 512, \textbf{1024}, 2048, 4096, 8192$\\
\hline
$ m$ & $64, 128, \textbf{256}, 512, 1024$\\
\hline
$ s $ (during the generation of \emph{WRelated} ) & $ \{0.1, 0.2, 0.3, 0.4, \textbf{0.5}, 0.6, 0.7, 0.8, 0.9, 1.0\} \times min(m,n)$\\
\hline
\end{tabular}}
\end{table}%

In the experiments, we measure average squared error and computation time of the methods. Specifically, the
average squared error is the average squared $\mathcal{L}_2$
distance between the exact query answers and the noisy answers. In
the following, Section \ref{vary_gamma} examines the impact of $\gamma$ and $r$,
which are only used in LRM. The results provide important insights on how
to set these two parameters to maximize the utility of LRM. Then, Sections \ref{sec:vary_n} to \ref{sec:exp:scalability} compare LRM against existing methods.

\subsection{Impact of $\gamma$ and $r$ on LRM}\label{vary_gamma}

In LRM, the relaxation factor $\gamma$ controls the difference between $BL$ and $W$. In our first set of
experiments, we investigate the impact of $\gamma$ on the accuracy
and efficiency of LRM. Figure \ref{fig:exp:gamma} and Figure \ref{fig:exp:gamma:app} report
the performance of LRM with varying values for $\gamma$ under $\epsilon$-differential privacy and ($\epsilon$, $\delta$)-differential privacy respectively, using the \emph{Search Logs} dataset. Results on other datasets lead to similar conclusions, and are omitted for brevity.

The results in the Figure \ref{fig:exp:gamma} and Figure \ref{fig:exp:gamma:app} show that when $\epsilon$ is relatively low (meaning strong privacy),
the error of LRM is not sensitive to $\gamma$ regardless of the workload, for all values of $\gamma$ tested in the experiments ($(10^{-4}$ to 10). Only when $\epsilon$ reaches 1 does large values of $\gamma$ (e.g., $\gamma>1$) show negative impact on the performance of LRM. This negative effect is relatively small under $\epsilon$-differential privacy; it is more pronounced under ($\epsilon$, $\delta$)-differential privacy. The reason is that the error of LRM comes from two sources: the added noise and the difference between the decomposition $BL$ and the original workload $W$. When the privacy requirement is strong (i.e., when $\epsilon$ is relatively low, or when $\epsilon$-differential privacy is used), the error introduced by inexact decomposition is negligible compared to the noise added to satisfy differential privacy. Conversely, with looser privacy requirement (high $\epsilon$ and ($\epsilon$, $\delta$)-differential privacy definition), the noise level becomes low, and the error in decomposition becomes more evident. Nevertheless, when $\gamma \leq 0.1$, its impact is insignificant in all settings. Meanwhile, LRM runs much faster with a larger $\gamma$. Overall, $\gamma \leq 0.1$ is a safe choice, and a larger value of $\gamma$ is recommended for applications with strong privacy requirements. In the following experiments, we fix $\gamma$ to 0.01.


\begin{figure*}[!t]
\centering \subfigure[WDiscrete]
{\includegraphics[width=0.244\textwidth]{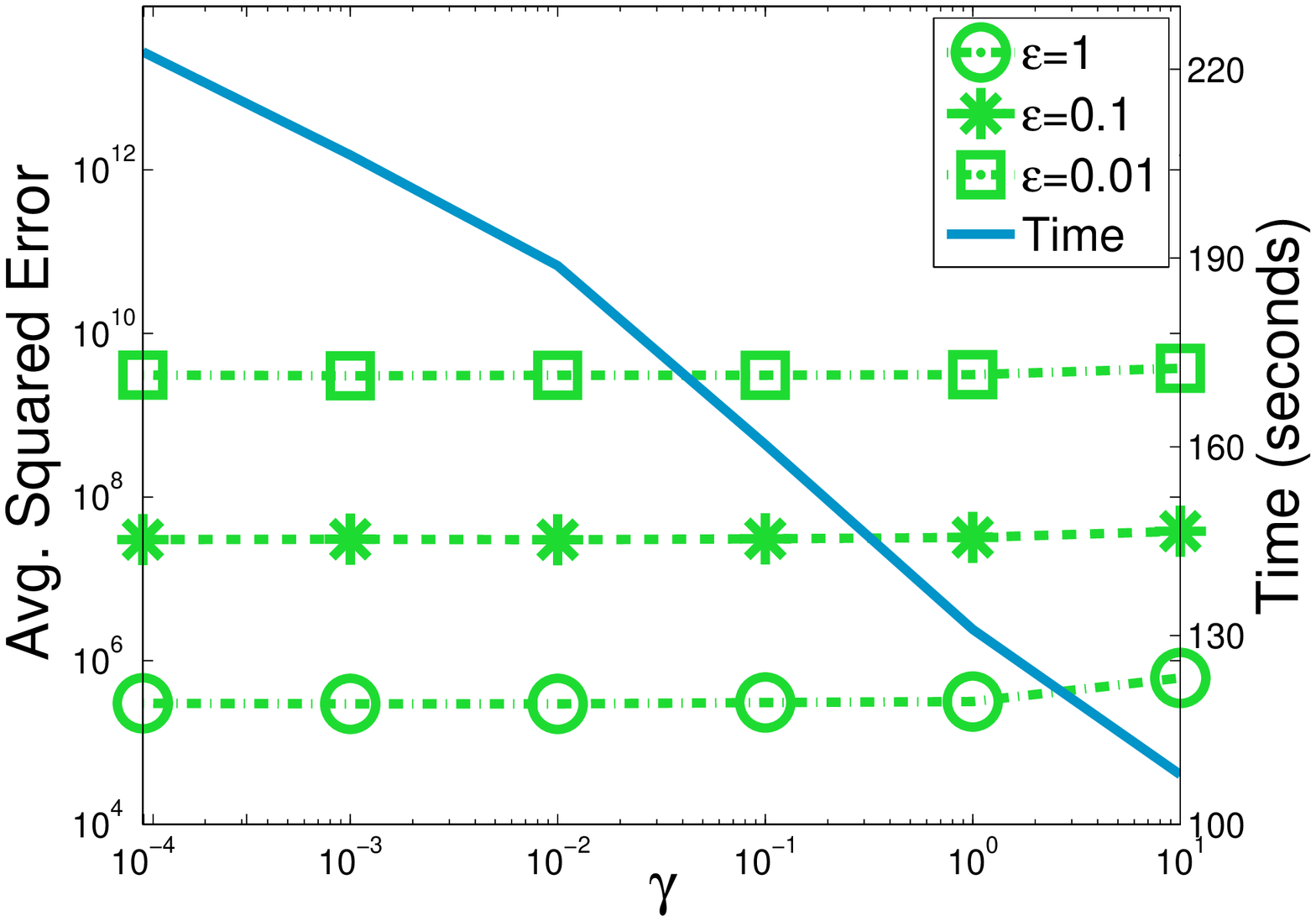}}
\centering \subfigure[WRange]
{\includegraphics[width=0.244\textwidth]{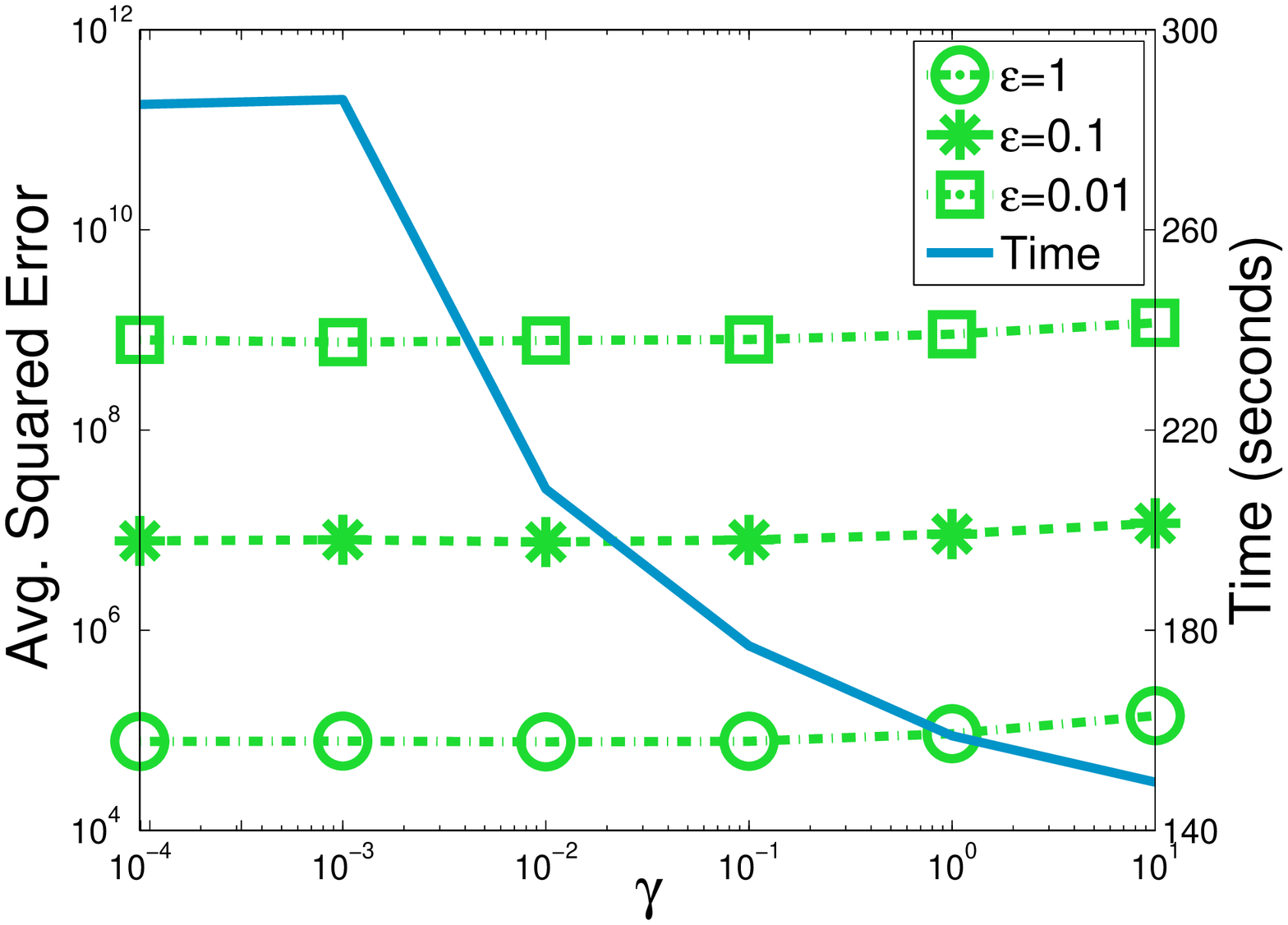}}
\centering \subfigure[WMarginal]
{\includegraphics[width=0.244\textwidth]{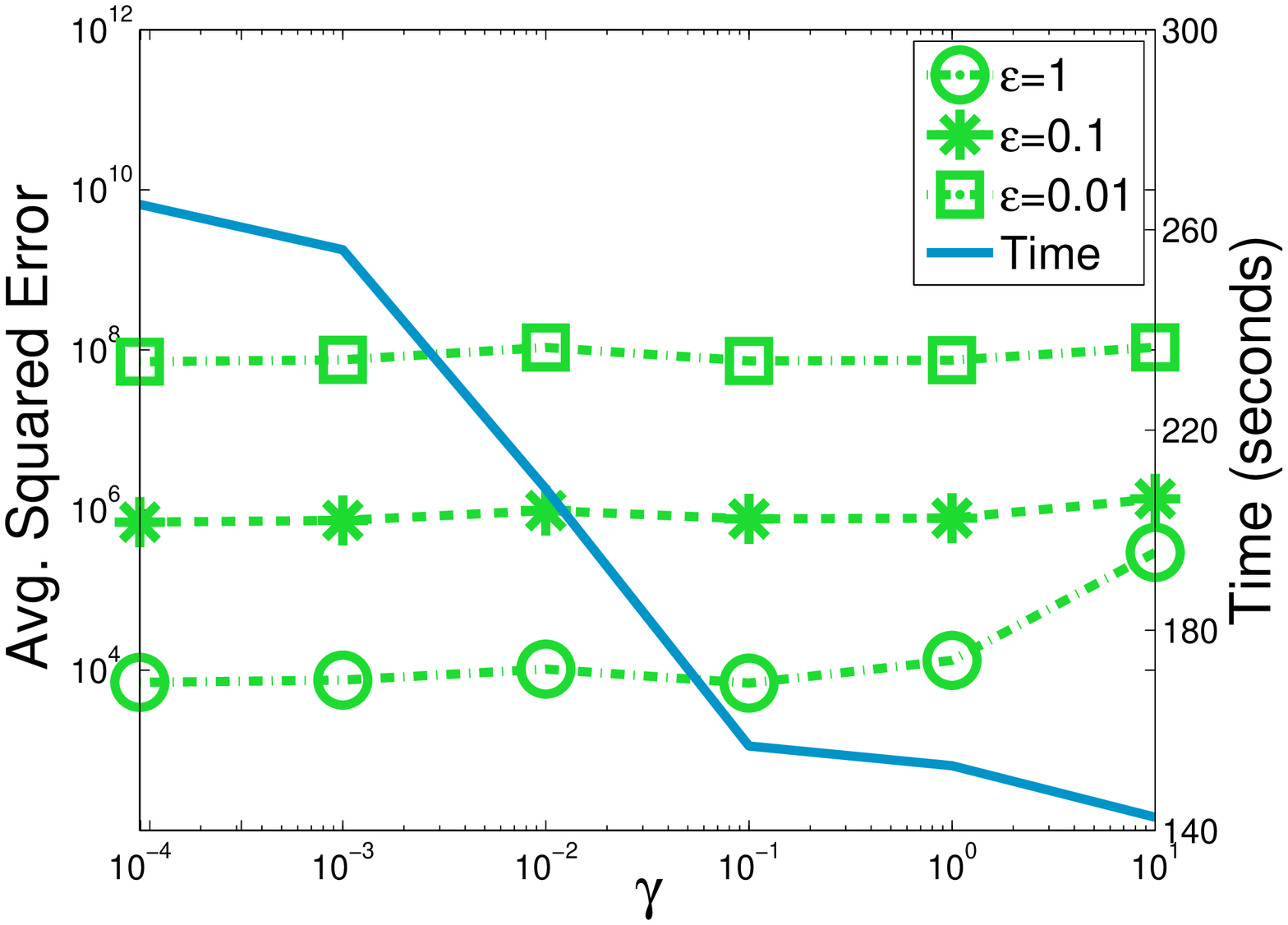}}
\centering \subfigure[WRelated]
{\includegraphics[width=0.244\textwidth]{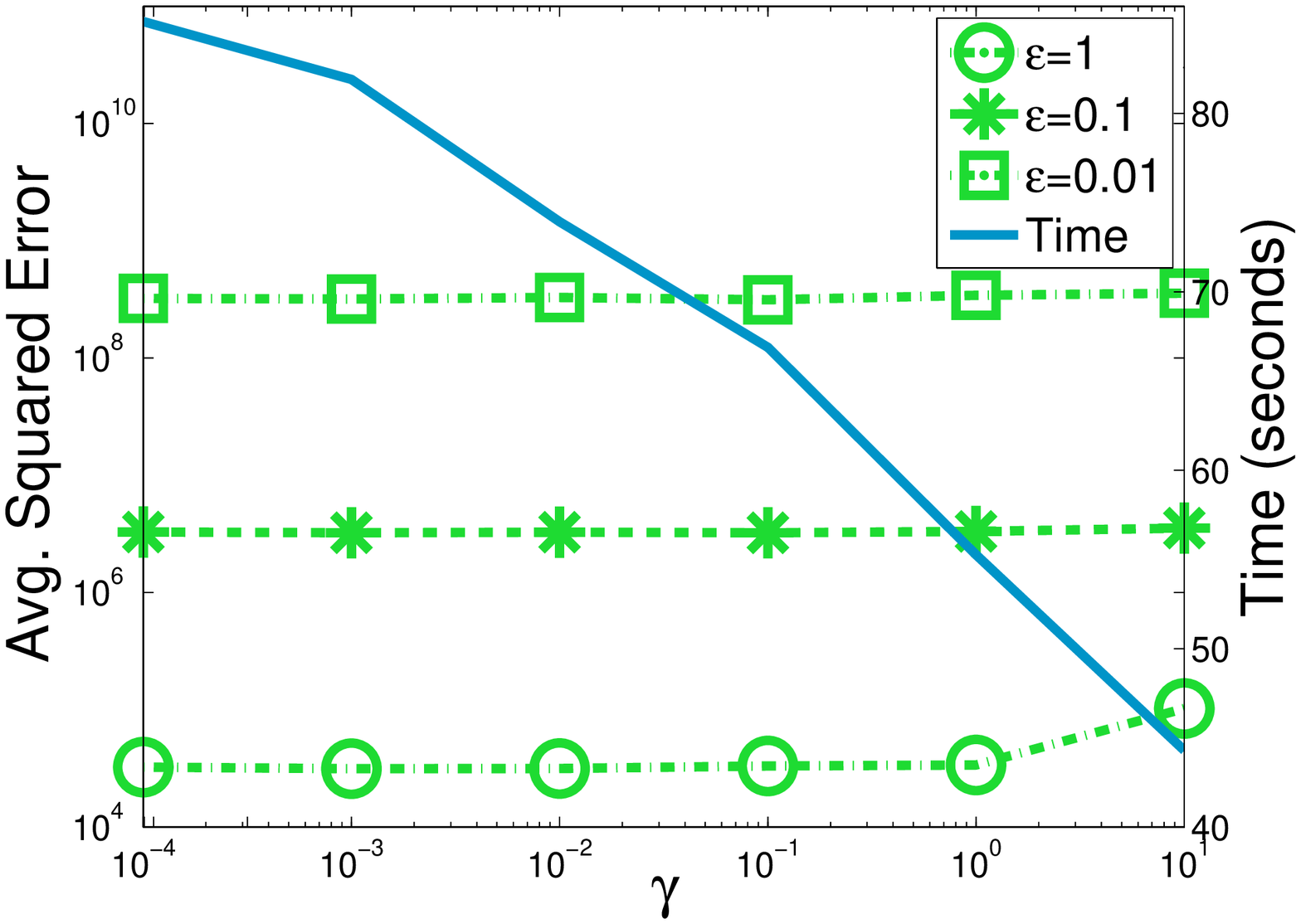}}
\vspace{-5pt} \caption{Effect of relaxation parameter
$\gamma$ on \emph{Search Logs} under $\epsilon$-differential privacy} \label{fig:exp:gamma}
\end{figure*}

\begin{figure*}[!t]
\centering \subfigure[WDiscrete]
{\includegraphics[width=0.244\textwidth]{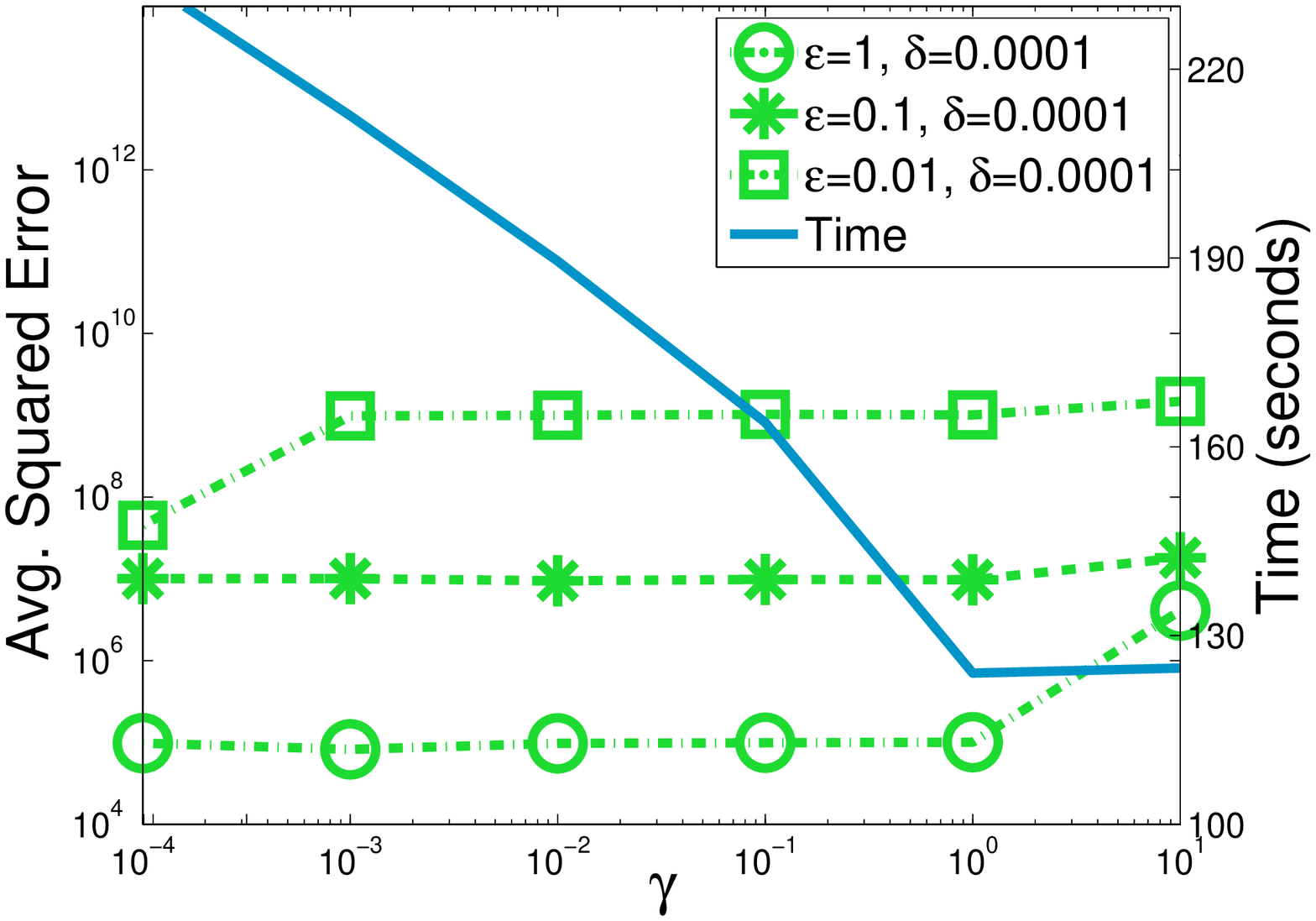}}
\centering \subfigure[WRange]
{\includegraphics[width=0.244\textwidth]{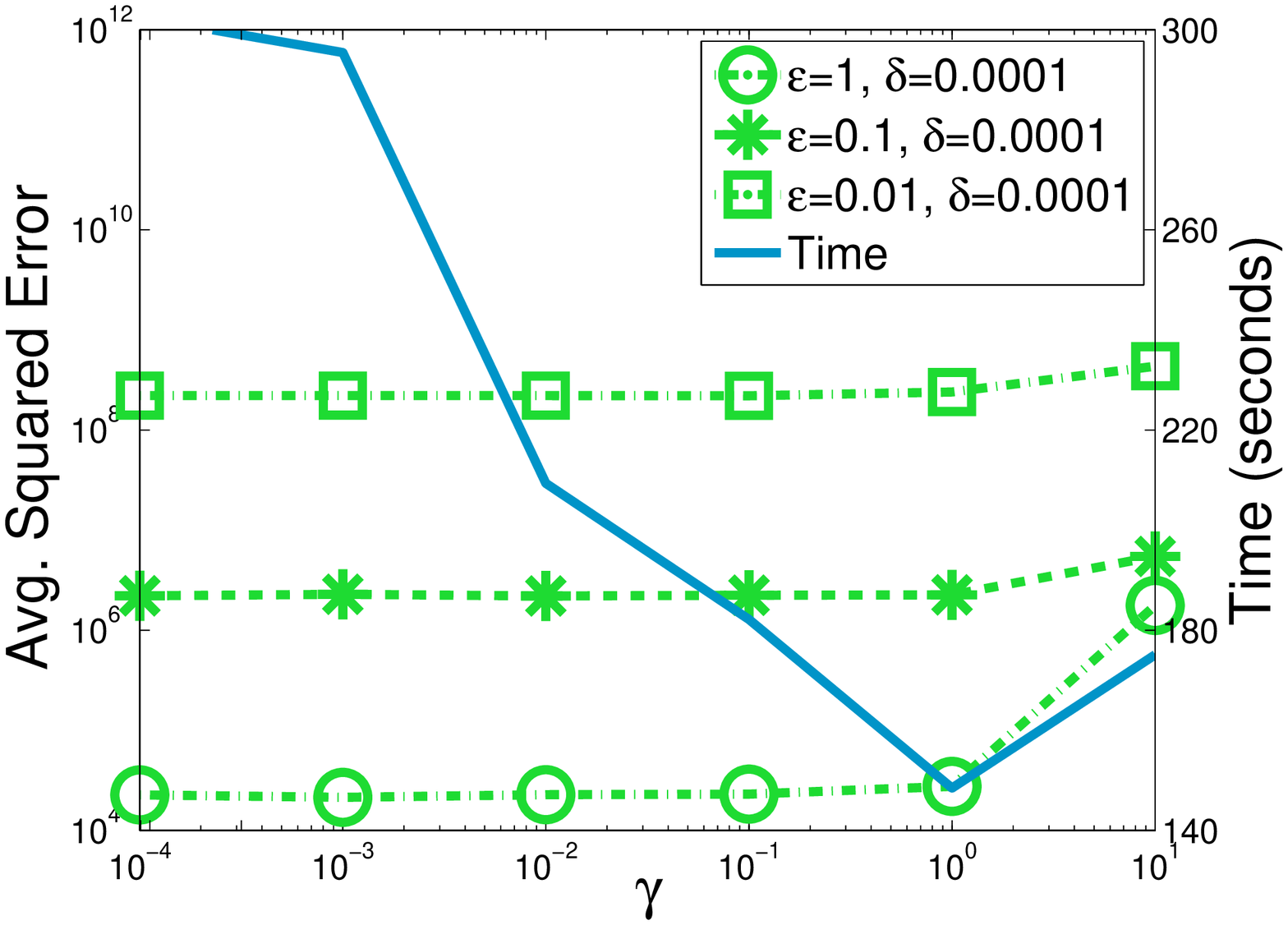}}
\centering \subfigure[WMarginal]
{\includegraphics[width=0.244\textwidth]{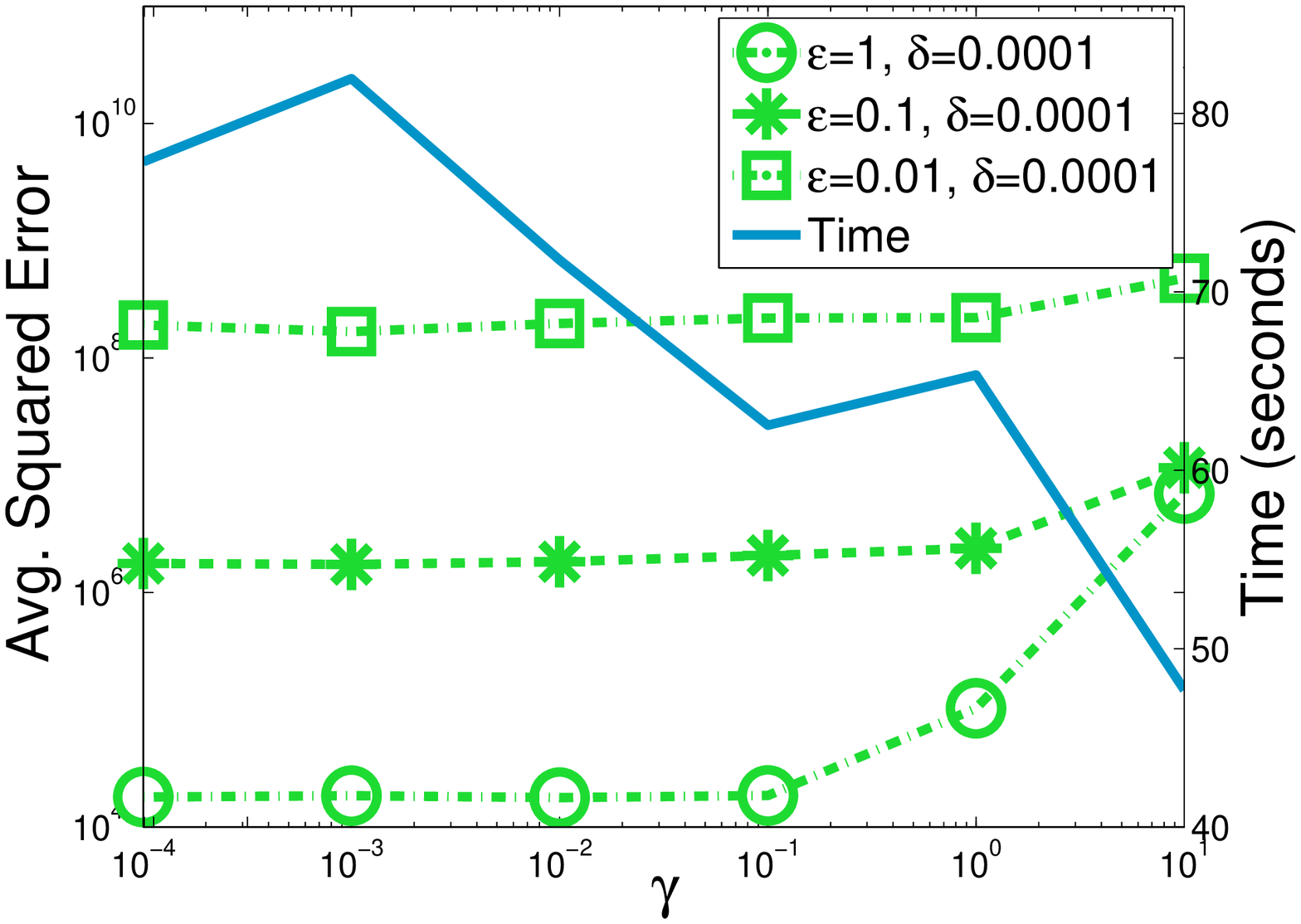}}
\centering \subfigure[WRelated]
{\includegraphics[width=0.244\textwidth]{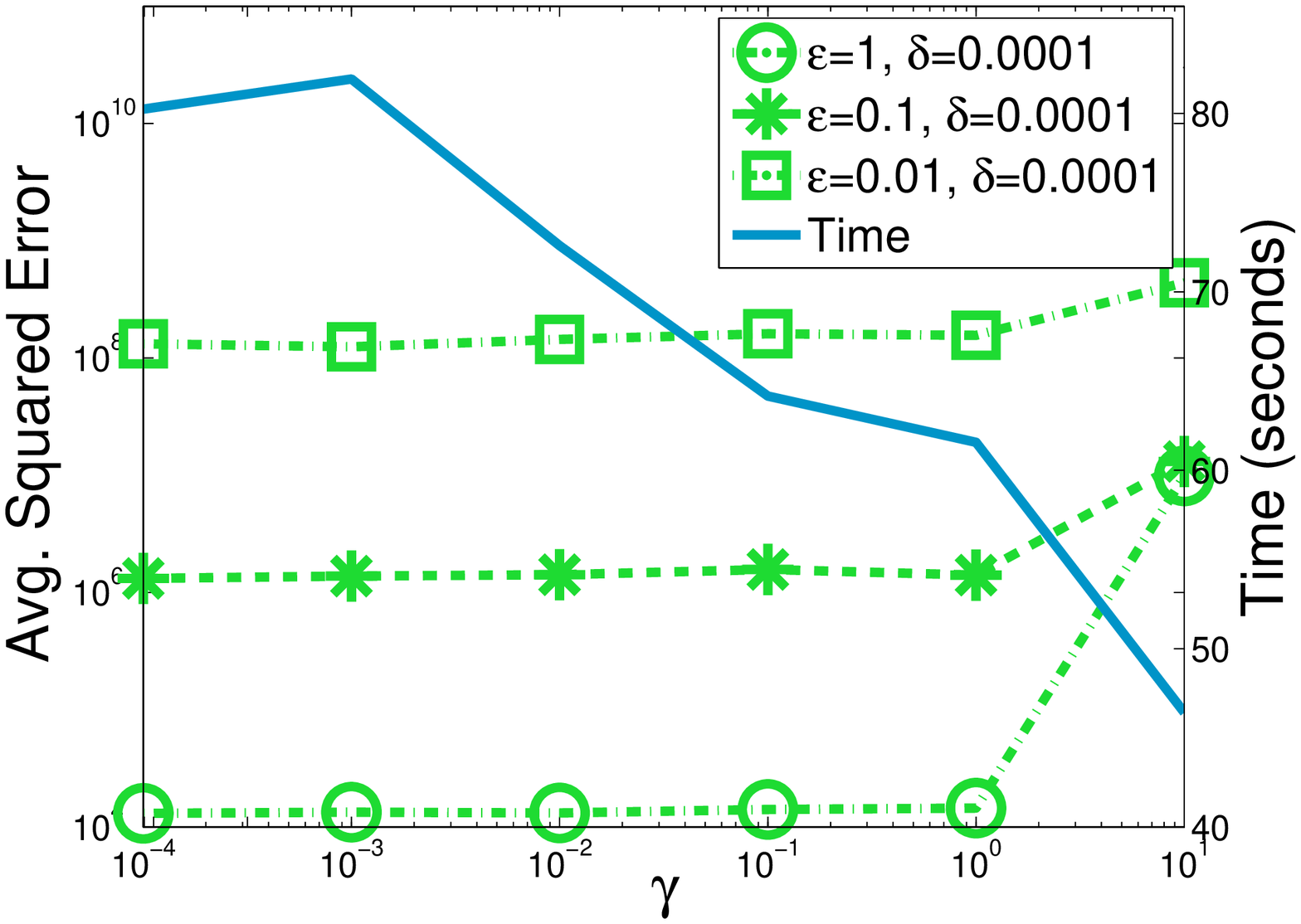}}
\vspace{-5pt} \caption{Effect of relaxation parameter
$\gamma$ on \emph{Search Logs} under ($\epsilon$, $\delta$)-differential privacy} \label{fig:exp:gamma:app}
\end{figure*}

\begin{figure*}[!t]
\centering \subfigure[WDiscrete]
{\includegraphics[width=0.244\textwidth]{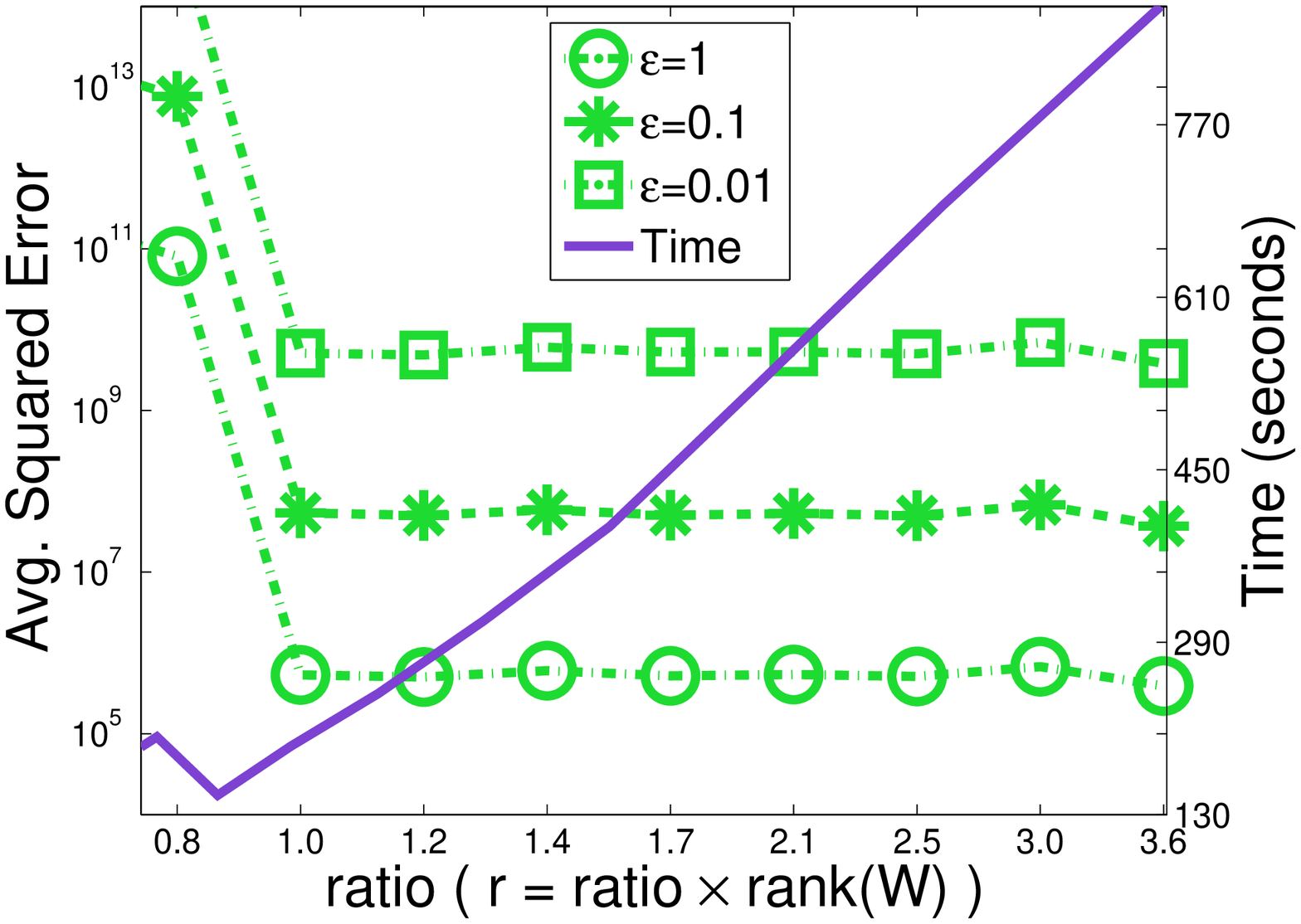}}
\centering \subfigure[WRange]
{\includegraphics[width=0.244\textwidth]{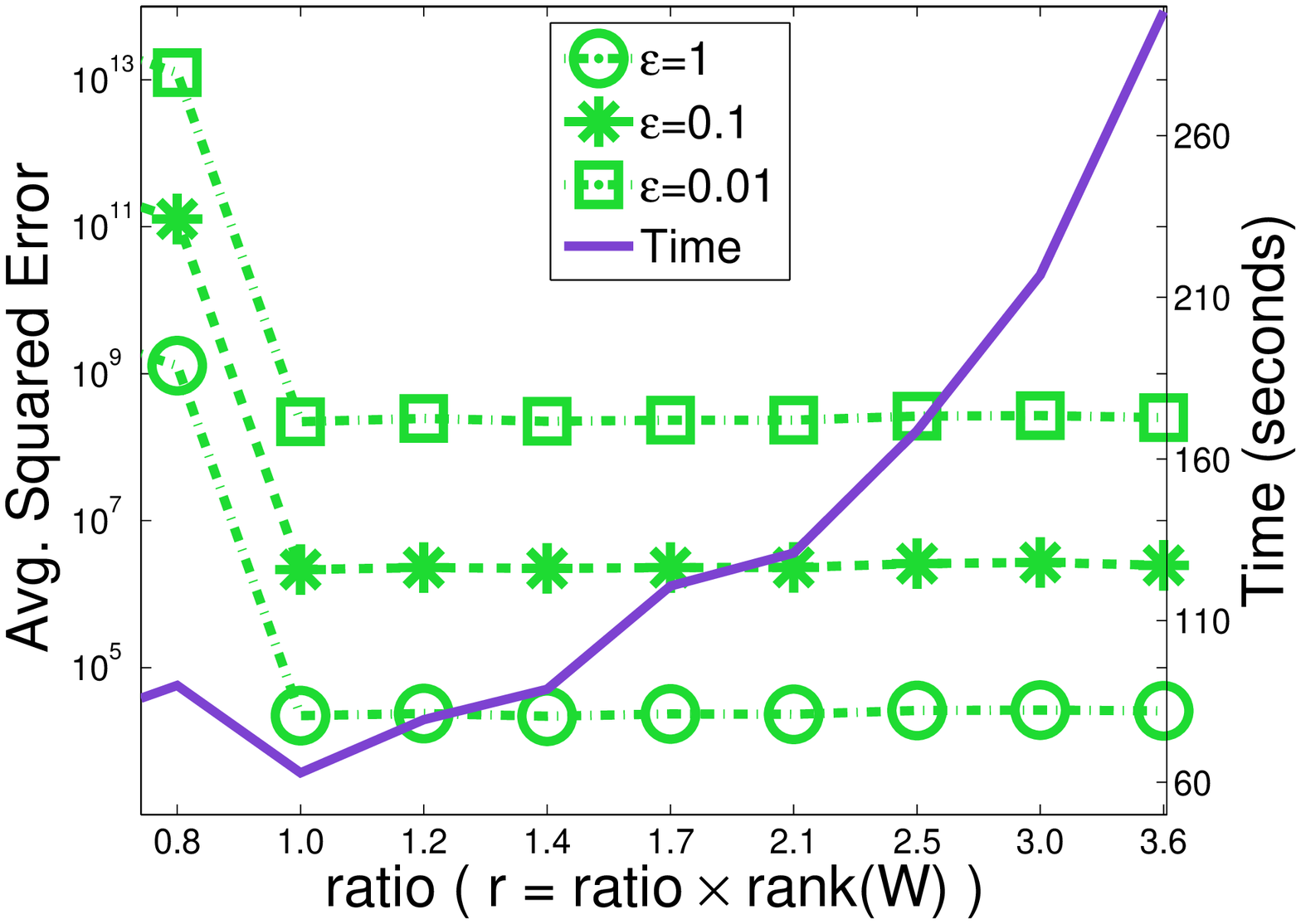}}
\centering \subfigure[WMarginal]
{\includegraphics[width=0.244\textwidth]{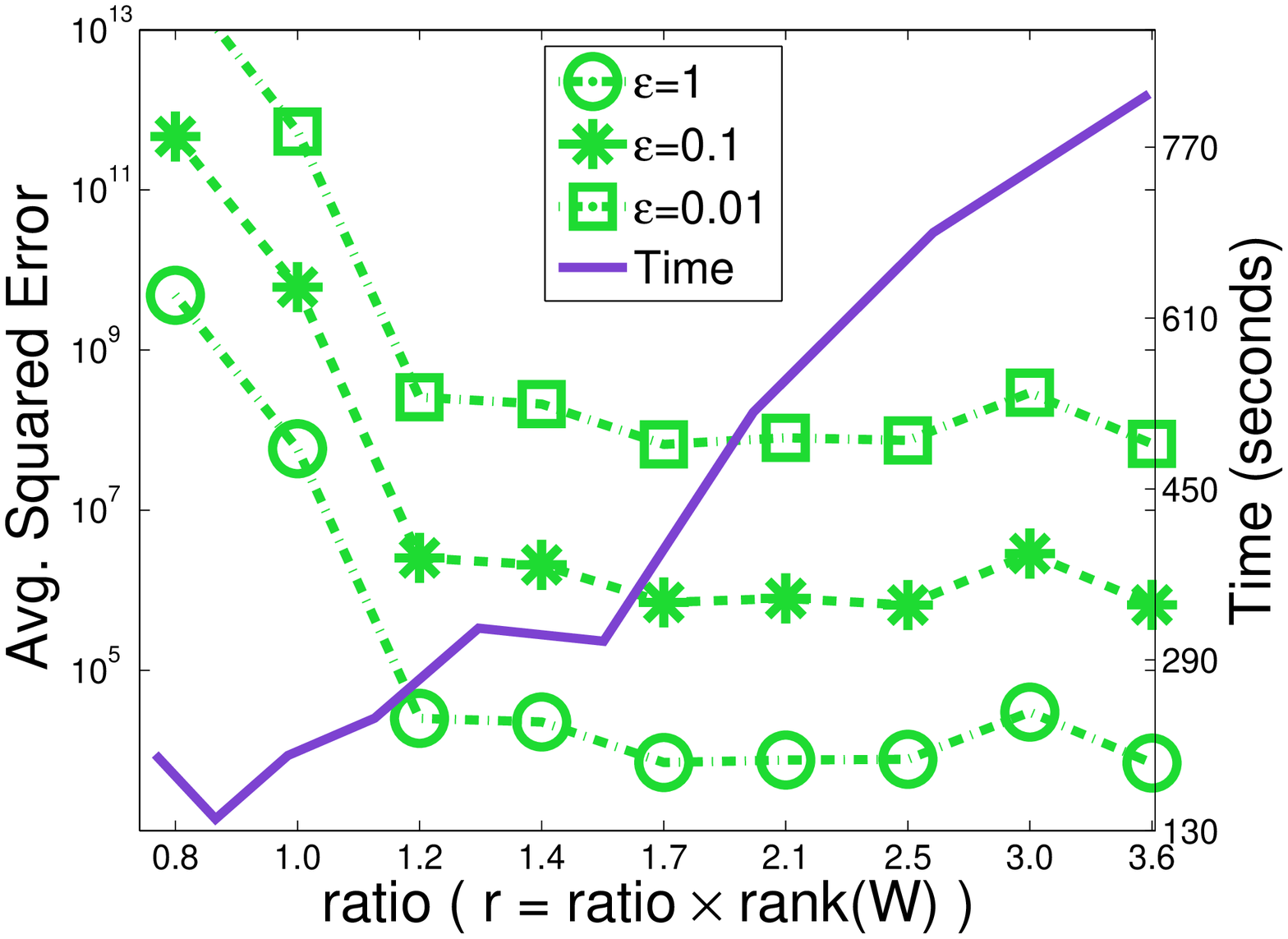}}
\centering \subfigure[WRelated]
{\includegraphics[width=0.244\textwidth]{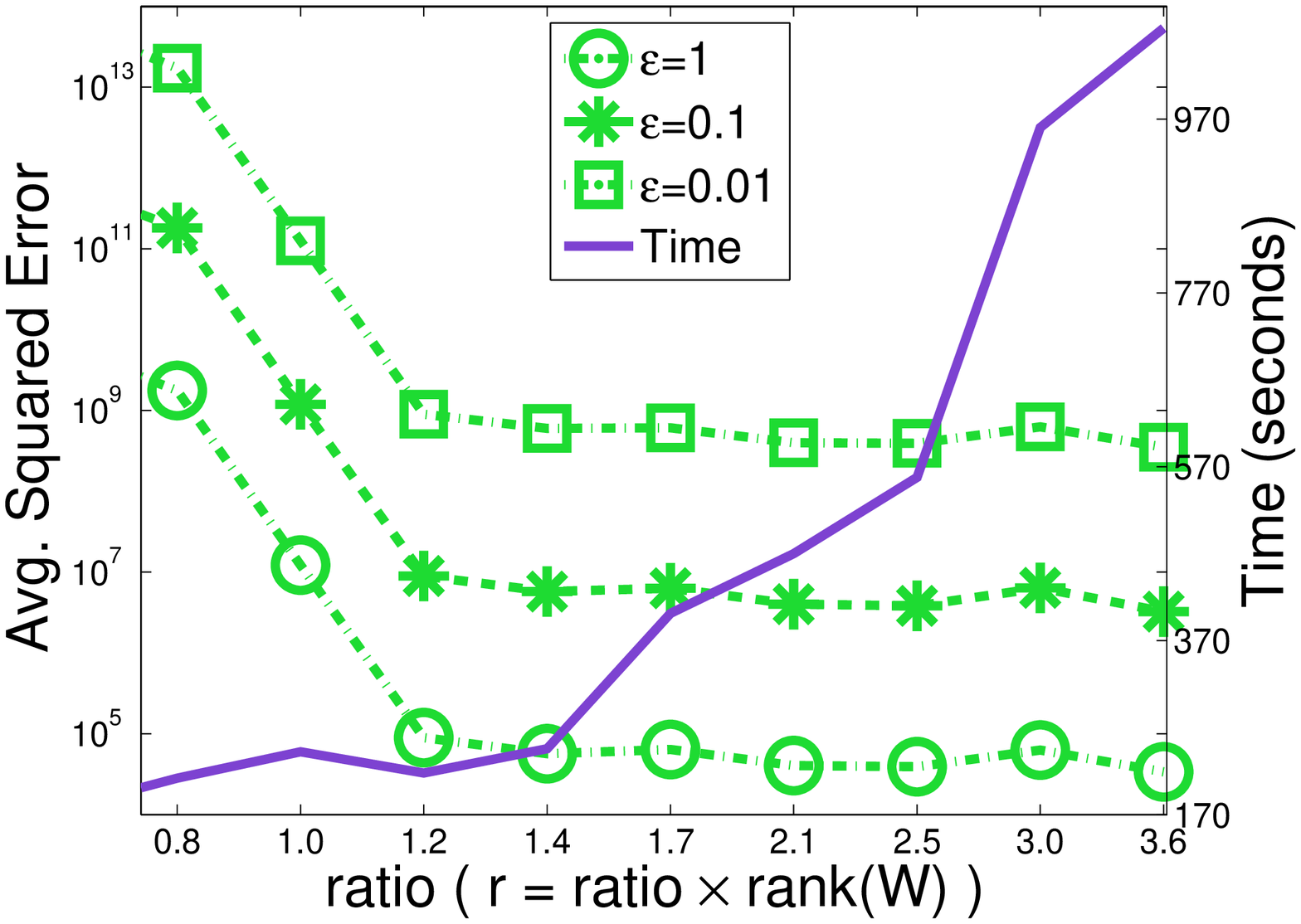}}
\vspace{-5pt} \caption{Effect of $r$ on \emph{Search Logs} under $\epsilon$-differential privacy}
\label{fig:exp:r}
\end{figure*}

\begin{figure*}[!t]
\centering \subfigure[WDiscrete]
{\includegraphics[width=0.244\textwidth]{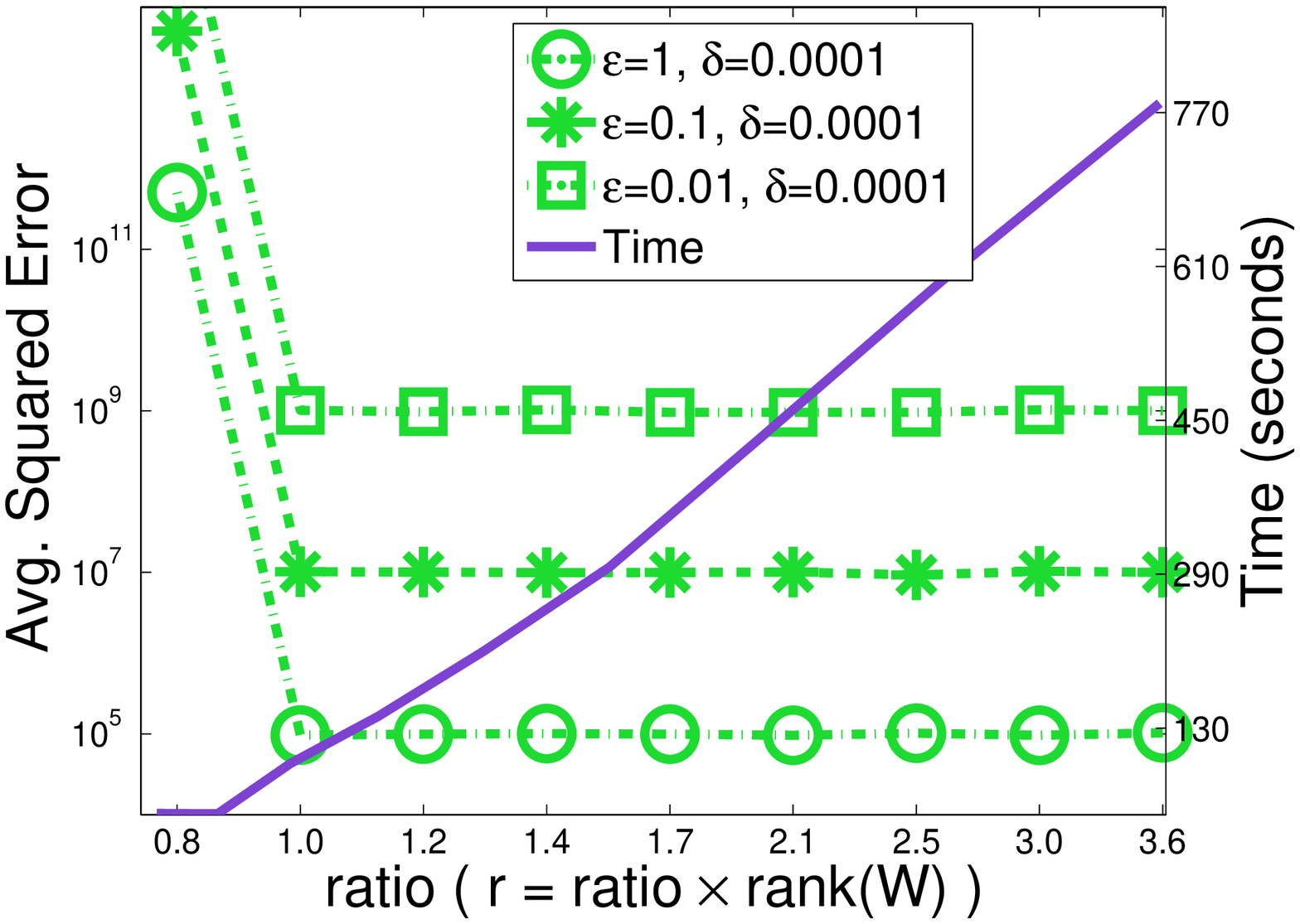}}
\centering \subfigure[WRange]
{\includegraphics[width=0.244\textwidth]{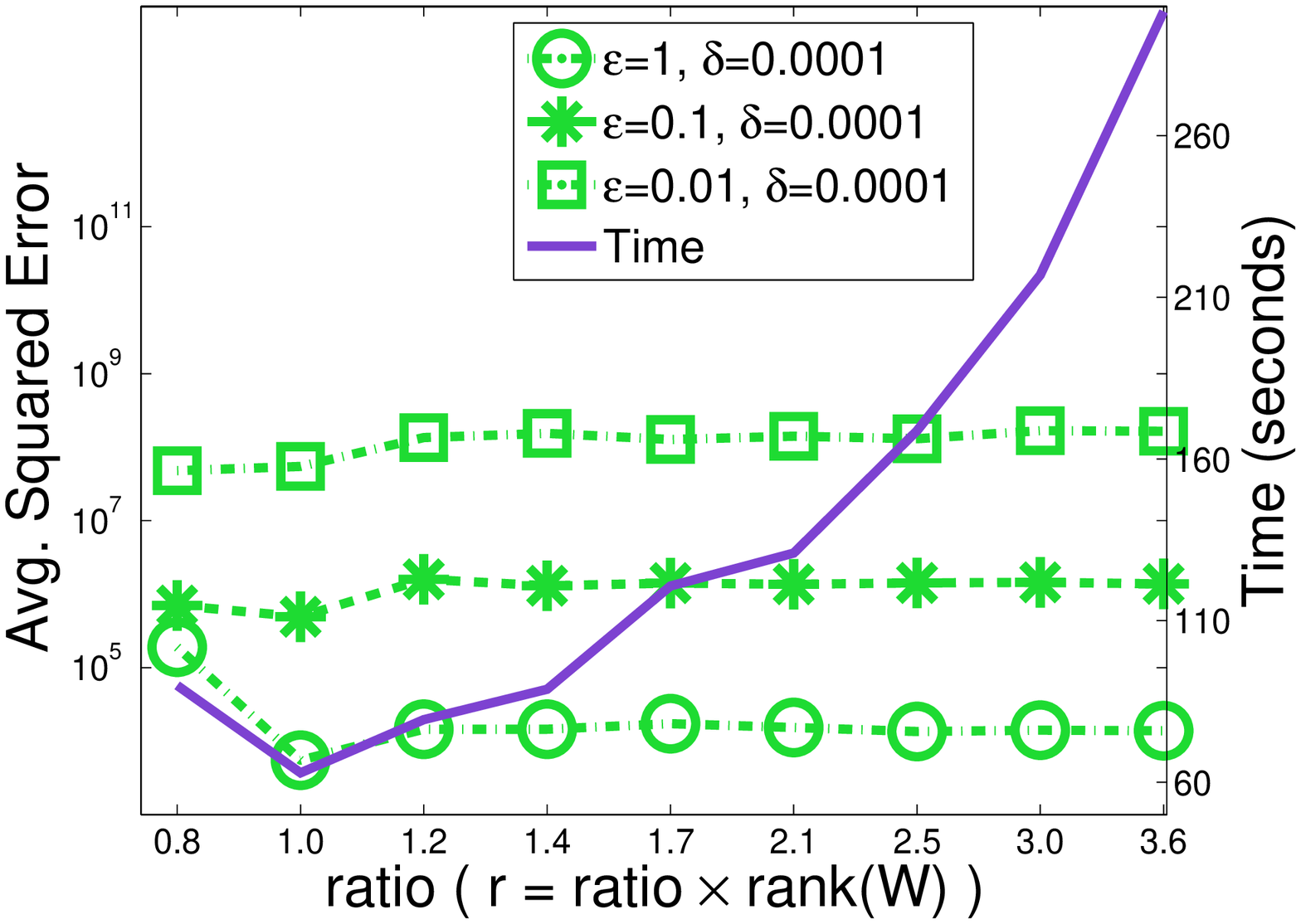}}
\centering \subfigure[WMarginal]
{\includegraphics[width=0.244\textwidth]{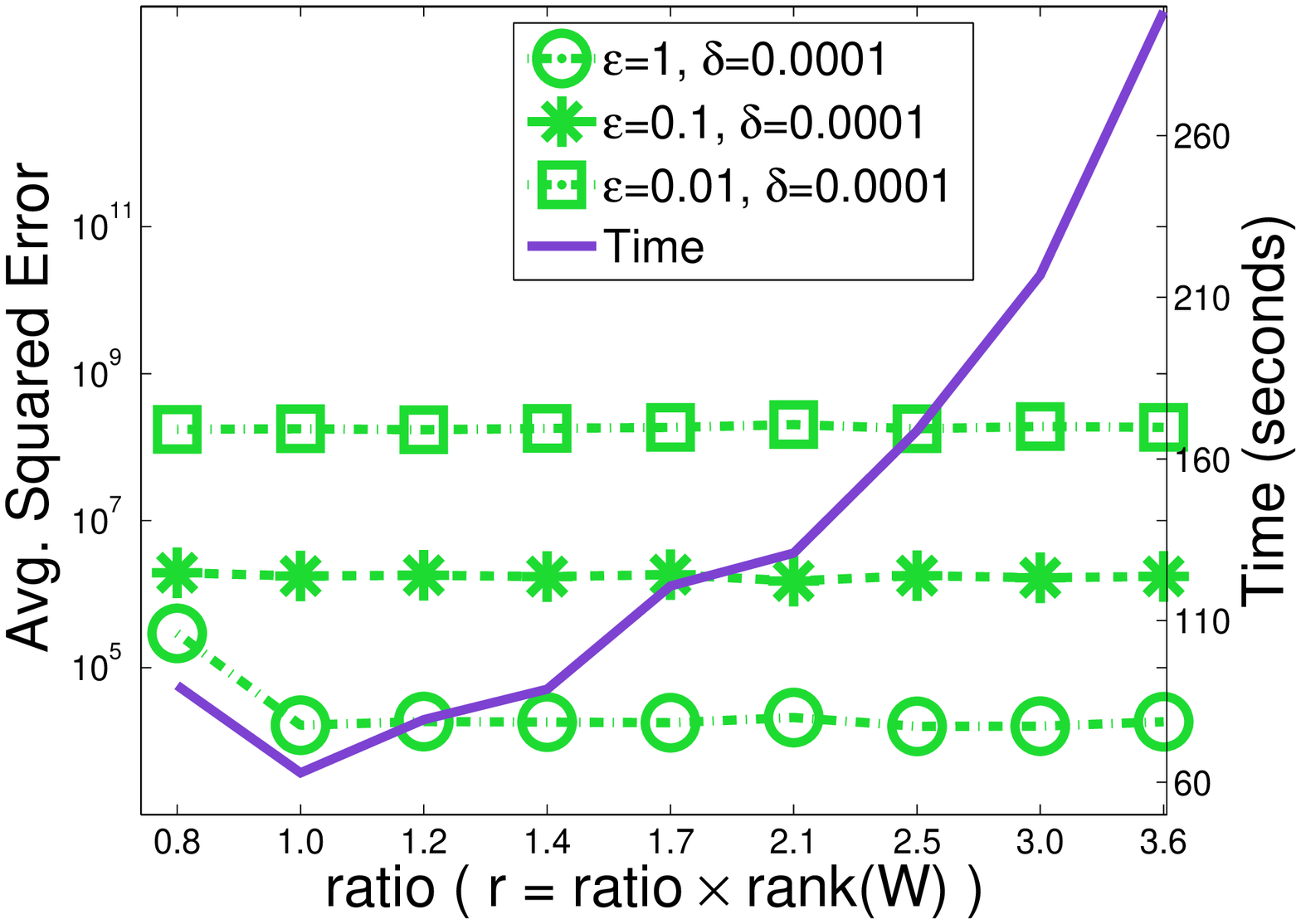}}
\centering \subfigure[WRelated]
{\includegraphics[width=0.244\textwidth]{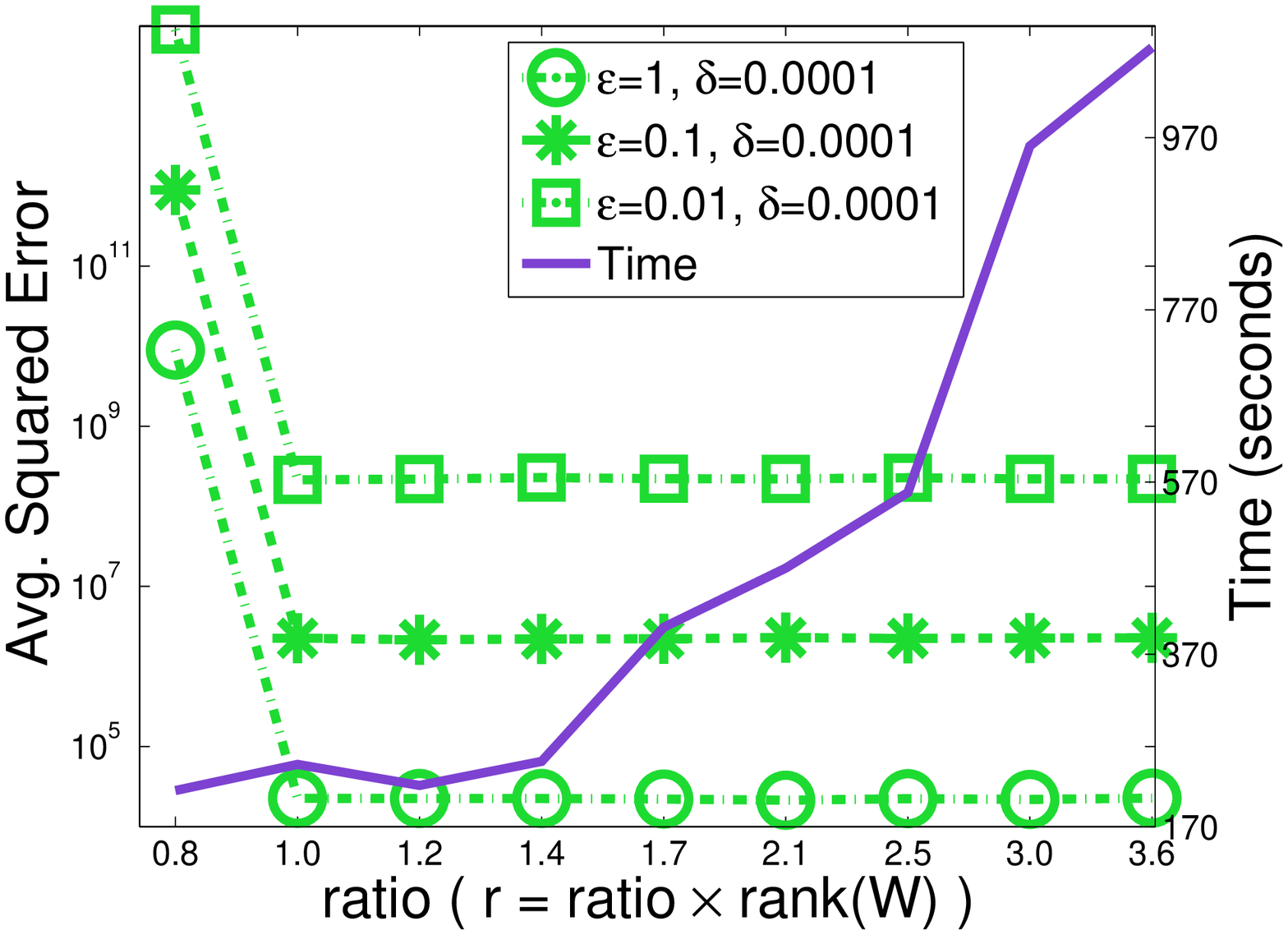}}
\vspace{-5pt} \caption{Effect of $r$ on \emph{Search Logs} under ($\epsilon$, $\delta$)-differential privacy}
\label{fig:exp:r:app}
\end{figure*}

$r$ is another important parameter in LRM that determines the rank of
the matrix $BL$ that approximates the workload $W$. $r$ affects both the
approximation accuracy and the optimization speed. When $r$ is too small, e.g., when $r<rank(W)$, our optimization formulation may fail to find a good approximation, leading to suboptimal
accuracy for the query batch. On the other hand, an overly large $r$ leads to poor efficiency, as the search space expands dramatically. We thus test LRM with varying $r$, by controlling the ratio of $r$ to the actual rank $rank(W)$, on the \emph{Search Log} dataset. We record the average
squared error and running time of LRM for all the workloads under $\epsilon$ and ($\epsilon$, $\delta$)-differential privacy, and report them in Figure
\ref{fig:exp:r} and Figure \ref{fig:exp:r:app} respectively.

There are several important observations in Figure \ref{fig:exp:r} and Figure \ref{fig:exp:r:app}. First, a value of $r$ below $rank(W)$ leads to far worse accuracy (up to two orders of magnitude) compared to settings with higher values of $r$. Second, the performance of LRM becomes stable when $r$ exceeds $1.2\cdot rank(W)$ for $\epsilon$-differential privacy, and $1.0\cdot rank(W)$ for ($\epsilon$, $\delta$)-differential privacy. This is because the optimization formulation has enough freedom to find the optimal decomposition when $r>rank(W)$. For ($\epsilon$, $\delta$)-differential privacy, this result is expected, because any decomposition $W=BL$ with $r>rank(W)$ can be transformed into a decomposition $B'L'$ with $r = rank(W)$, by projecting the columns of $L$ and the rows of $B$ onto the range of $L$, which does not affect the $\mathcal{L}_2$-sensitivity of $B$. Finally, the amount of computations for workload decomposition increases linearly with $r$ (note that both axes are in logarithmic scale). Thus, to balance the efficiency and effectiveness of LRM, a good value for $r$ is between $rank(W)$ and $1.2\cdot rank(W)$. In subsequent experiments, we set $r=1.2\cdot rank(W)$ and $r=1.0\cdot rank(W)$ for $\epsilon$ and ($\epsilon$, $\delta$)-differential privacy, respectively.

\subsection{Impact of Varying Domain Size $n$}\label{sec:vary_n}

\begin{figure*}[!t]
\centering
\subfigure[\emph{Search Logs}]
{\includegraphics[width=0.244\textwidth]{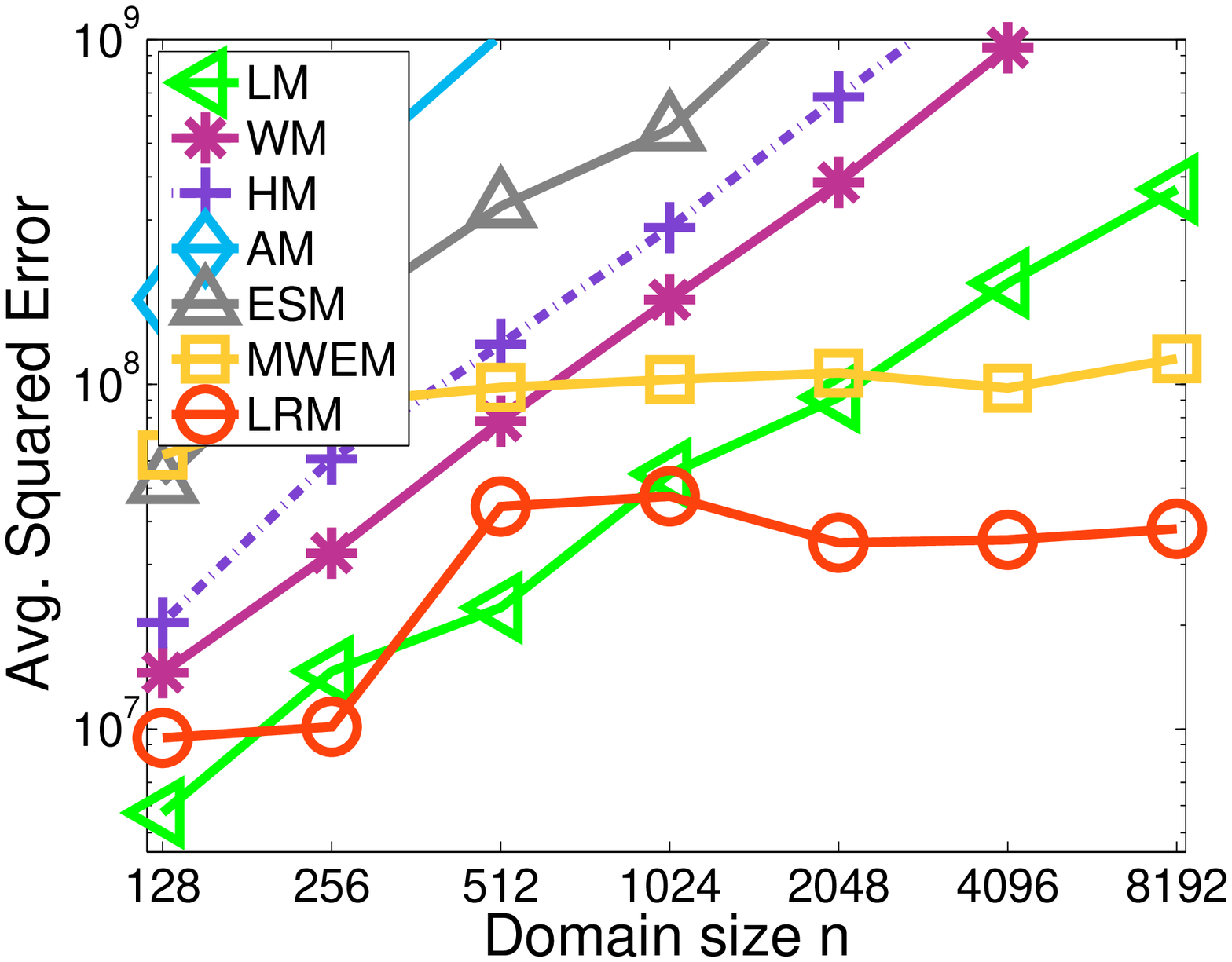}}
\subfigure[\emph{Net Trace}]
{\includegraphics[width=0.244\textwidth]{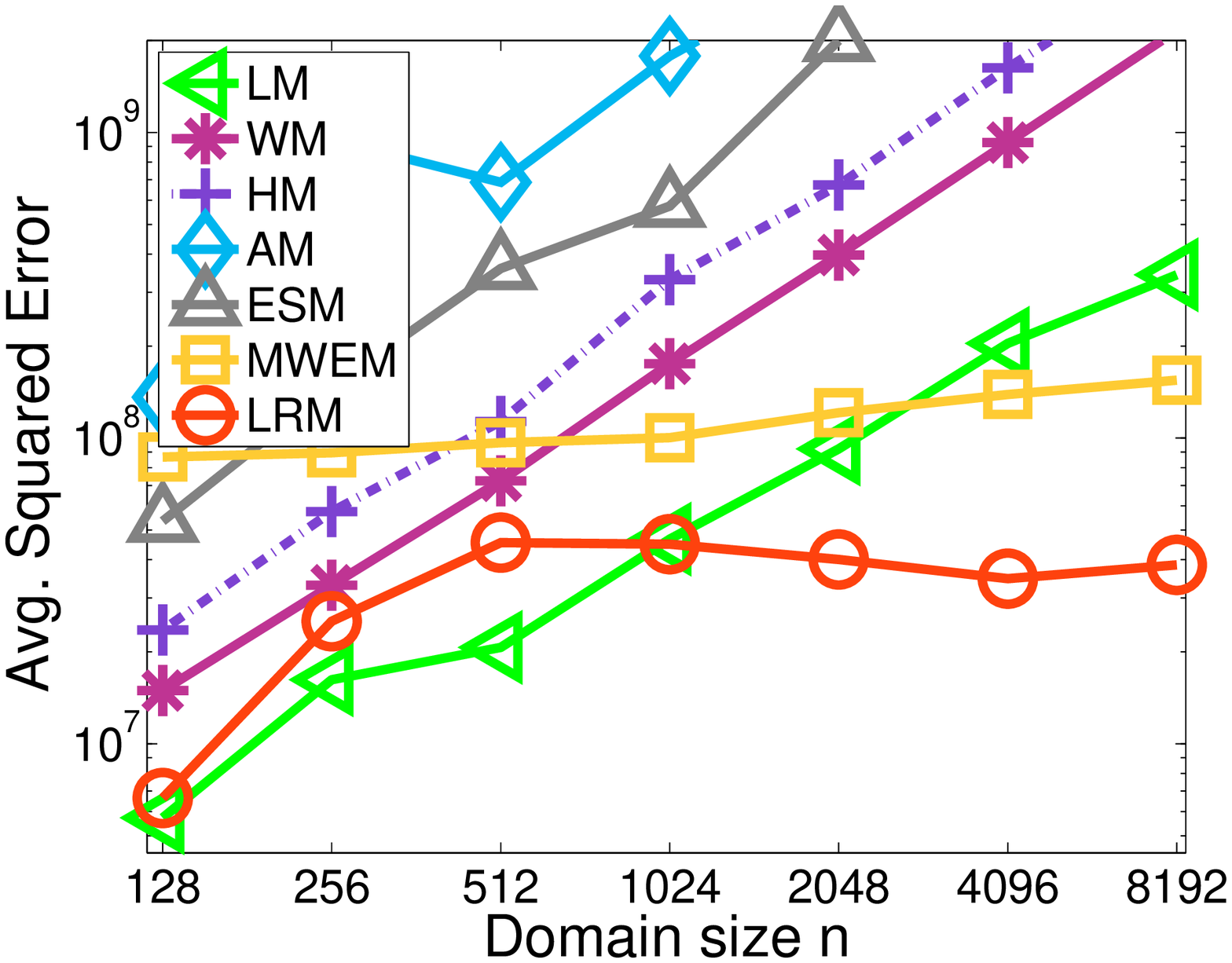}}
\centering
\subfigure[\emph{Social Network}]
{\includegraphics[width=0.244\textwidth]{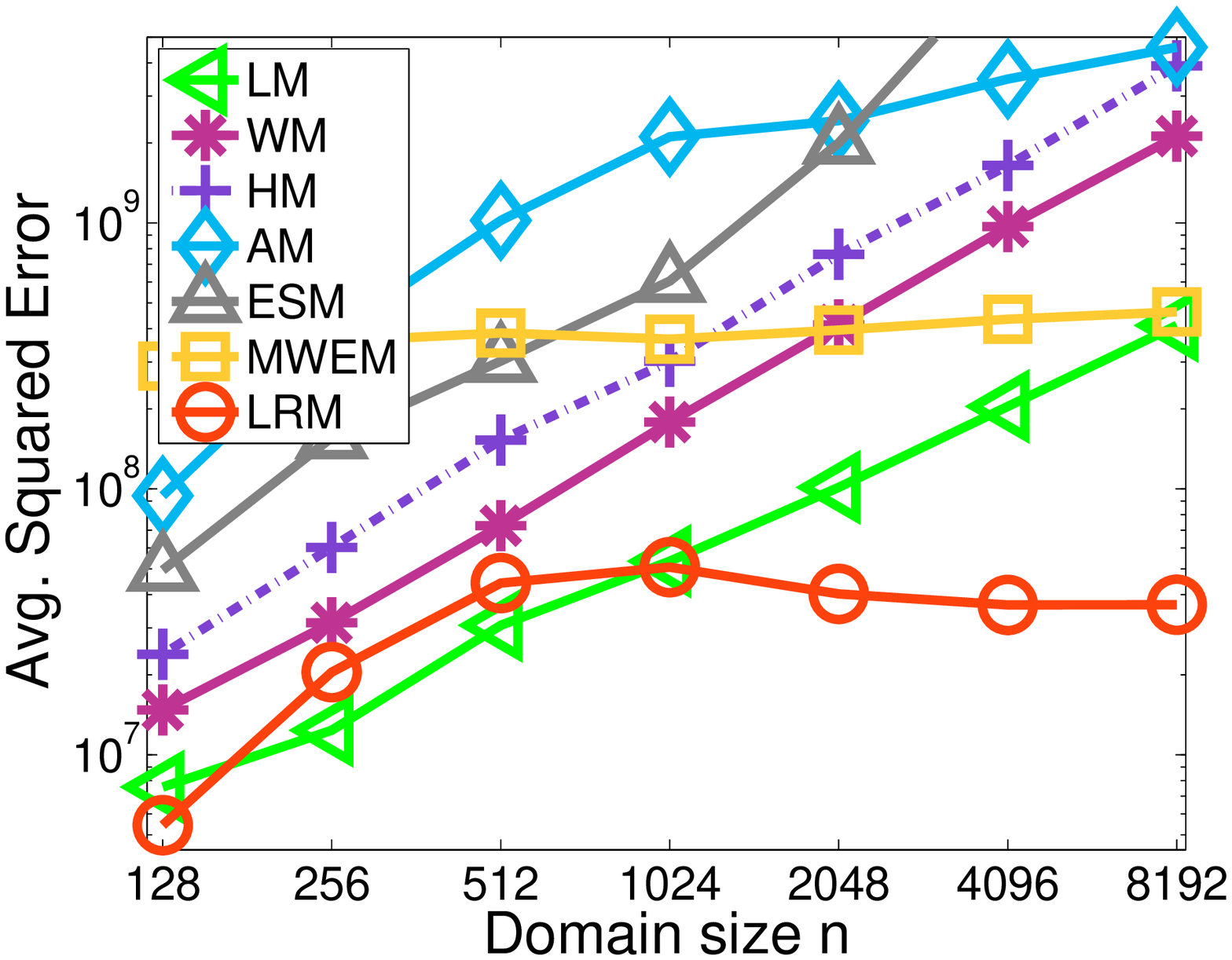}}
\centering
\subfigure[\emph{UCI Adult}]
{\includegraphics[width=0.244\textwidth]{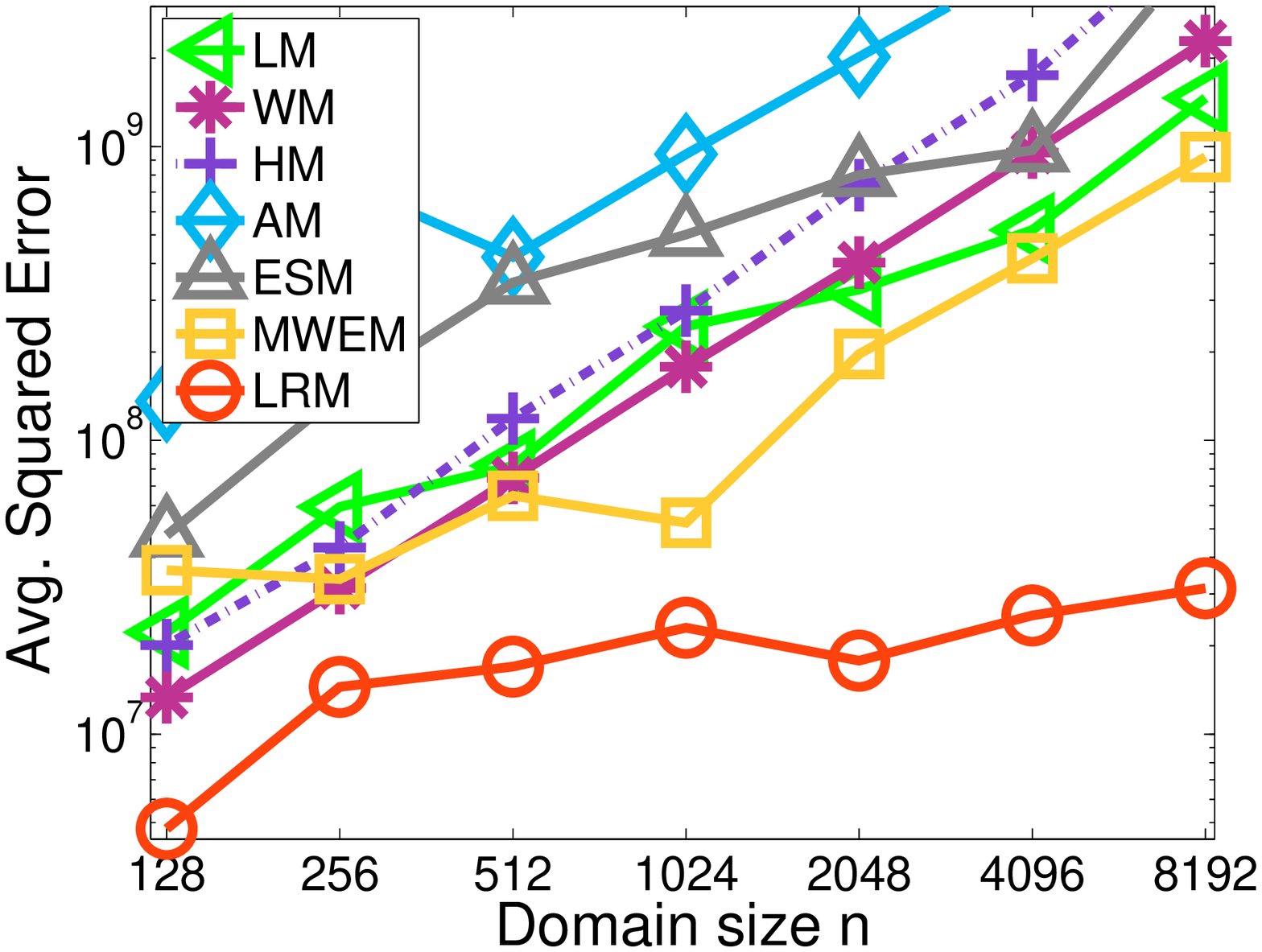}}
\caption{Effect of domain size $n$ on workload \emph{WDiscrete}
under $\epsilon$-differential privacy with $\epsilon=0.1$}\label{fig:exp:n:WDiscrete}
\end{figure*}

\begin{figure*}[!t]
\centering
\subfigure[\emph{Search Logs}]
{\includegraphics[width=0.244\textwidth]{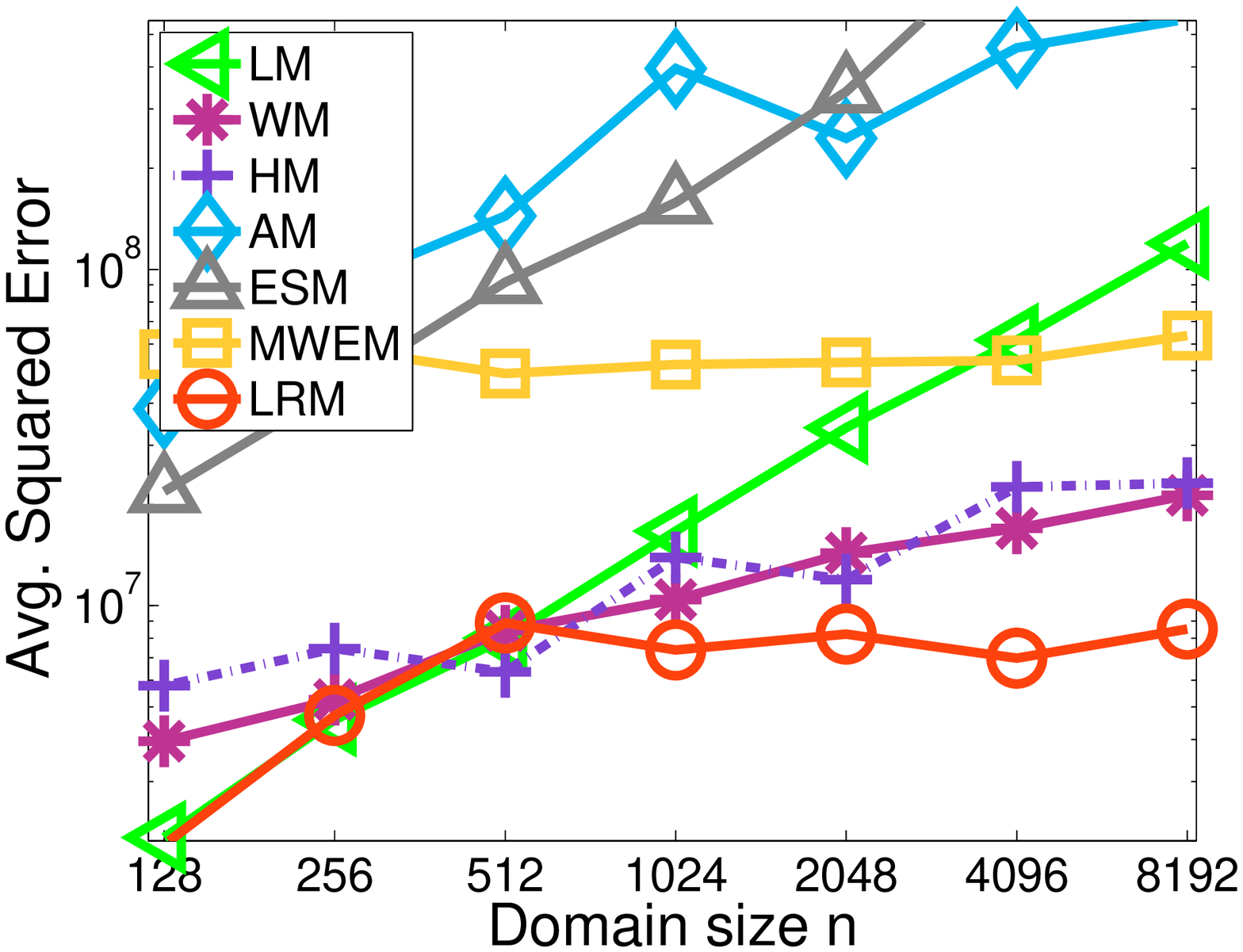}}
\subfigure[\emph{Net Trace}]
{\includegraphics[width=0.244\textwidth]{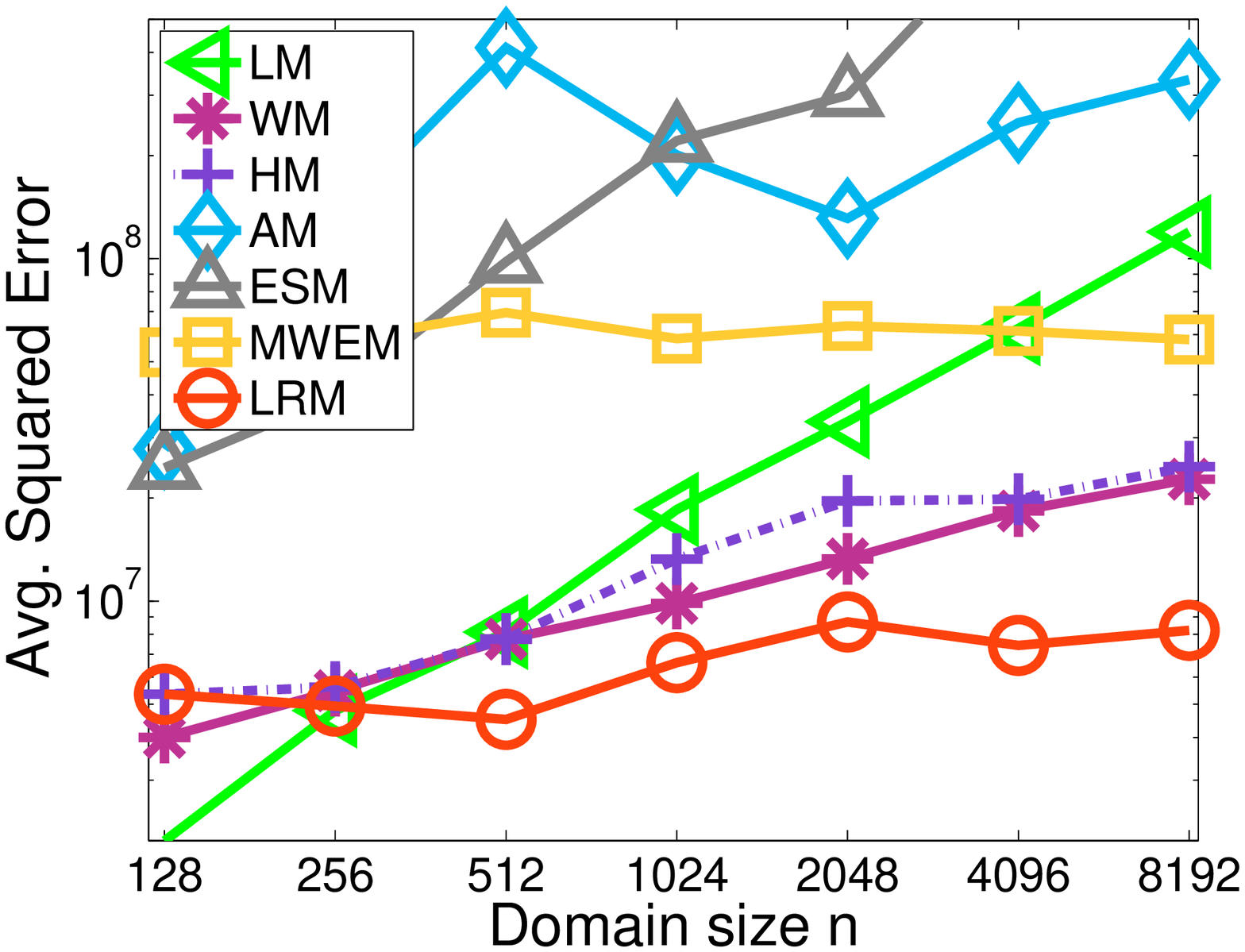}}
\centering
\subfigure[\emph{Social Network}]
{\includegraphics[width=0.244\textwidth]{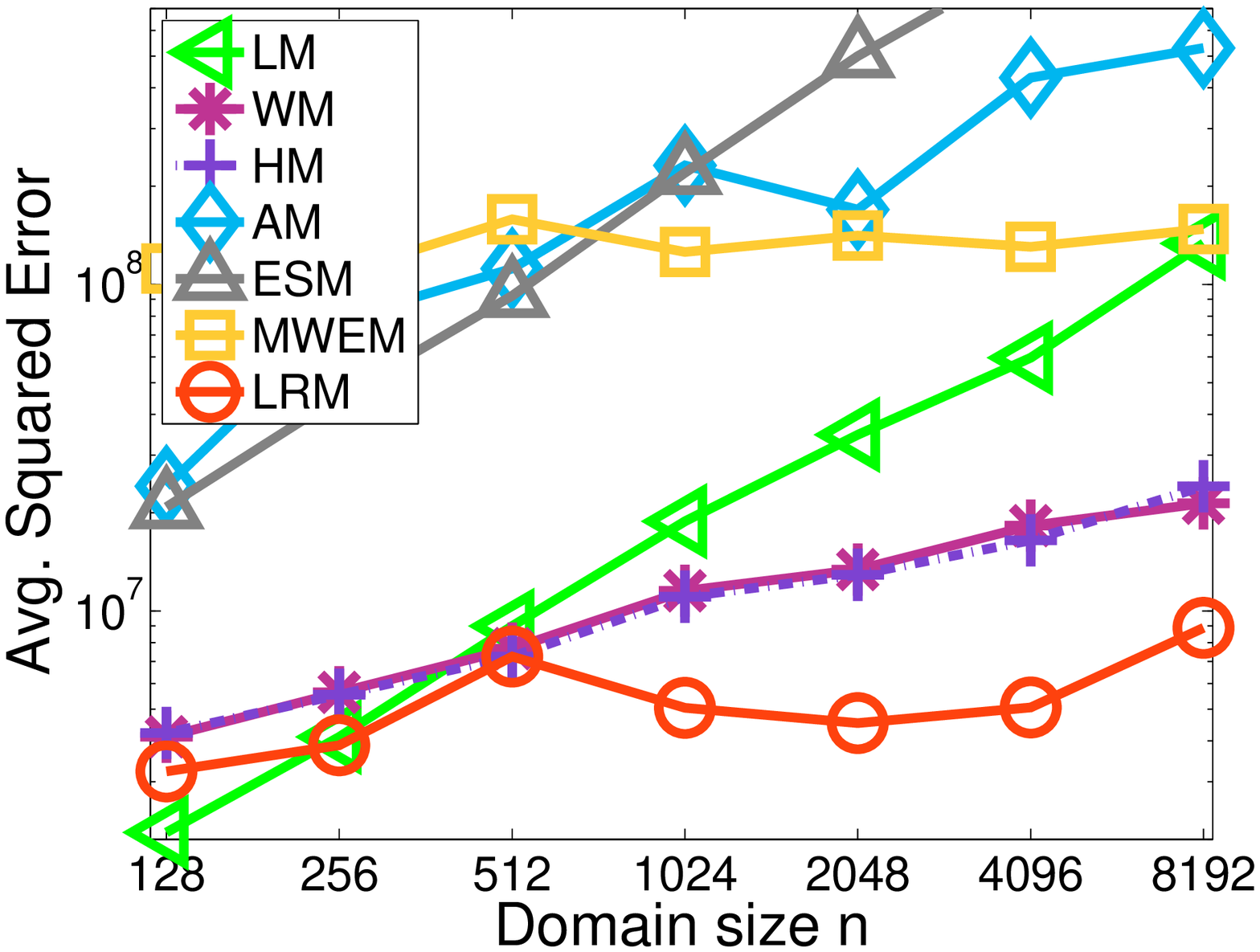}}
\centering
\subfigure[\emph{UCI Adult}]
{\includegraphics[width=0.244\textwidth]{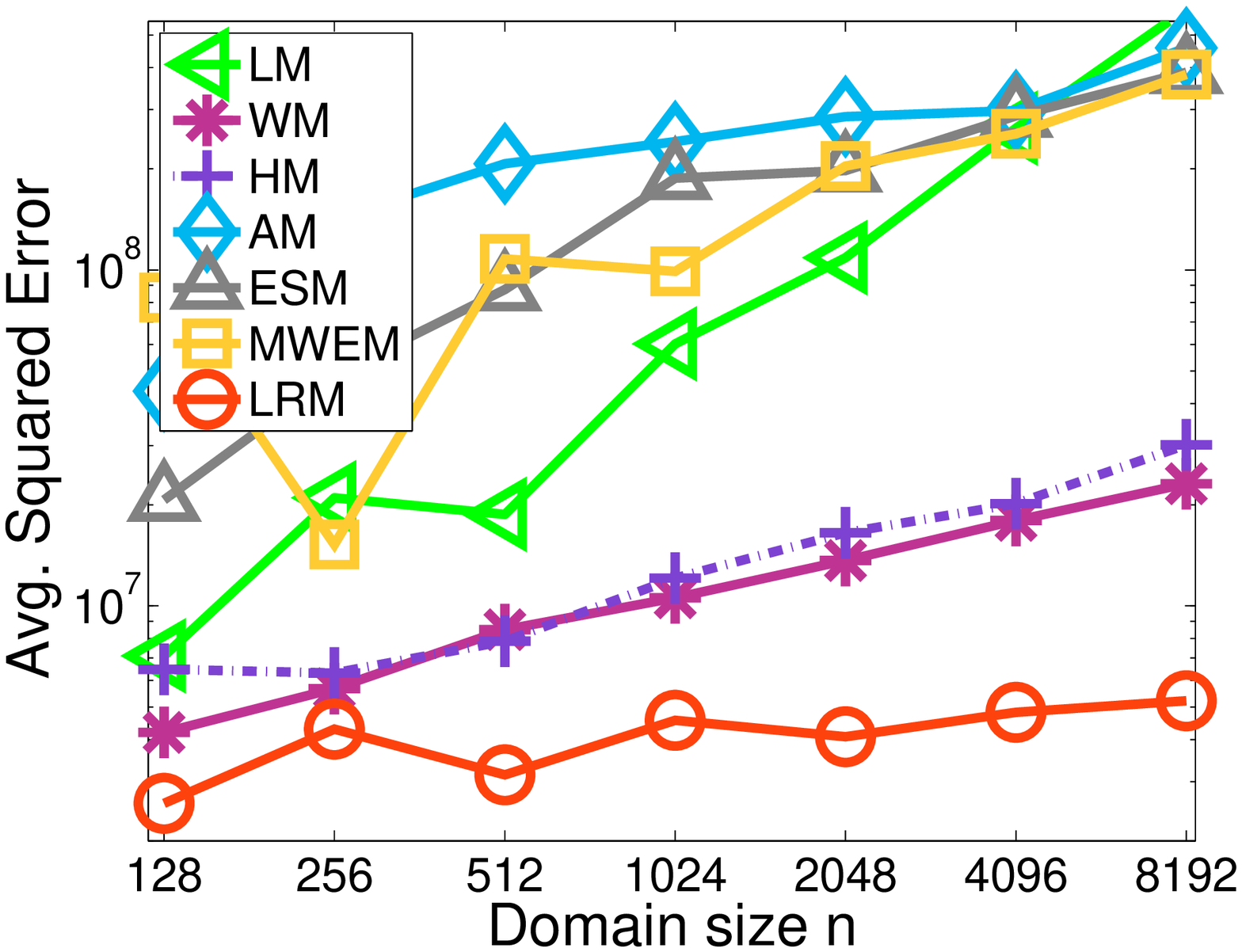}}
\caption{Effect of domain size $n$ on workload \emph{WRange} under
$\epsilon$-differential privacy with $\epsilon=0.1$}\label{fig:exp:n:WRange}
\end{figure*}

\begin{figure*}[!t]
\centering \subfigure[\emph{Search Logs}]
{\includegraphics[width=0.244\textwidth]{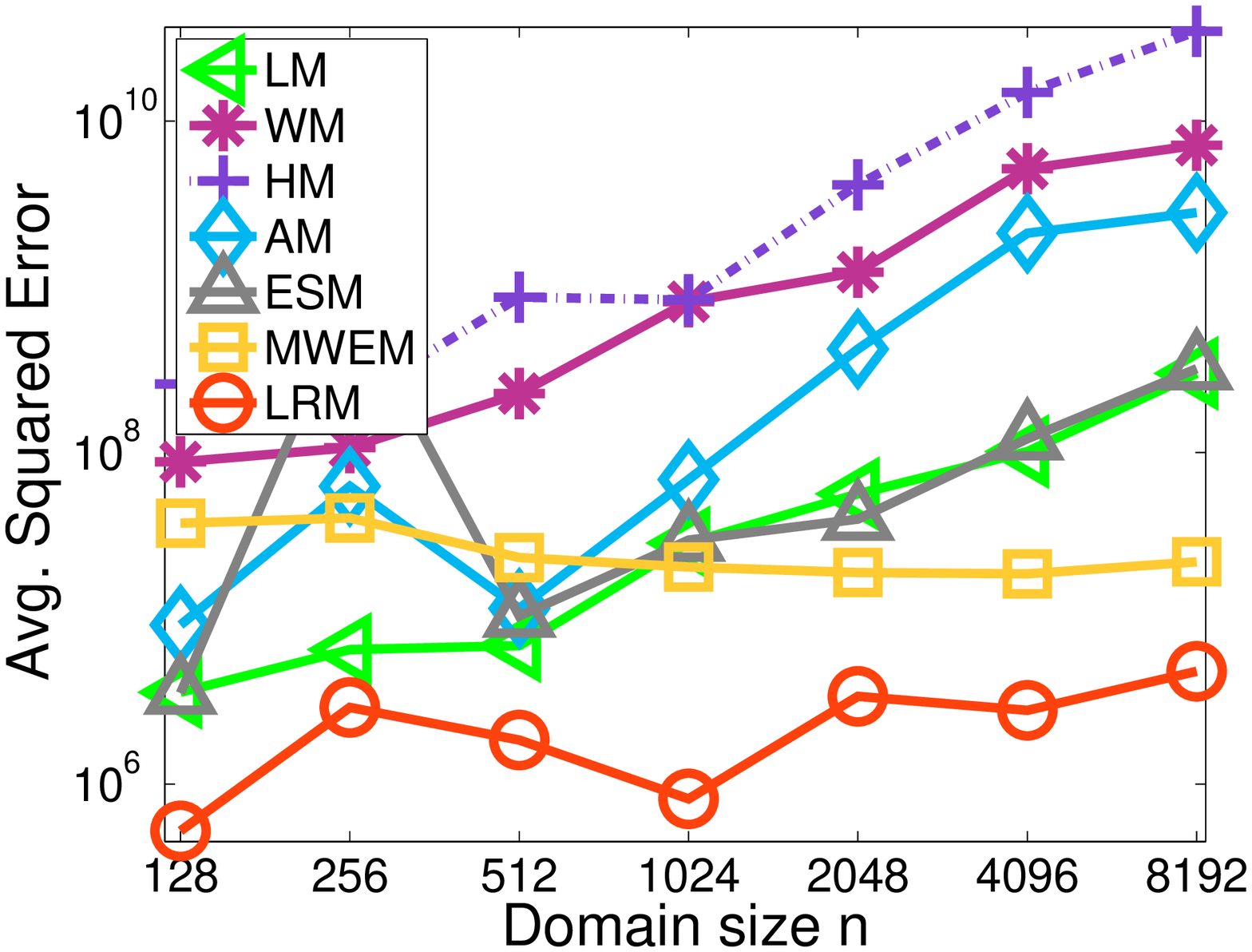}}
\subfigure[\emph{Net Trace}]
{\includegraphics[width=0.244\textwidth]{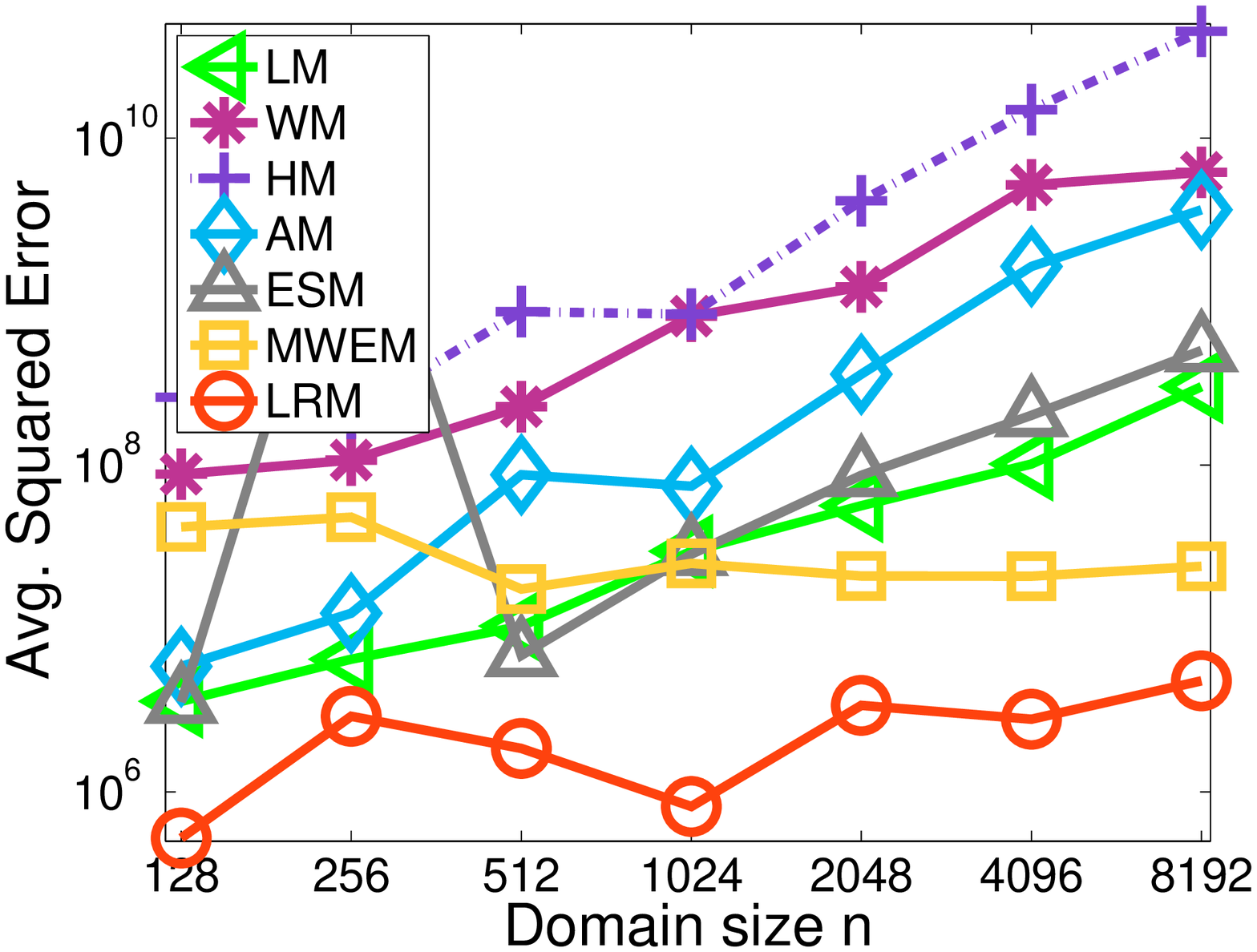}}
\centering \subfigure[\emph{Social Network}]
{\includegraphics[width=0.244\textwidth]{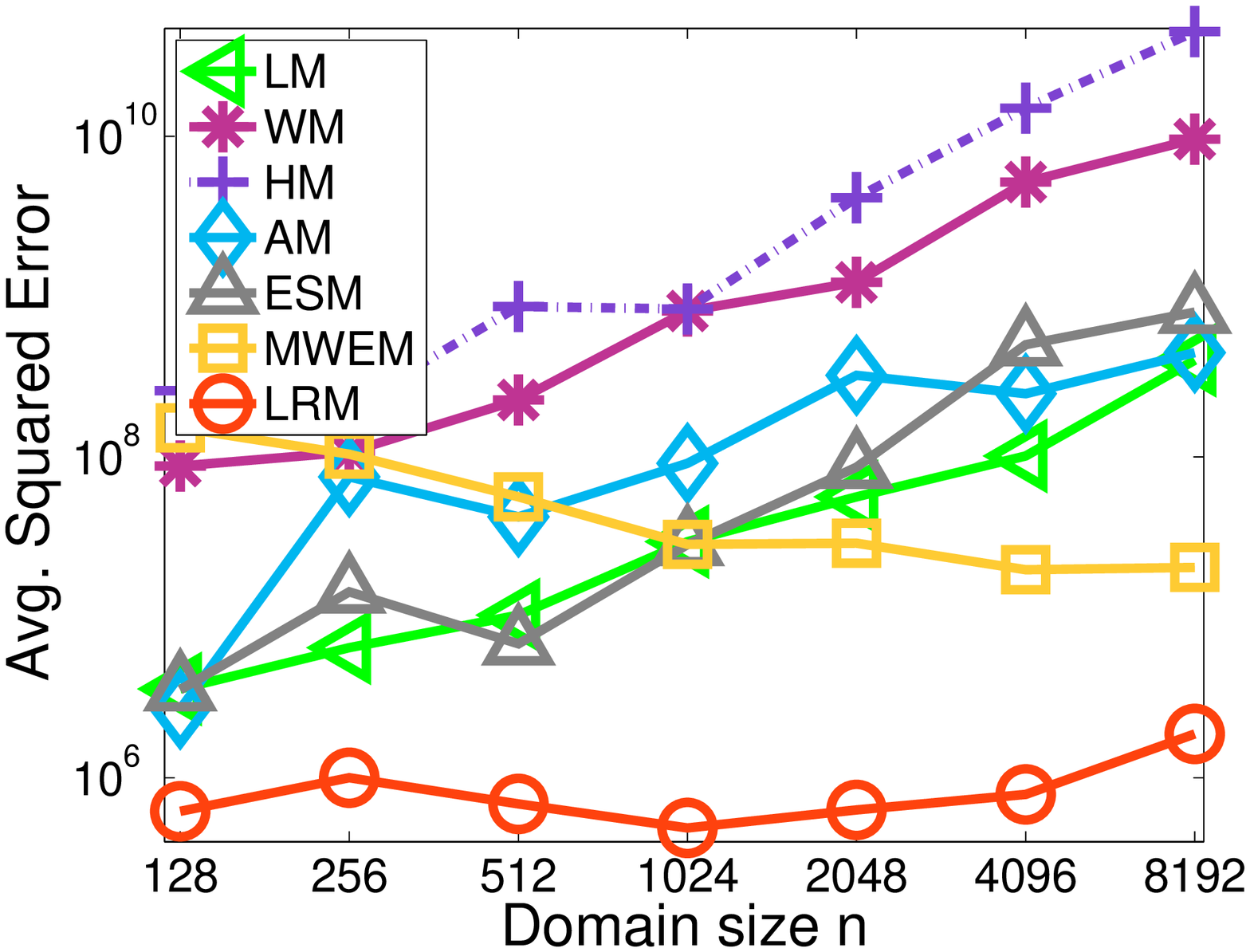}}
\centering \subfigure[\emph{UCI Adult}]
{\includegraphics[width=0.244\textwidth]{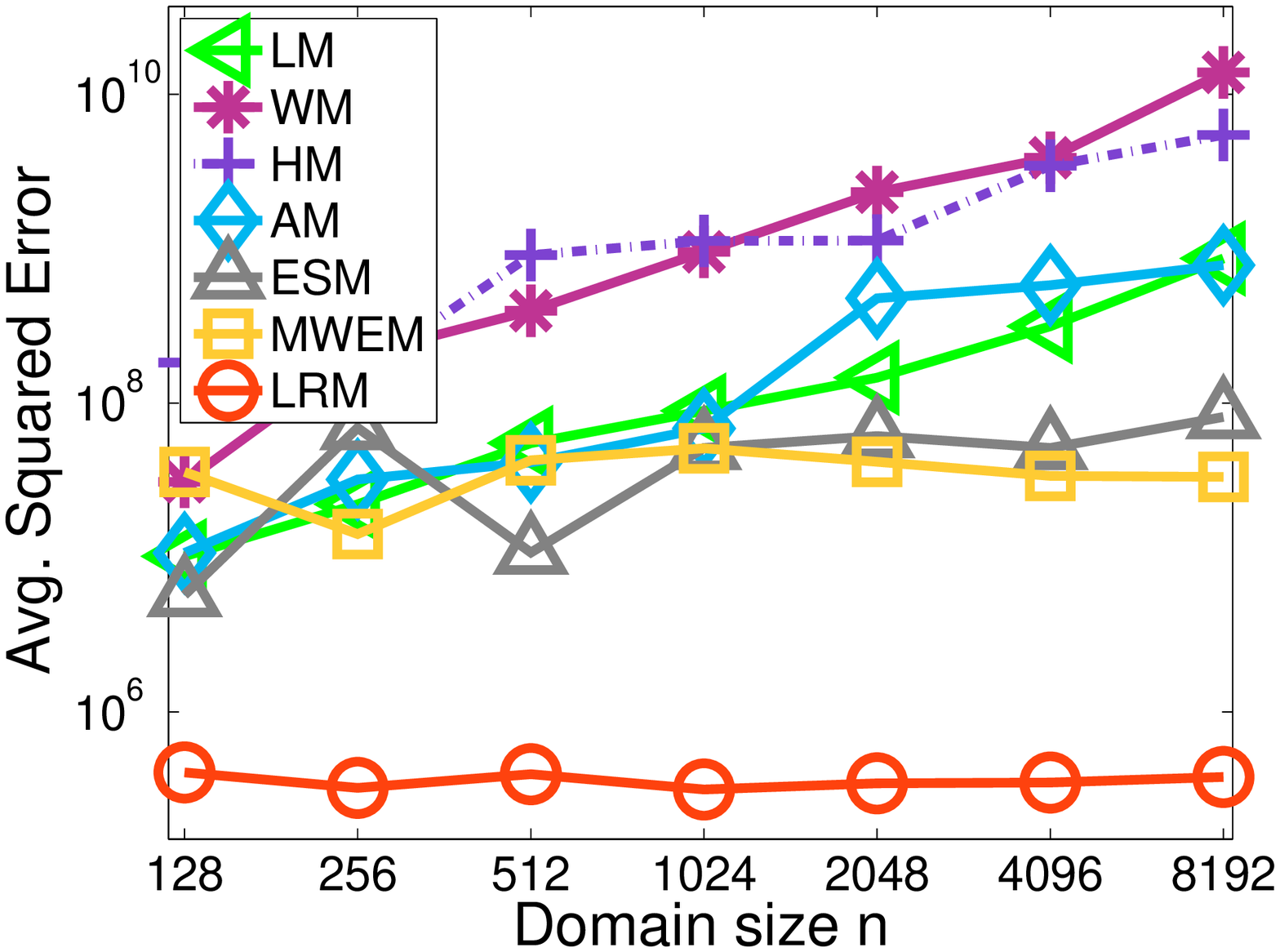}}
\caption{Effect of domain size $n$ on workload \emph{WMarginal} under $\epsilon$-differential privacy with $\epsilon=0.1$}\label{fig:exp:n:WMarginal}
\end{figure*}

\begin{figure*}[!t]
\centering \subfigure[\emph{Search Logs}]
{\includegraphics[width=0.244\textwidth]{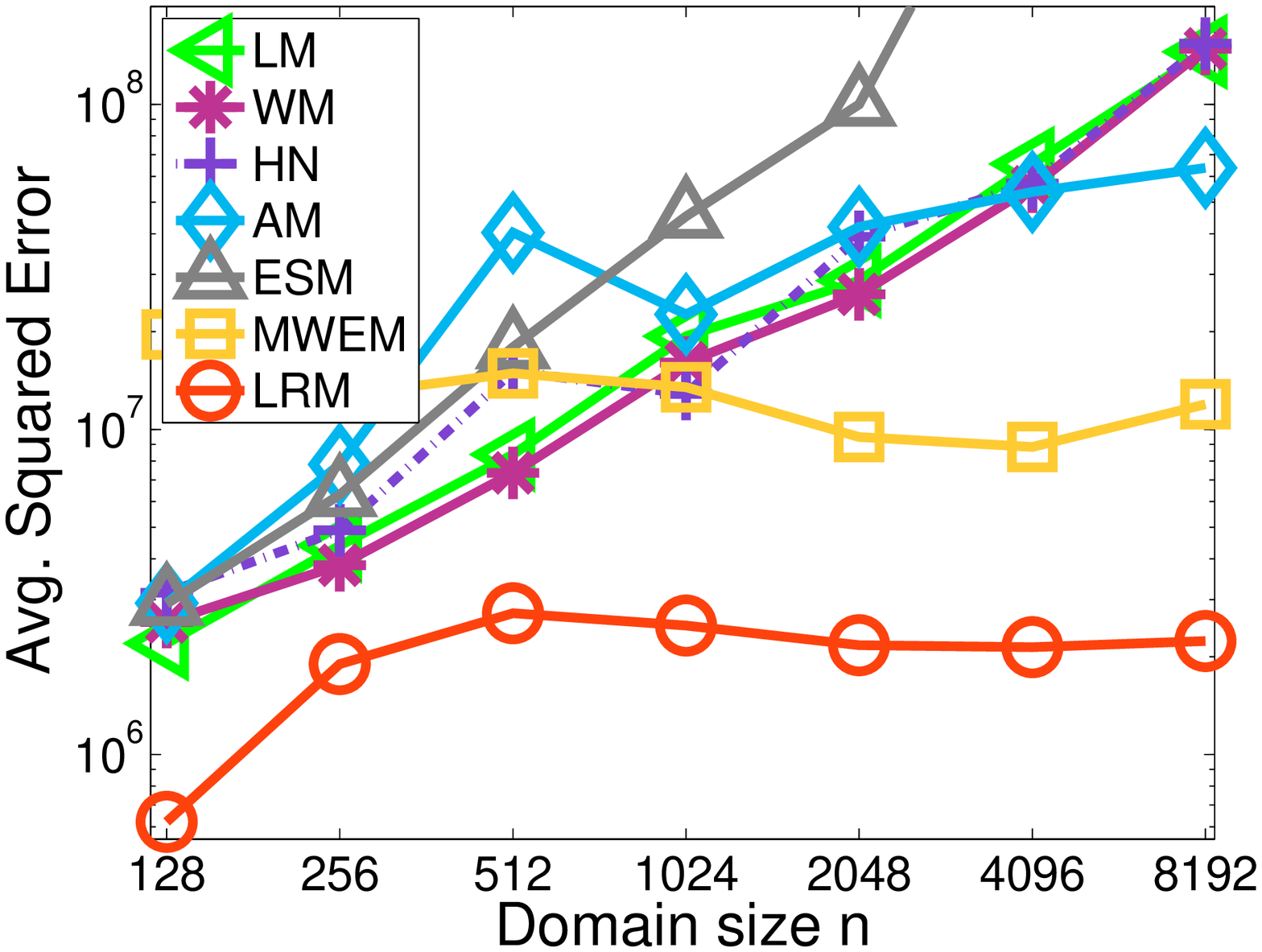}}
\subfigure[\emph{Net Trace}]
{\includegraphics[width=0.244\textwidth]{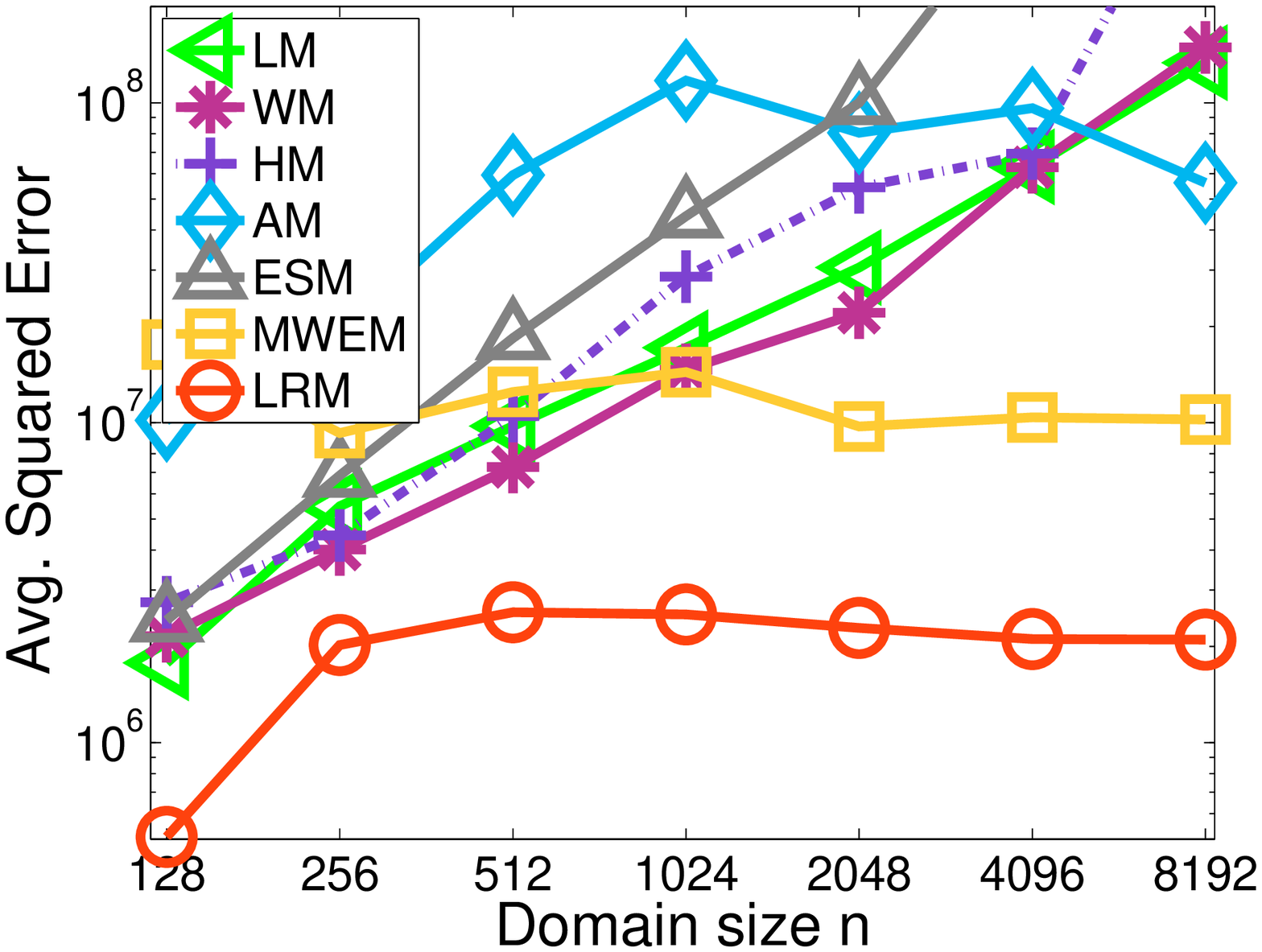}}
\centering \subfigure[\emph{Social Network}]
{\includegraphics[width=0.244\textwidth]{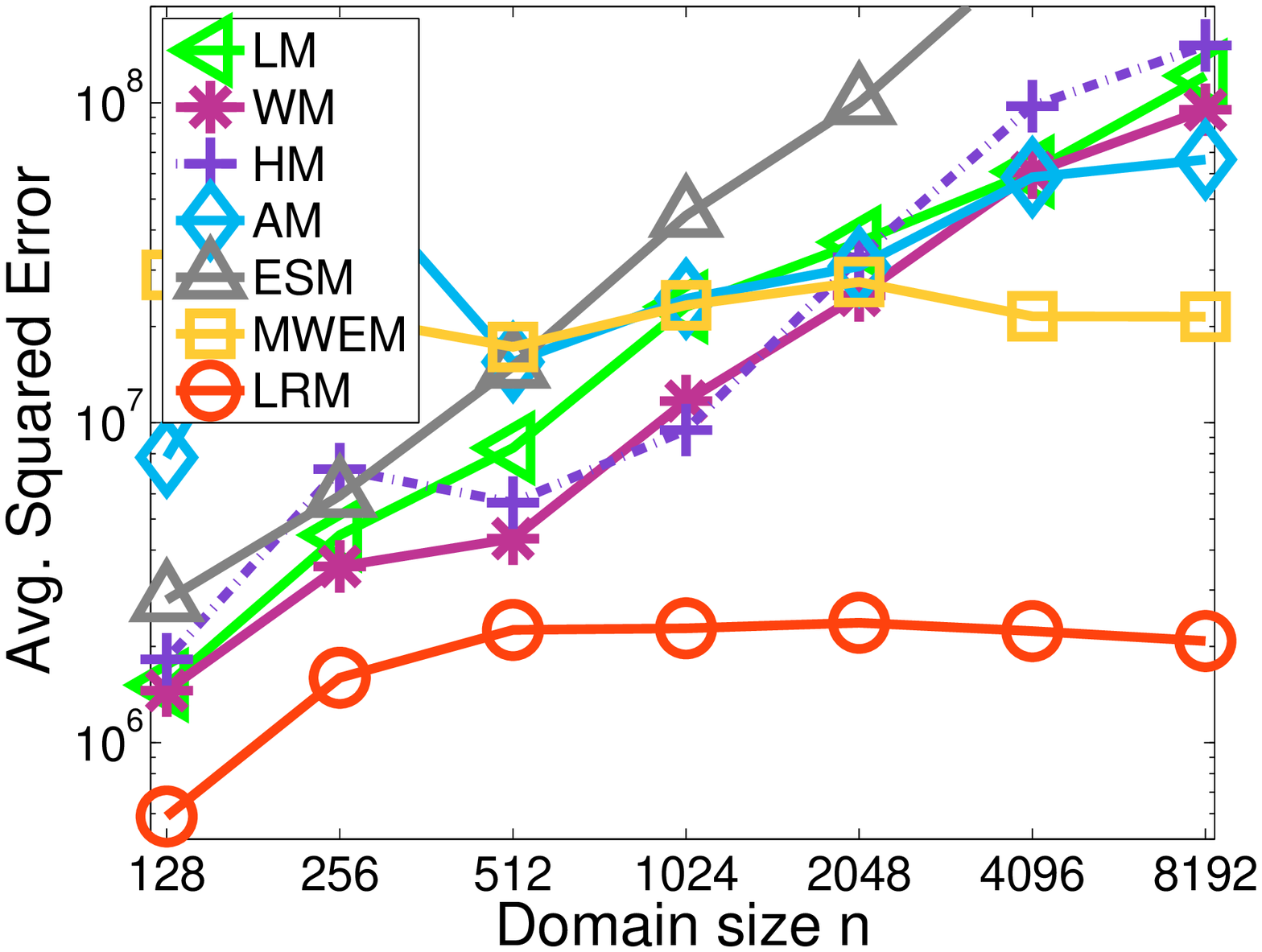}}
\centering \subfigure[\emph{UCI Adult}]
{\includegraphics[width=0.244\textwidth]{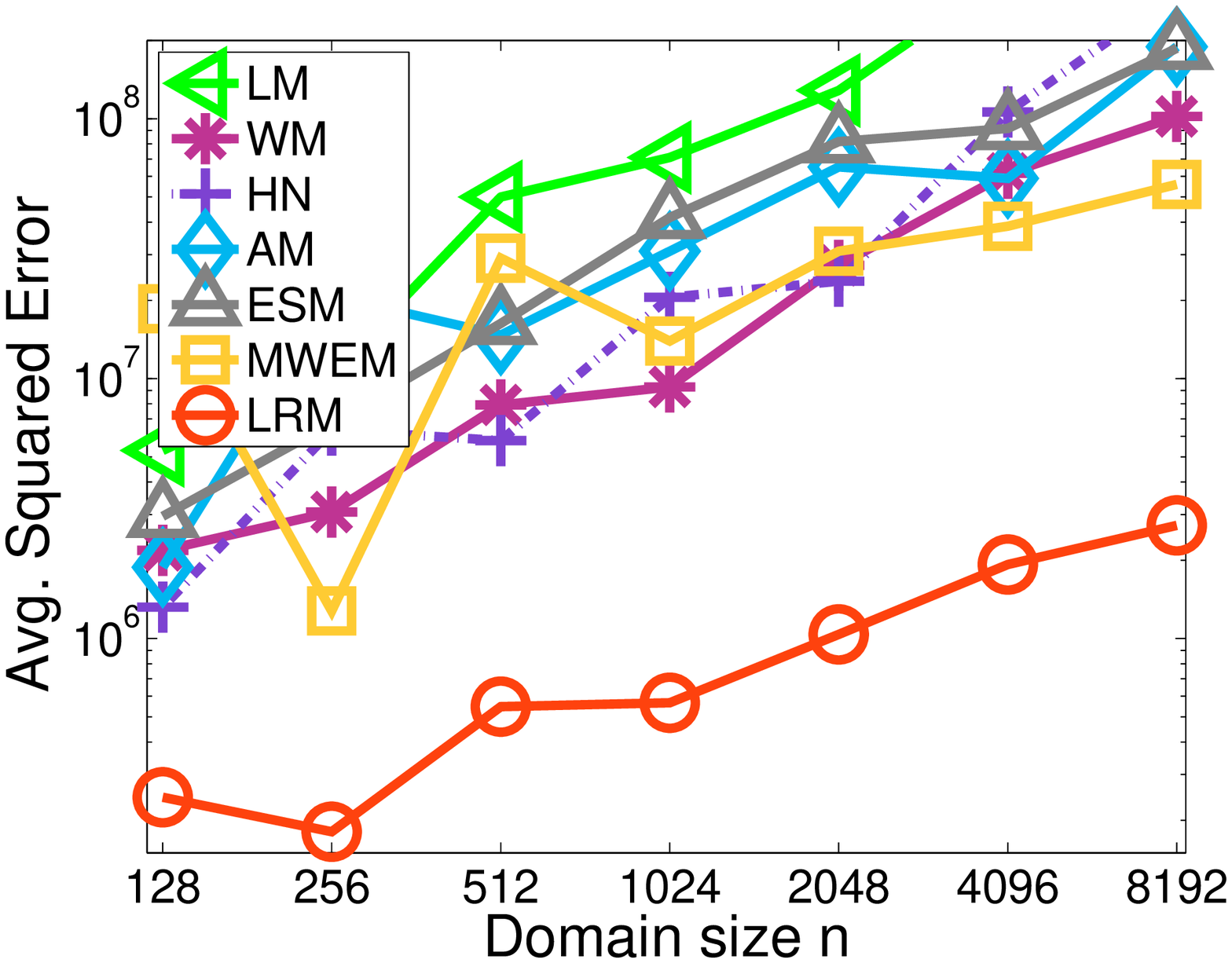}}
\caption{Effect of domain size $n$ on workload \emph{WRelated} under $\epsilon$-differential privacy with $\epsilon=0.1$}\label{fig:exp:n:WRelated}
\end{figure*}

\begin{figure*}[!t]
\centering \subfigure[\emph{Search Logs}]
{\includegraphics[width=0.244\textwidth]{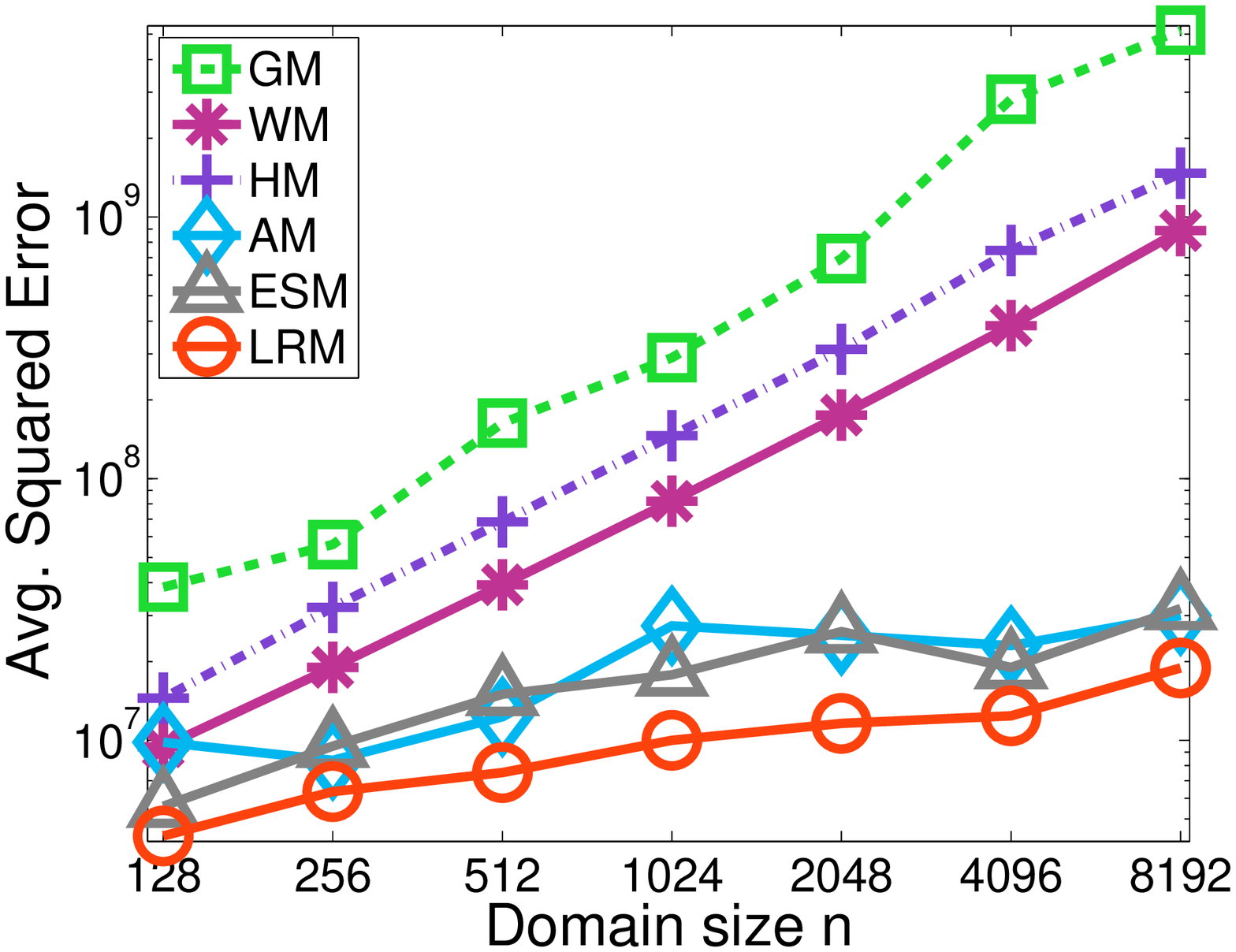}}
\subfigure[\emph{Net Trace}]
{\includegraphics[width=0.244\textwidth]{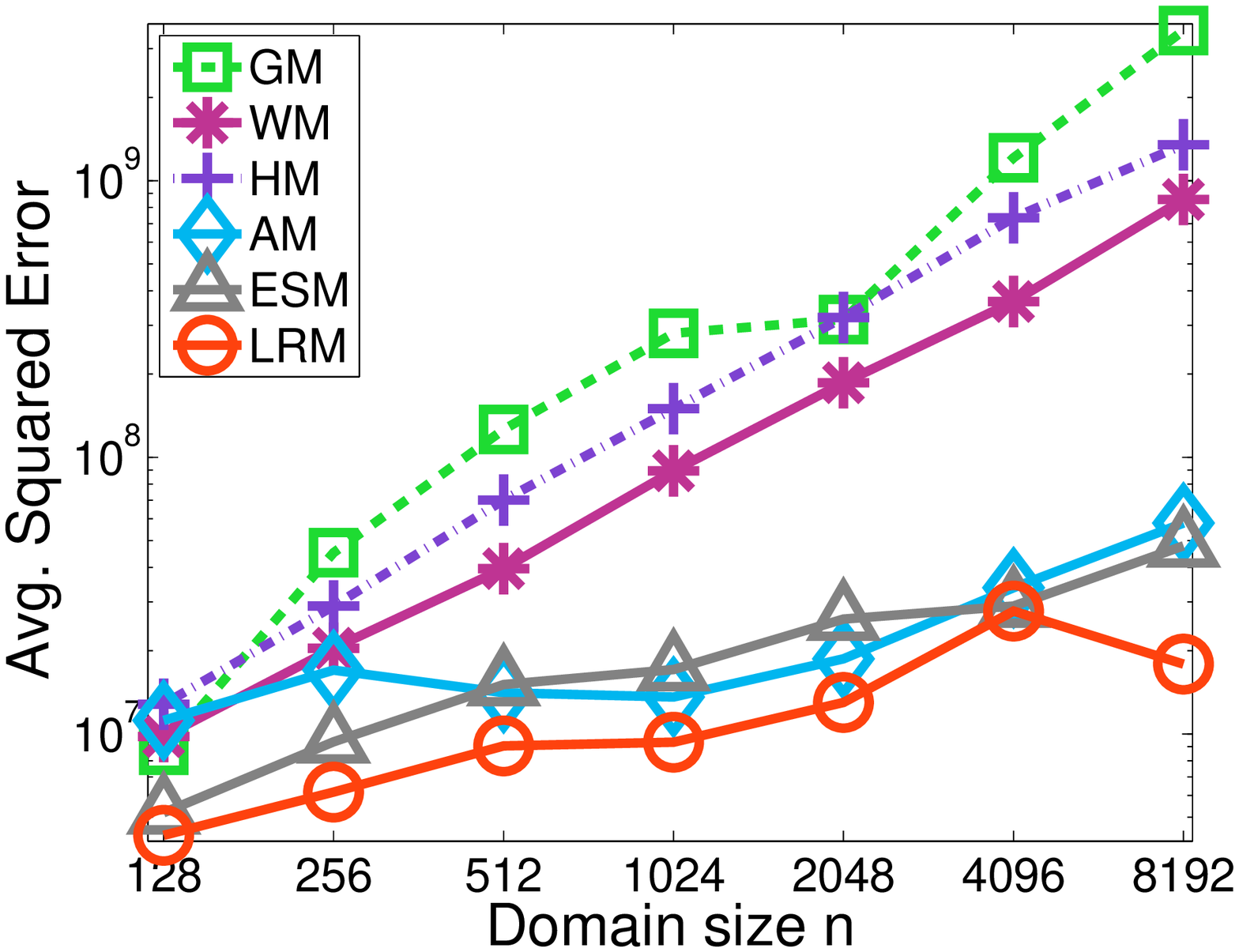}}
\centering \subfigure[\emph{Social Network}]
{\includegraphics[width=0.244\textwidth]{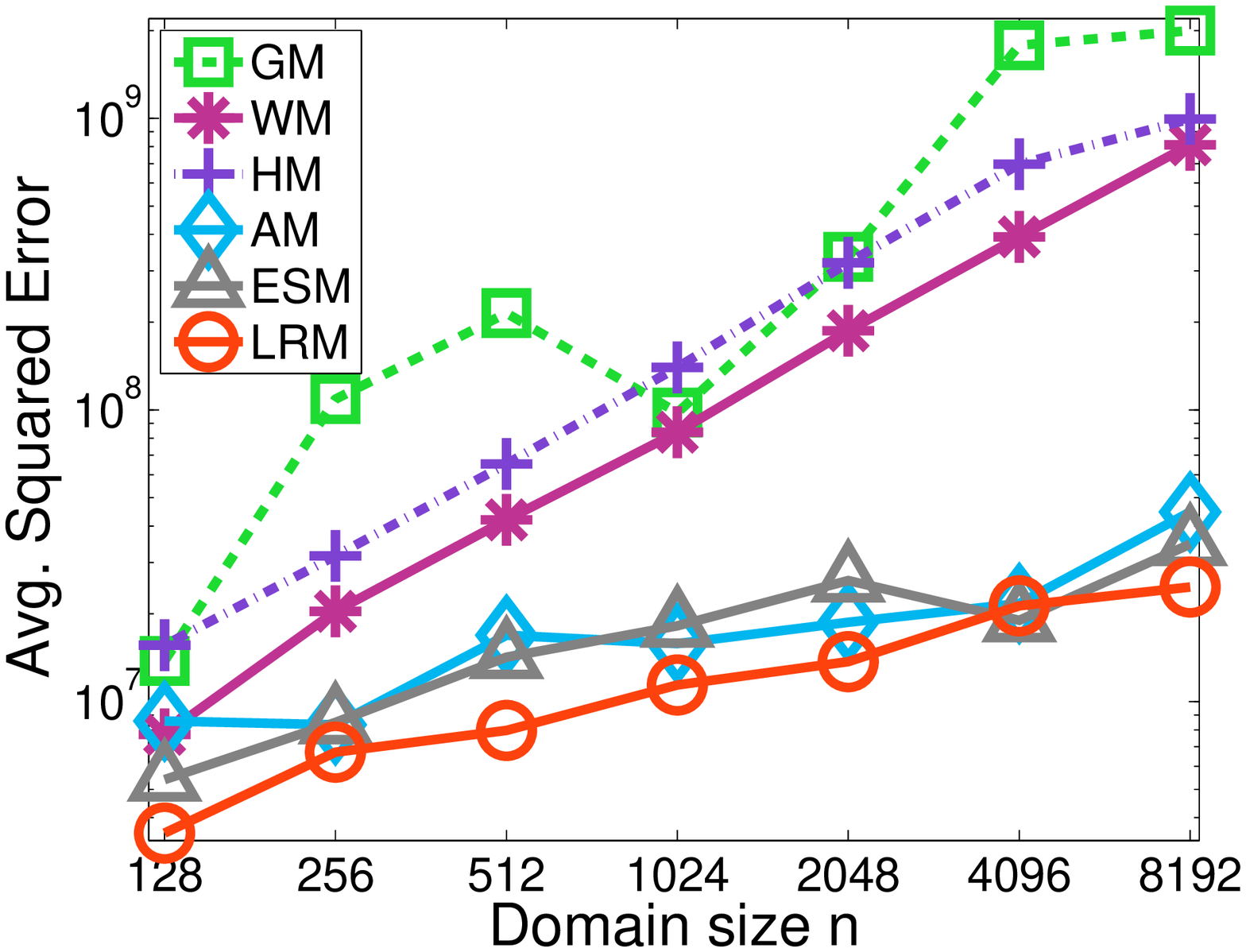}}
\centering \subfigure[\emph{UCI Adult}]
{\includegraphics[width=0.244\textwidth]{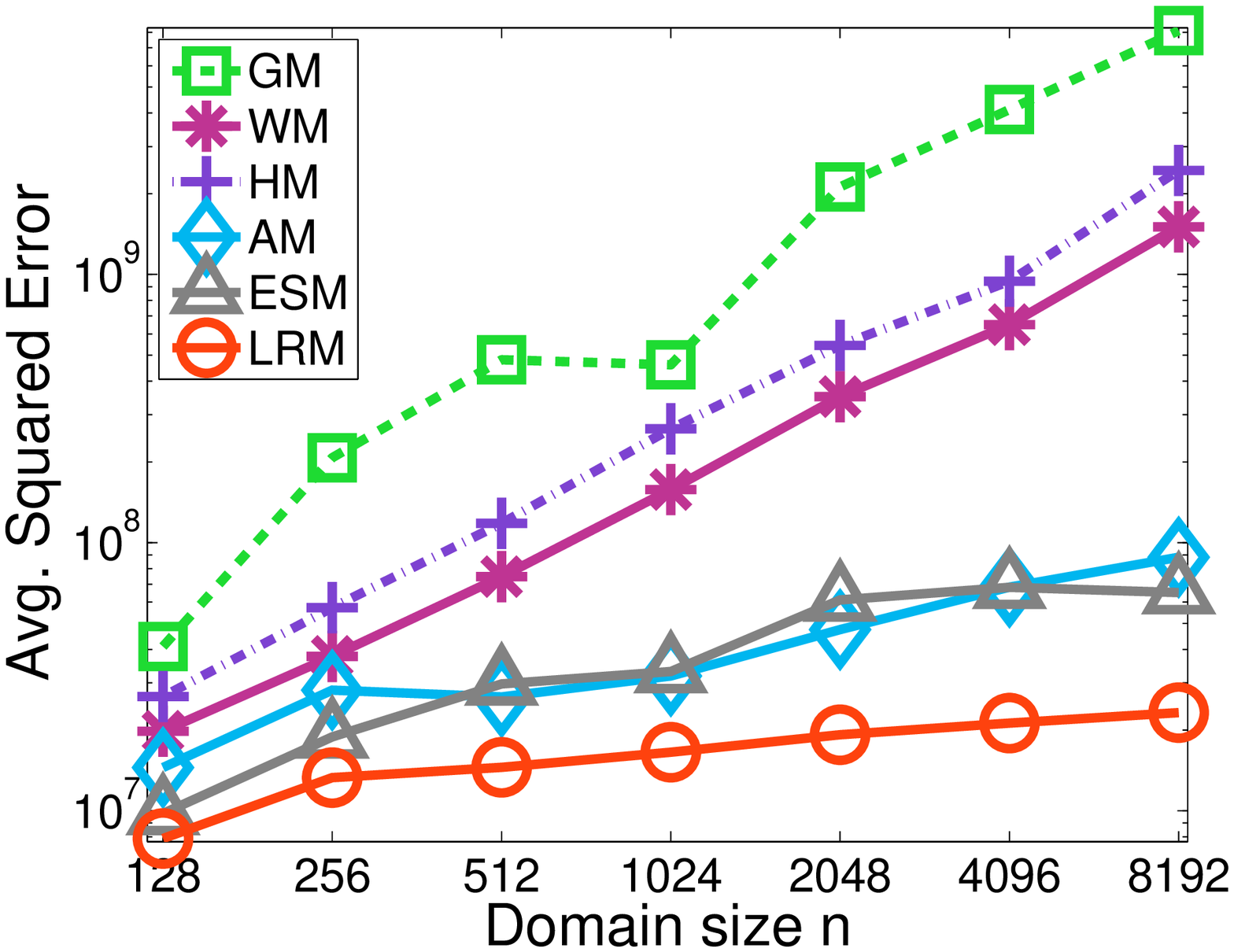}}
\caption{Effect of domain size $n$ on workload WDiscrete under ($\epsilon$, $\delta$)-differential privacy with $\epsilon=0.1$
and $\delta=0.0001$}\label{fig:exp:n:WDiscrete:app}
\end{figure*}

\begin{figure*}[!t]
\centering \subfigure[\emph{Search Logs}]
{\includegraphics[width=0.244\textwidth]{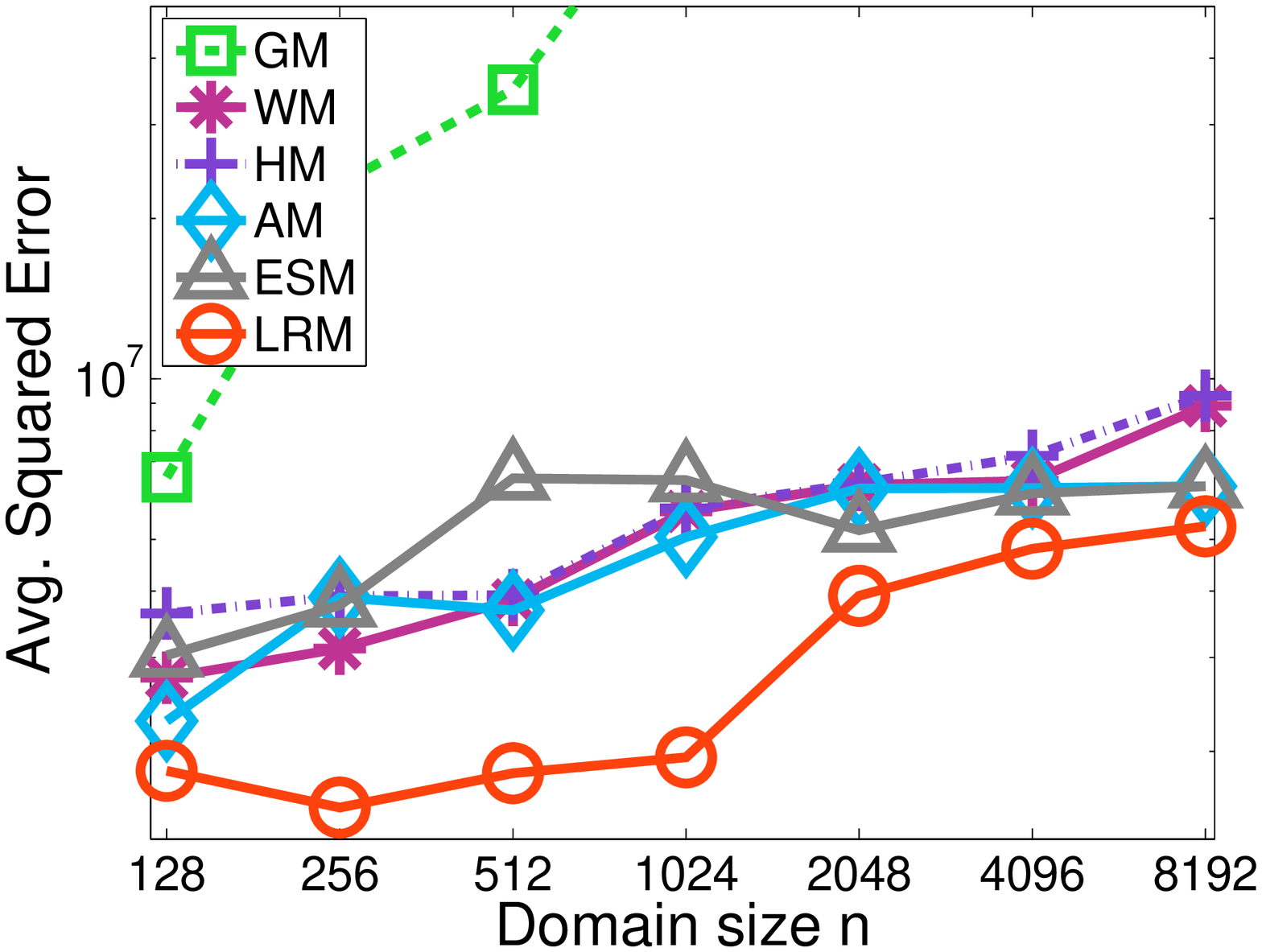}}
\subfigure[\emph{Net Trace}]
{\includegraphics[width=0.244\textwidth]{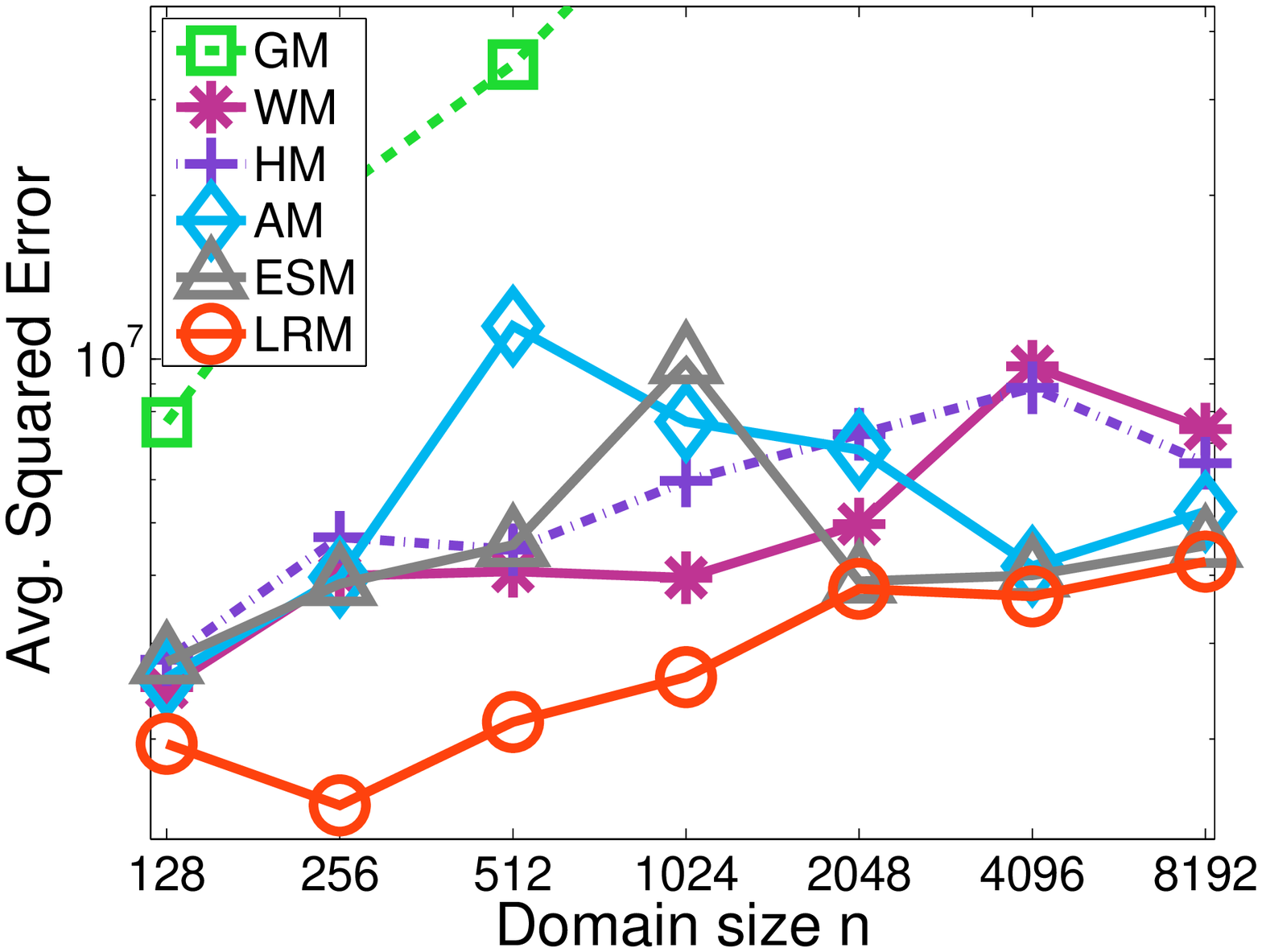}}
\centering \subfigure[\emph{Social Network}]
{\includegraphics[width=0.244\textwidth]{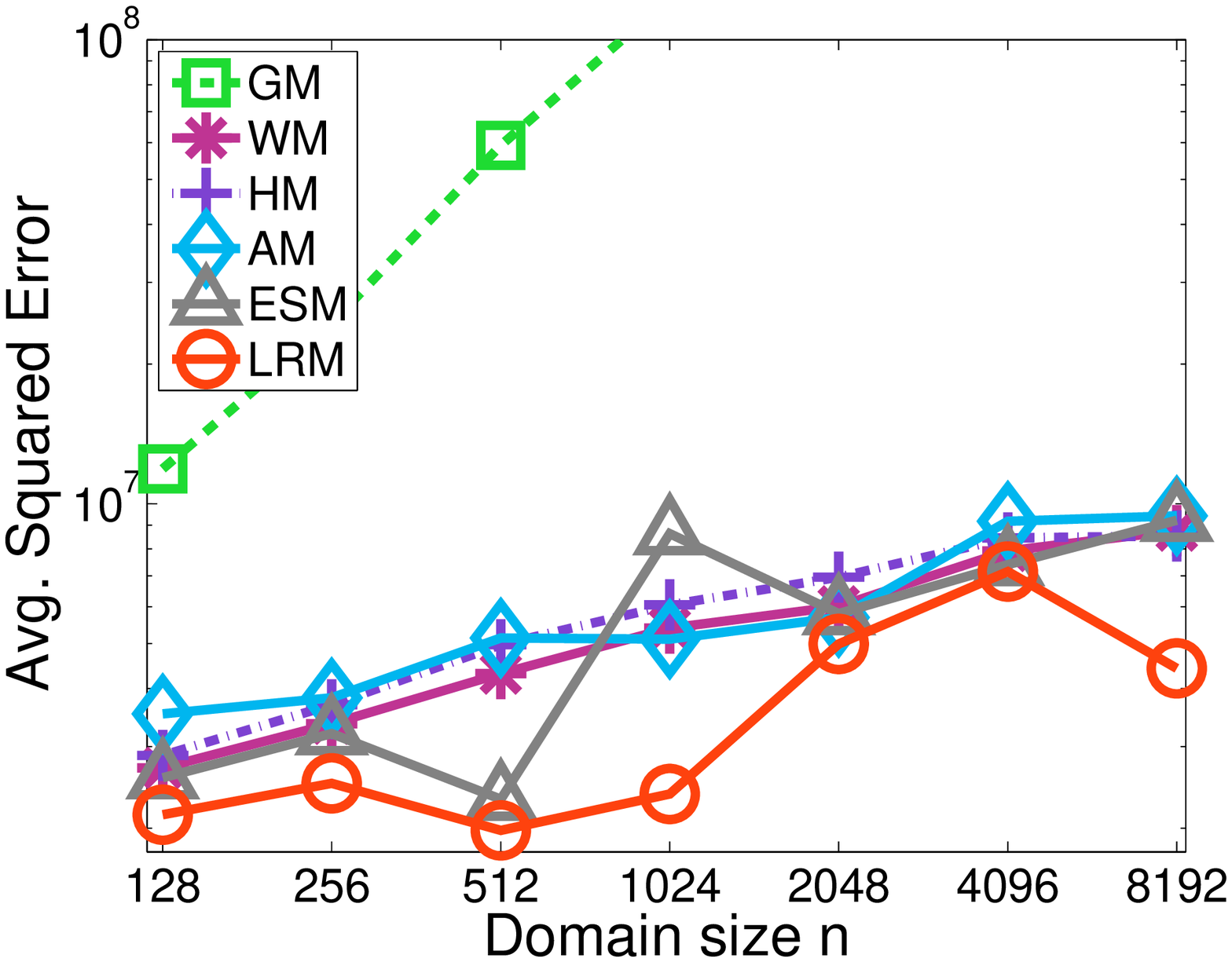}}
\centering \subfigure[\emph{UCI Adult}]
{\includegraphics[width=0.244\textwidth]{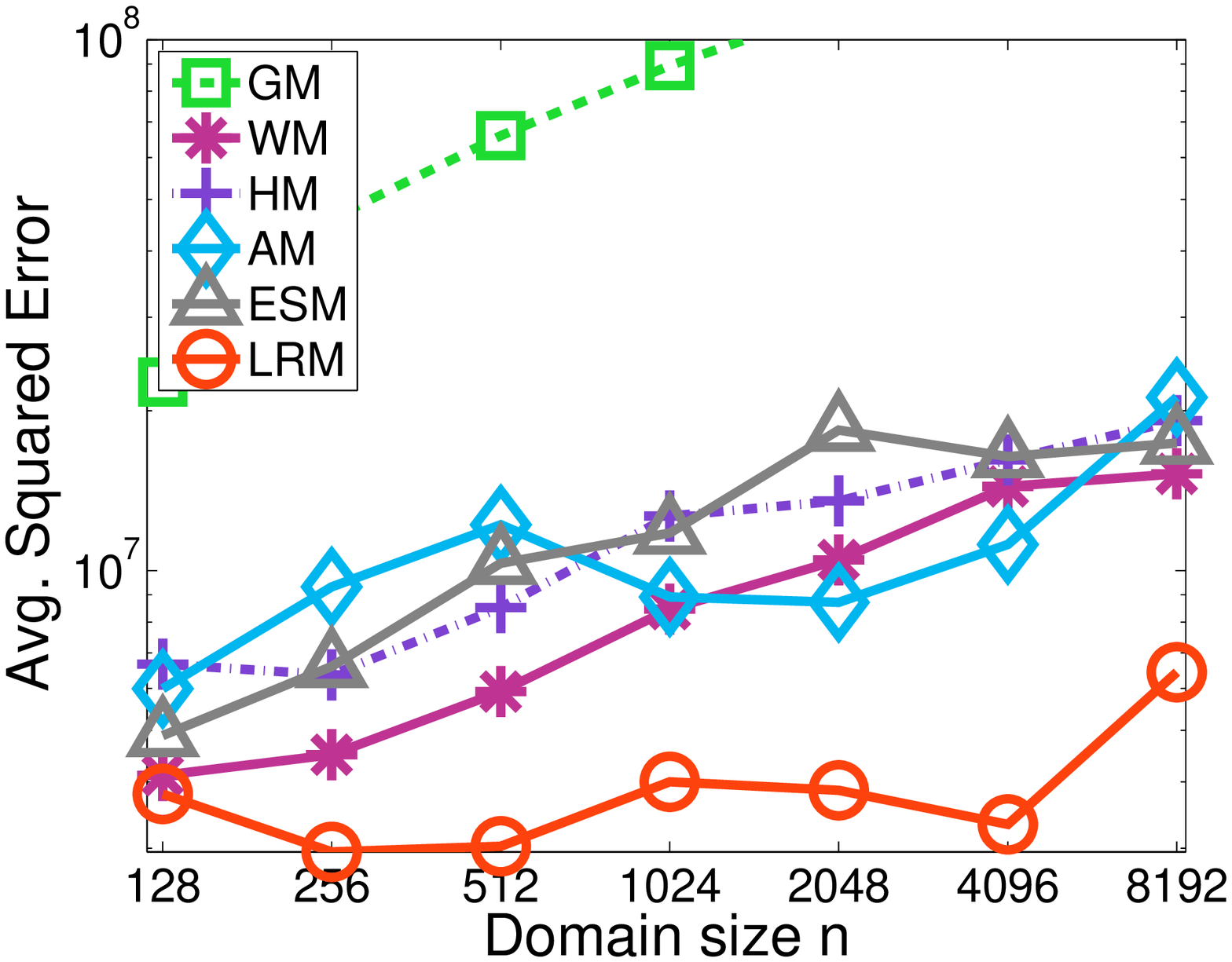}}
\caption{Effect of domain size $n$ on workload \emph{WRange} under
($\epsilon$, $\delta$)-differential privacy with $\epsilon=0.1$ and
$\delta=0.0001$}\label{fig:exp:n:WRange:app}
\end{figure*}
\begin{figure*}[!t]
\centering \subfigure[\emph{Search Logs}]
{\includegraphics[width=0.244\textwidth]{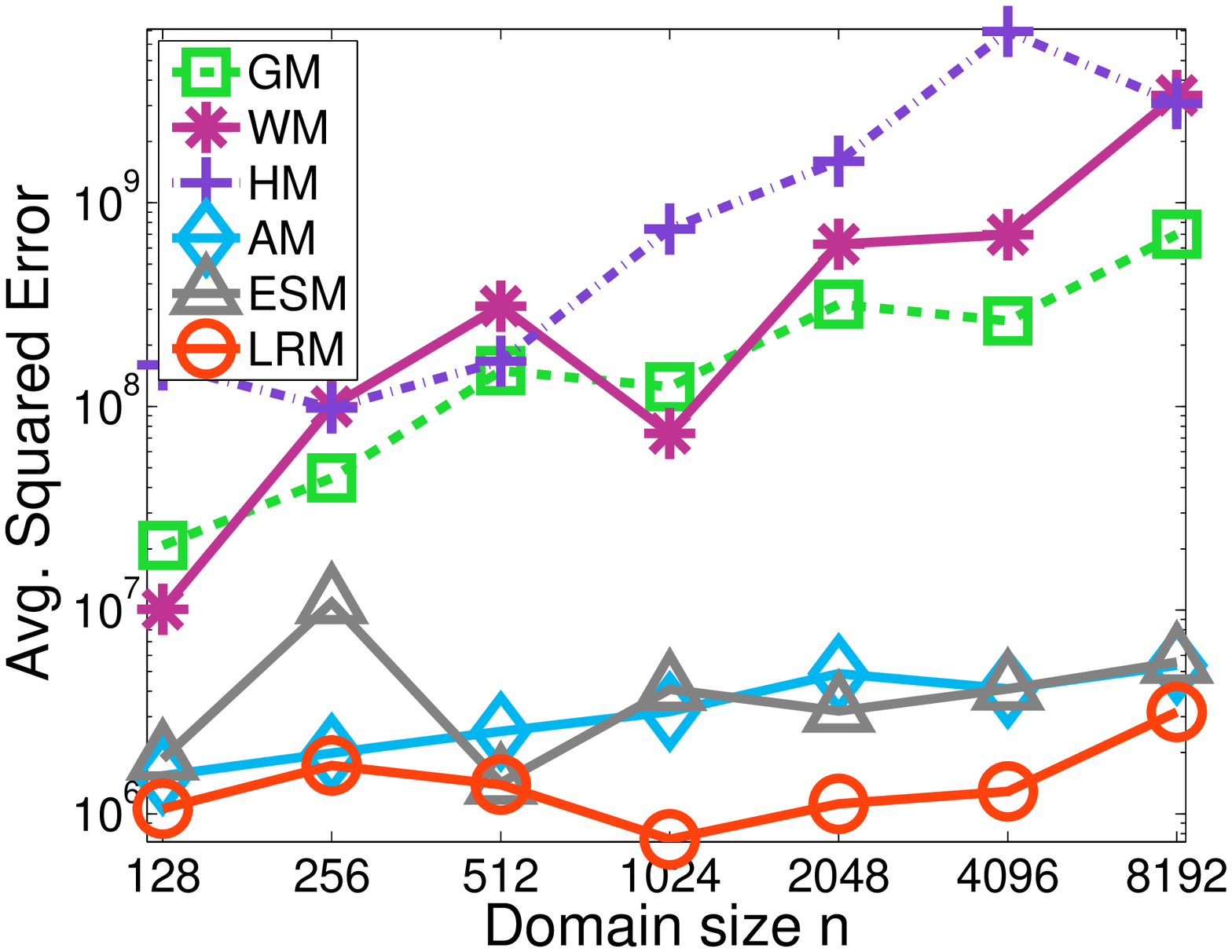}}
\subfigure[\emph{Net Trace}]
{\includegraphics[width=0.244\textwidth]{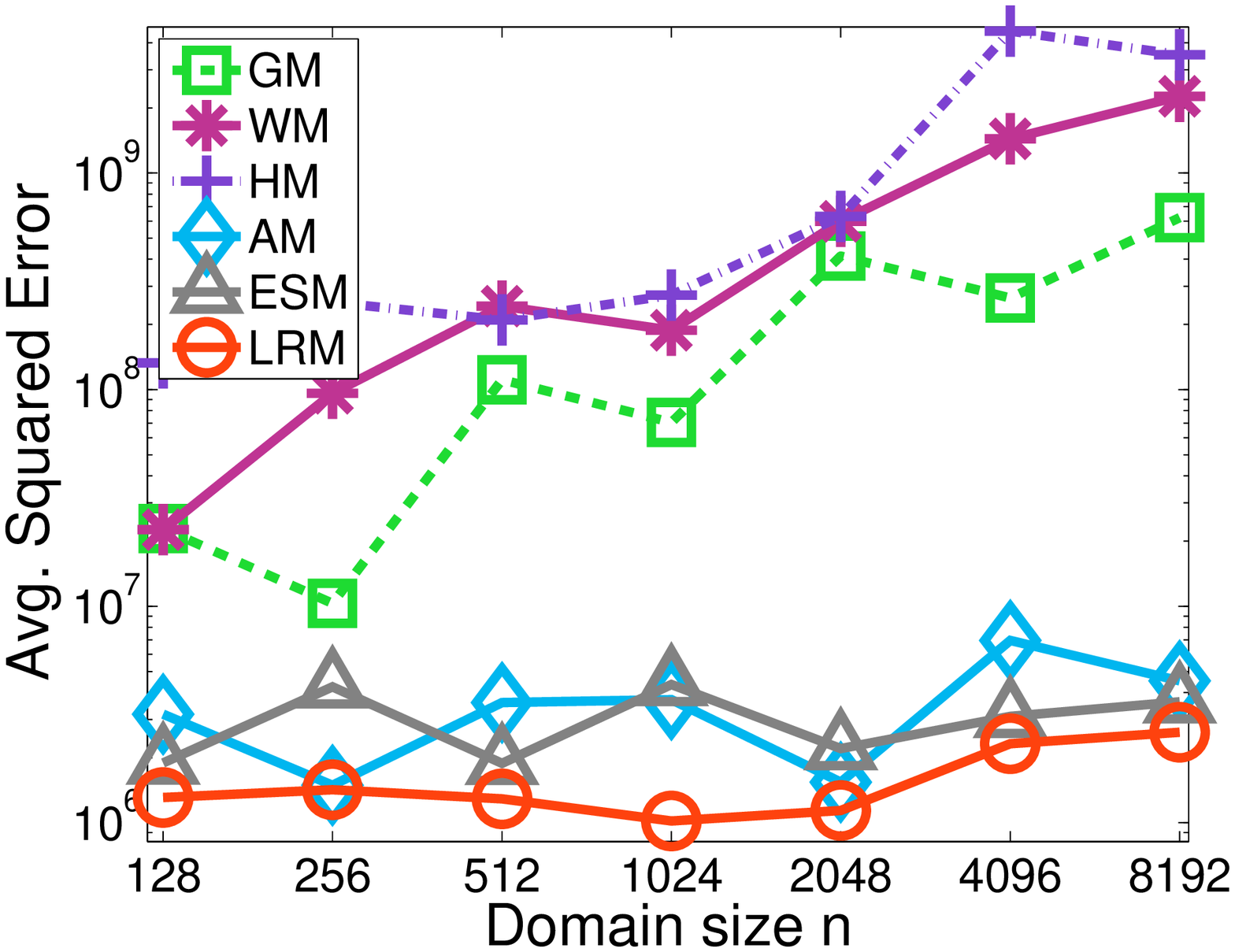}}
\centering \subfigure[\emph{Social Network}]
{\includegraphics[width=0.244\textwidth]{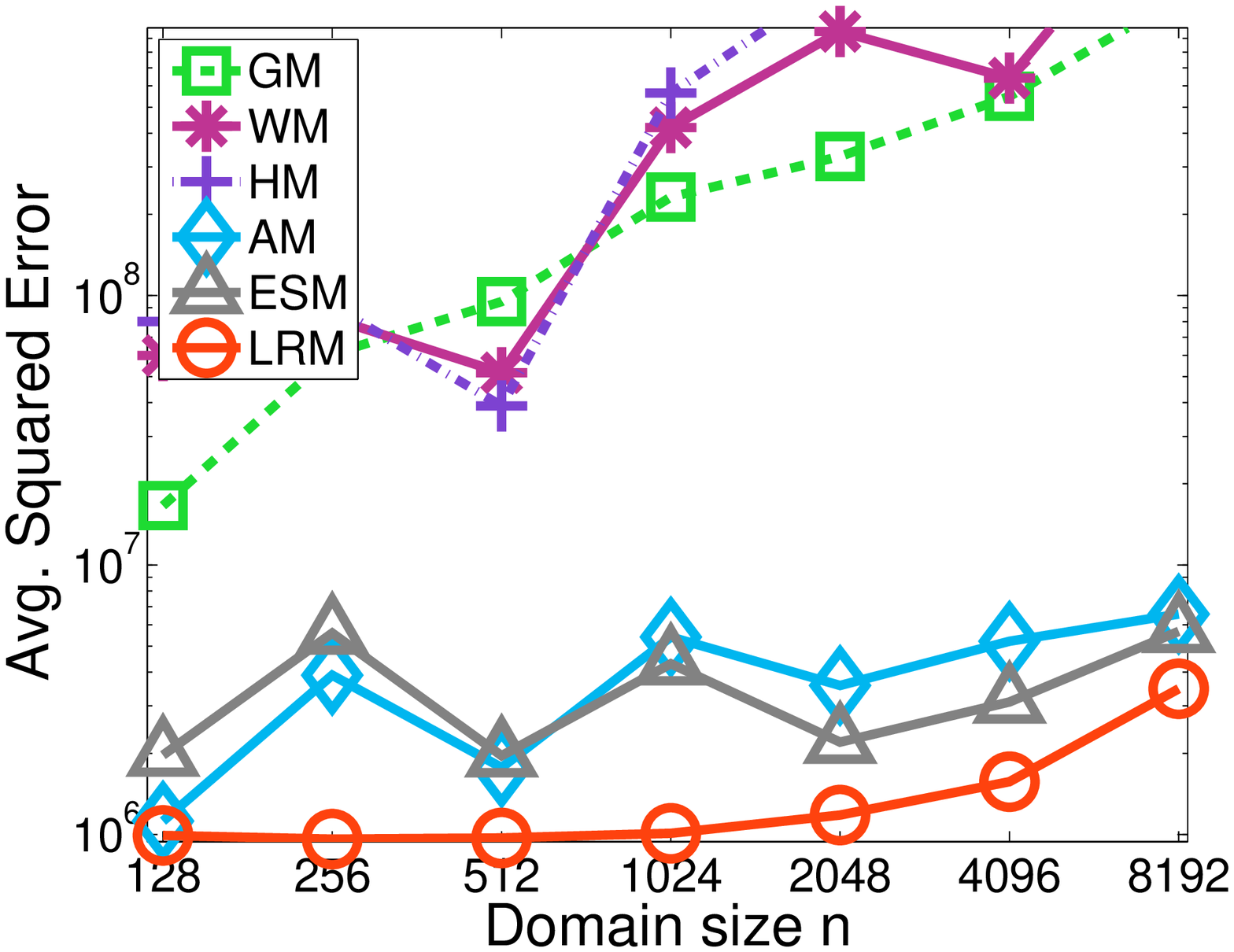}}
\centering \subfigure[\emph{UCI Adult}]
{\includegraphics[width=0.244\textwidth]{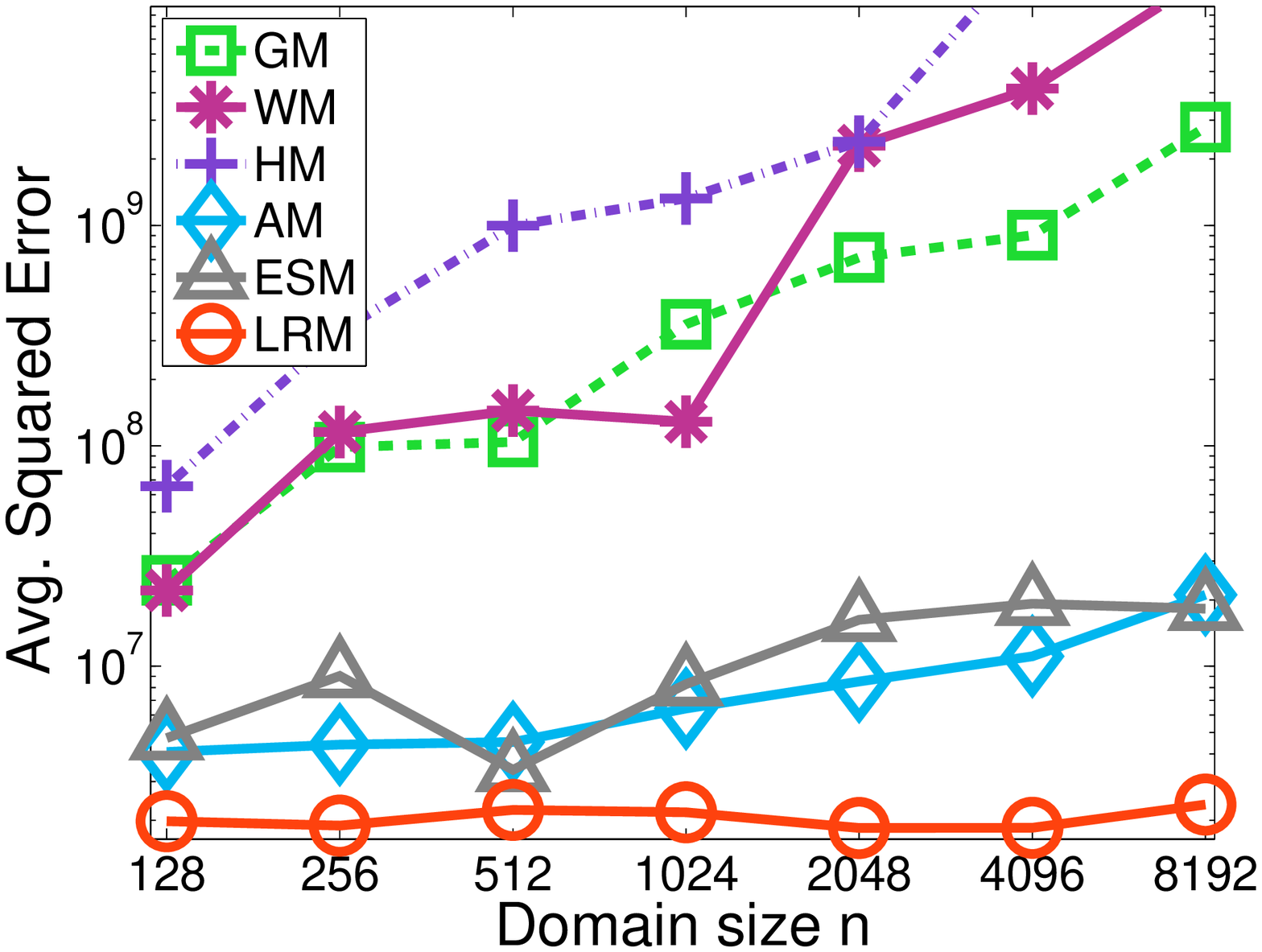}}
\caption{Effect of domain size $n$ on workload \emph{WMarginal}
under ($\epsilon$, $\delta$)-differential privacy with $\epsilon=0.1$
and $\delta=0.0001$}\label{fig:exp:n:WMarginal:app}
\end{figure*}

\begin{figure*}[!t]
\centering \subfigure[\emph{Search Logs}]
{\includegraphics[width=0.244\textwidth]{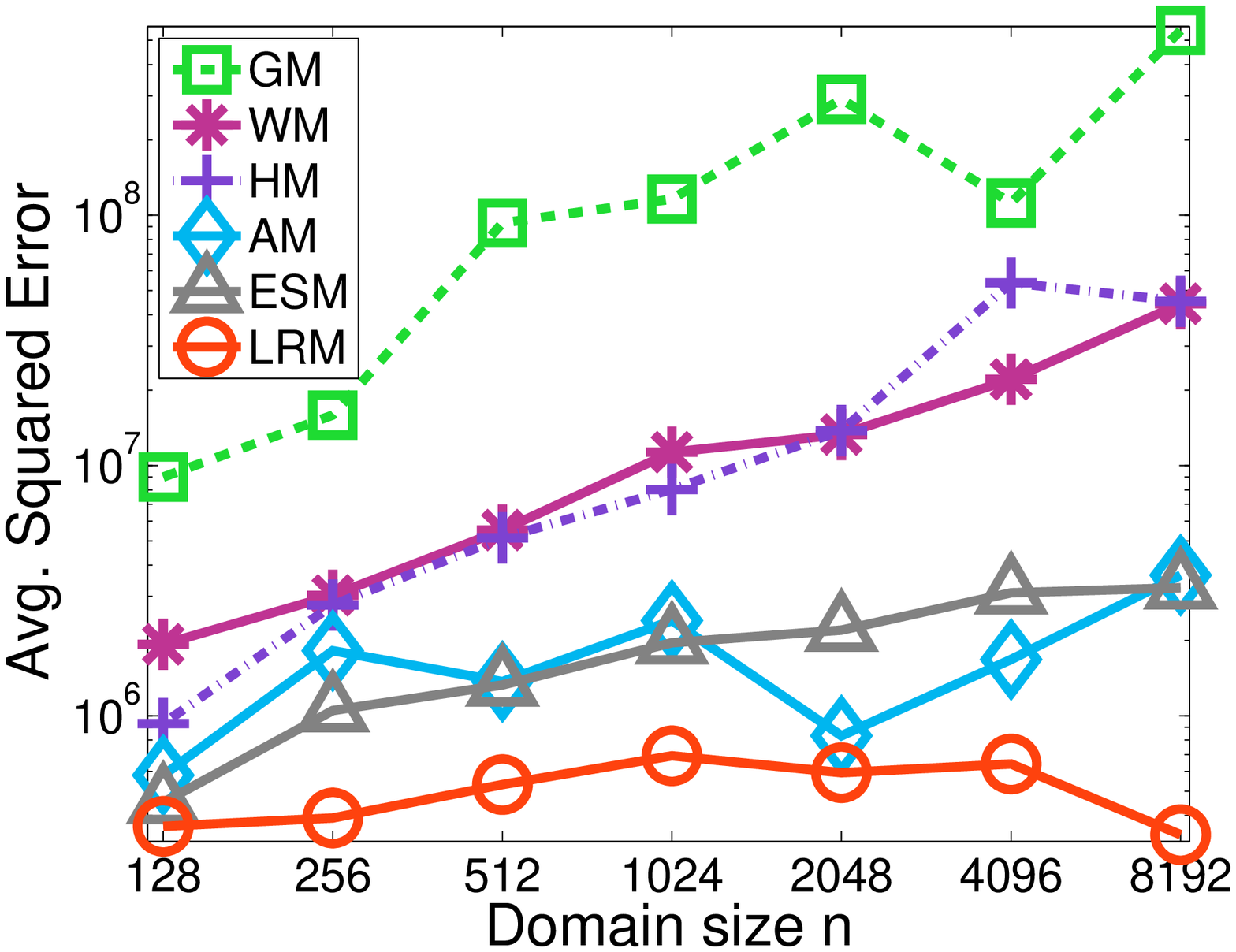}}
\subfigure[\emph{Net Trace}]
{\includegraphics[width=0.244\textwidth]{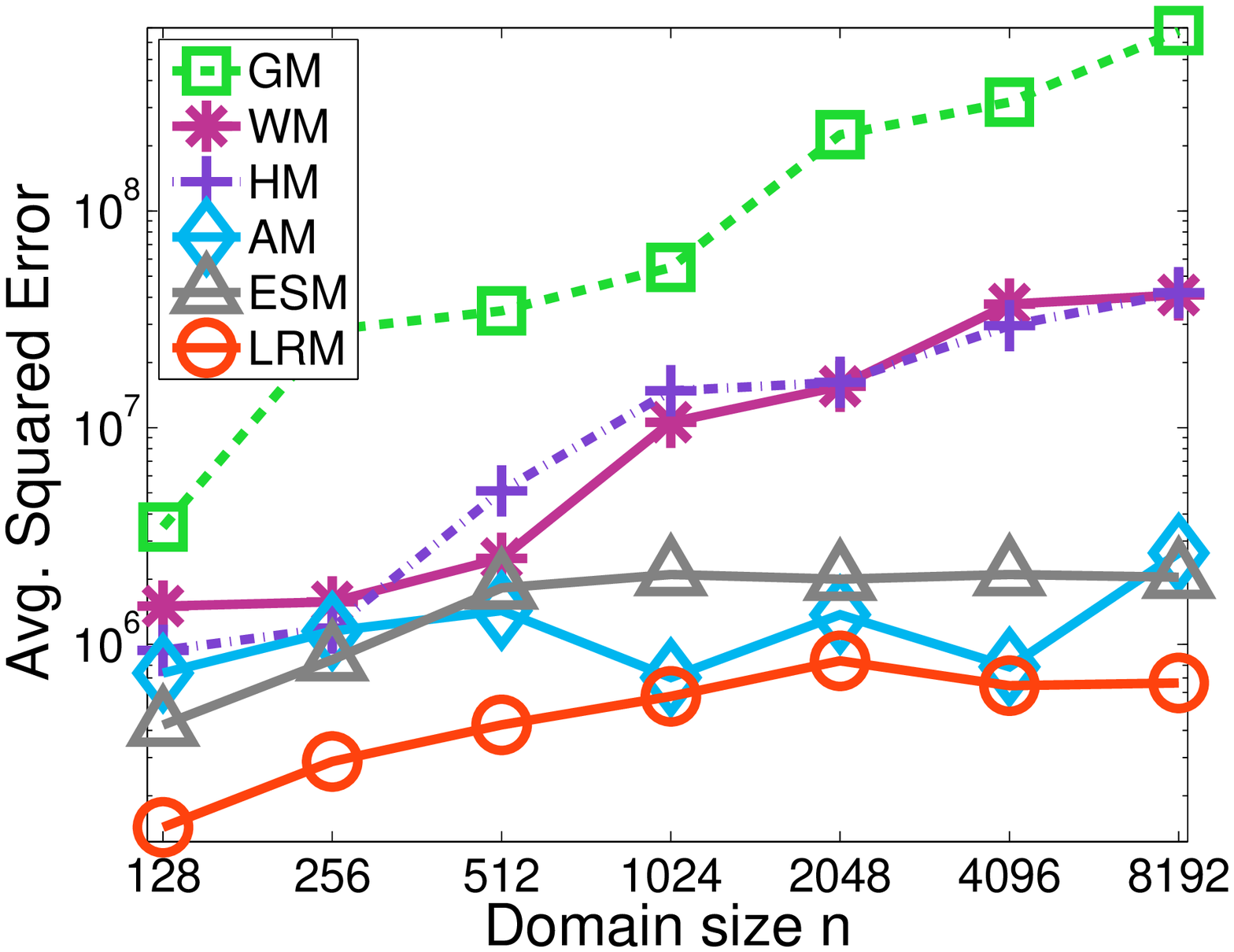}}
\centering \subfigure[\emph{Social Network}]
{\includegraphics[width=0.244\textwidth]{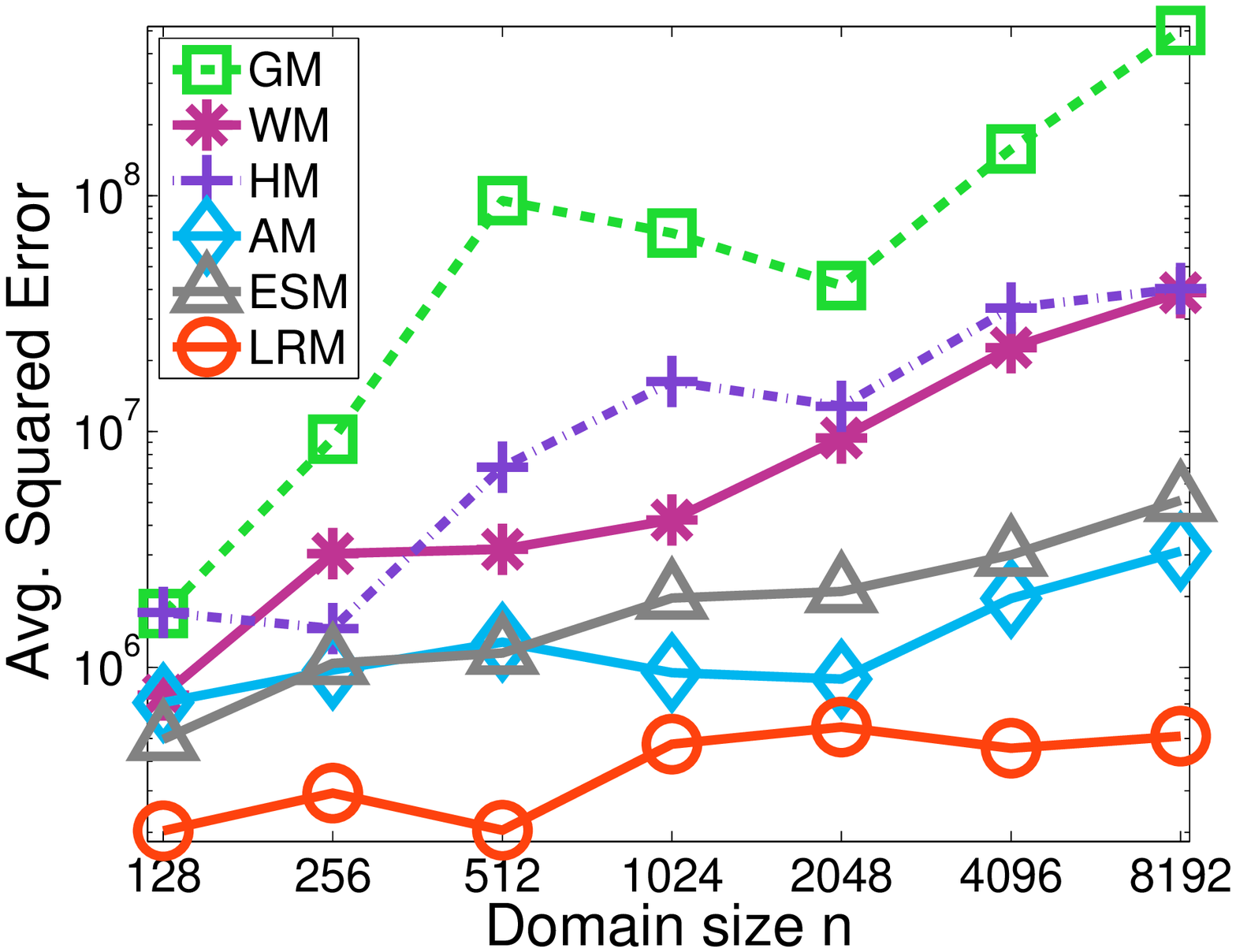}}
\centering \subfigure[\emph{UCI Adult}]
{\includegraphics[width=0.244\textwidth]{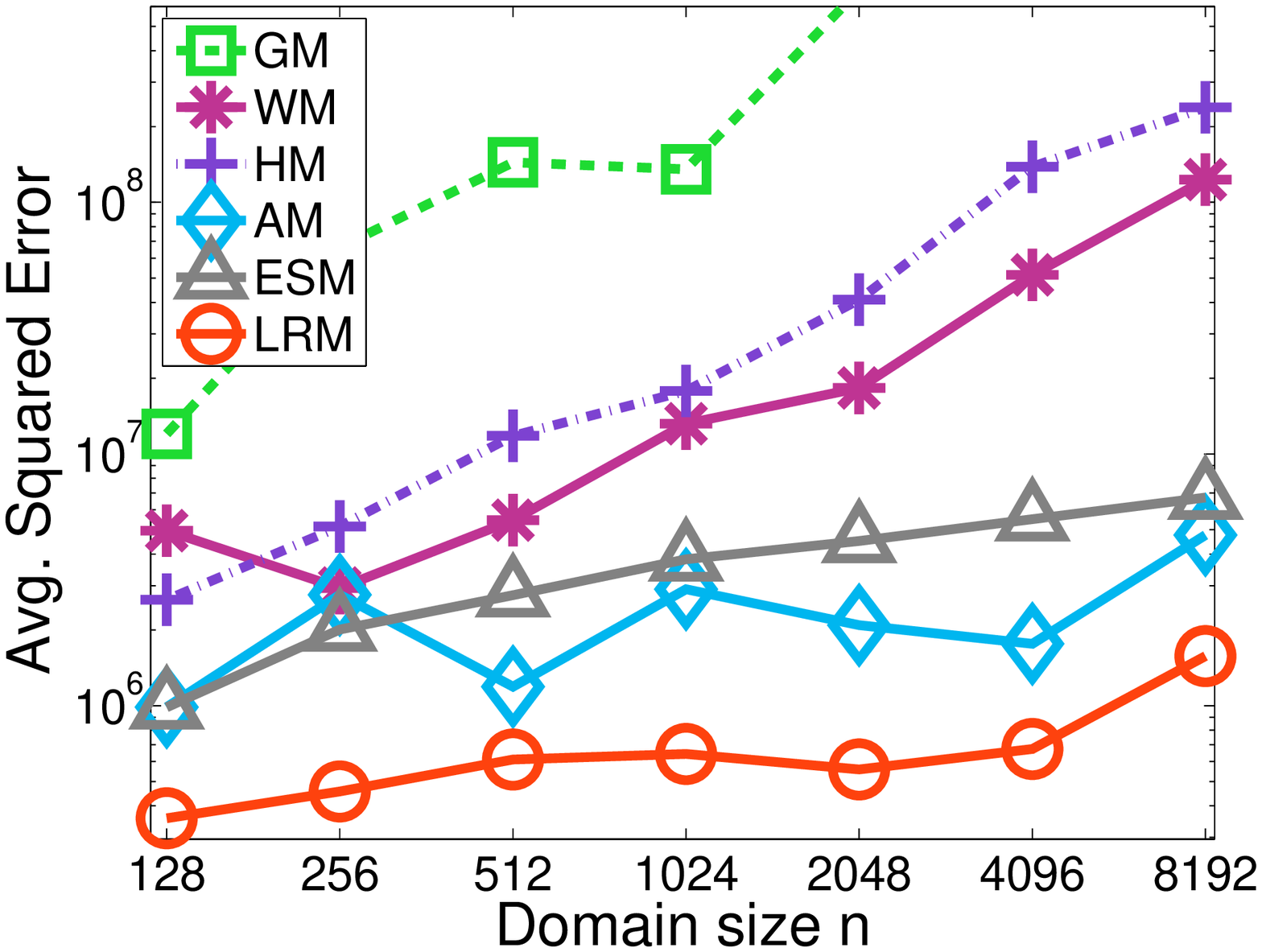}}
\caption{Effect of domain size $n$ on workload \emph{WRelated} under
($\epsilon$, $\delta$)-differential privacy with $\epsilon=0.1$ and
$\delta=0.0001$}\label{fig:exp:n:WRelated:app}
\end{figure*}

We now evaluate the accuracy performance of all mechanisms with varying domain size $n$. We perform all experiments with $\epsilon=0.1$, since the specific value of $\epsilon$ has negligible impact on the relative performance of different mechanisms. For $\epsilon$-differential privacy, we report the results of all mechanisms on the 4 different workloads in Figures \ref{fig:exp:n:WDiscrete}, \ref{fig:exp:n:WRange}, \ref{fig:exp:n:WMarginal} and \ref{fig:exp:n:WRelated}, respectively. On workloads \emph{WMarginal} and \emph{WRelated}, the performance of AM and ESM is comparable to the naive Laplace mechanism, and significantly worse than the other methods, sometimes by more than an order of magnitude. This is mainly because the $\mathcal{L}_2$ approximation used by AM and ESM does not lead to a good optimization of the actual objective function formulated using $\mathcal{L}_1$ sensitivity. On \emph{WDiscrete}, the Laplace mechanism outperforms all other mechanisms when the data is non-sparse and domain size is relatively small. This is in part due to the fact that the queries in \emph{WDiscrete} are generally independent when $m\geq n$. Since the other mechanisms do not gain from correlations among queries, Laplace mechanism is optimal in such a situation. Whereas all other data-independent mechanisms incur an error linear to the domain size $n$, LRM's error stops increasing when the domain size reaches 512. This is because LRM's error rate depends on the rank of the workload matrix $W$, which is no larger than $\min(m,n)$. This explains the excellent performance of LRM in larger domains. On \emph{WRange}, the errors of WM and HM are smaller than that of the Laplace mechanism when the domain size is no smaller than 512. Moreover, WM and HM perform better on \emph{WRange} than on the others workloads, since they are designed to optimize mainly for range queries. Nonetheless, LRM's performance is significantly better than any of them, since it fully utilizes the correlations between the range queries on large domains. On \emph{WMarginal} and \emph{WRelated}, LRM achieves the best performance in all settings. The performance gap between LRM and other methods is over two orders of magnitude when the domain size reaches 8192. Since \emph{WRelated} naturally leads to a low rank workload matrix $W$, this result verifies LRM's vast benefit from exploiting the low-rank property of the workload. Finally, we observe some interesting behaviors of the data-dependent method MWEM. The error incurred by MWEM does not scale well with the domain size $n$ on non-sparse data sets. Moreover, MWEM performs comparably to LRM on \emph{Search Logs} and \emph{Net Trace} when the $n$ is very large ($n\geq 4096$). However, the performance of MWEM is rather unstable; it incurs much larger error than LRM on \emph{Social Network} and \emph{UCI Adult}, in some cases by more than two order of magnitude.

Regarding ($\epsilon$, $\delta$)-differential privacy, we report the accuracy of all methods in Figures \ref{fig:exp:n:WDiscrete:app}, \ref{fig:exp:n:WRange:app}, \ref{fig:exp:n:WMarginal:app} and \ref{fig:exp:n:WRelated:app}. LRM obtains the best performance in all settings, especially when $n$ is large. Its improvement over the naive Gaussian mechanism is over two orders of magnitude. AM and ESM have similar accuracy. For range queries, the performance of ESM and AM is comparable to that of WM and HM, which are optimized for range counts. However, the accuracy of AM and ESM is rather unstable on workloads \emph{WRange} and \emph{WMarginal}. For ESM, this instability is caused by numerical errors in the matrix inverse operations, which can be high when the final solution matrix is low-rank. For AM, the problem is with its post-processing step, which gives approximation solutions with unstable quality. The performance of LRM, on the other hand, is consistently good in all settings.

\subsection{Impact of Number of Queries $m$}\label{sec:vary_m}

In this subsection, we test the impact of the query set cardinality $m$ on the performance of the mechanisms. We mainly focus on settings when the number of queries $m$ is no larger than the domain size $n$. For $\epsilon$-differential privacy, the accuracy results are reported in Figures \ref{fig:exp:m:WDiscrete}, \ref{fig:exp:m:WRange}, \ref{fig:exp:m:WMarginal} and \ref{fig:exp:m:WRelated}. On \emph{WRange} and \emph{WMarginal}, LRM outperforms all other mechanisms, when $m$ is significantly smaller than $n$. As $m$ grows, the performance of all mechanisms on \emph{WRange} tends to converge. The degeneration in performance of LRM is due to the lack of low rank property when the batch contains too many random range queries.  When $m$ is no less than 256, both the WM and HM achieve comparable accuracy to LRM, since they are optimized for range queries. On \emph{WDiscrete}, MWEM is comparable to LRM on \emph{UCI Adult} data set, one possible reason is that MWEM can make use of the sparsity of the data on \emph{WDiscrete} workload. On \emph{WRelated} workload, the accuracy of LRM is dramatically higher than the other methods, for all values of $m$. This is because the rank of the \emph{WRelated} workload is fixed to $s$, regardless of the number of queries. Finally, we observe that on \emph{WDiscrete} and \emph{WRange}, while the performance of other mechanisms does not differ much from data to data, the data-dependent method MWEM generally performs better on the \emph{UCI Adult} dataset compared to on other datasets, due to the high sparsity of \emph{UCI Adult}.

For ($\epsilon$, $\delta$)-differential privacy, we report the results in Figures \ref{fig:exp:m:WDiscrete:app}, \ref{fig:exp:m:WRange:app}, \ref{fig:exp:m:WMarginal:app} and \ref{fig:exp:m:WRelated:app}. We have the following observations from these results. On \emph{WDiscrete}, \emph{WRange} and \emph{WRelated} workload, WM and HM improve upon the naive Gaussian mechanism; however, on \emph{WMarginal}, WM and HM incur higher errors than GM. AM and ESM again exhibit similar performance, which is often better than that of WM, HM, and GM. LRM consistently outperforms its competitors in all test cases.

\begin{figure*}[!t]
\centering \subfigure[\emph{Search Logs}]
{\includegraphics[width=0.244\textwidth]{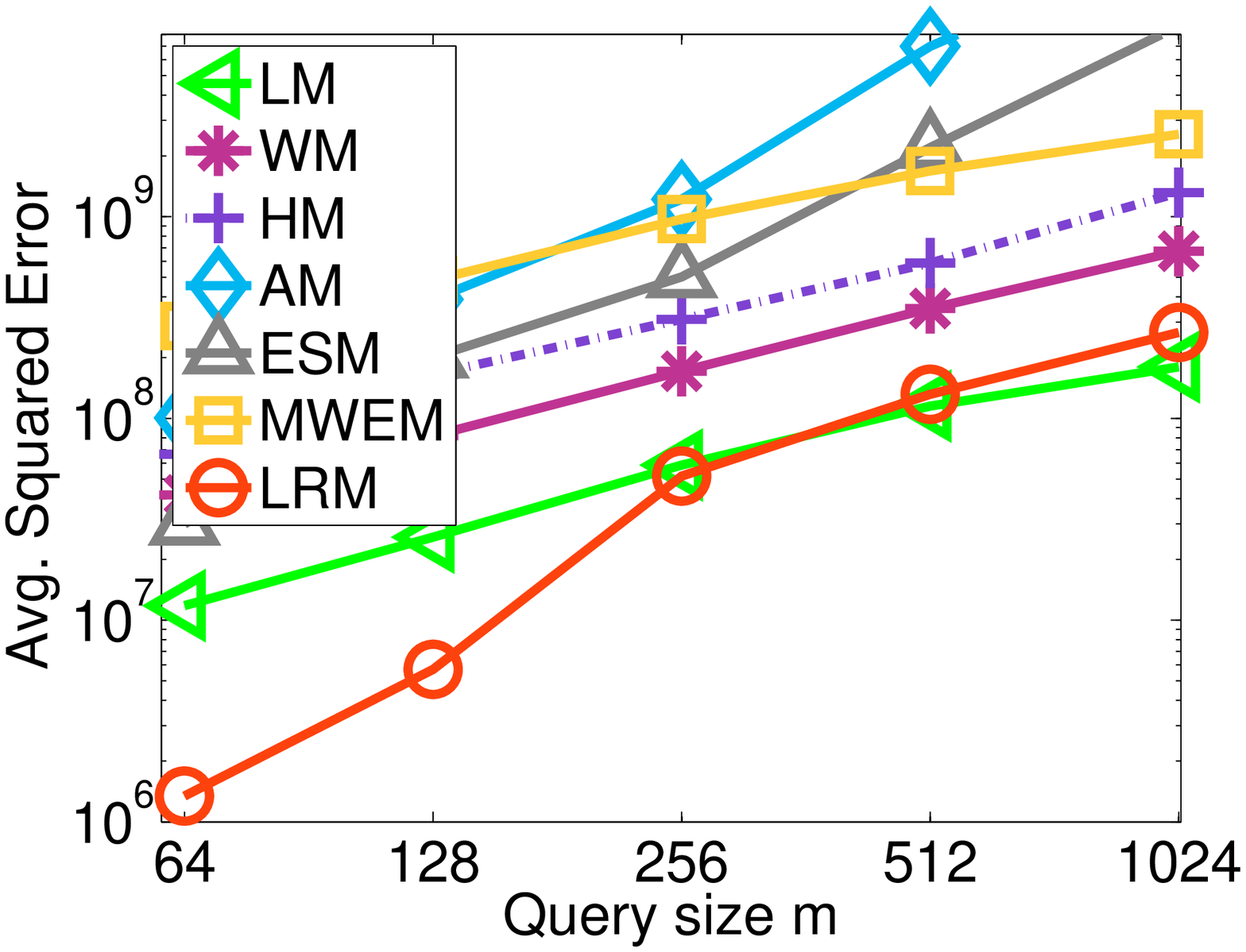}}
\subfigure[\emph{Net Trace}]
{\includegraphics[width=0.244\textwidth]{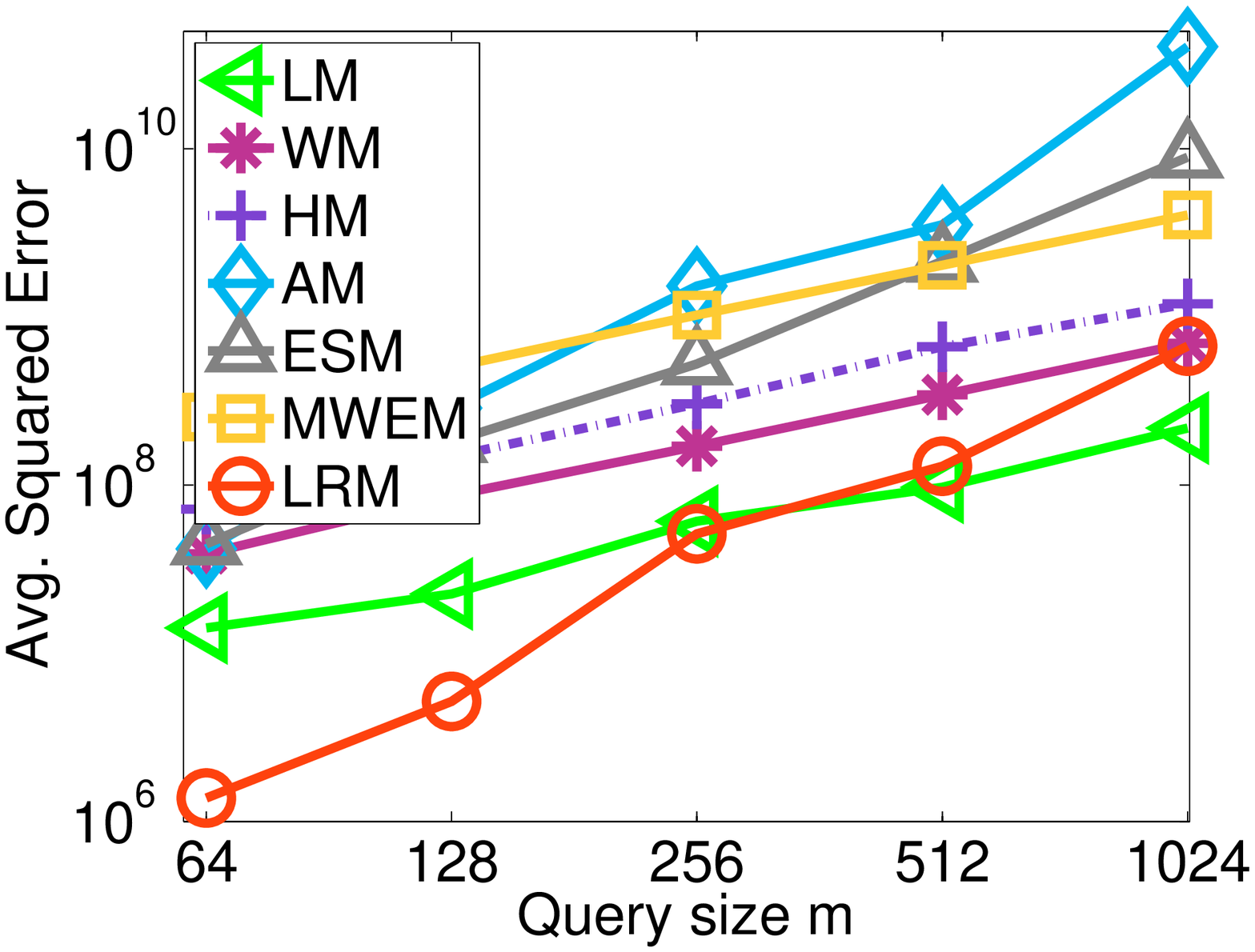}}
\centering \subfigure[\emph{Social Network}]
{\includegraphics[width=0.244\textwidth]{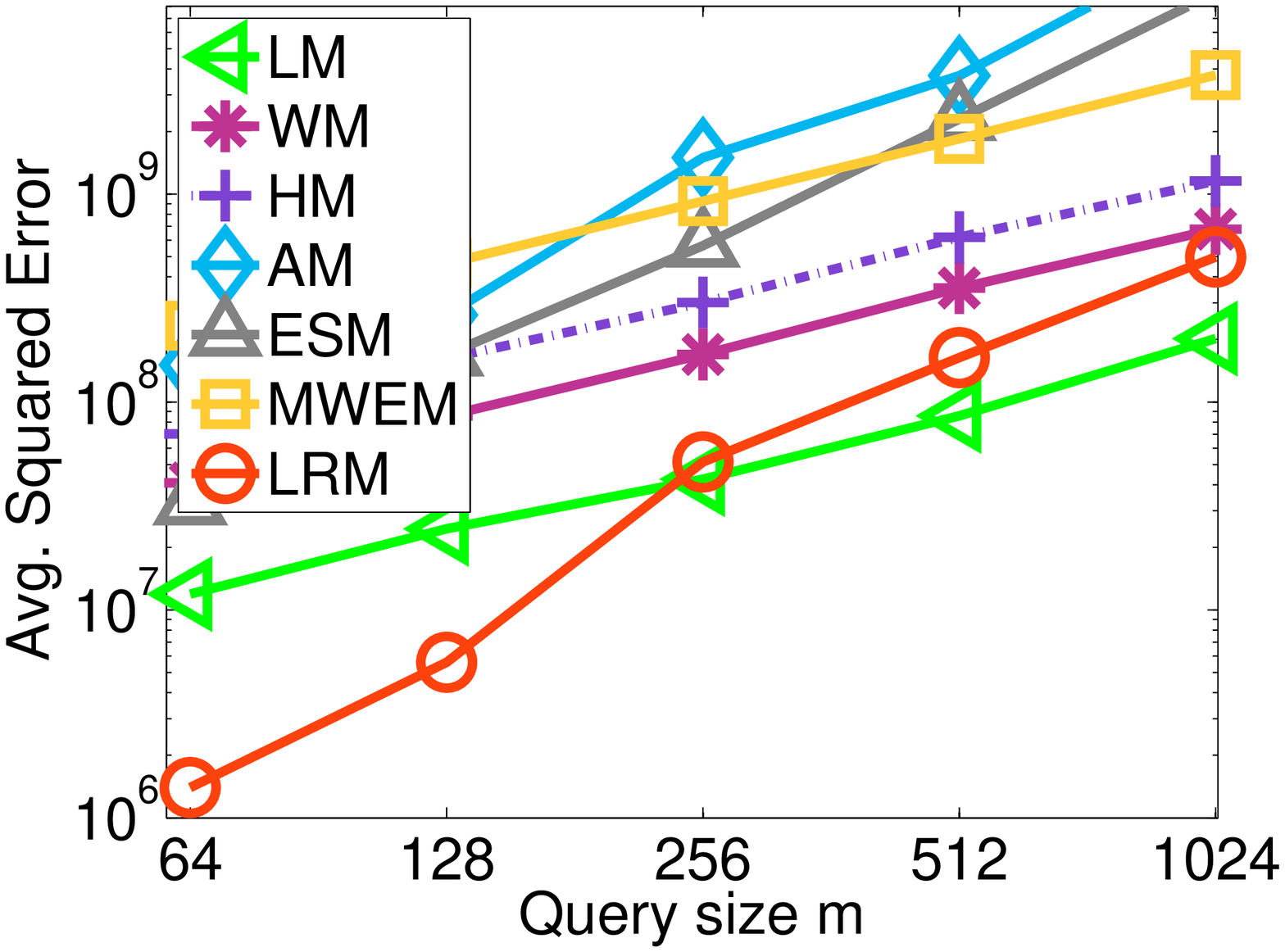}}
\centering \subfigure[\emph{UCI Adult}]
{\includegraphics[width=0.244\textwidth]{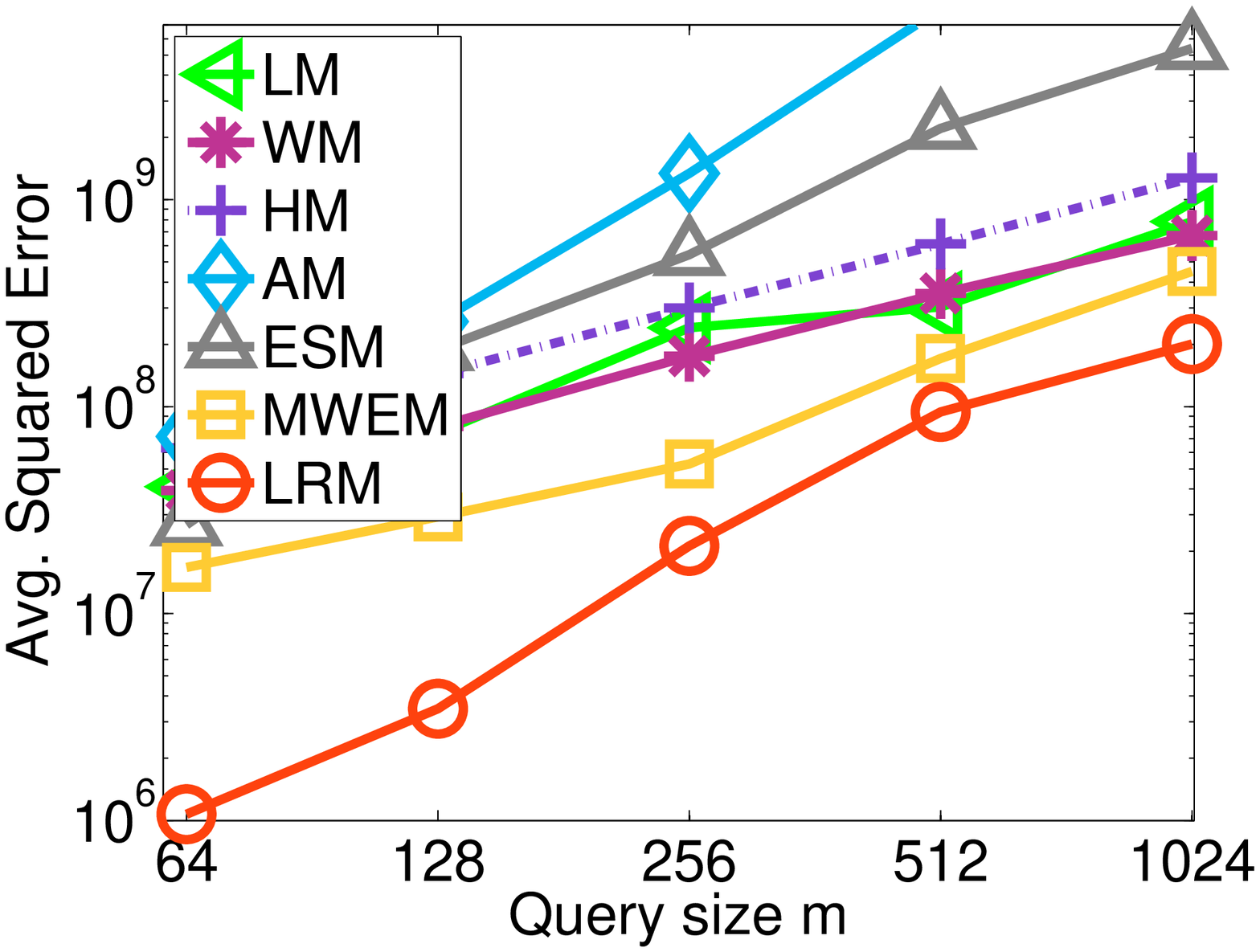}}
\caption{Effect of number of queries $m$ on workload
\emph{WDiscrete} under $\epsilon$-differential privacy with $\epsilon=0.1$}
\label{fig:exp:m:WDiscrete}
\end{figure*}

\begin{figure*}[!t]
\centering \subfigure[\emph{Search Logs}]
{\includegraphics[width=0.244\textwidth]{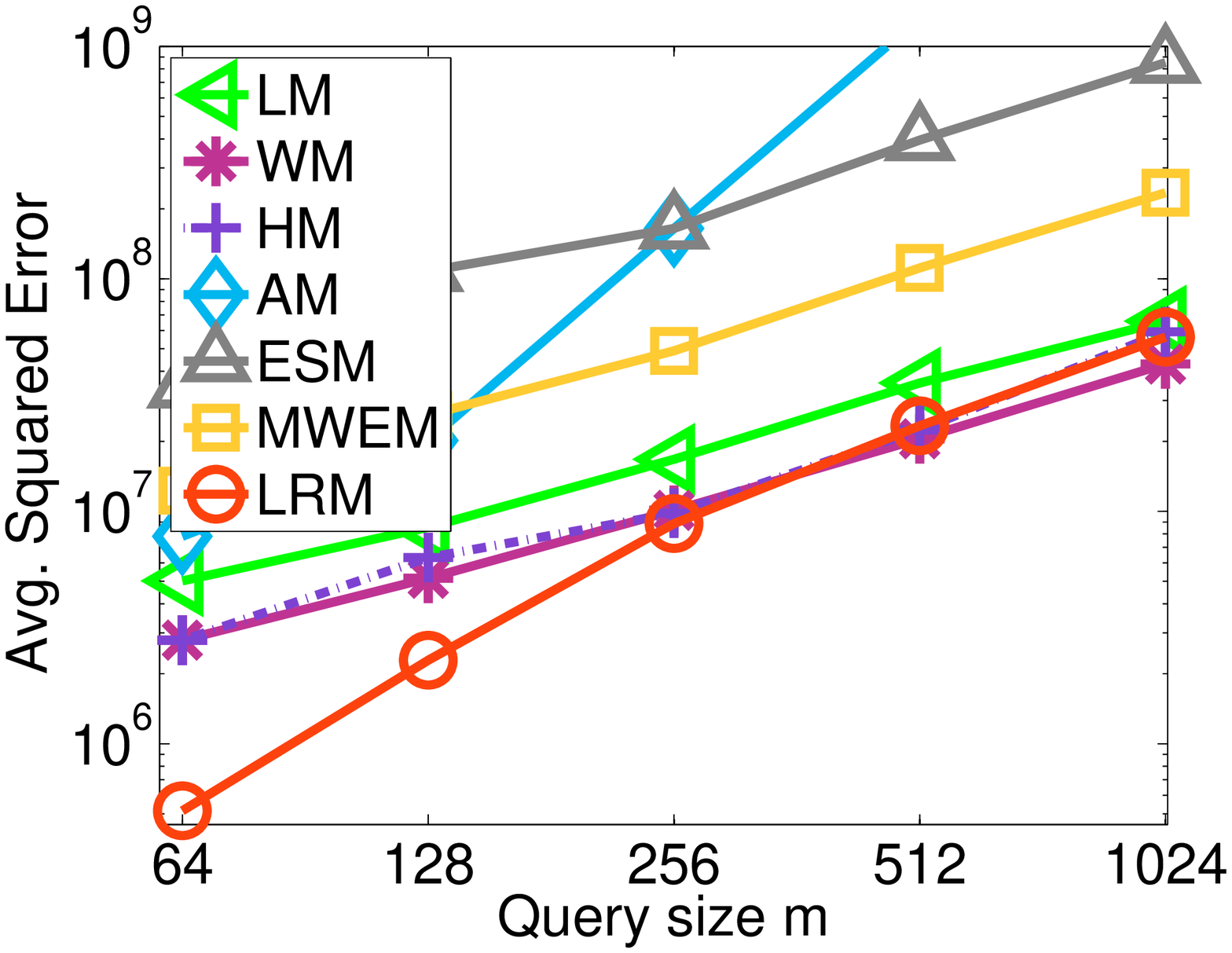}}
\subfigure[\emph{Net Trace}]
{\includegraphics[width=0.244\textwidth]{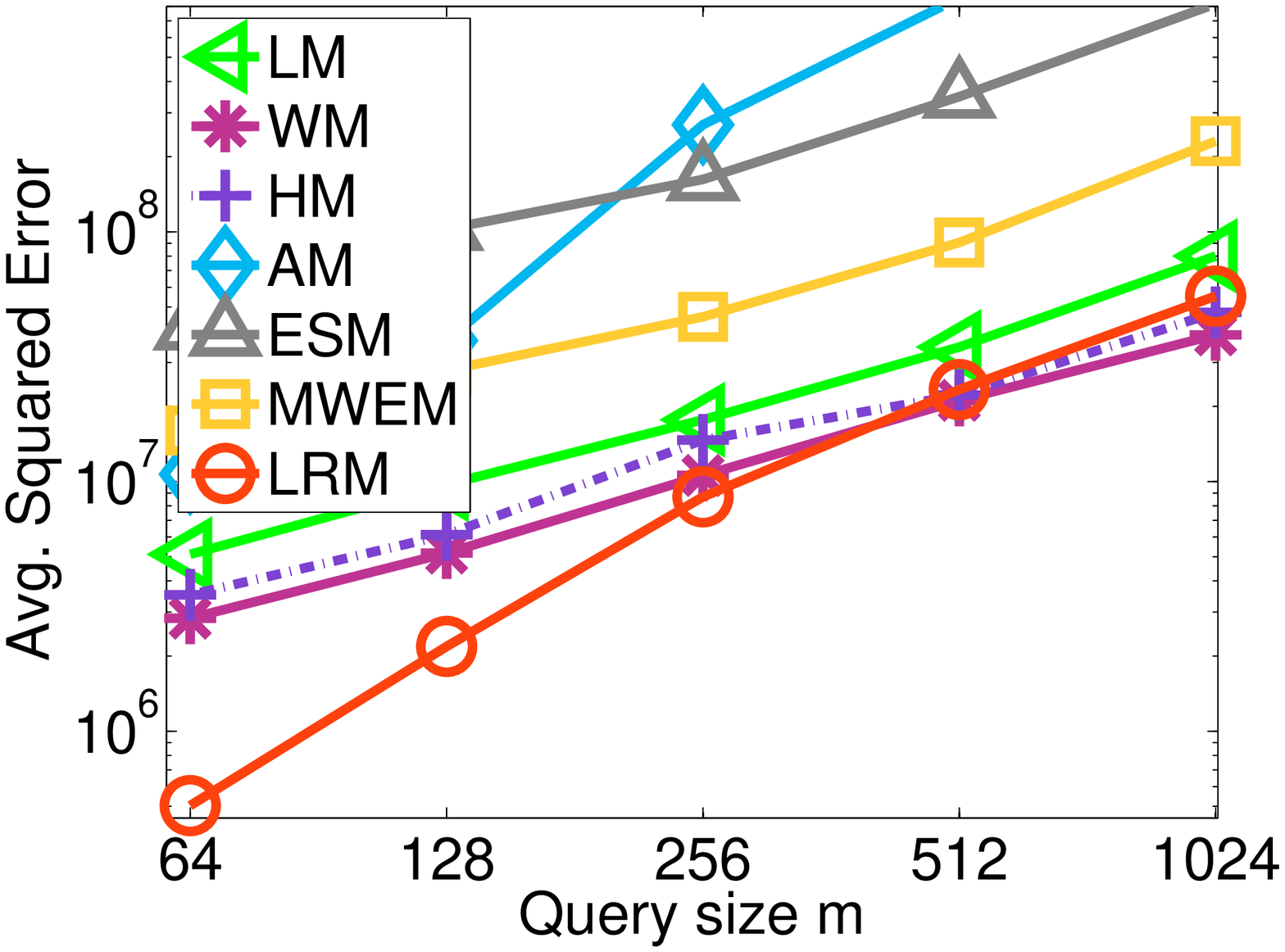}}
\centering \subfigure[\emph{Social Network}]
{\includegraphics[width=0.244\textwidth]{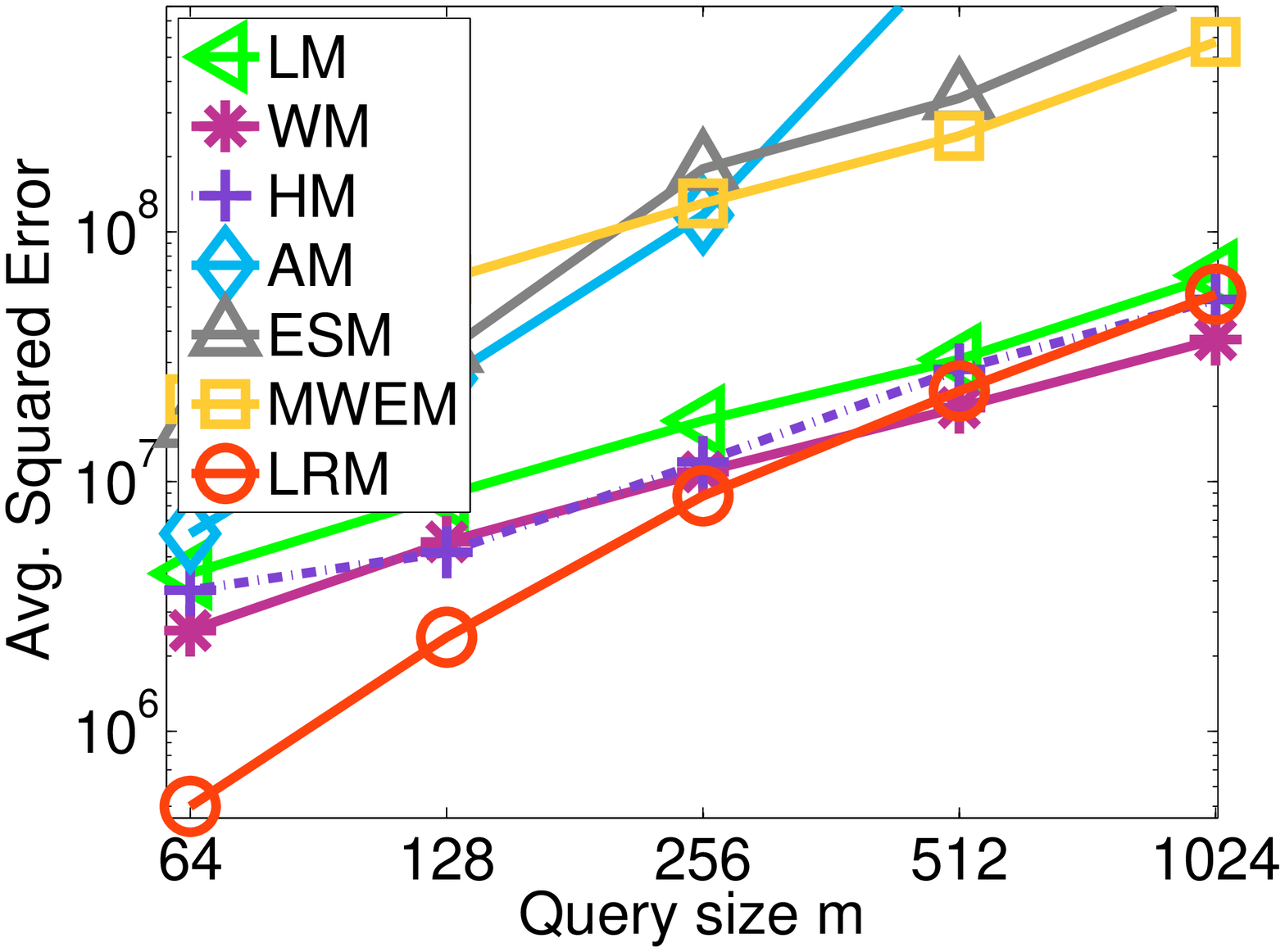}}
\centering \subfigure[\emph{UCI Adult}]
{\includegraphics[width=0.244\textwidth]{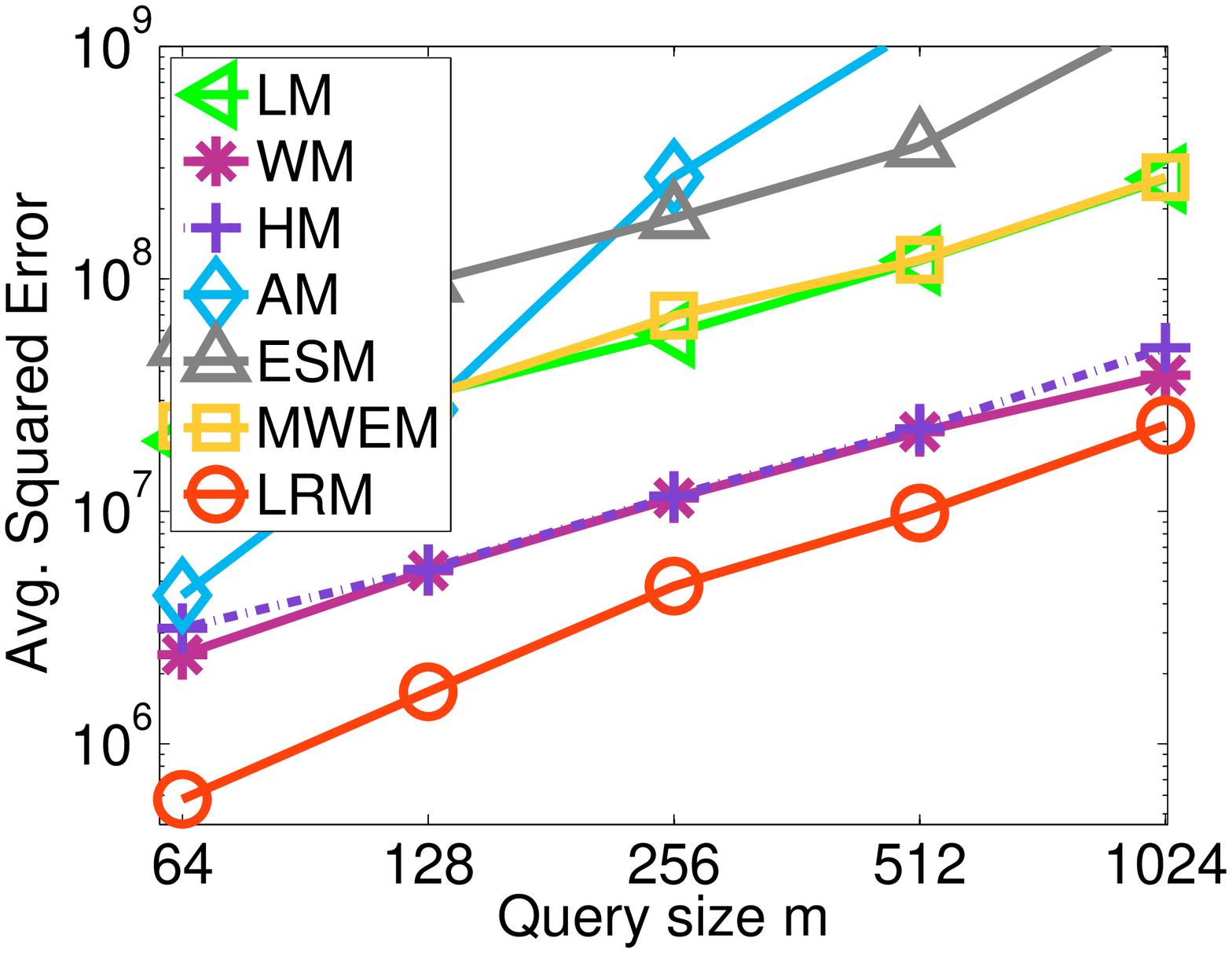}}
\caption{Effect of number of queries $m$ on workload \emph{WRange} under $\epsilon$-differential privacy with $\epsilon=0.1$}
\label{fig:exp:m:WRange}
\end{figure*}

\begin{figure*}[!t]
\centering \subfigure[\emph{Search Logs}]
{\includegraphics[width=0.244\textwidth]{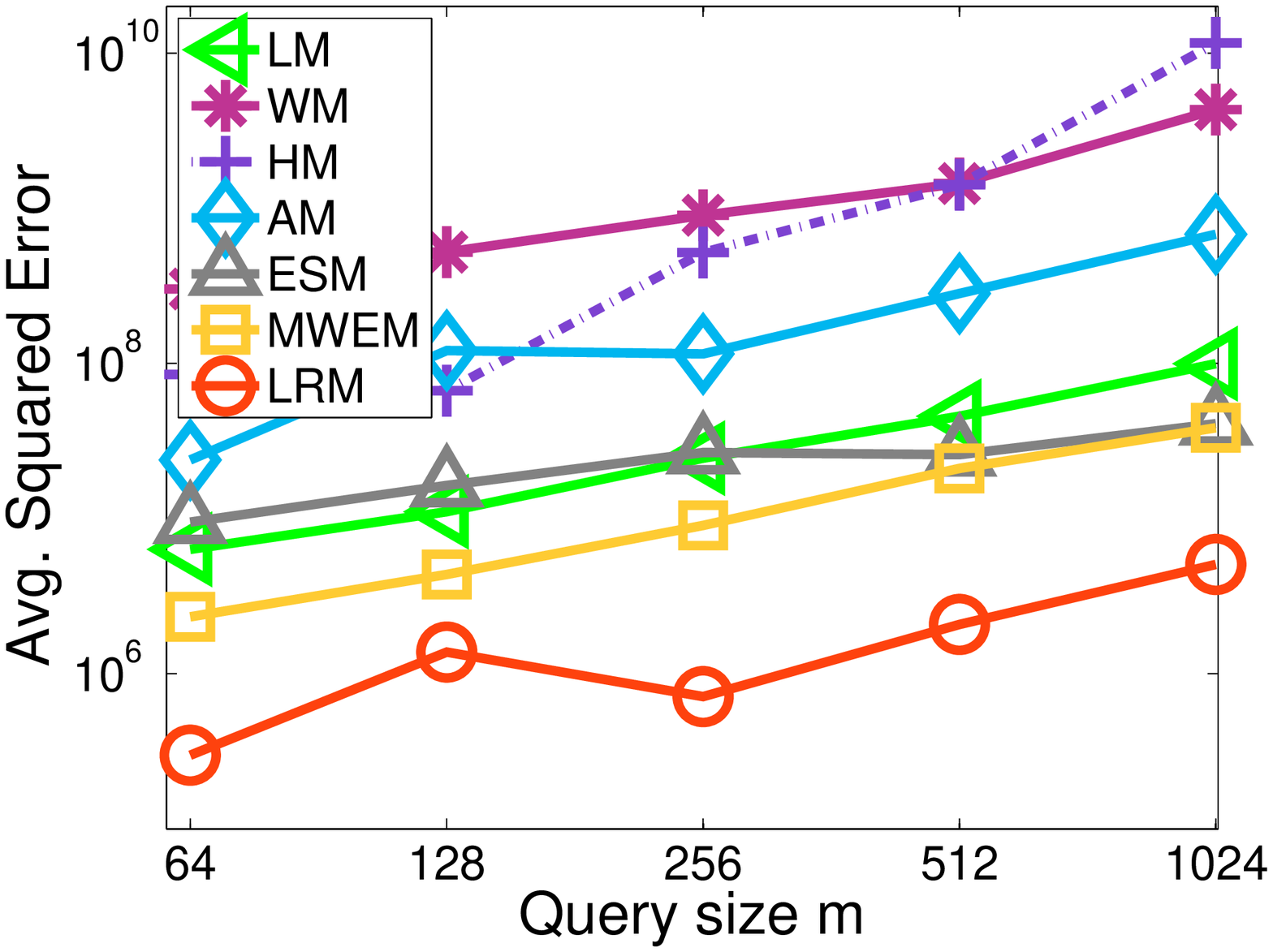}}
\subfigure[\emph{Net Trace}]
{\includegraphics[width=0.244\textwidth]{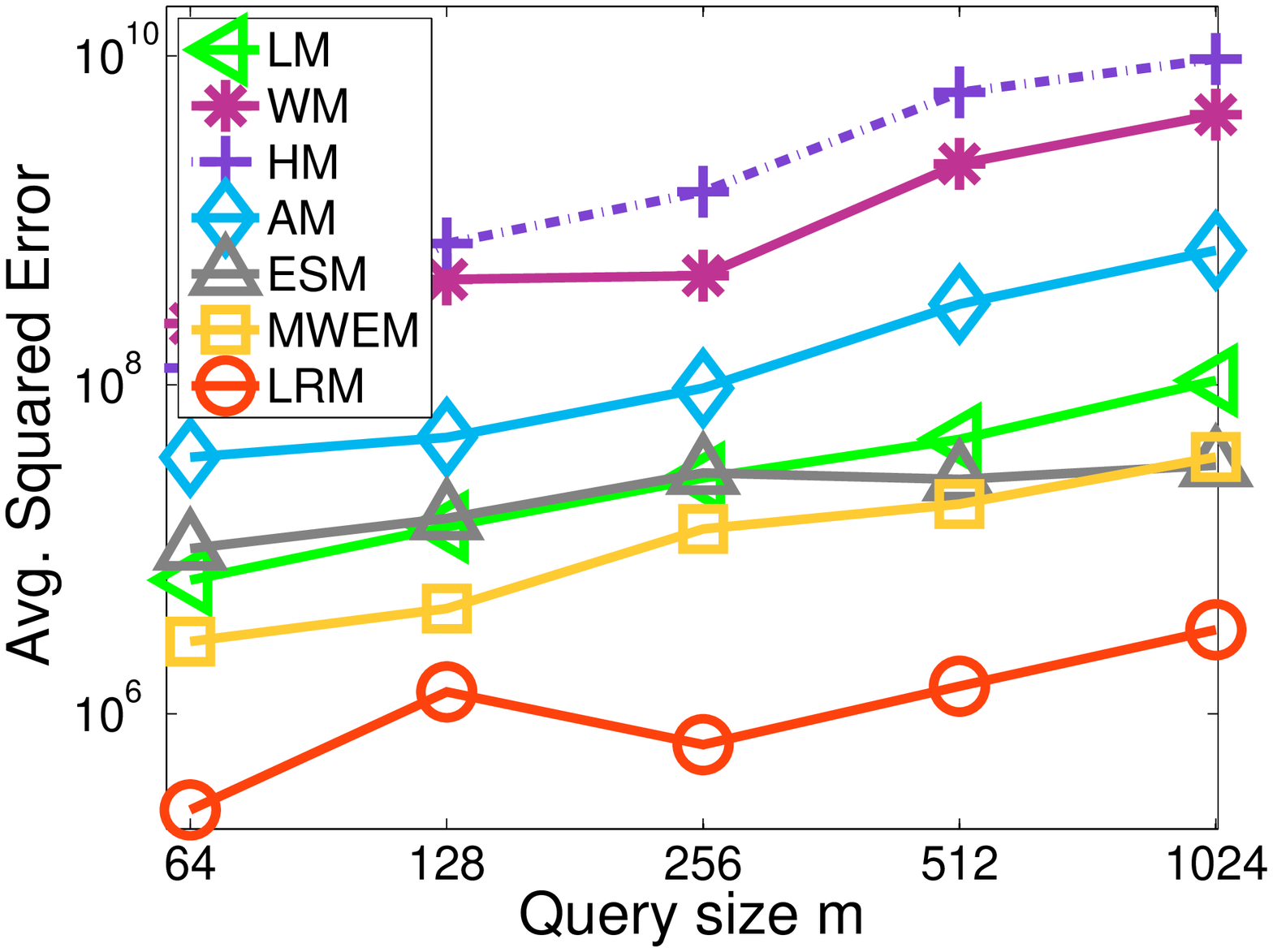}}
\centering \subfigure[\emph{Social Network}]
{\includegraphics[width=0.244\textwidth]{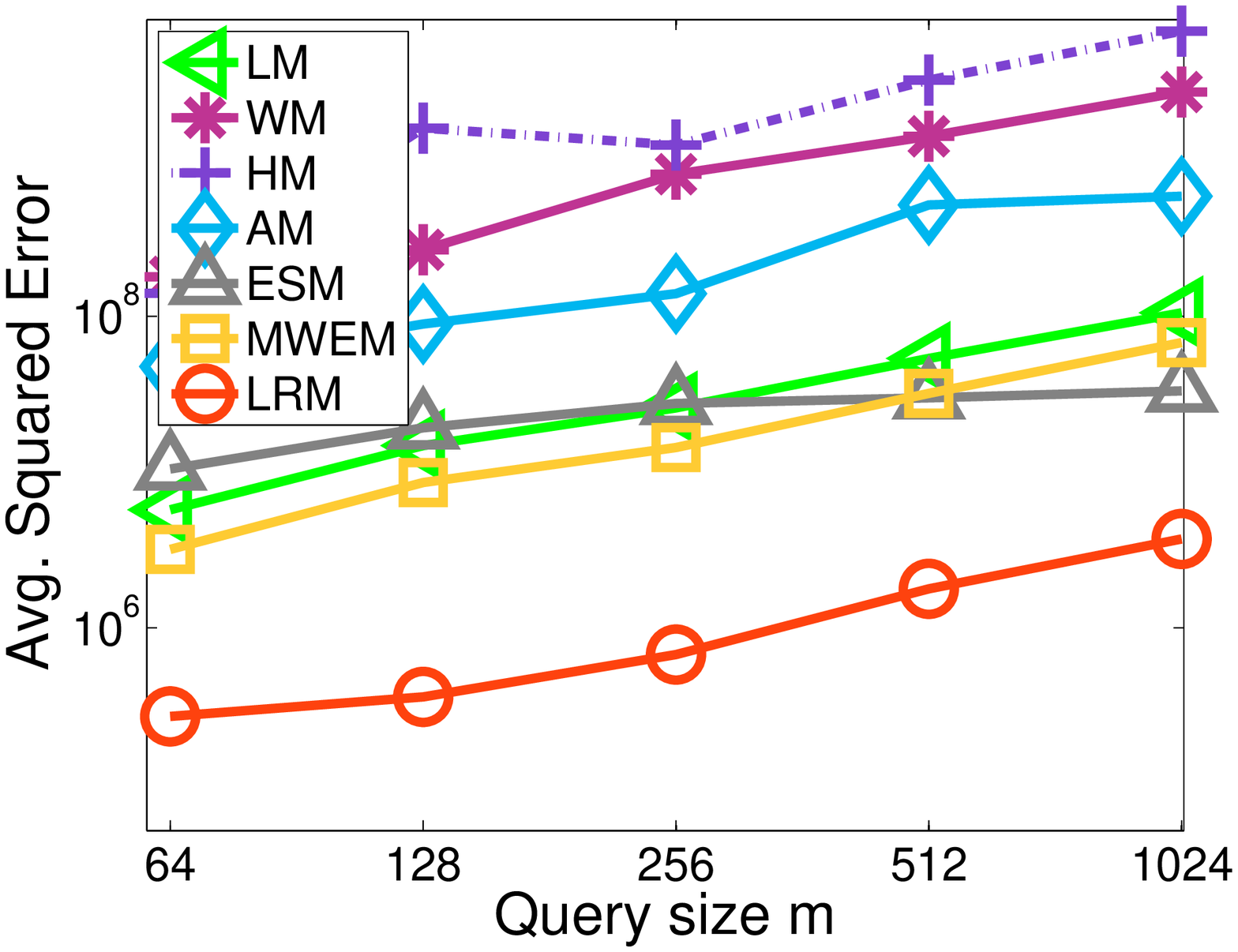}}
\centering \subfigure[\emph{UCI Adult}]
{\includegraphics[width=0.244\textwidth]{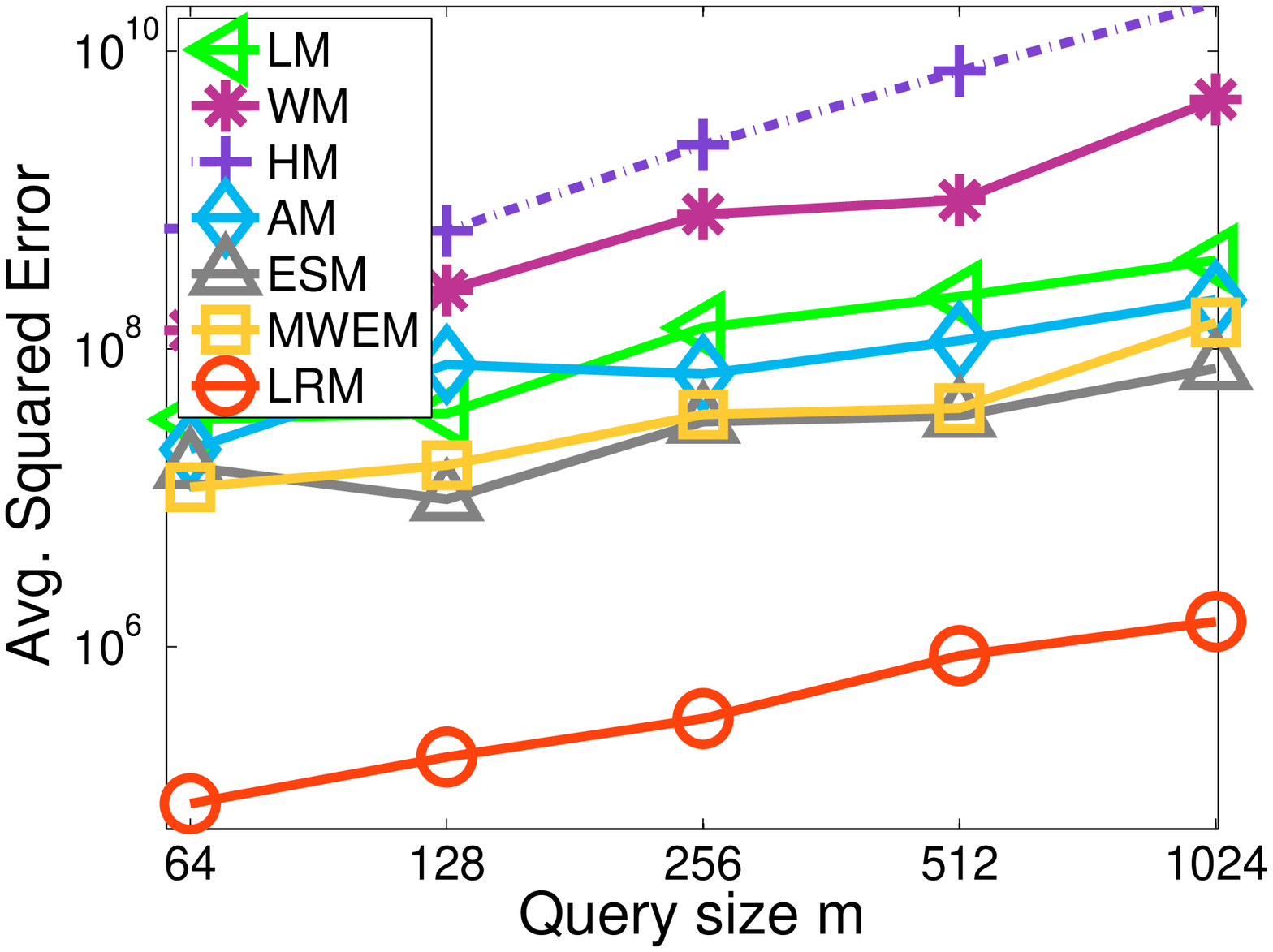}}
\caption{Effect of number of queries $m$ on workload \emph{WMarginal} under $\epsilon$-differential privacy with $\epsilon=0.1$}
\label{fig:exp:m:WMarginal}
\end{figure*}

\begin{figure*}[!t]
\centering \subfigure[\emph{Search Logs}]
{\includegraphics[width=0.244\textwidth]{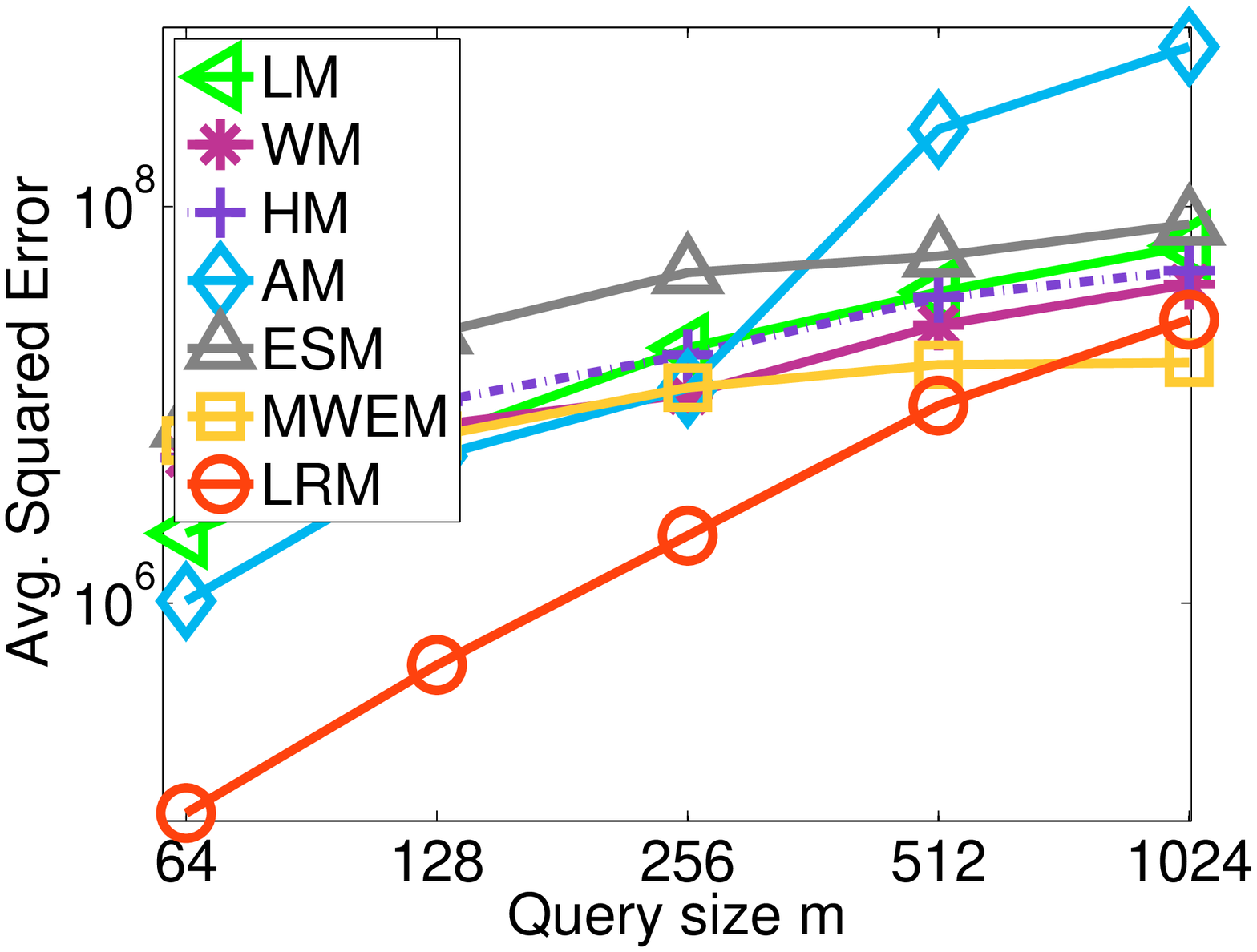}}
\subfigure[\emph{Net Trace}]
{\includegraphics[width=0.244\textwidth]{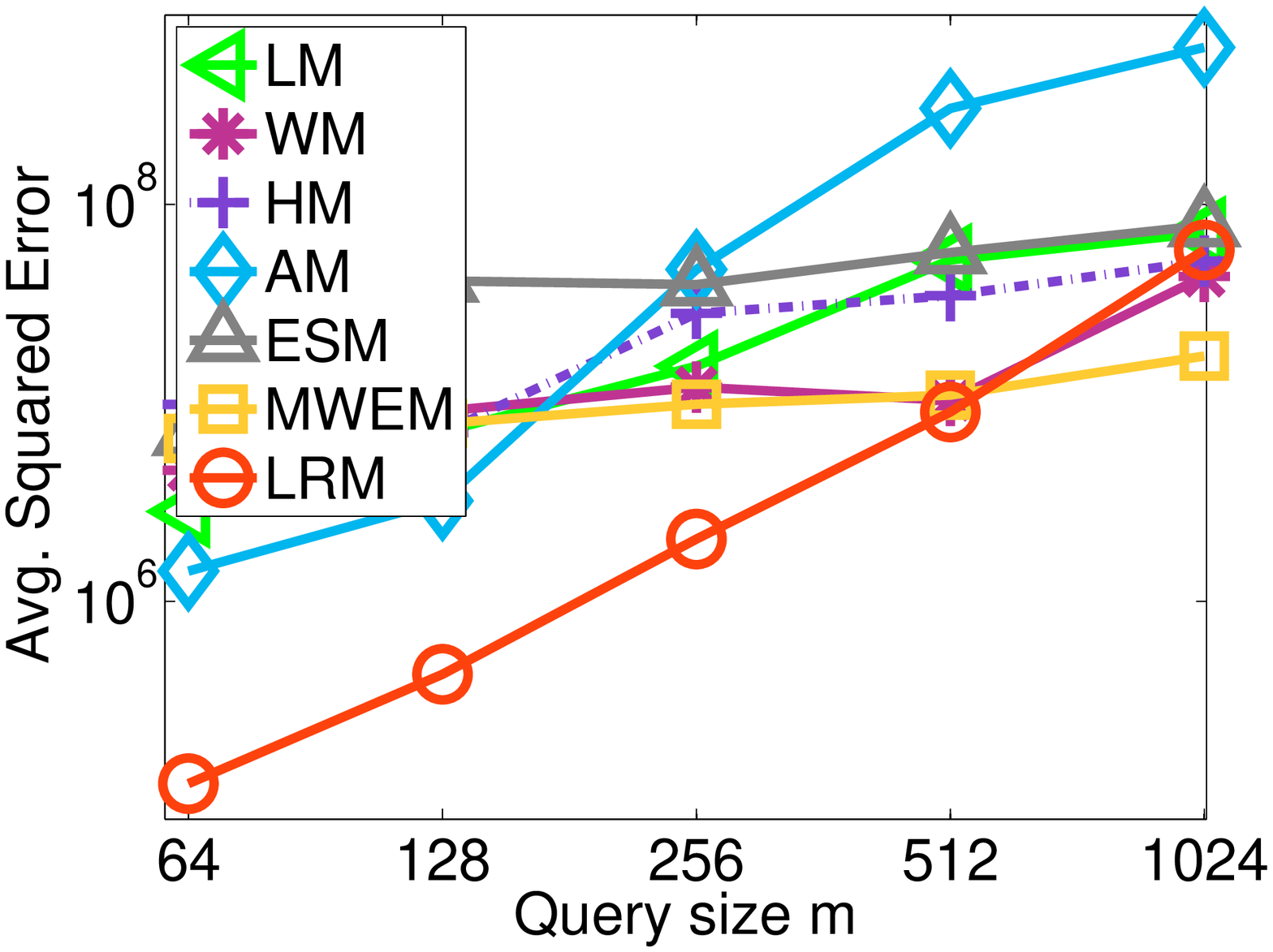}}
\centering \subfigure[\emph{Social Network}]
{\includegraphics[width=0.244\textwidth]{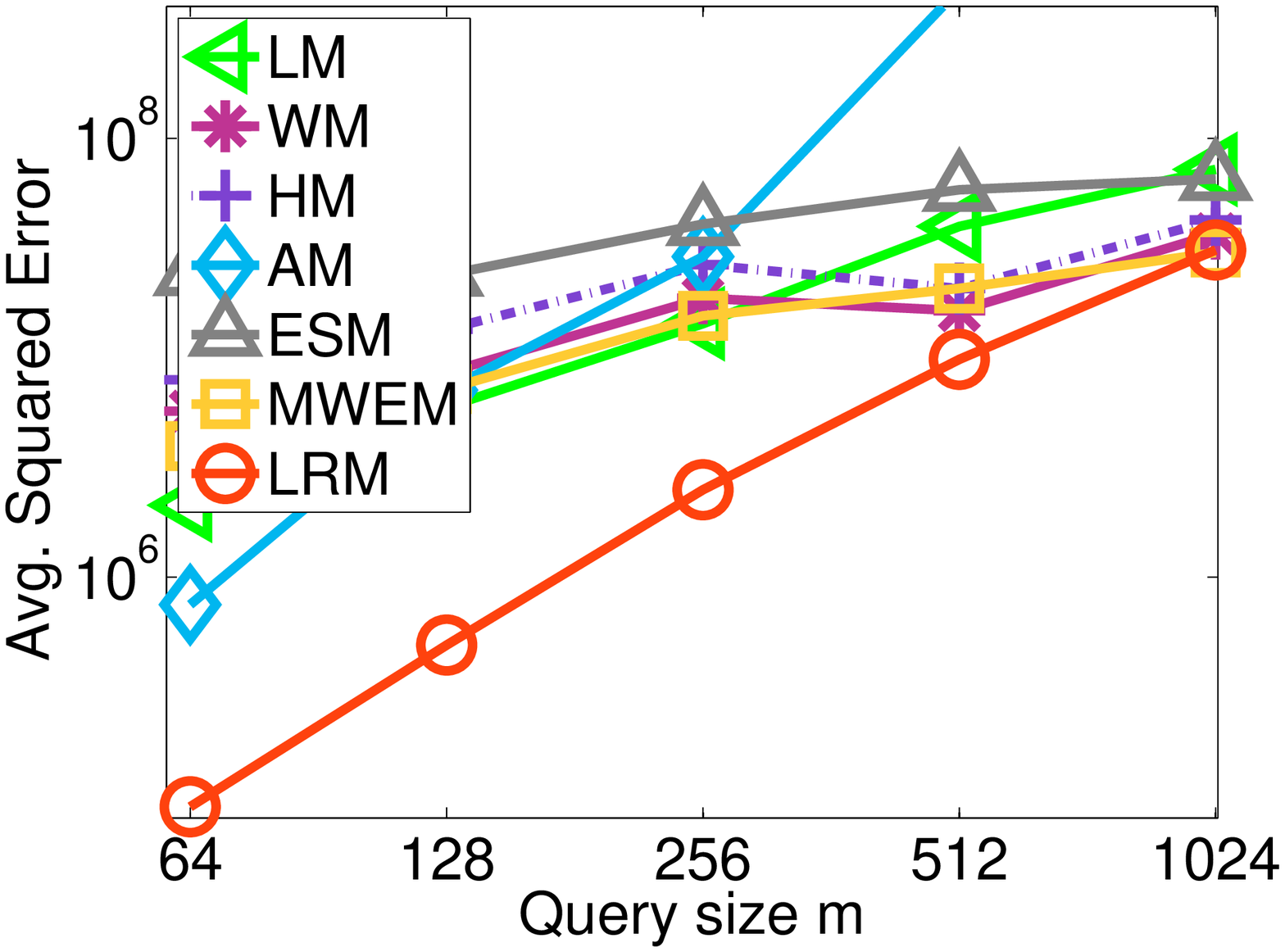}}
\centering \subfigure[\emph{UCI Adult}]
{\includegraphics[width=0.244\textwidth]{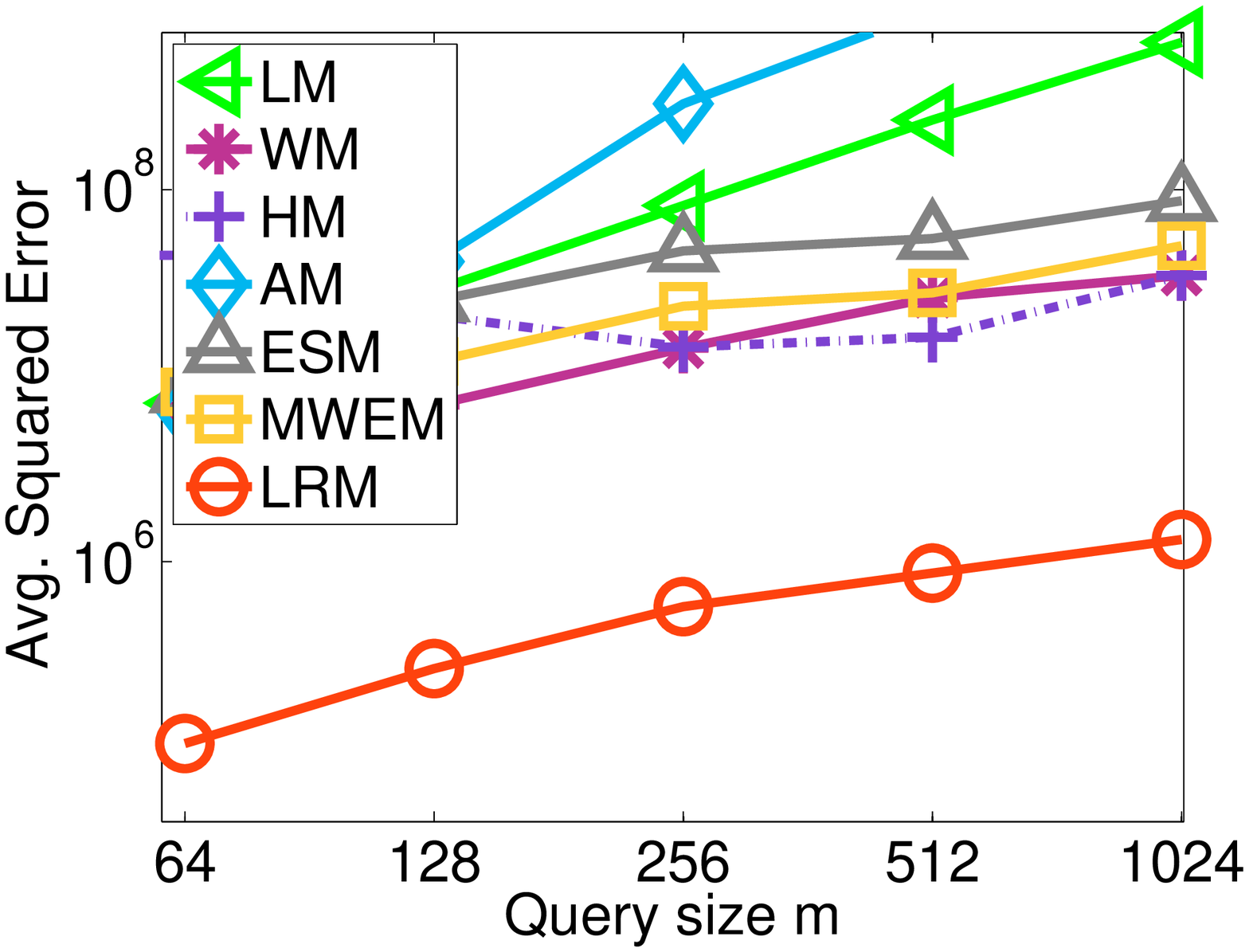}}
\caption{Effect of number of queries $m$ on workload \emph{WRelated} under $\epsilon$-differential privacy with $\epsilon=0.1$}
\label{fig:exp:m:WRelated}
\end{figure*}

\subsection{Impact of Varying Query Rank $s$}\label{sec:vary_s}

The previous experiments demonstrate LRM's substantial performance
advantages when the workload matrix has low rank. In this set of experiments, we manually control the rank of workload $W$ to verify
this observation. Recall that the parameter $s$ determines
the size of the matrix $C_{m\times s}$ and the size of the matrix
$A_{s\times n}$ during the generation of the \emph{WRelated} workload. When
$C$ and $A$ contain only independent rows/columns, $s$ is exactly
the rank of the workload matrix $W=CA$. In Figure \ref{fig:exp:s} and \ref{fig:exp:s:app}, we vary $s$ from $0.1 \times \min(m,n)$ to $1 \times \min(m,n)$.

For $\epsilon$-differential privacy, LRM outperforms all other methods by at least one order of magnitude when $s$ is low. With increasing $s$, the performance gap gradually closes. This phenomenon confirms that the low rank property is the main reason behind LRM's advantages. For ($\epsilon$, $\delta$)-differential privacy, LRM also gives the best performance in all test cases; its performance advantage decreases with $s$, though at a much slower rate compared to the case of $\epsilon$-differential privacy.

\begin{figure*}[!t]
\centering \subfigure[\emph{Search Logs}]
{\includegraphics[width=0.244\textwidth]{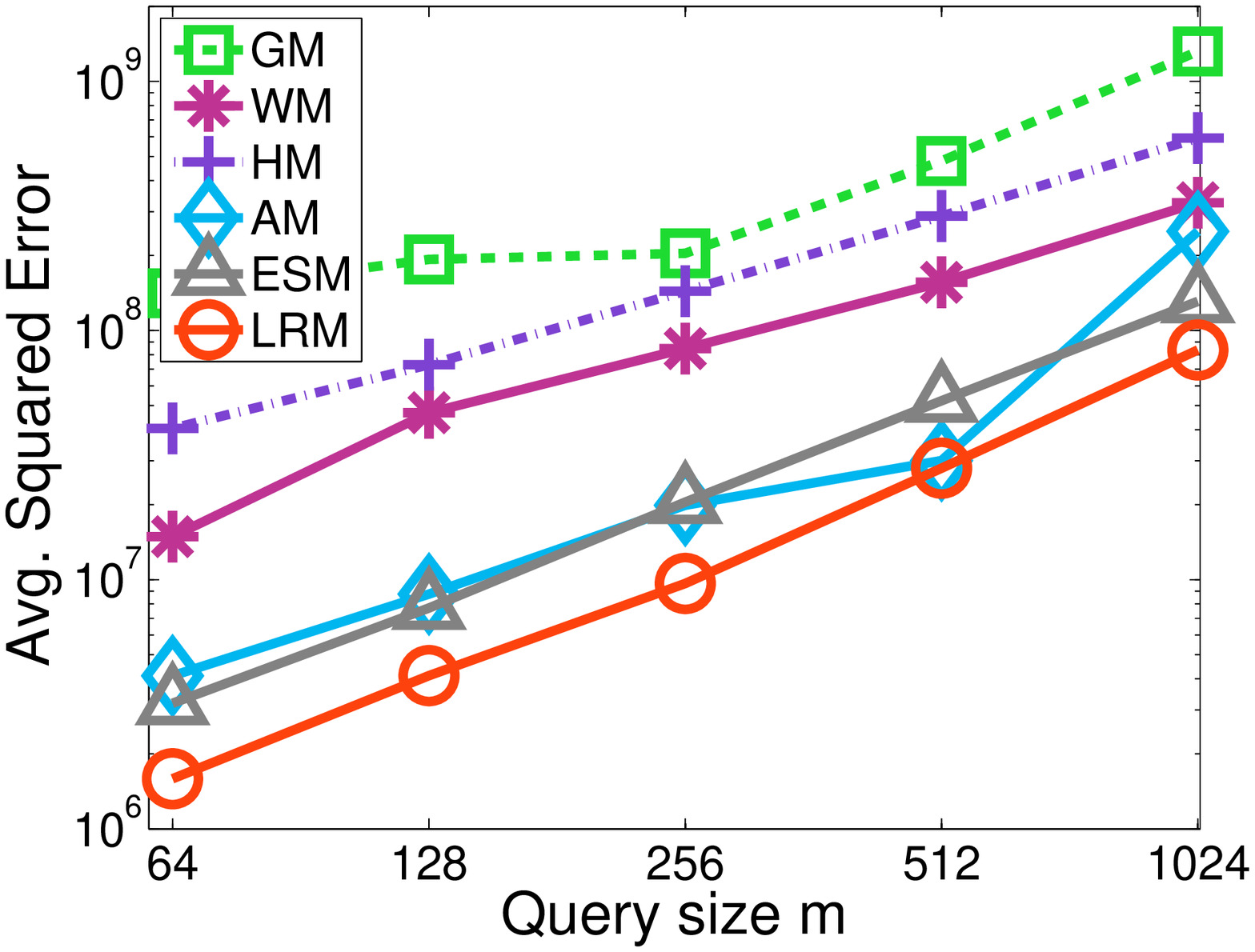}}
\subfigure[\emph{Net Trace}]
{\includegraphics[width=0.244\textwidth]{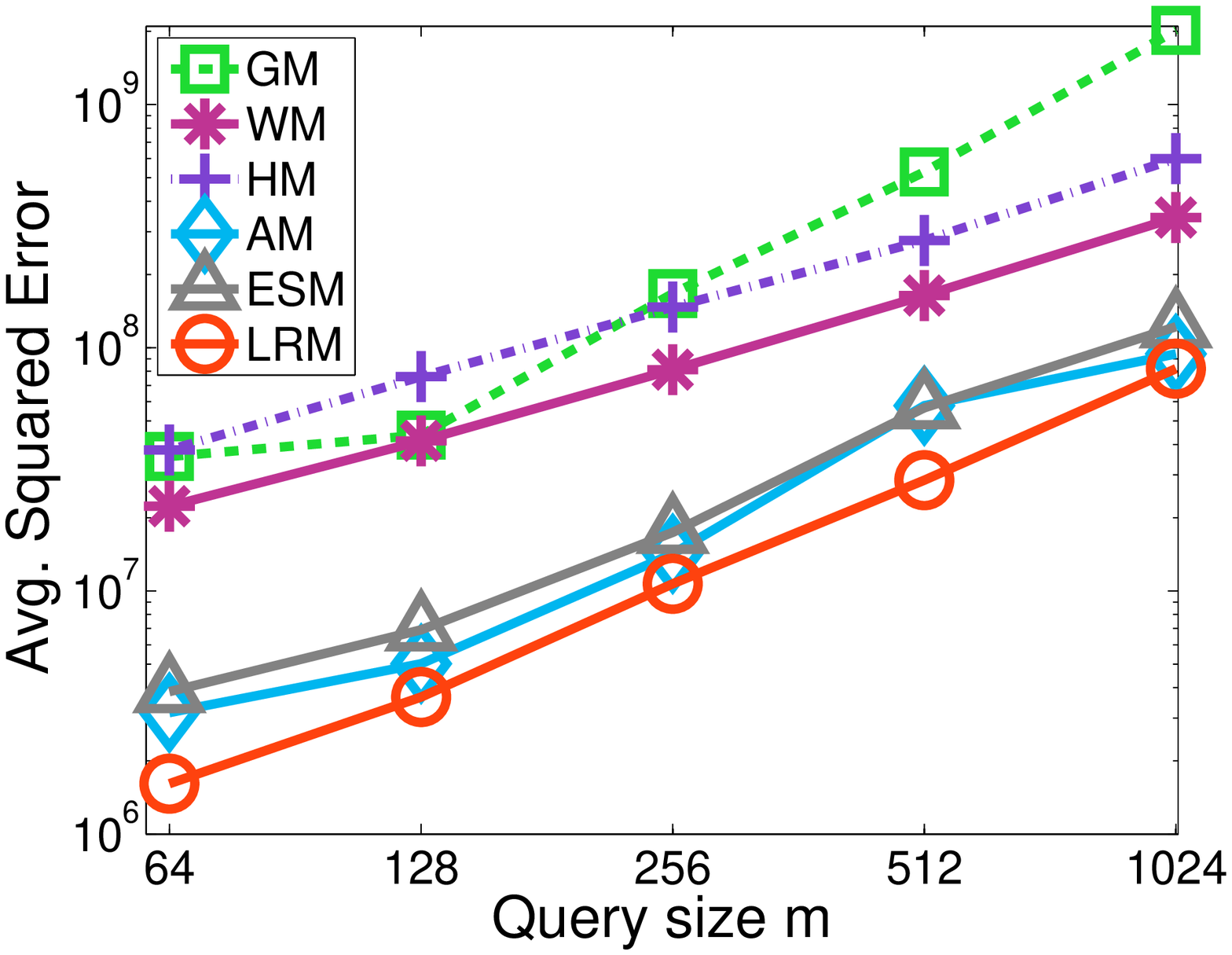}}
\centering \subfigure[\emph{Social Network}]
{\includegraphics[width=0.244\textwidth]{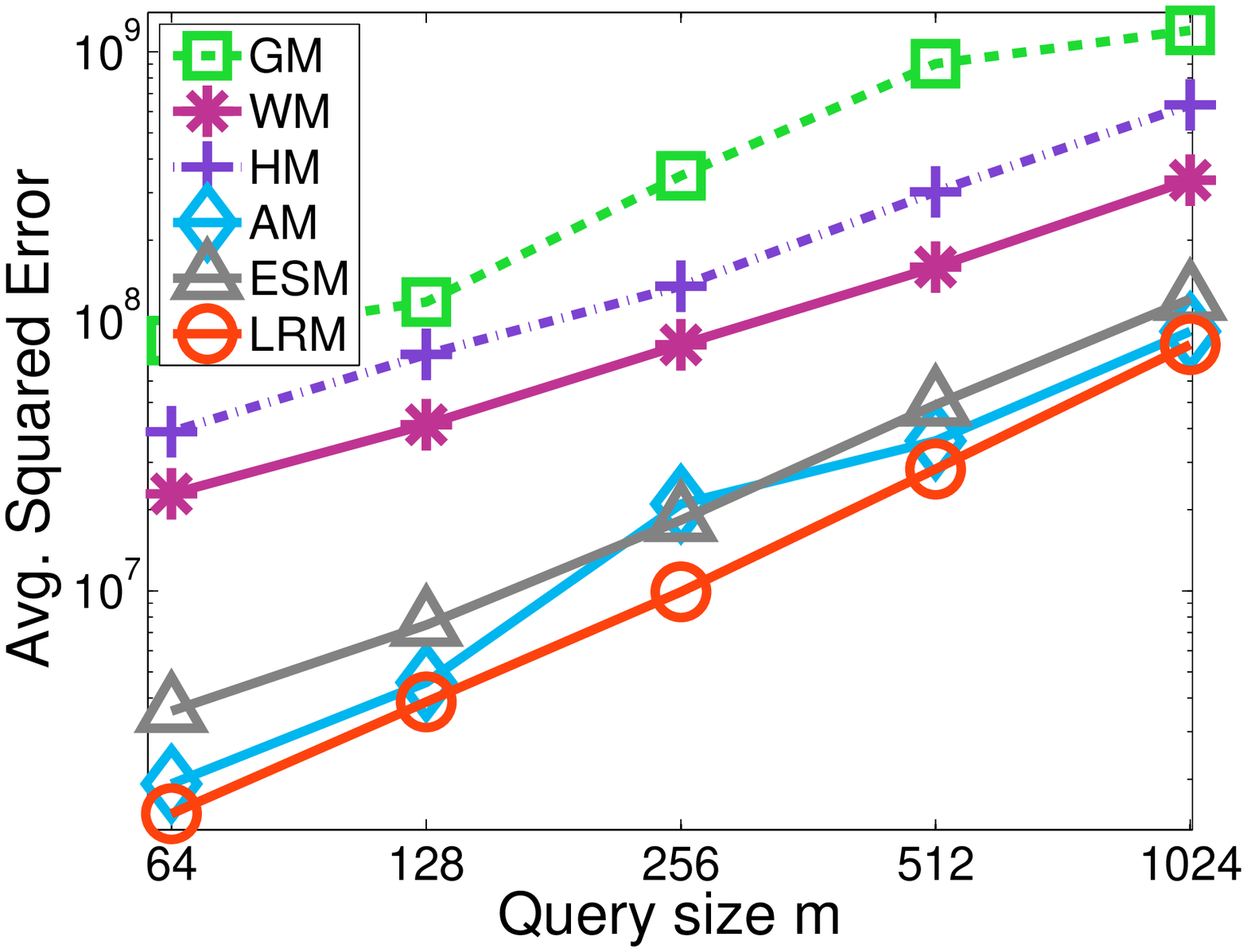}}
\centering \subfigure[\emph{UCI Adult}]
{\includegraphics[width=0.244\textwidth]{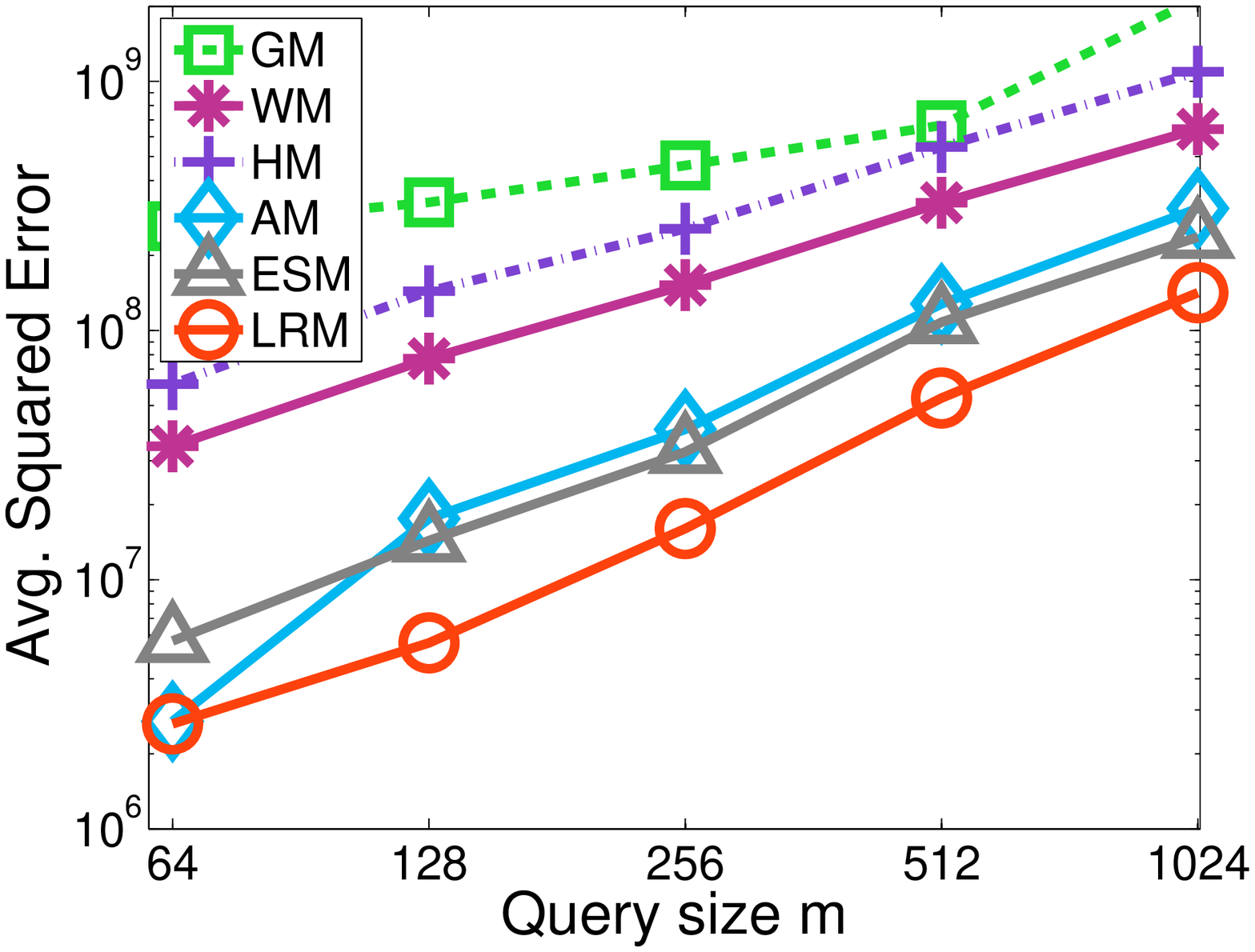}}
\caption{Effect of number of queries $m$ on workload
\emph{WDiscrete} under ($\epsilon$, $\delta$)-differential privacy
with $\epsilon=0.1$ and $\delta=0.0001$} \label{fig:exp:m:WDiscrete:app}
\end{figure*}

\begin{figure*}[!t]
\centering \subfigure[\emph{Search Logs}]
{\includegraphics[width=0.244\textwidth]{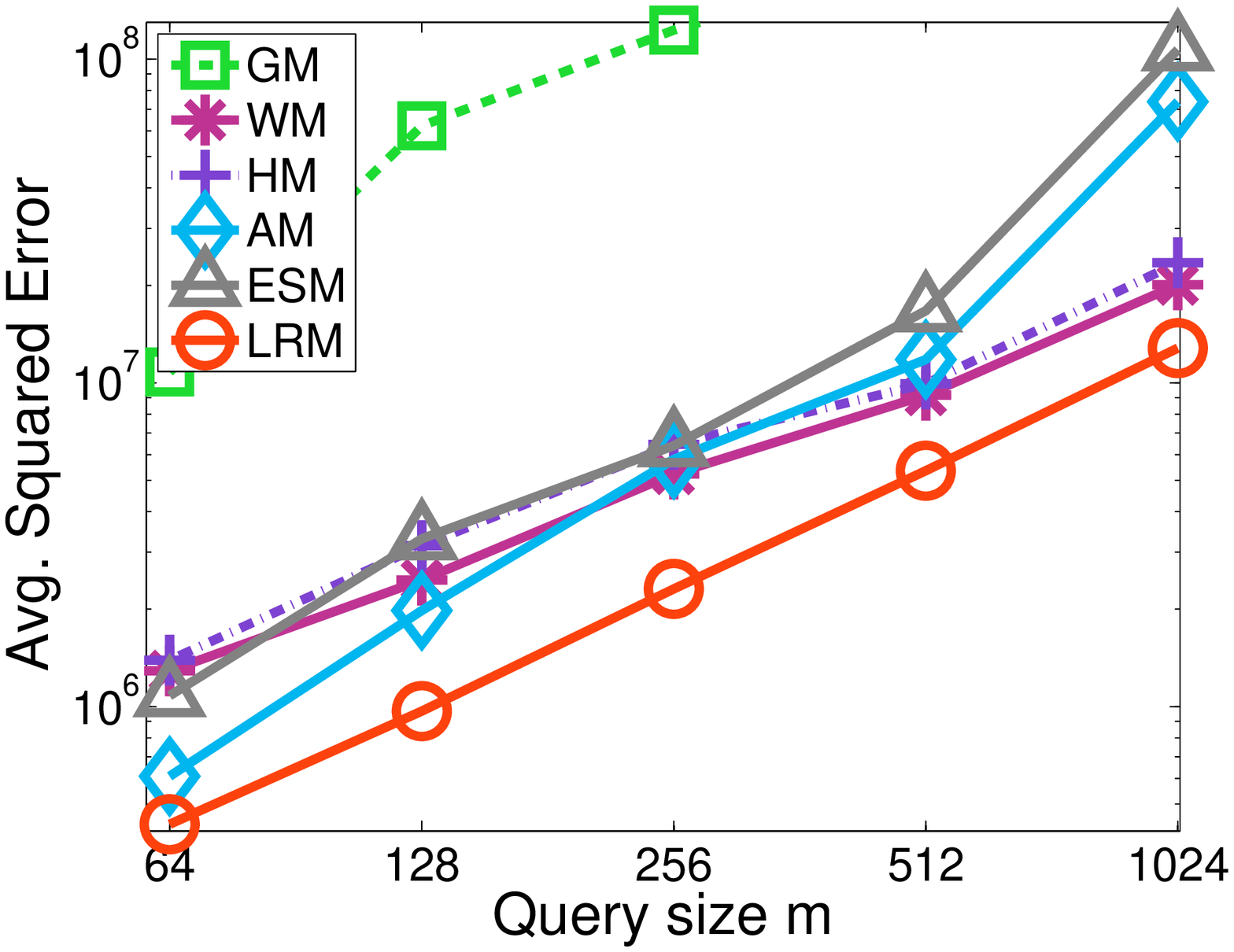}}
\subfigure[\emph{Net Trace}]
{\includegraphics[width=0.244\textwidth]{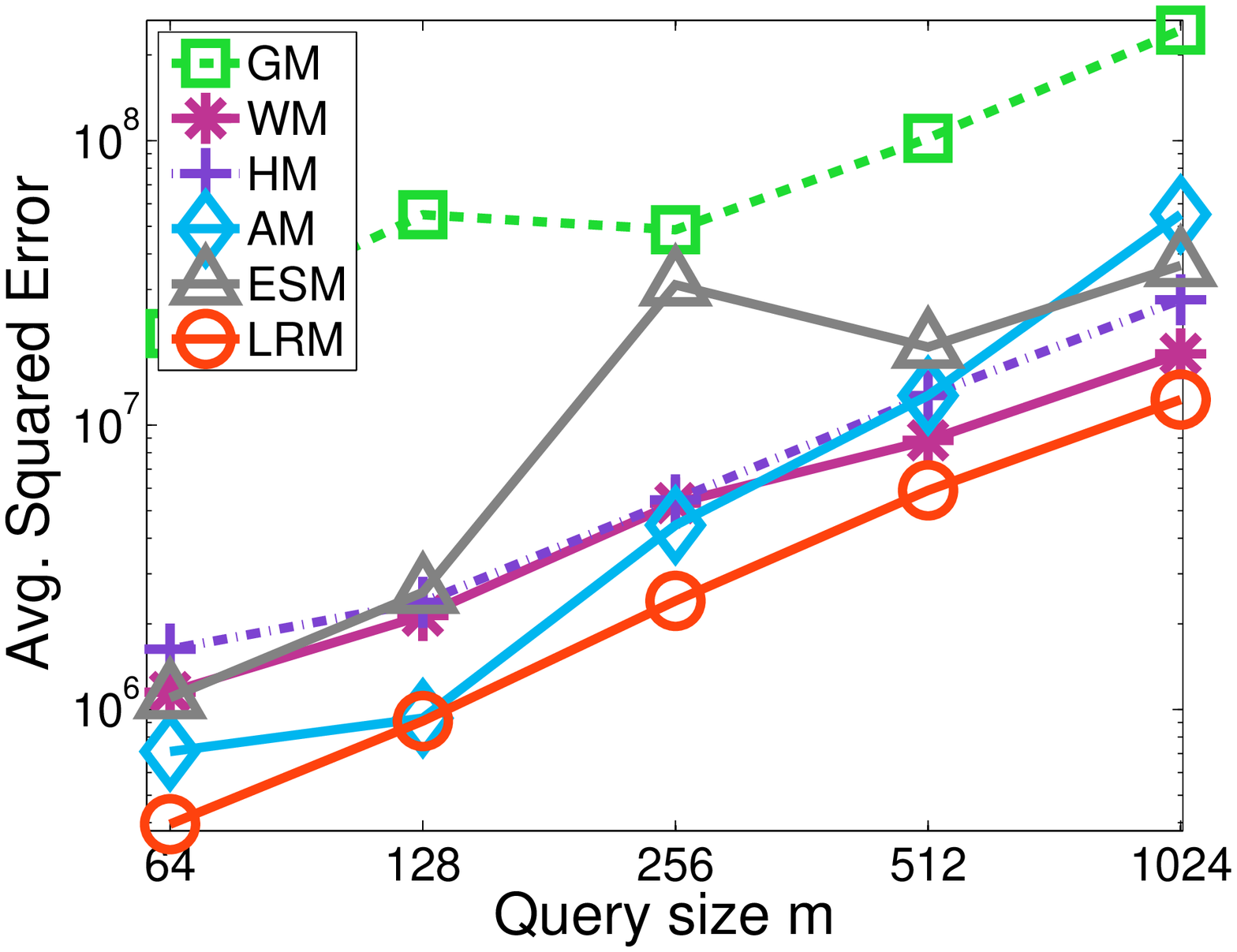}}
\centering \subfigure[\emph{Social Network}]
{\includegraphics[width=0.244\textwidth]{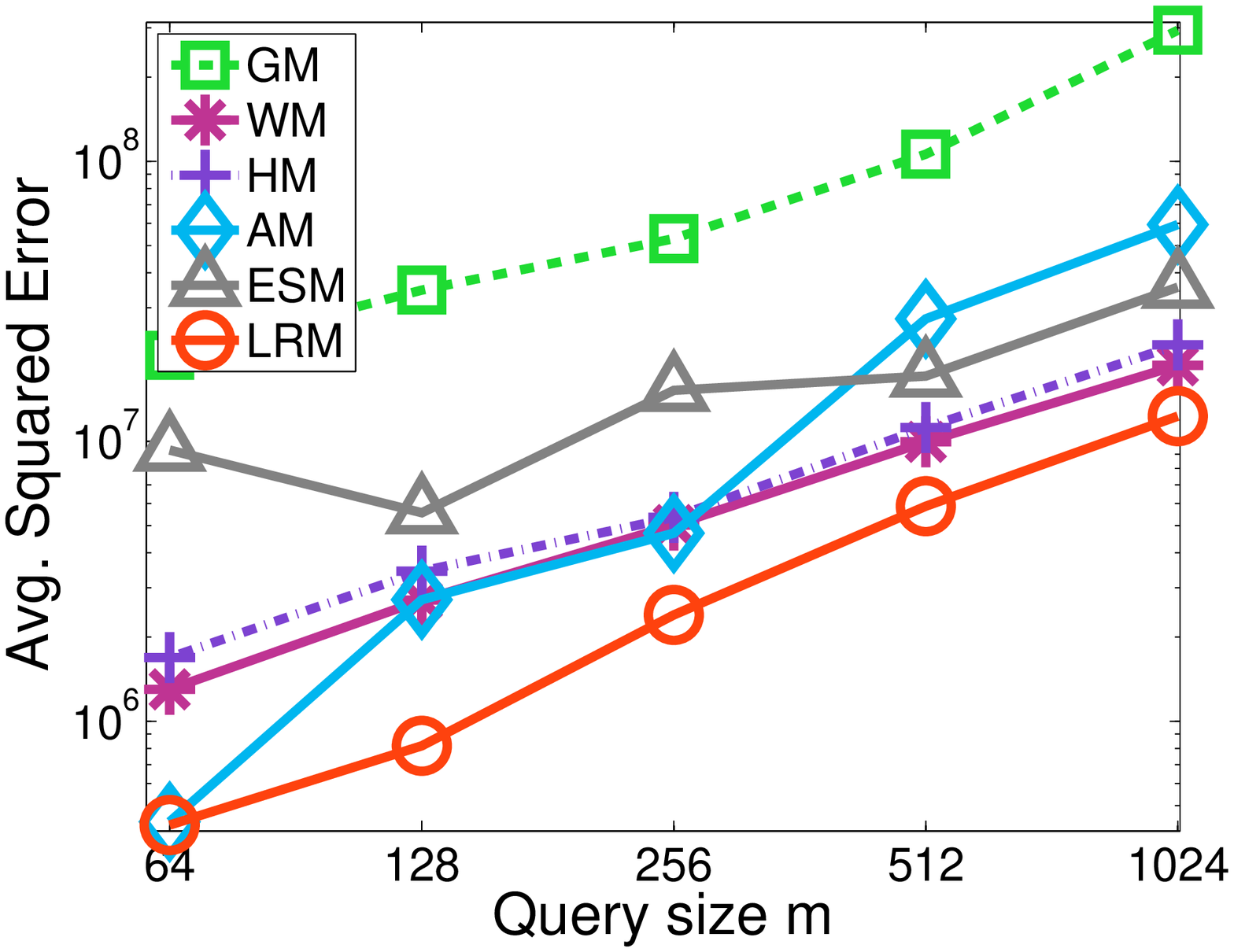}}
\centering \subfigure[\emph{UCI Adult}]
{\includegraphics[width=0.244\textwidth]{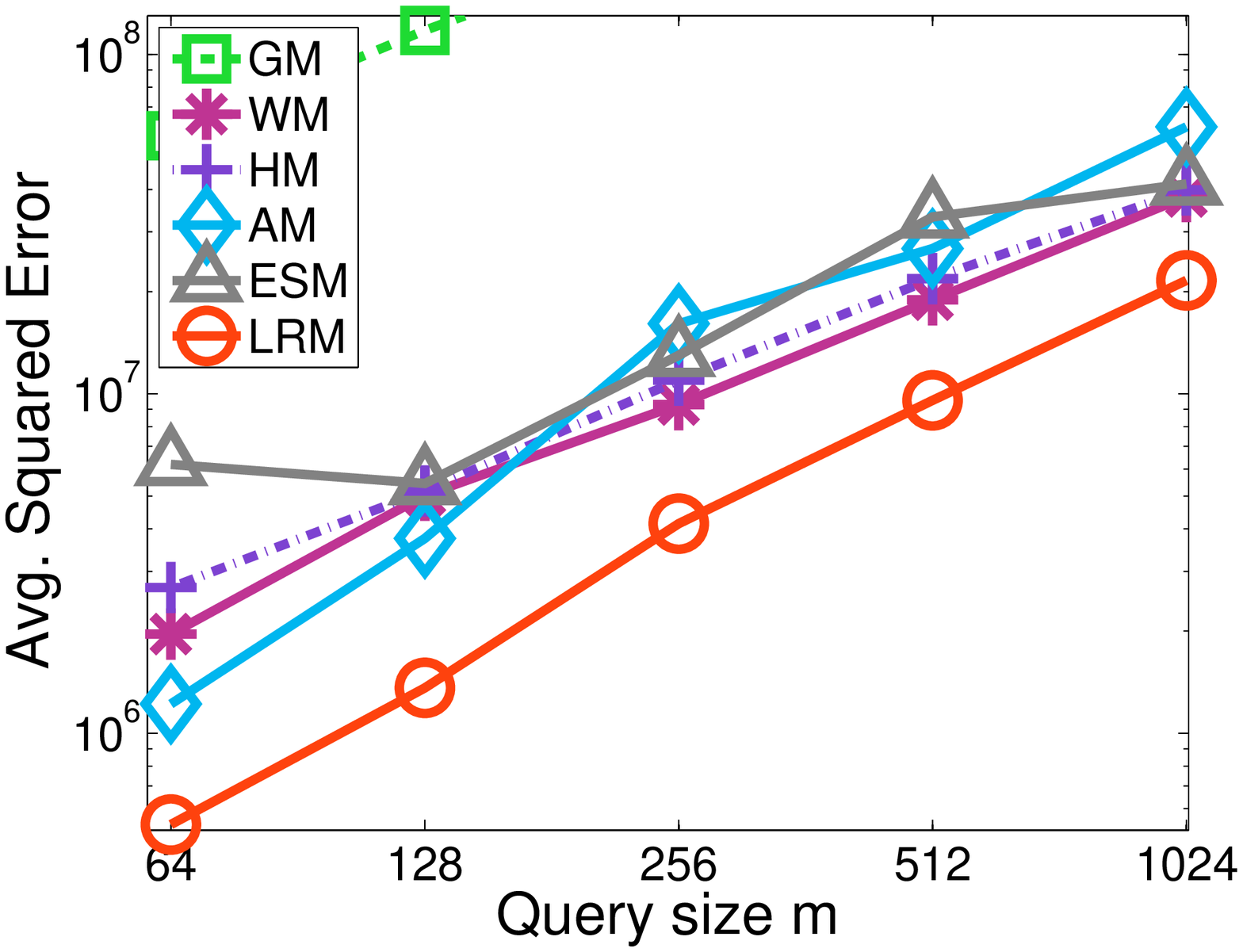}}
\caption{Effect of number of queries $m$ on workload \emph{WRange}
under ($\epsilon$, $\delta$)-differential privacy with $\epsilon=0.1$
and $\delta=0.0001$} \label{fig:exp:m:WRange:app}
\end{figure*}

\begin{figure*}[!t]
\centering \subfigure[\emph{Search Logs}]
{\includegraphics[width=0.244\textwidth]{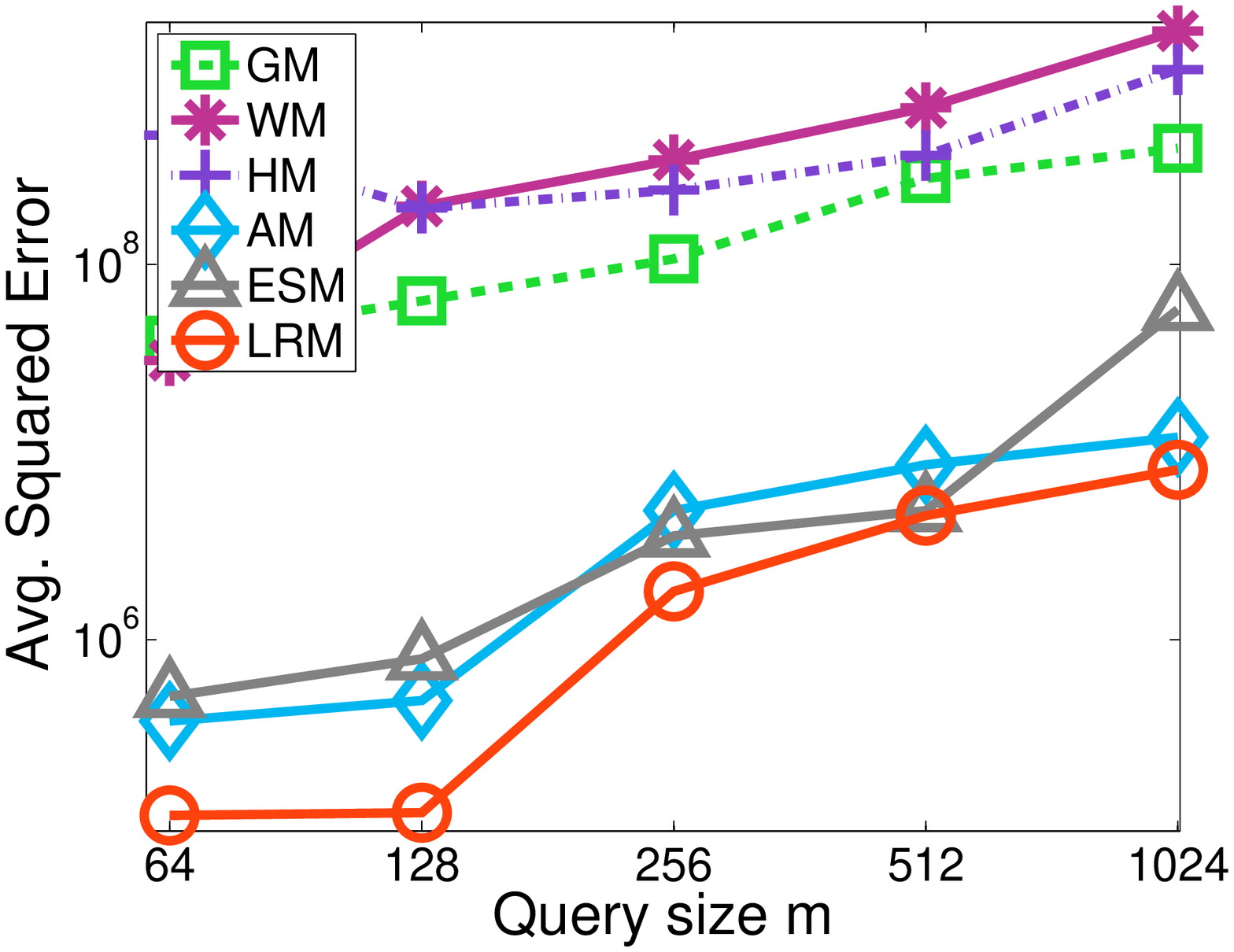}}
\subfigure[\emph{Net Trace}]
{\includegraphics[width=0.244\textwidth]{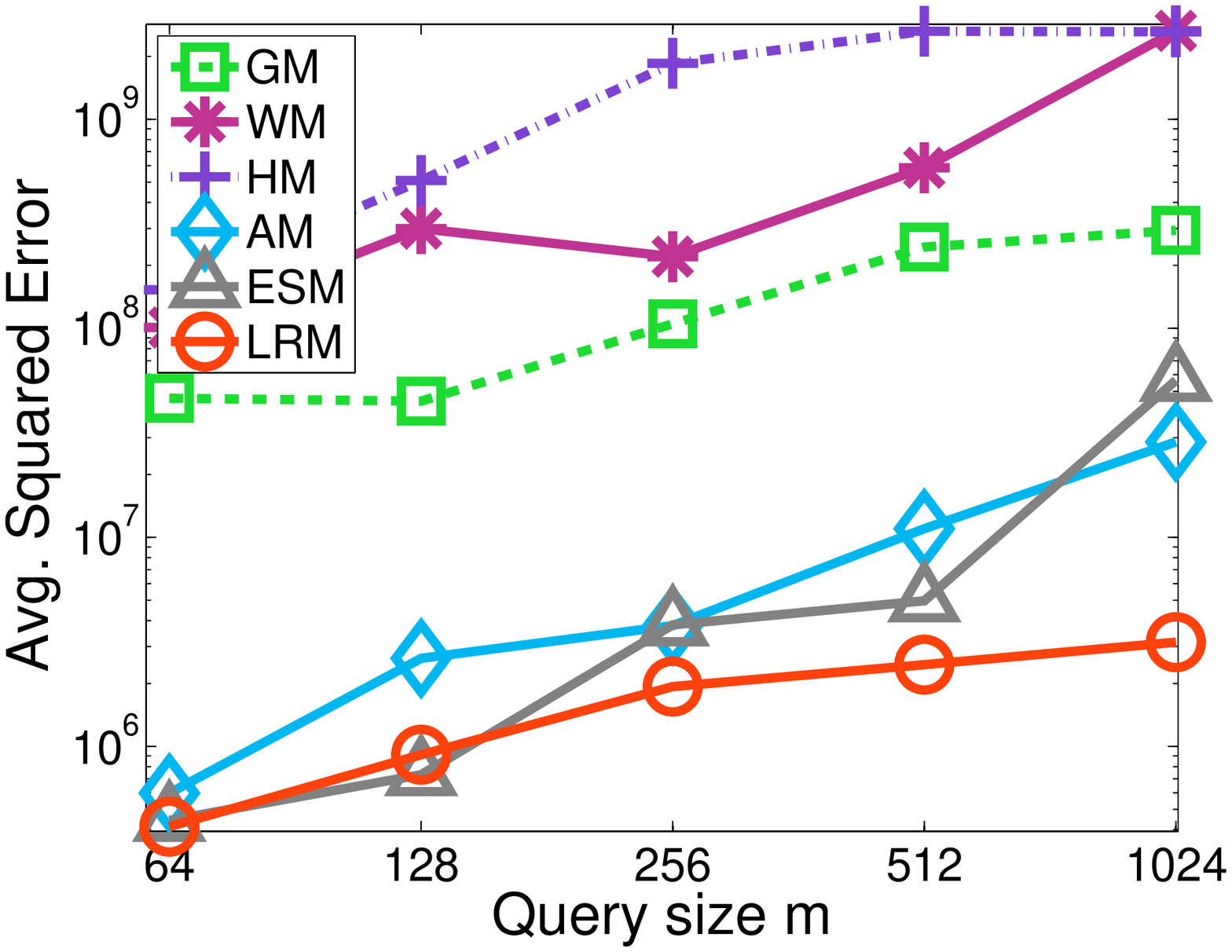}}
\centering \subfigure[\emph{Social Network}]
{\includegraphics[width=0.244\textwidth]{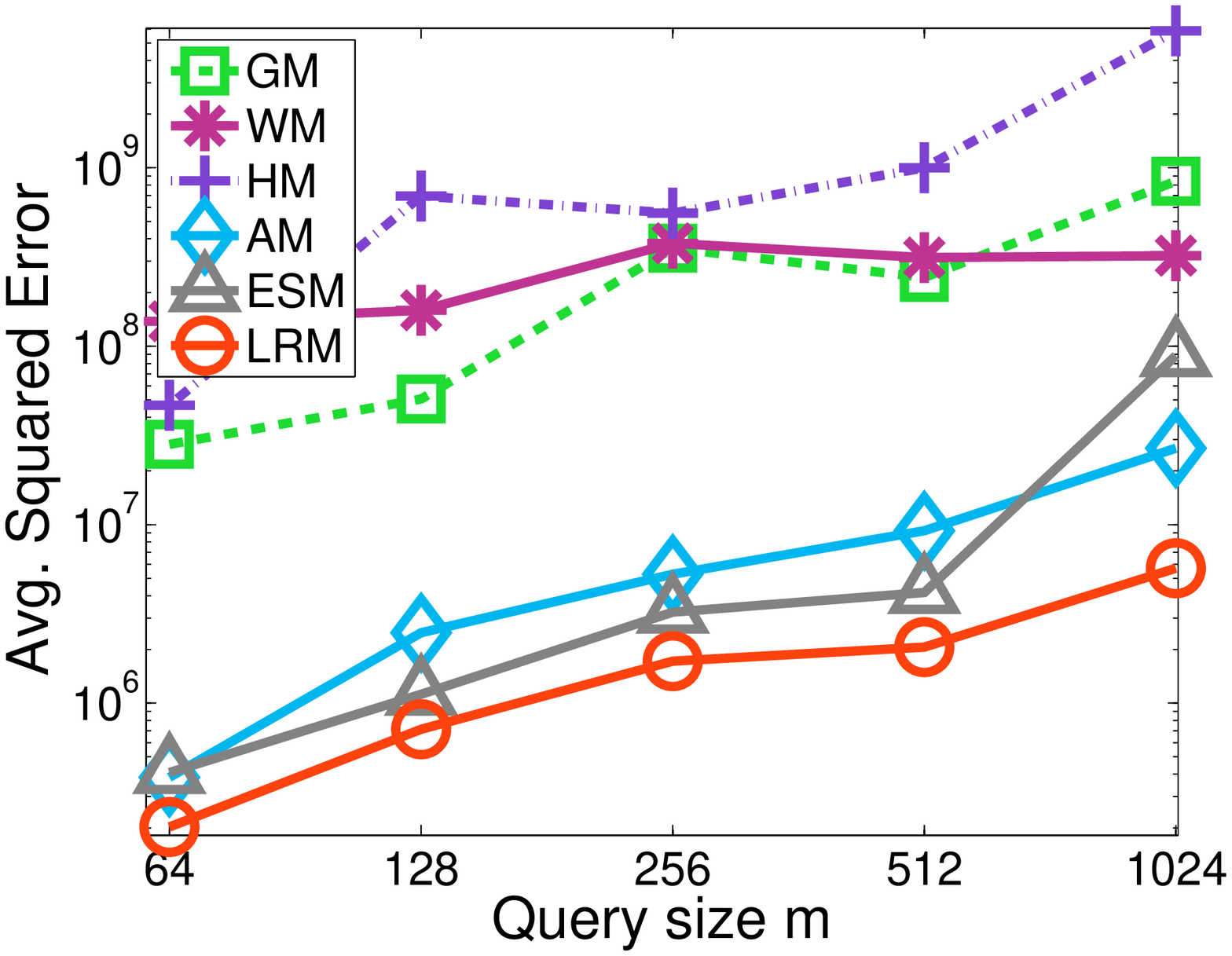}}
\centering \subfigure[\emph{UCI Adult}]
{\includegraphics[width=0.244\textwidth]{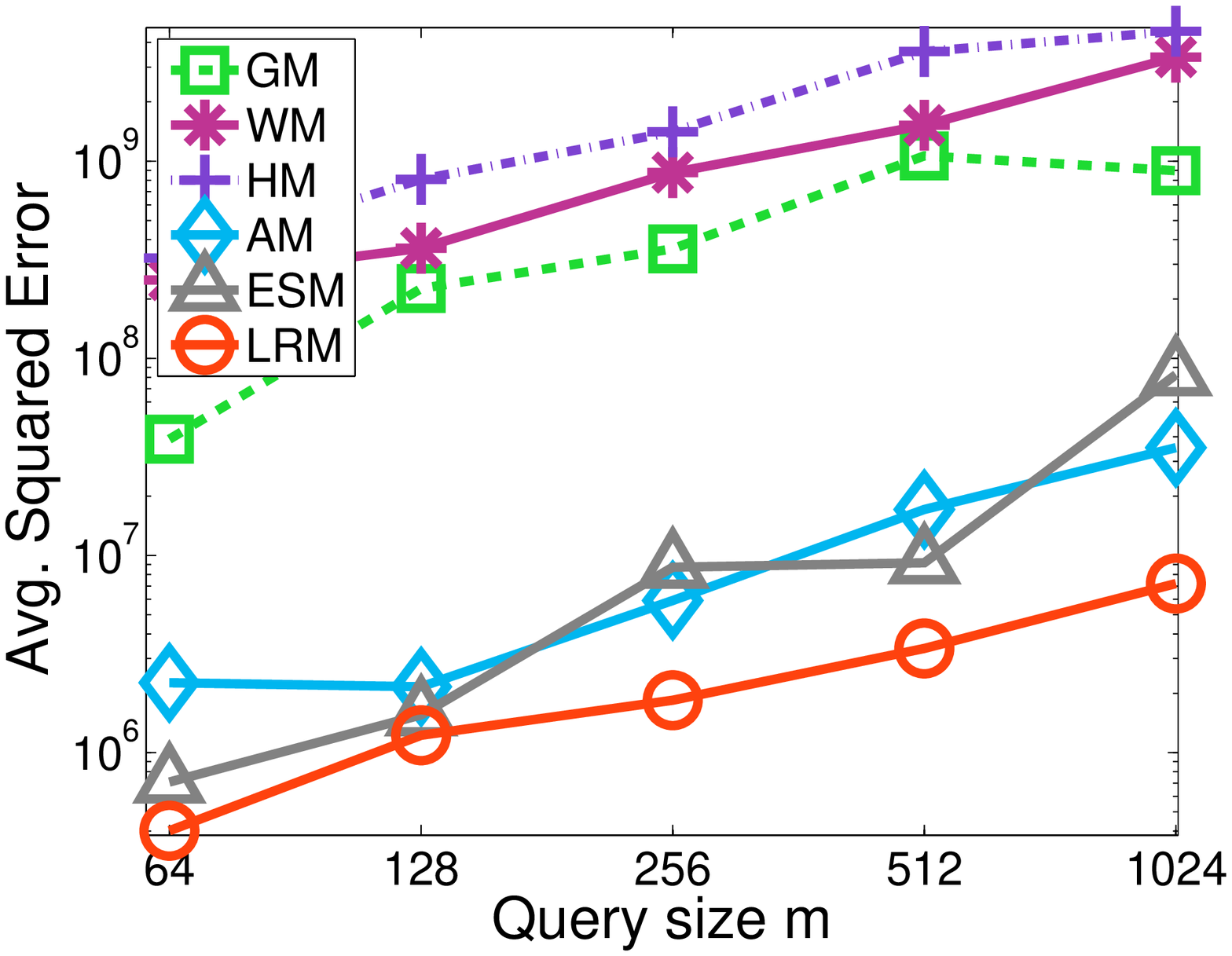}}
\caption{Effect of number of queries $m$ on workload
\emph{WMarginal} under ($\epsilon$, $\delta$)-differential privacy
with $\epsilon=0.1$ and $\delta=0.0001$} \label{fig:exp:m:WMarginal:app}
\end{figure*}

\begin{figure*}[!t]
\centering \subfigure[\emph{Search Logs}]
{\includegraphics[width=0.244\textwidth]{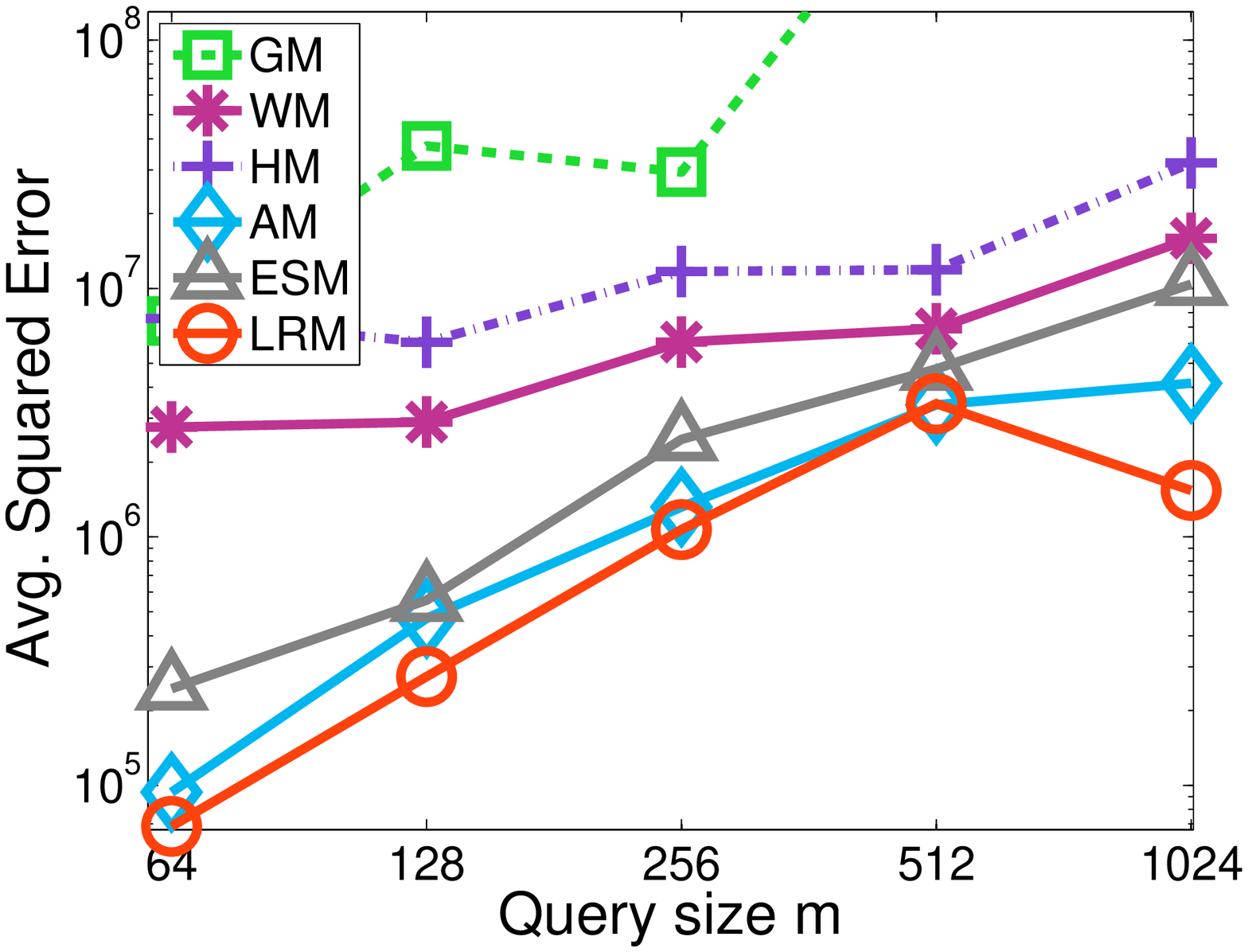}}
\subfigure[\emph{Net Trace}]
{\includegraphics[width=0.244\textwidth]{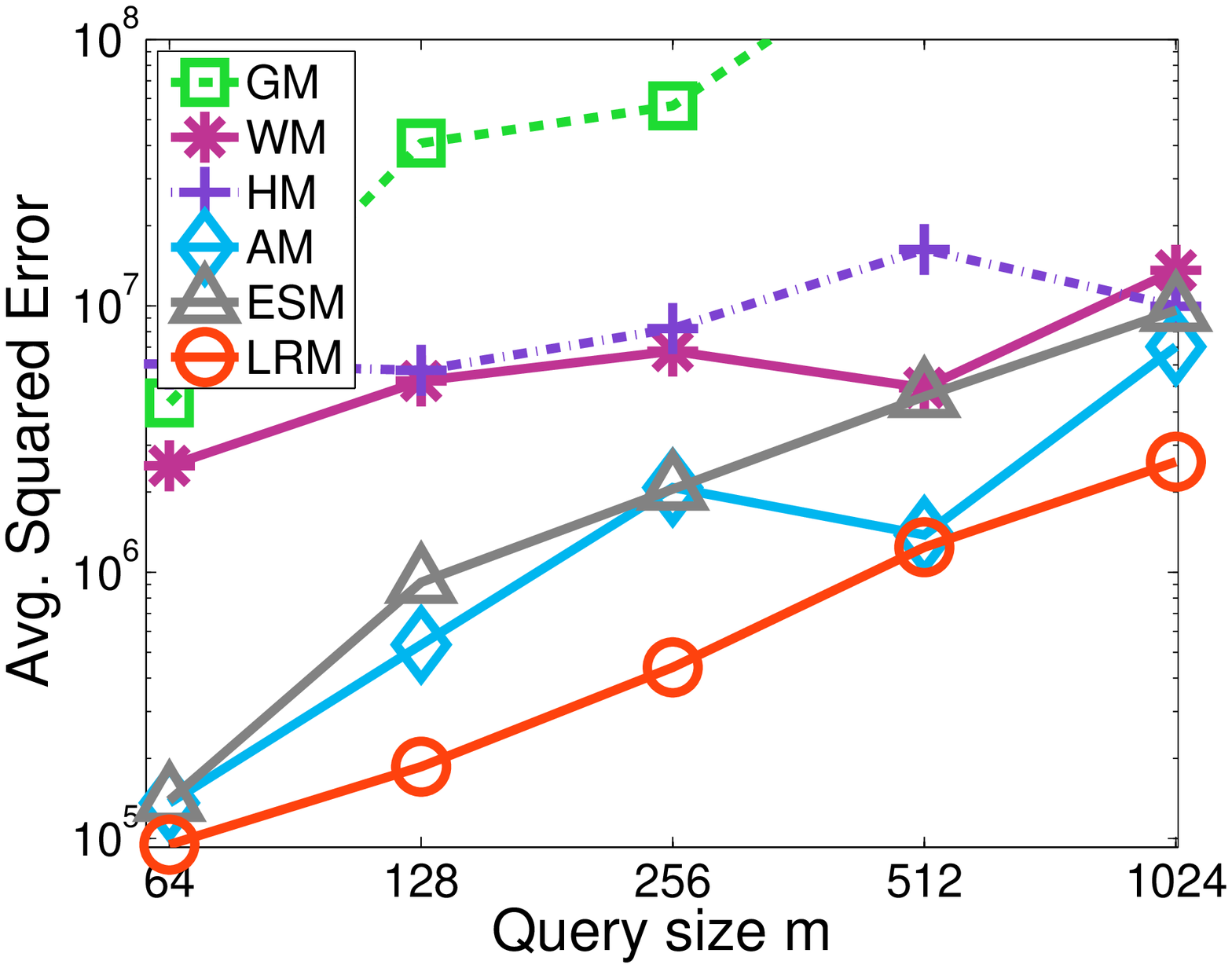}}
\centering \subfigure[\emph{Social Network}]
{\includegraphics[width=0.244\textwidth]{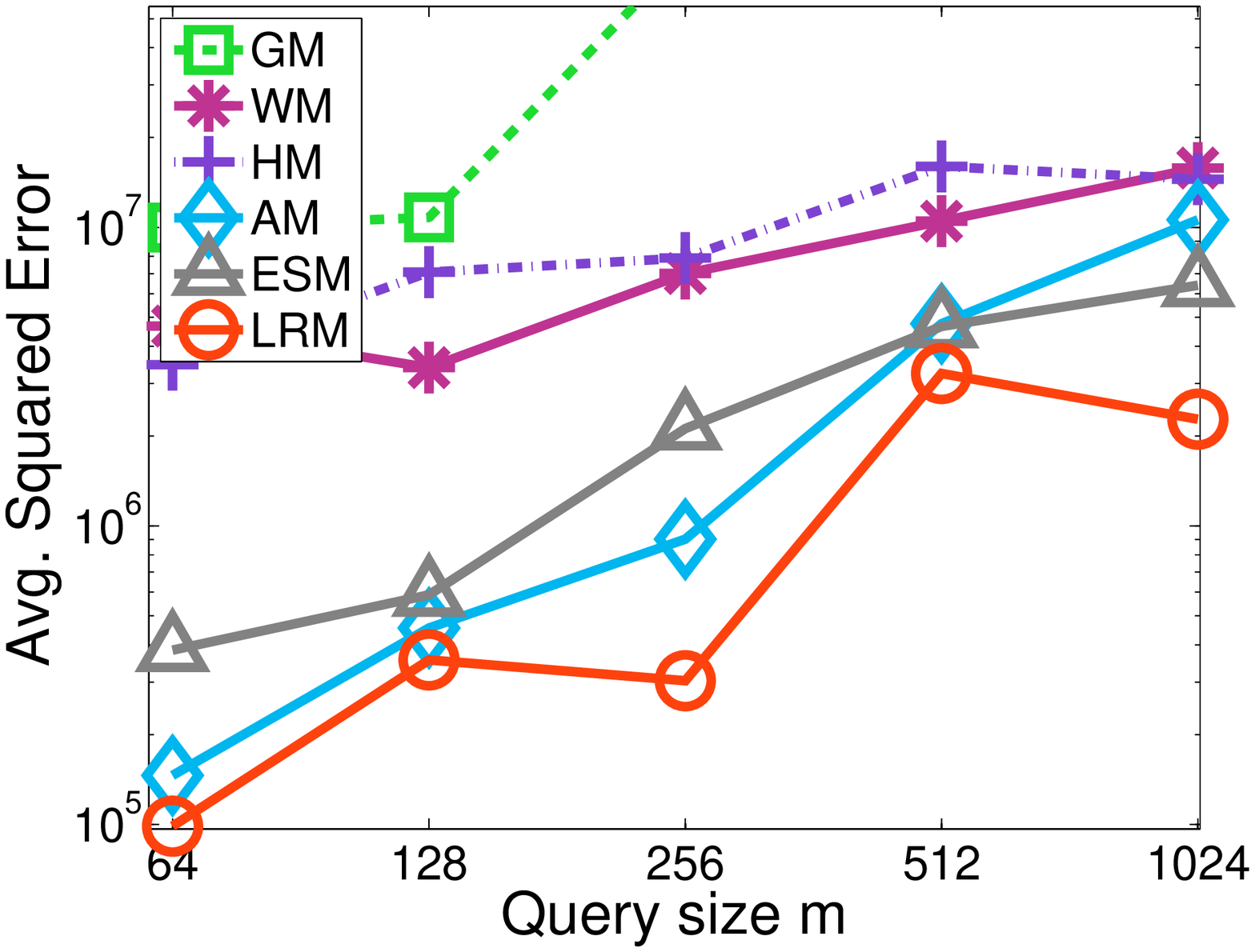}}
\centering \subfigure[\emph{UCI Adult}]
{\includegraphics[width=0.244\textwidth]{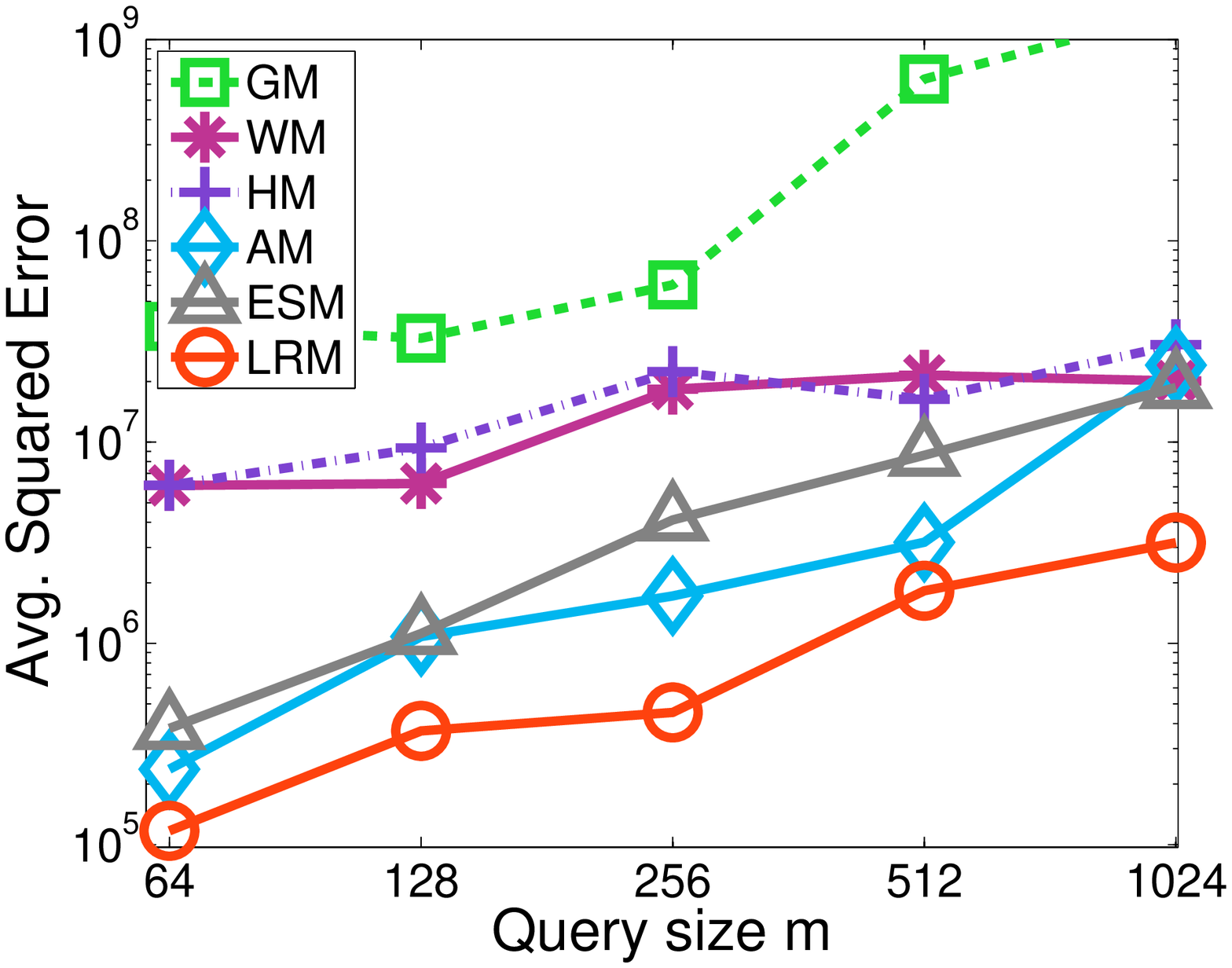}}
\caption{Effect of number of queries $m$ on workload \emph{WRelated}
under ($\epsilon$, $\delta$)-differential privacy with $\epsilon=0.1$
and $\delta=0.0001$} \label{fig:exp:m:WRelated:app}
\end{figure*}

\subsection{Scalability of the Low-Rank Mechanism}\label{sec:exp:scalability}
Finally, we demonstrate the efficiency and scalability of LRM under $\epsilon$- and ($\epsilon$, $\delta$)-differential privacy. The running time of LRM is dominated by the optimization module that solves the best workload decomposition, which is independent of the dataset. In Figure \ref{fig:exp:scal} and Figure \ref{fig:exp:scal:app}, we vary the domain size $n$ from 128 to 8192 and the number of queries $m$ from 64 to 256, respectively, and report the total running time of LRM for the 4 different types of workloads in our experiments. LRM scales roughly linearly with the domain size $n$ and the number of queries $m$ (note that both axes are in logarithmic scale). Moreover, we observe that for workload \emph{WRelated}, LRM runs faster when the rank $s$ of the workload is lower, given the same values of $n$ and $m$. LRM under ($\epsilon$, $\delta$)-differential privacy is slightly more efficient than under $\epsilon$-differential privacy. This is expected, since we set a smaller value of $r$ for ($\epsilon$, $\delta$)-differential privacy. In all settings, LRM always terminates within 20 minutes for each experiment. In practice, this computation time pays off as LRM achieves significantly higher accuracy than existing methods.

\begin{figure*}[!t]
\centering \subfigure[\emph{Search Logs}]
{\includegraphics[width=0.244\textwidth]{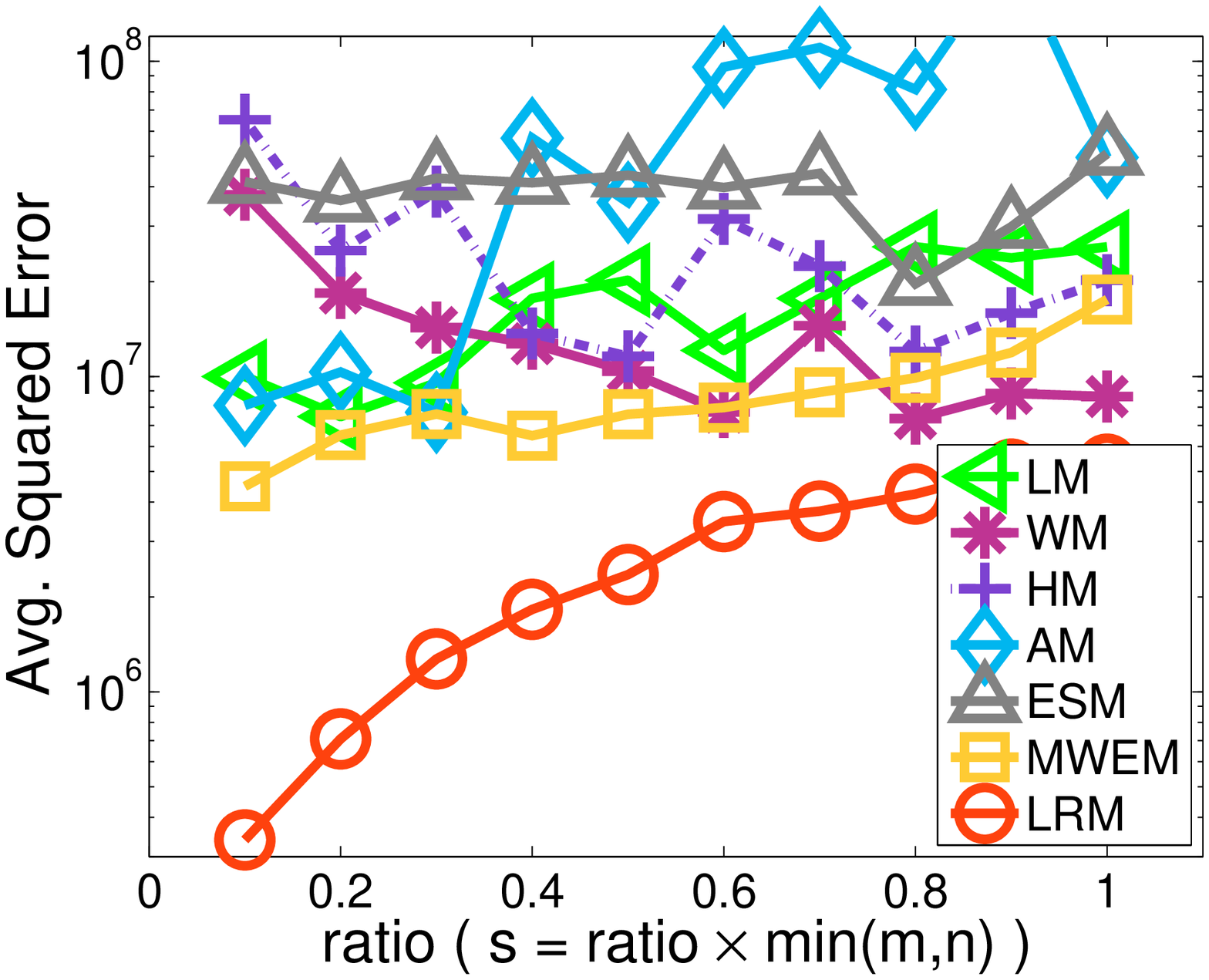}}
\subfigure[\emph{Net Trace}]
{\includegraphics[width=0.244\textwidth]{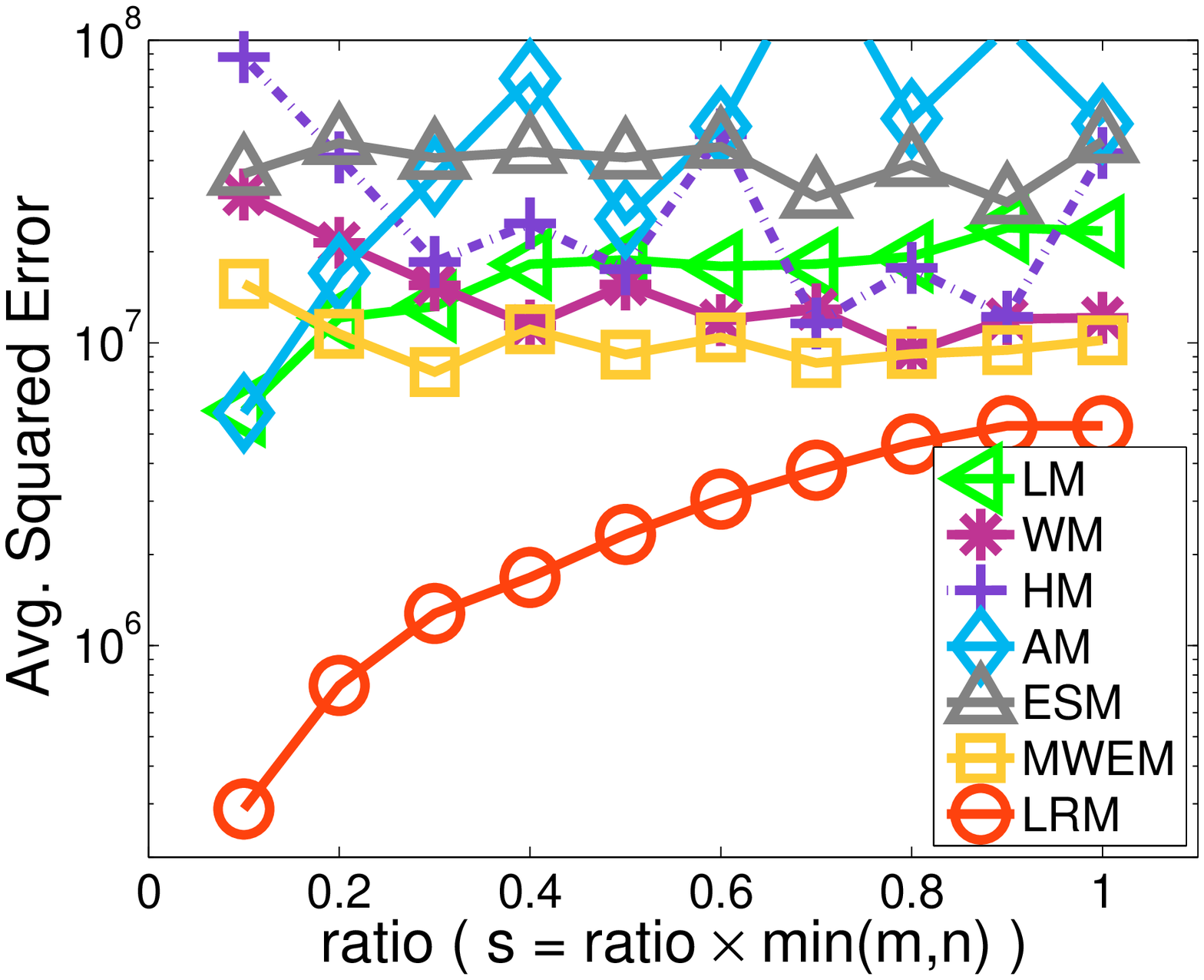}}
\centering \subfigure[\emph{Social Network}]
{\includegraphics[width=0.244\textwidth]{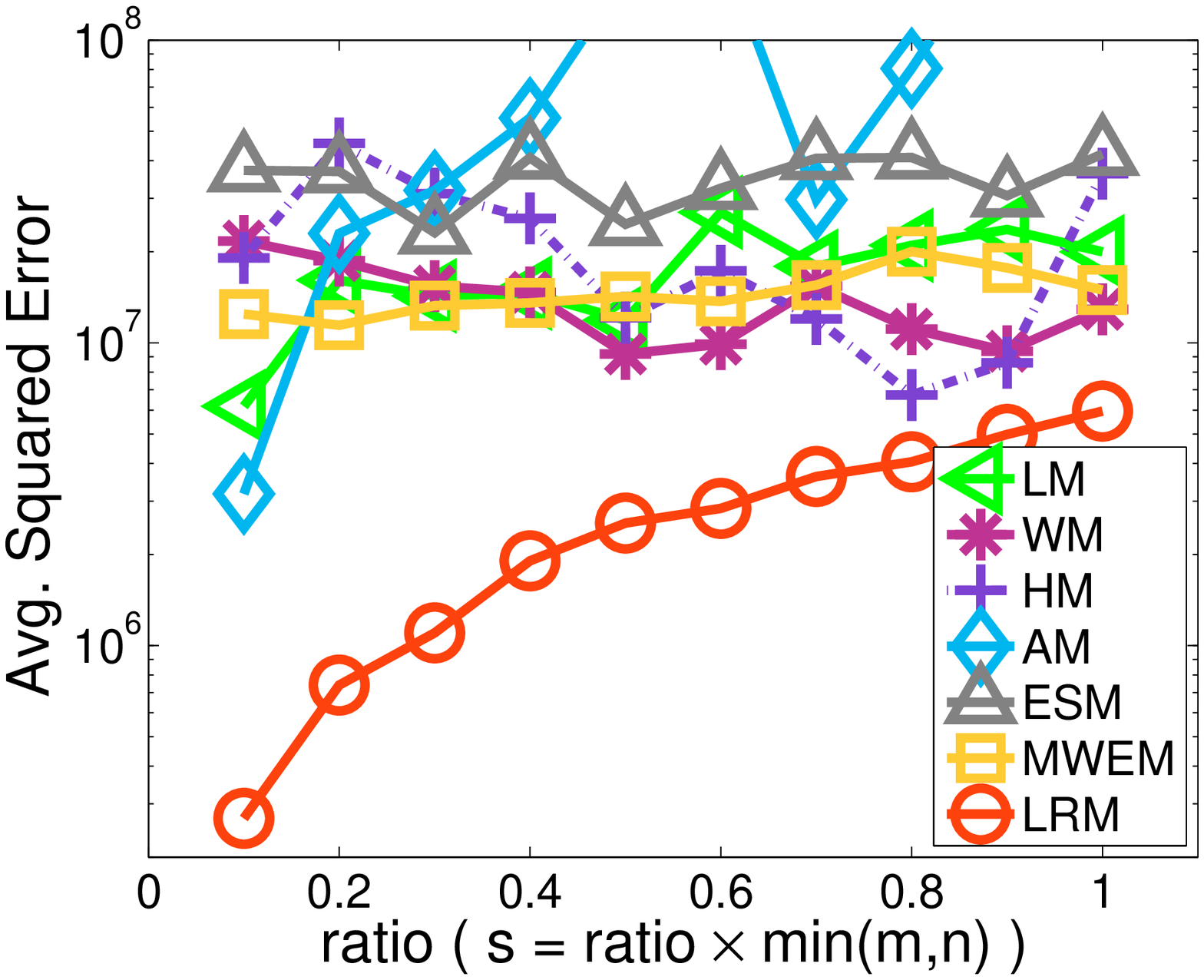}}
\centering \subfigure[\emph{UCI Adult}]
{\includegraphics[width=0.244\textwidth]{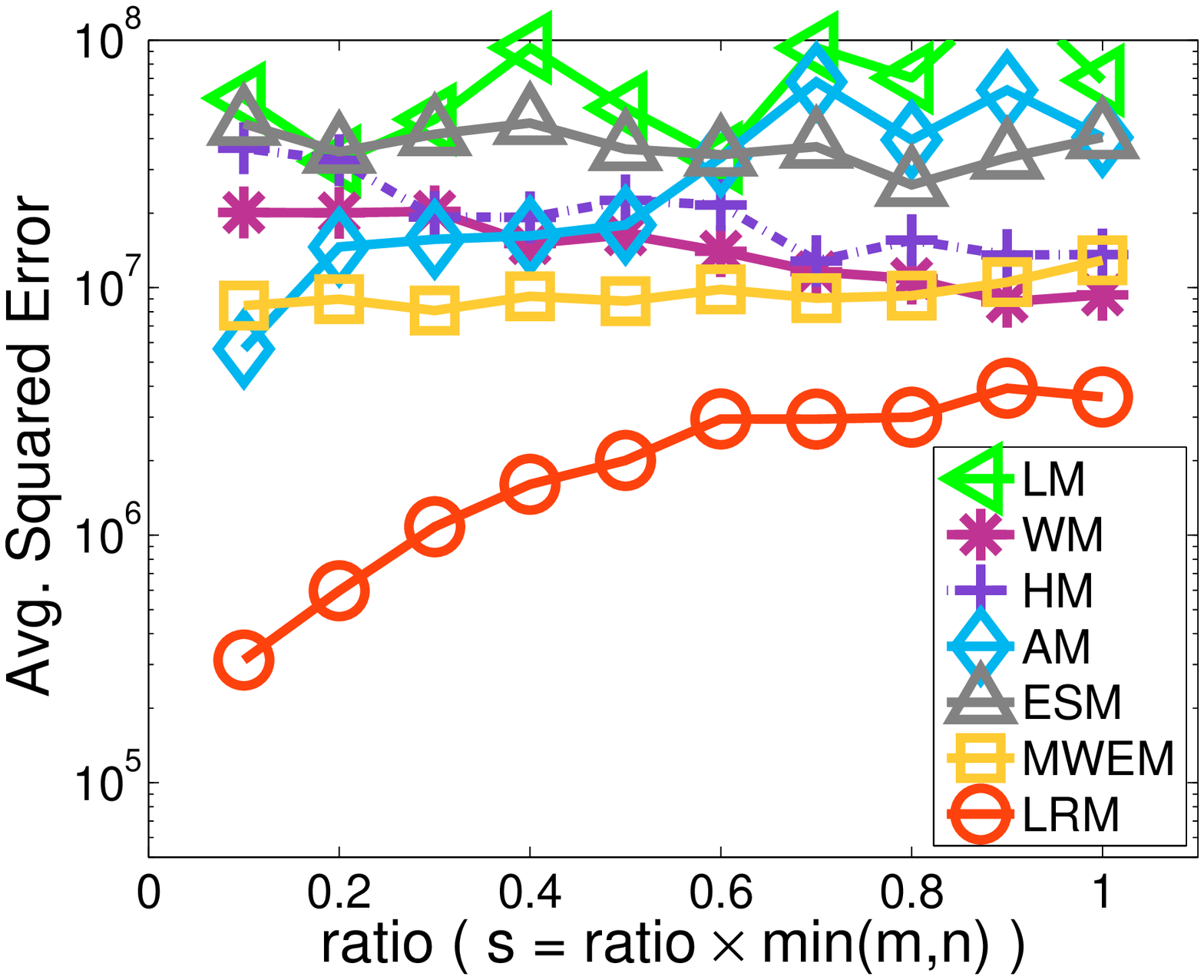}}
\caption{Effect of parameter $s$ with under $\epsilon$-differential privacy with $\epsilon=0.1$}
\label{fig:exp:s}
\end{figure*}

\begin{figure*}[!t]
\centering \subfigure[\emph{Search Logs}]
{\includegraphics[width=0.244\textwidth]{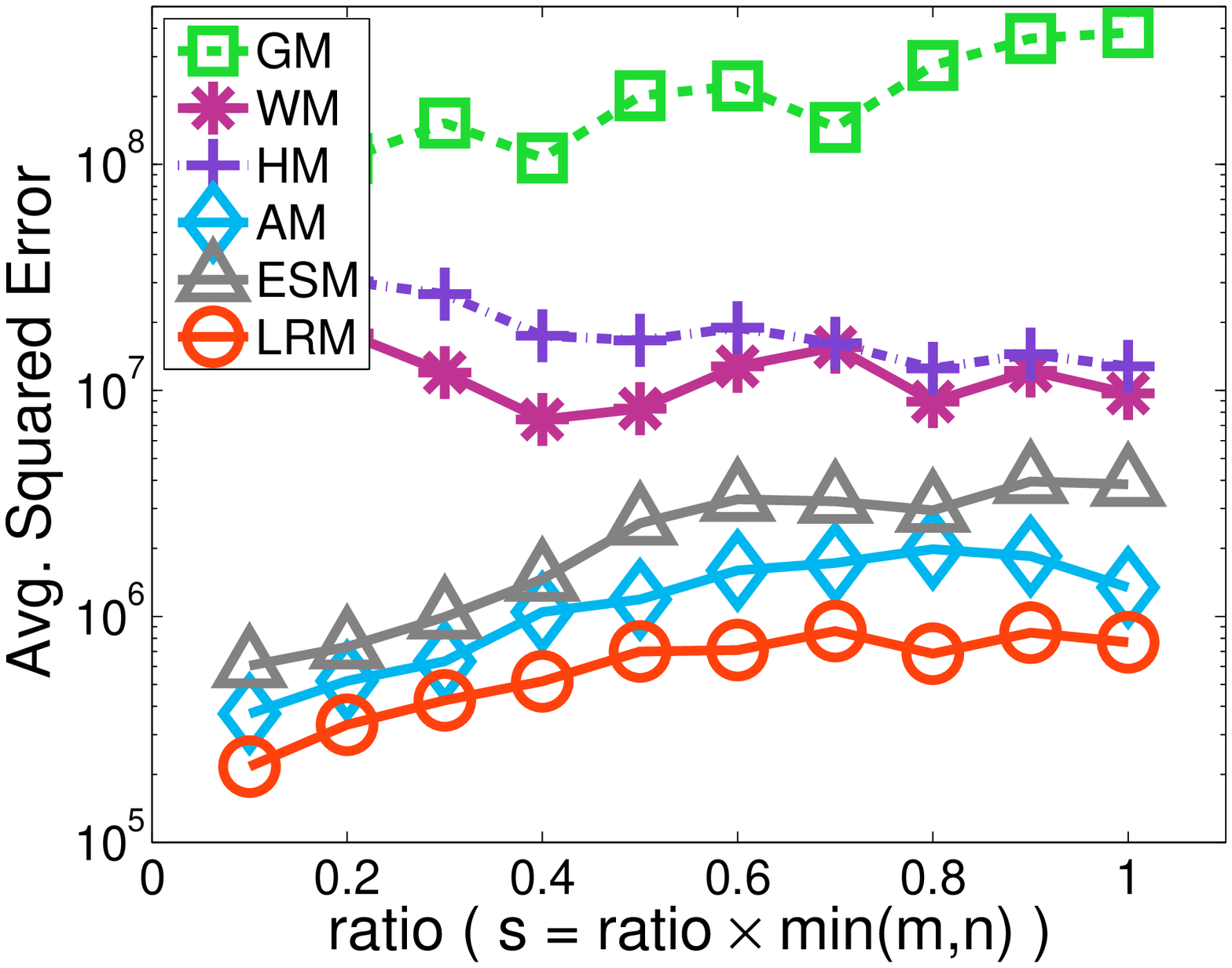}}
\subfigure[\emph{Net Trace}]
{\includegraphics[width=0.244\textwidth]{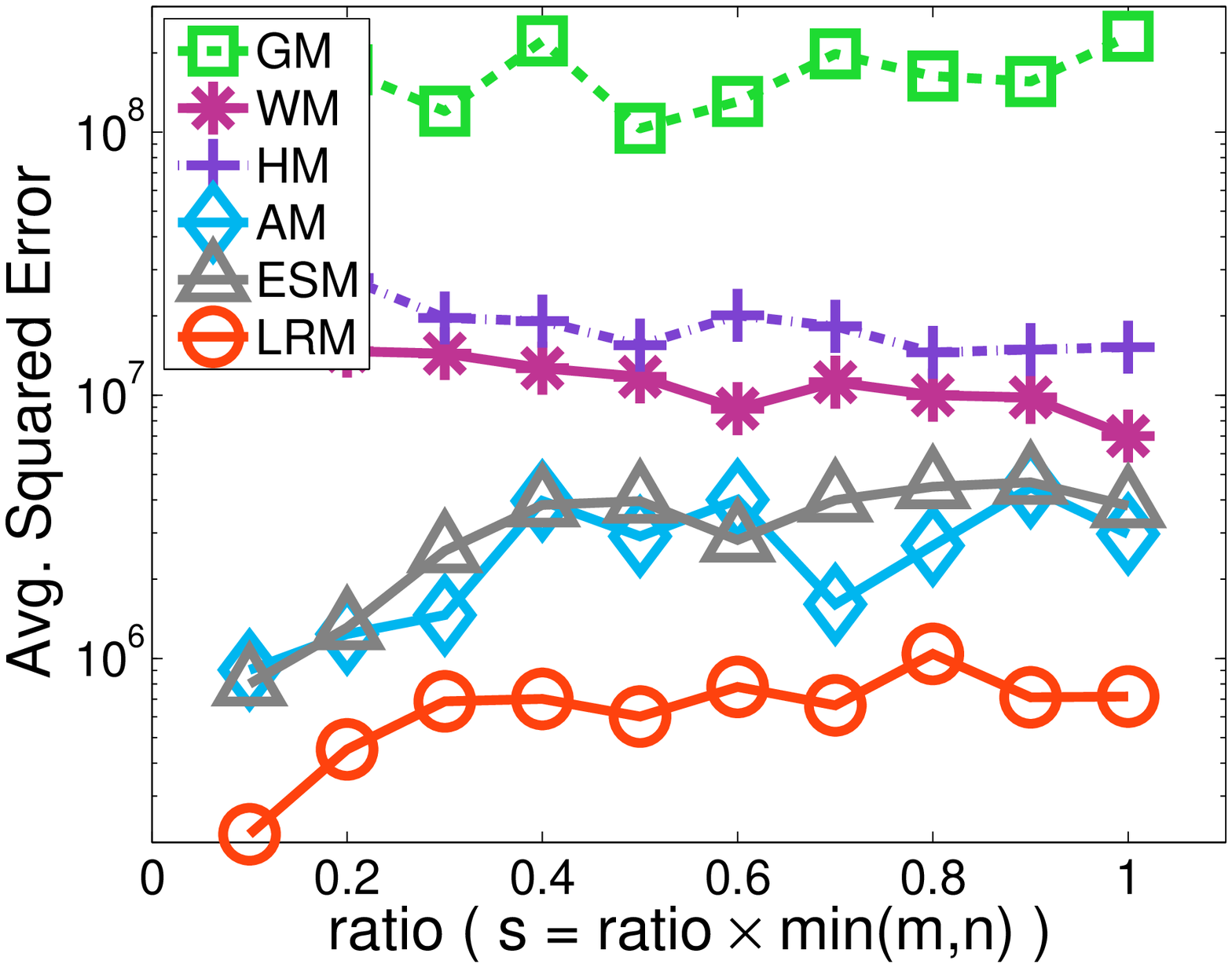}}
\centering \subfigure[\emph{Social Network}]
{\includegraphics[width=0.244\textwidth]{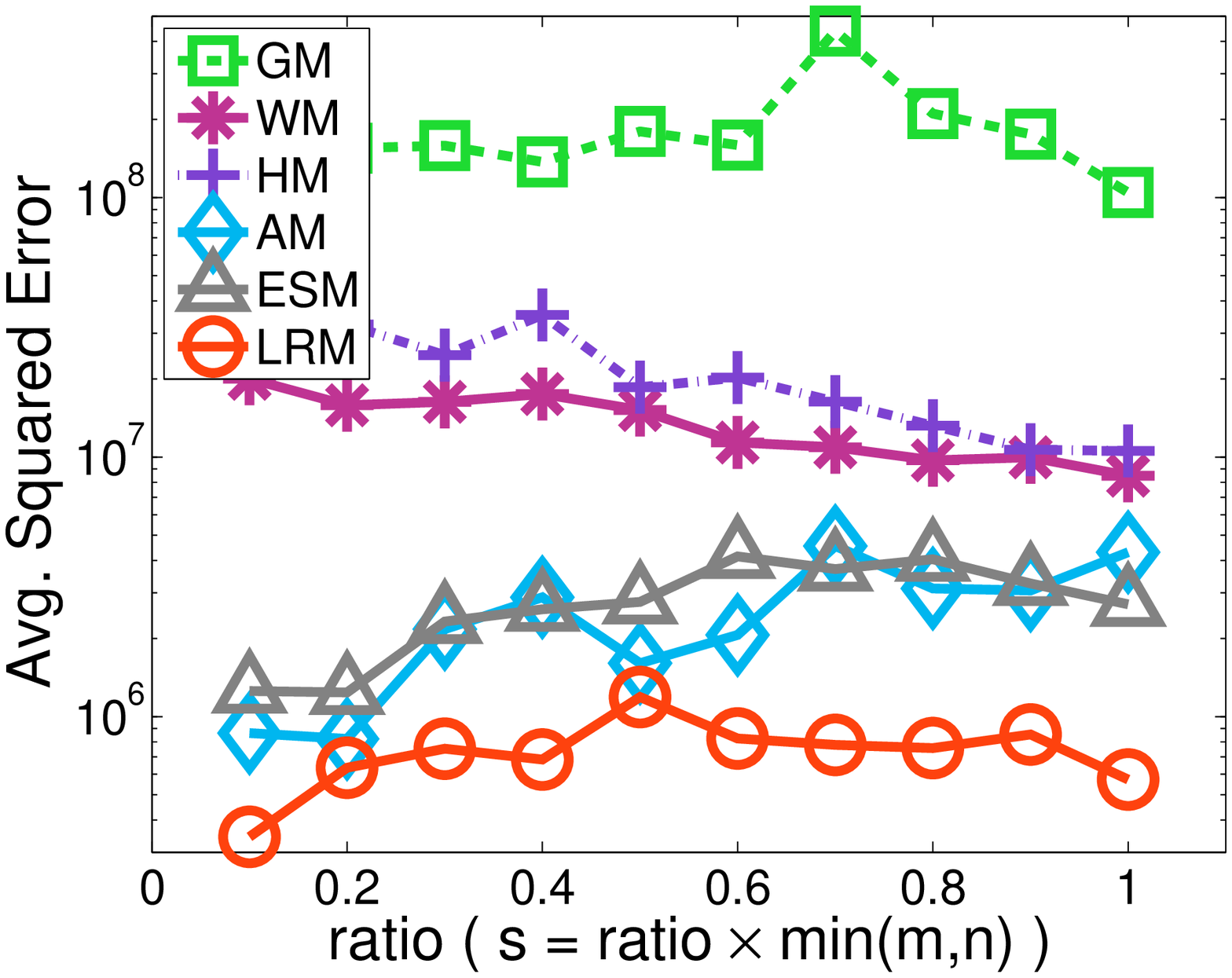}}
\centering \subfigure[\emph{UCI Adult}]
{\includegraphics[width=0.244\textwidth]{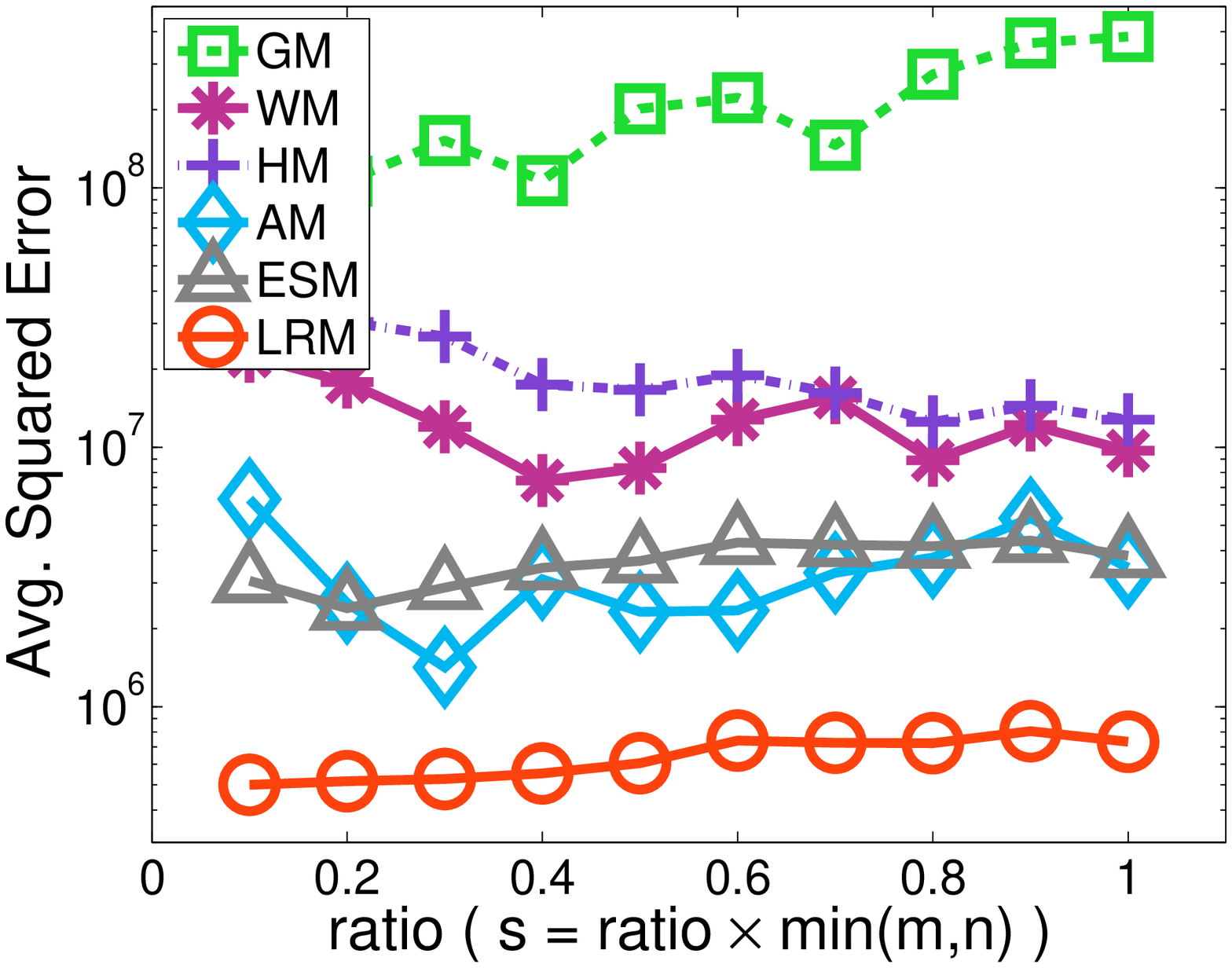}}
\caption{Effect of parameter $s$ under ($\epsilon$, $\delta$)-differential privacy with $\epsilon=0.1$ and $\delta=0.0001$}
\label{fig:exp:s:app}
\end{figure*}

\begin{figure*}[!t]
\centering \subfigure[on workload \emph{WDiscrete}]
{\includegraphics[width=0.244\textwidth]{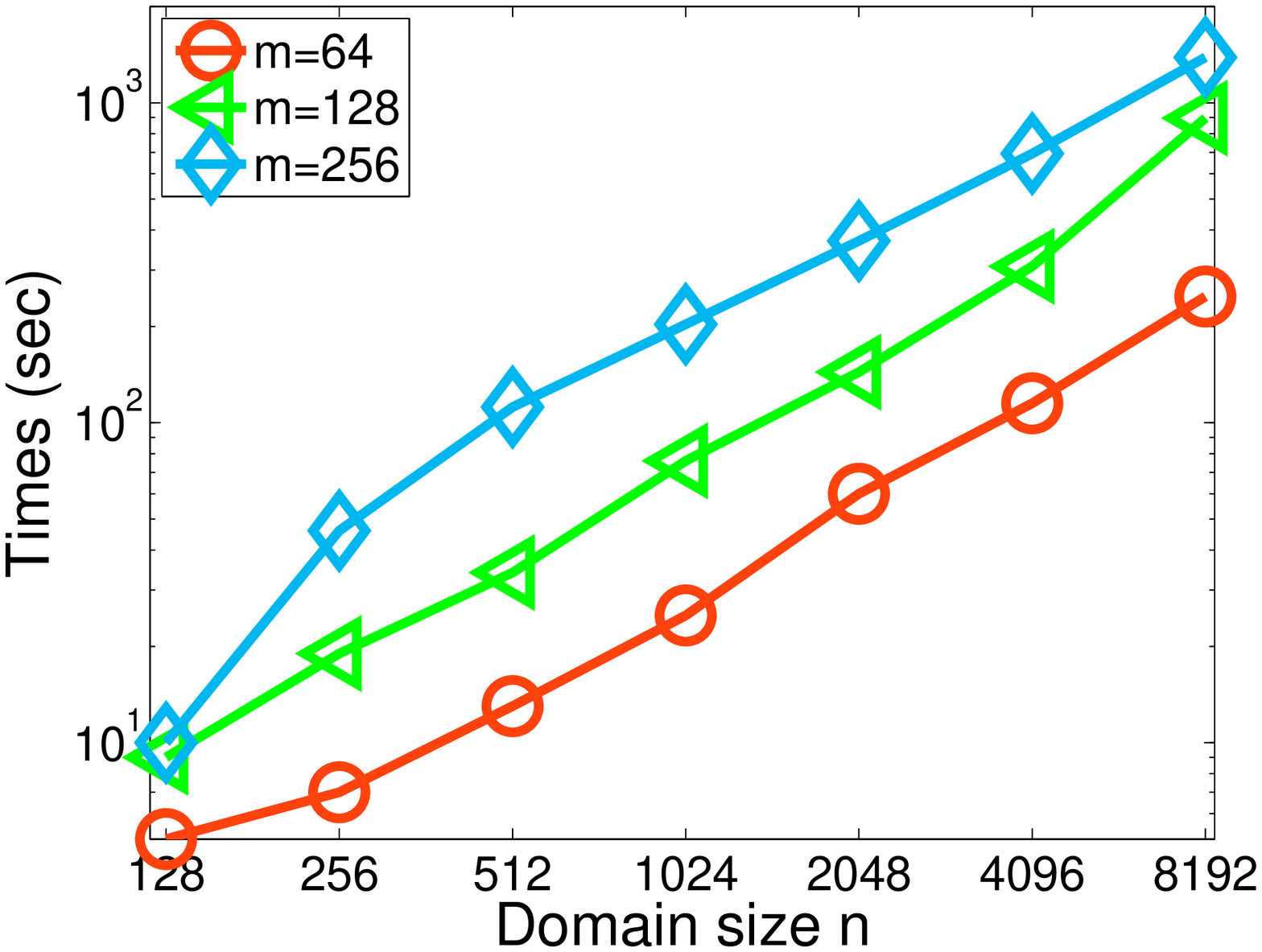}}
\centering \subfigure[on workload \emph{WRange}]
{\includegraphics[width=0.244\textwidth]{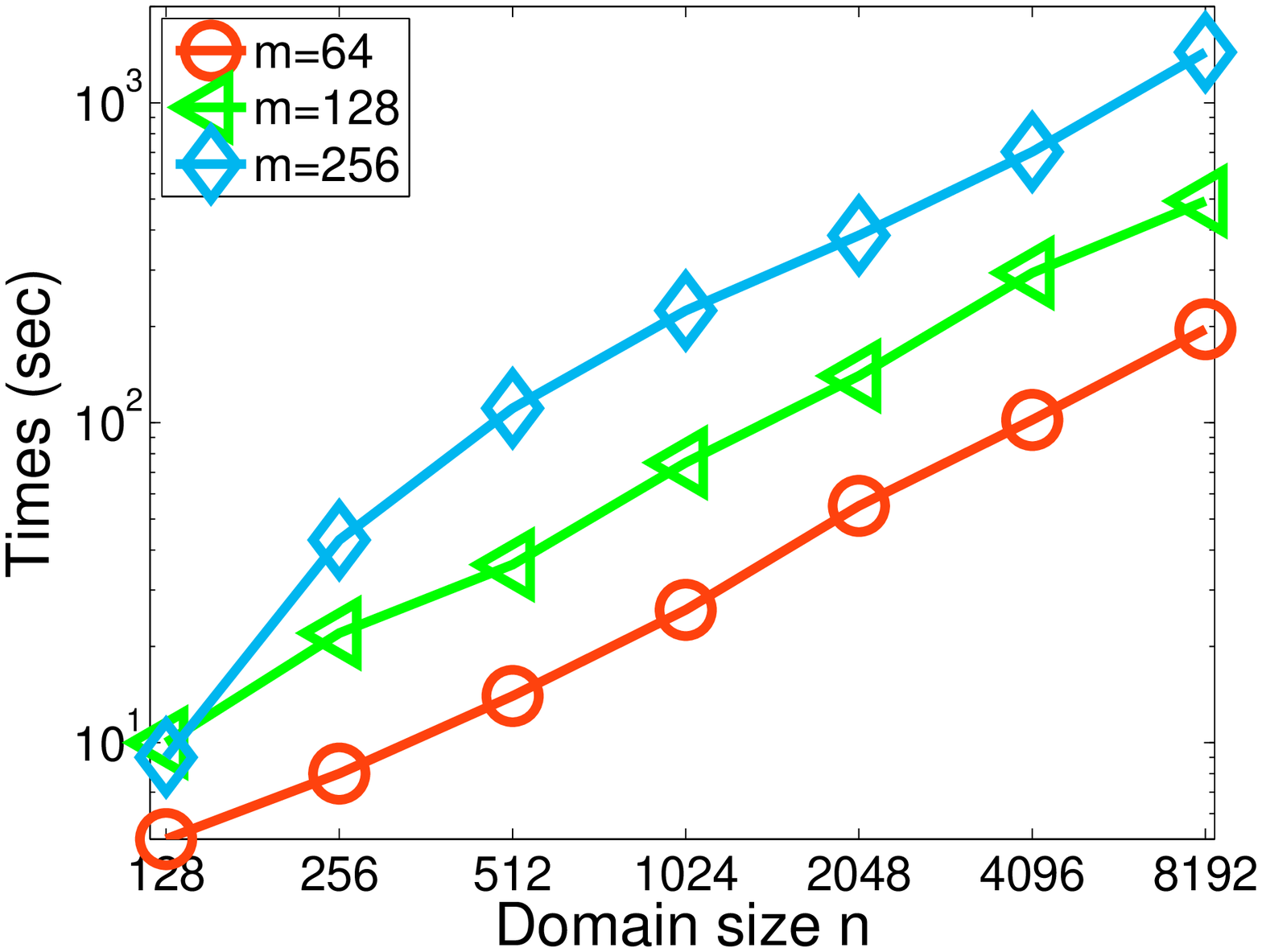}}
\centering \subfigure[on workload \emph{WMarginal}]
{\includegraphics[width=0.244\textwidth]{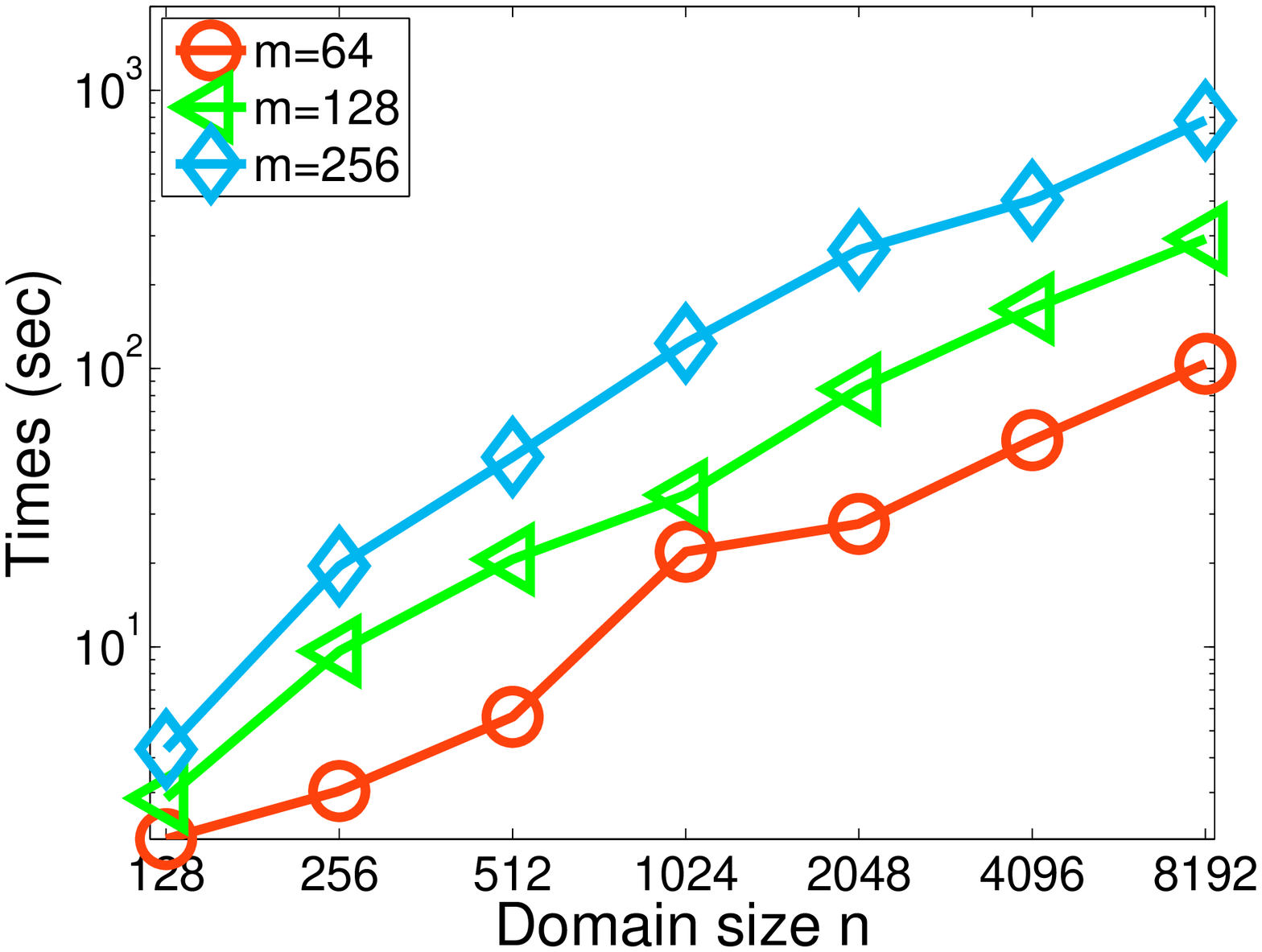}}
\centering \subfigure[on workload \emph{WRelated}]
{\includegraphics[width=0.244\textwidth]{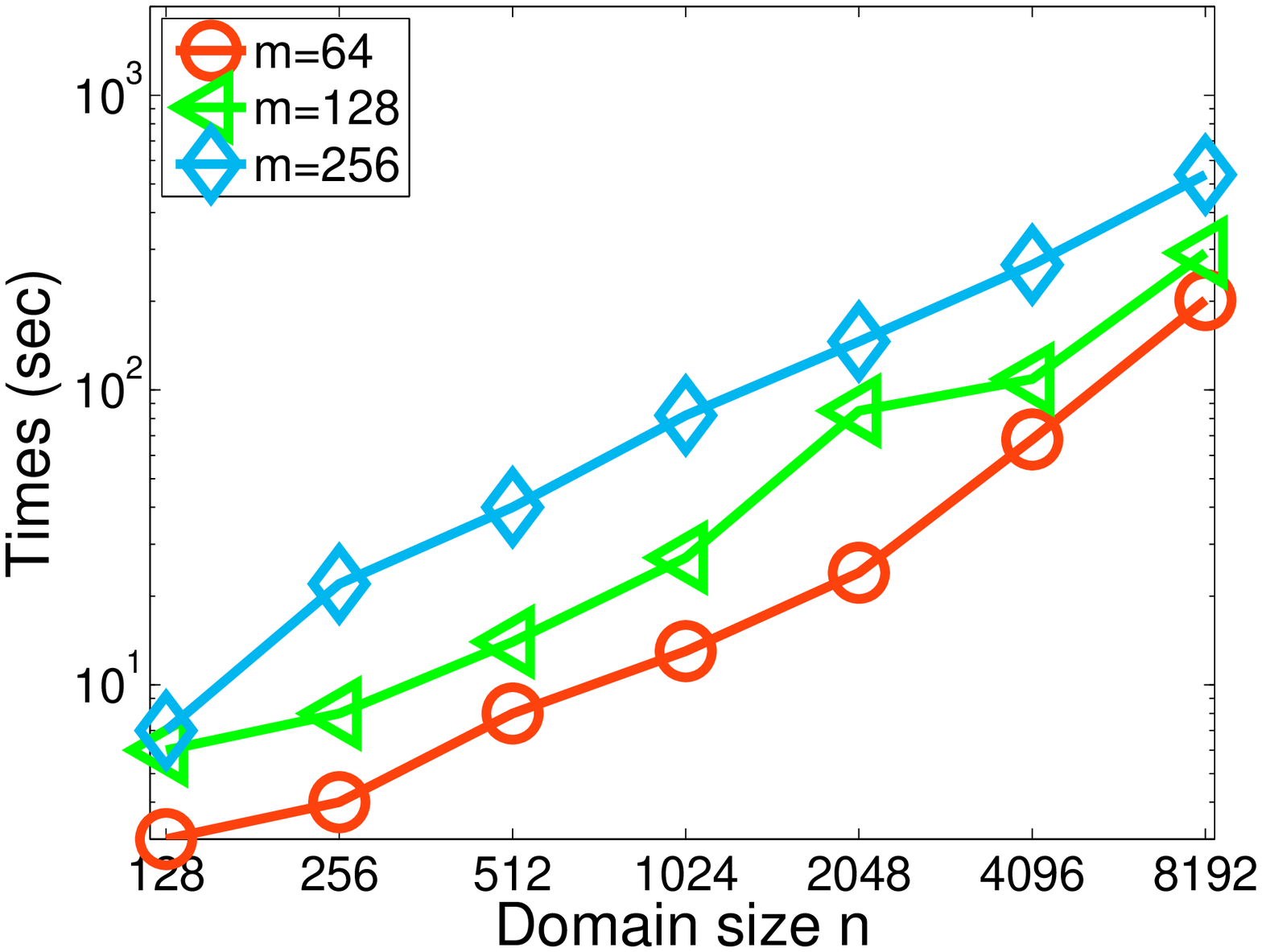}}
\caption{Scalability of LRM under $\epsilon$-differential privacy} \label{fig:exp:scal}
\end{figure*}

\begin{figure*}[!t]
\centering \subfigure[on workload \emph{WDiscrete}]
{\includegraphics[width=0.244\textwidth]{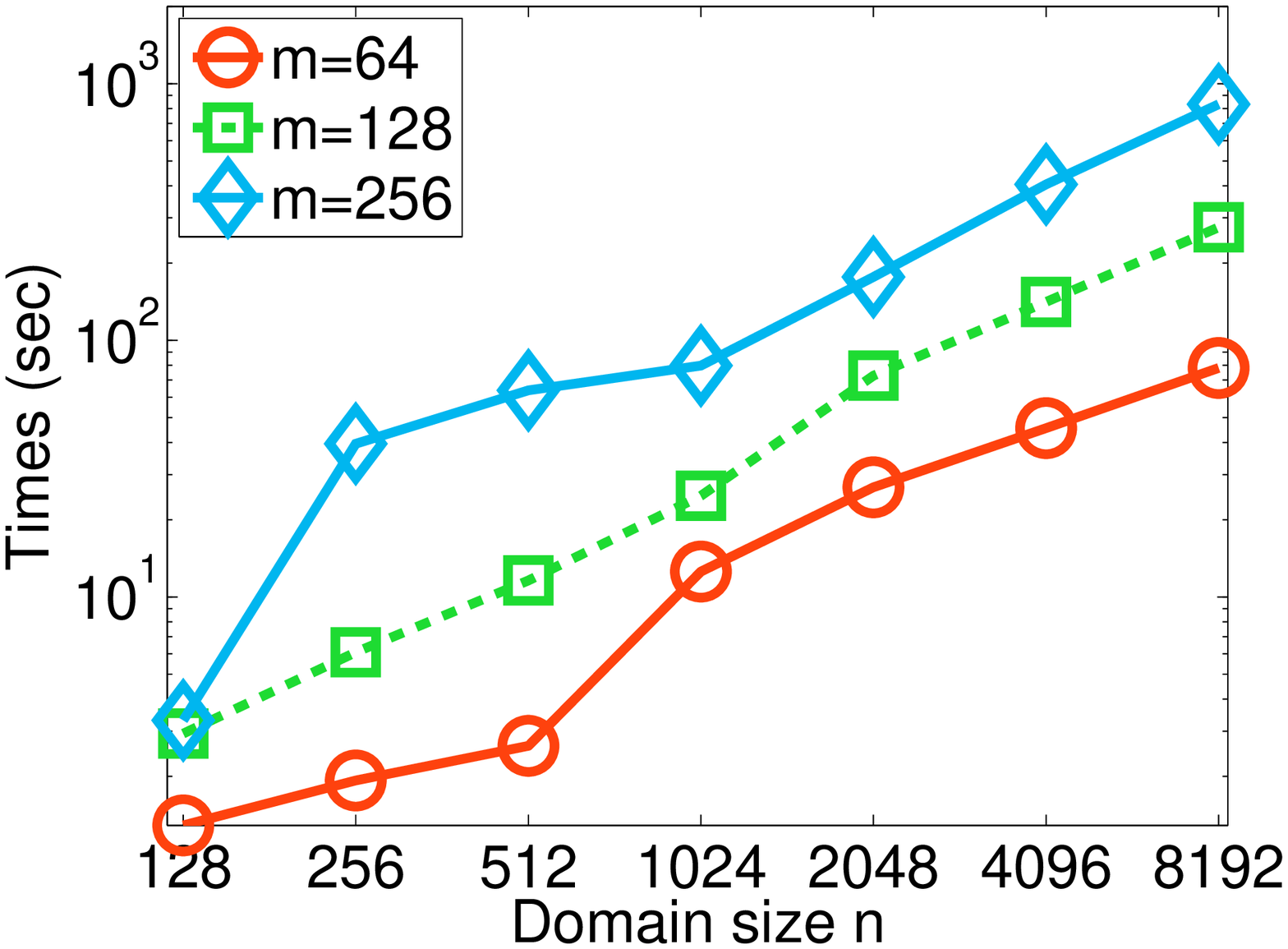}}
\centering \subfigure[on workload \emph{WRange}]
{\includegraphics[width=0.244\textwidth]{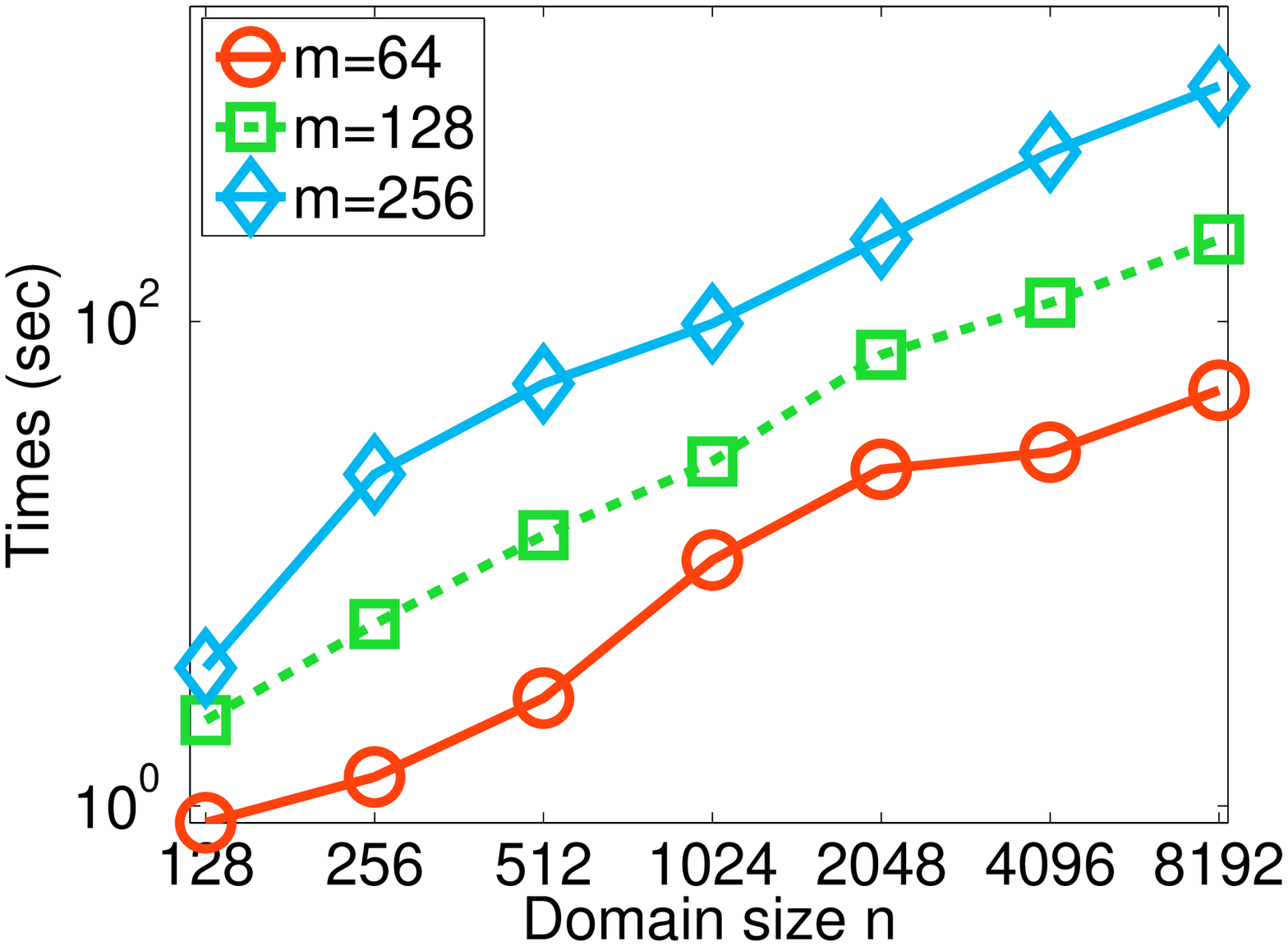}}
\centering \subfigure[on workload \emph{WMarginal}]
{\includegraphics[width=0.244\textwidth]{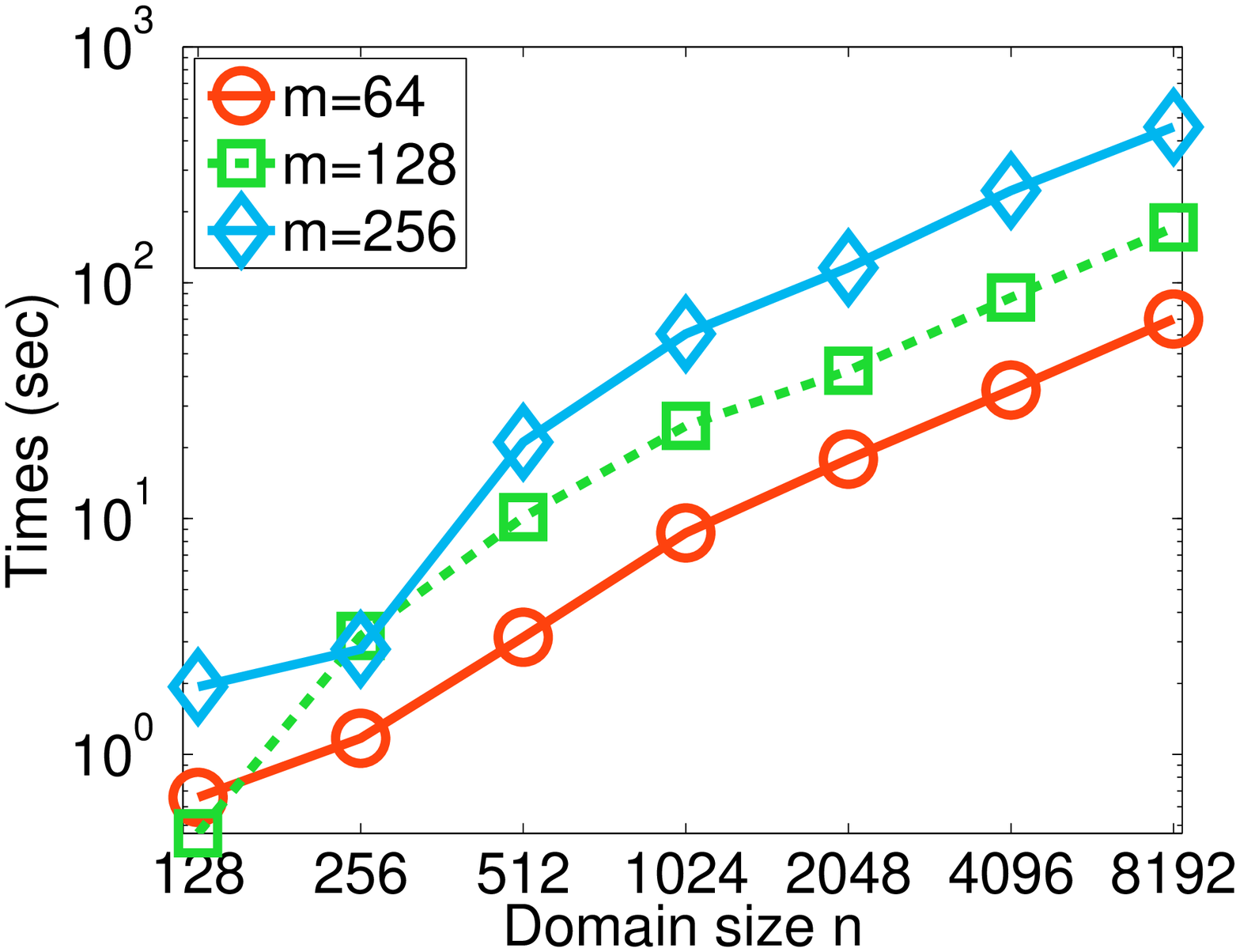}}
\centering \subfigure[on workload \emph{WRelated}]
{\includegraphics[width=0.244\textwidth]{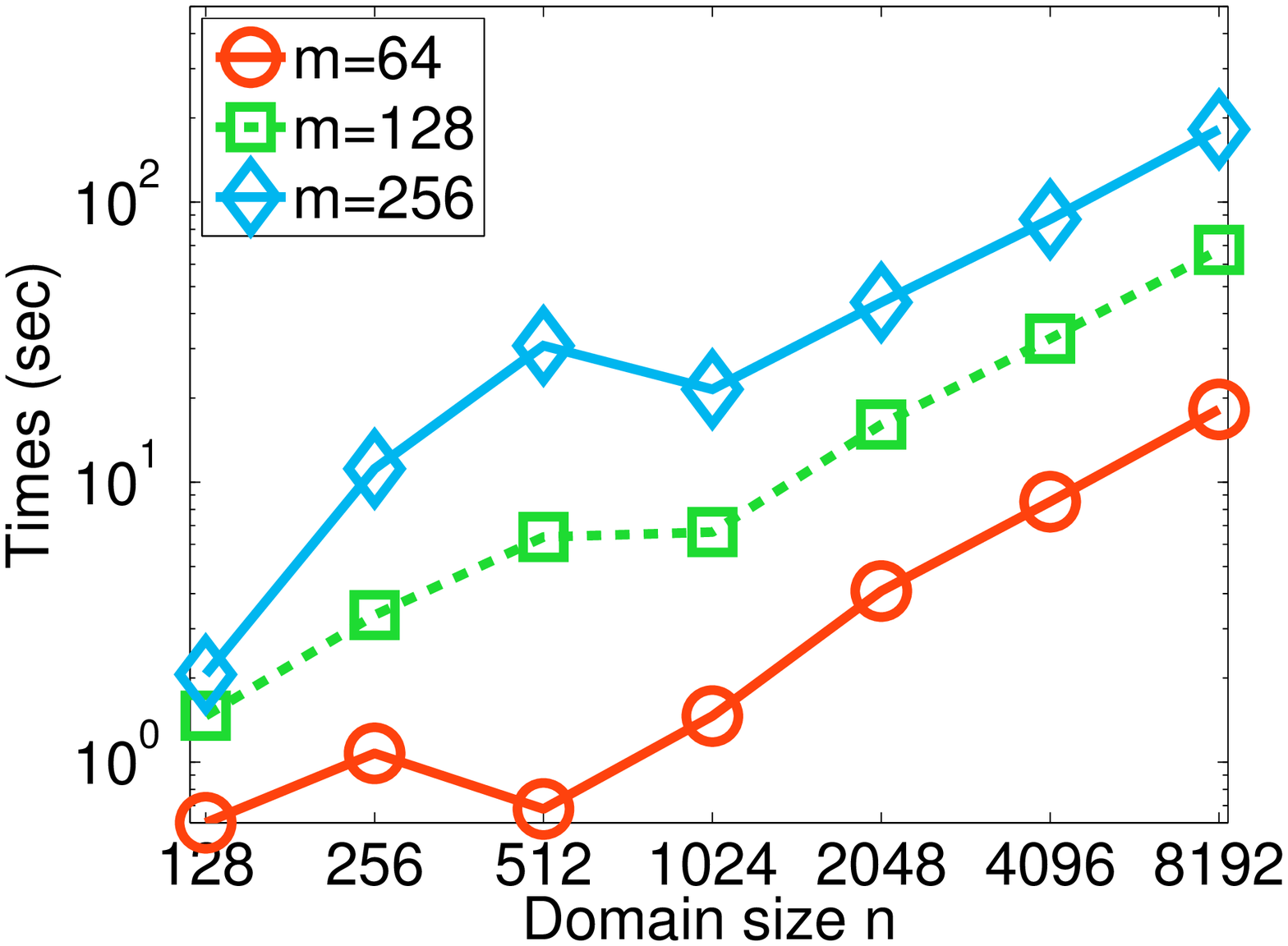}}
\caption{Scalability of LRM under
($\epsilon$, $\delta$)-differential privacy} \label{fig:exp:scal:app}
\end{figure*}

\section{Conclusions and Future Work}\label{sec:concl}

This paper presents the low rank mechanism (LRM), an optimization framework that minimizes the overall error of the results for a batch of linear queries under differential privacy. The proposed method is the first practical method for a large number of linear
queries, with an efficient and effective implementation using well
established optimization techniques. Experiments show that LRM
significantly outperforms other state-of-the-art differentially
private query processing mechanisms, often by orders of magnitude.
The current design of LRM focuses on exploiting the correlations
between different queries. One interesting direction for future work
is to further optimize LRM by utilizing also the correlations
between data values, e.g., as is done in \cite{XZXYY13,RN10,LZMY11}.

%
\bibliographystyle{acmsmall}
\bibliography{LowRankDP}

\appendix
\clearpage
\section{Implementation of the Approximate Matrix Mechanism}\label{sec:appedinex:matrix}

Li et al. \cite{LHR+10} describes two implementations of the Matrix Mechanism, which optimizes the accuracy of a batch of linear counting queries under $\epsilon$-differential privacy. The first directly solves the optimization program of the matrix mechanism which can be formulated as follows:
\begin{equation}\label{l1_app}
\min_{A\in \mathbb{R}^{r\times n}} \|A\|_{1,\infty}^2\mbox{tr}\left(WA^\dag A^{\dag T}W^T\right)
\end{equation}
where $A^\dag$ denotes the pseudo-inverse of matrix $A$, and $\|A\|_{1,\infty}$ is the maximum $\mathcal{L}_1$ norm of column vectors of $A$. It is shown that this problem can be formulated as a semidefinite program with rank constraint and solved by a sequence of semidefinite programs. However, it incurs high computational overhead, which is prohibitively expensive even for moderate-sized workload. The second implementation solves an approximate version of Program (\ref{l1_app}), as follows:
\begin{equation}\label{l2_app}
\min_{A\in \mathbb{R}^{r\times n}} \|A\|_{2,\infty}^2\mbox{tr}\left(WA^\dag
A^{\dag T}W^T\right)
\end{equation}
where $\|A\|_{2,\infty}$ is the maximum $\mathcal{L}_2$ norm of column vectors of $A$. Under $\epsilon$-differential privacy, Program (\ref{l2_app}) is essentially the $\mathcal{L}_2$ approximate of the original matrix mechanism formulation. The solution to Program (\ref{l2_app}) presented in \cite{LHR+10}, however, is rather complicated, and incurs high computational costs. In the following two subsections, we describe two implementation of the approximate matrix mechanism, the \emph{exponential smoothing mechanism} (\emph{ESM}) \cite{yuan2012low} and the adaptive mechanism (AM) \cite{li2012adaptive} for solving Program (\ref{l2_app}).

\subsection{Exponential Smoothing Mechanism}\label{sec:appedinex:esm}
In this subsection, we present a simpler and more efficient solution, referred to as the \emph{exponential smoothing mechanism} (\emph{ESM}), based on the methodology of \emph{exponential smoothing}. Observe that $\|A\|_{2,\infty}^2 = \max(\mbox{diag}(A^TA)) \footnote{We use the Matlab notations in this paper. When $\Delta$ is a matrix, $diag(\Delta)$ denotes a column vector formed from the main diagonal of $\Delta$, when $\Delta$ is a vector, $diag(\Delta)$
denotes a diagonal matrix with $\Delta$ in the main diagonal entries. Moreover, $\max(\cdot)$ retrieves the largest element of an array.}$, and $(A^TA)^{-1}=(A^TA)^{\dag}$ ($A$ has full column rank). Let
$M=A^TA$, we reformulate Program (\ref{l2_app}) as the following positive definite optimization problem:
\begin{equation}
\min_{M \in R^{n\times n}} G(M) = \max(\mbox{diag}(M))  \mbox{tr}(W
M^{-1}W^T) ~~~  s.t.~~ M \succ 0 \nonumber
\end{equation}

$A$ is given by $A=\sum_{i}^n \sqrt{\lambda_i} v_i v_i^T$, where
$\lambda_i, v_i$ are the $i$th eigenvalue and eigenvector of $M$,
respectively. Calculating the second term $\mbox{tr}(W M^{-1}W^T)$
is relatively straightforward. Since it is smooth, its gradient can
be computed as $- M^{-1}W^TWM^{-1}$. However, calculating the first
term $\max(\mbox{diag}(M))$ is harder since it is non-smooth.
Fortunately, inspired by \cite{dAspremont2007}, we can still use a
logarithmic and exponential function to approximate this term.

\textbf{Approximate the maximum positive number:} Since $M$ is
positive definite, $v=\mbox{diag}(M)>0$. we let $\mu>0$ be a sufficient small parameter and define:

\begin{equation}\label{MatMac:Fobj}
f_{\mu}(v)=\mu \log
\sum_i^n\left(\exp\left(\frac{v_i}{\mu}\right)\right)
\end{equation}
We then have ${\max}(v) \leq f_{\mu}\left(v\right) \leq  {\max}(v) +
\mu \log n$. The gradient of the objective function in Equation (\ref{MatMac:Fobj}) with respect
to $v$ can be computed as:

\begin{equation}\label{MatMac:Grad}
\frac{\partial f}{\partial v_i} =
\frac{\exp\left(\frac{v_i-{\max}(v)}{\mu}\right)}{\sum_j^n
\left(\exp\left(\frac{v_j-{\max}(v)}{\mu}\right) \right)} = \frac{\exp\left(\frac{v_i}{\mu}\right)}{\sum_j^n
\left(\exp\left(\frac{v_j}{\mu}\right) \right)},~~\forall i
\end{equation}
Since the second order hessian matrix of the objective function in Equation (\ref{MatMac:Fobj}) can be computed as:
\begin{equation}\label{MatMac:Hess}
\frac{\partial^2f }{\partial v\partial v}  = \frac{ diag(\exp(\frac{v}{\mu}))}{\mu \sum_j^n\left(\exp(\frac{v_j}{\mu})\right) } - \frac{ \exp(\frac{v}{\mu}) \exp(\frac{v}{\mu})^T}{\mu \left(\sum_j^n\left(\exp(\frac{v_j}{\mu})\right) \right)^2} = \mathbb{S} - \mathbb{T},\nonumber
\end{equation}
we have the upper bound of the spectral norm of the hessian: $|||\frac{\partial ^2 f}{\partial v\partial v} |||_2 =|||\mathbb{S} - \mathbb{T}|||_2 \leq |||\mathbb{S}|||_2 + |||\mathbb{T}|||_2 \leq \frac{1}{\mu} + \frac{1}{\mu} = \frac{2}{\mu}$. Therefore, the gradient of $f_{\mu}(v)$ is Lipschitz continuous with parameter $\omega = \frac{2}{\mu}$. If we set $\mu=\frac{\epsilon}{\log n}$, this becomes a uniform $\epsilon$-approximation of ${\max}(v)$ with a Lipschitz continuous gradient with constant $\omega=\frac{2}{\mu} = \frac{2\log
n}{\epsilon}$. In our experiments, we use $\mu=\frac{0.01}{\log n}$.

To mitigate the problems with large numbers, using the property of
the logarithmic and exponential functions, we can rewrite Equation (\ref{MatMac:Fobj}) and Equation (\ref{MatMac:Grad}) as:
$$f_{\mu}(v) = {\max(v)} + \mu \log \left( \sum_i^n
\exp\left(\frac{v_i-\max(v)}{\mu}\right)\right)$$
$$\frac{\partial f}{\partial v_i} = {\left(\sum_j^n
\exp\left(\frac{v_j-v_i}{\mu}\right)\right)}^{-1},~~\forall i$$

By the chain rule of differentiation in calculus, the gradient of $G(M)$ can be computed as:
$$\frac{\partial G}{\partial M} = diag(\frac{\partial f}{\partial v}) \cdot tr\left(WM^{-1}W^T\right) + f_{\mu}(v)\cdot \left(-M^{-1}W^TWM^{-1}\right)$$

Here $diag(\frac{\partial f}{\partial v})$ denotes a diagonal matrix with $\frac{\partial f}{\partial v}\in \mathbb{R}^n$ as the main diagonal entries. This formulation allows us to run the non-monotone spectral projected  gradient descent algorithm \cite{Birgin2000} on the cone of positive semidefiniteness. We use eigenvalue decomposition to trim the negative eigenvalues to maintain positive semidefiniteness of $M$, and iteratively improve the result. After the algorithm terminates, we return the final $M$ as the optimal solution to the program.

\subsection{Adaptive Mechanism}\label{sec:appedinex:am}

In this subsection, we briefly review the adaptive mechanism (AM) proposed in \cite{li2012adaptive}, a heuristic solution for the problem in Program (\ref{l2_app}). AM considers the following optimization problem:

\begin{equation} \label{adm}
\min_{\lambda \in \mathbb{R}^{n}} \sum_{i=1}^n \frac{d_i^2}{\lambda_i^2},\\
s.t.~(Q\odot Q) (\lambda\odot \lambda) \leq \textbf{1$_m$}
\end{equation}
where $Q$ is from the singular value decomposition of the workload matrix $W=QDP$ with $Q\in \mathbb{R}^{m\times n}, D\in \mathbb{R}^{n\times n}, P\in\mathbb{R}^{n\times n}$, and $d=diag(D)\in \mathbb{R}^{n}$, i.e., the diagonal values of $D$. Furthermore, $\odot$ is the Hadamard (entry-wise) product, \textbf{1$_m$} is a column vector of all entries equal to one. AM then computes the strategy matrix $A$ by
\begin{equation} \label{computeA}
A = Q diag(\lambda) \in \mathbb{R}^{m\times n}
\end{equation}
where $diag(\lambda)$ is a diagonal matrix with $\lambda$ as its diagonal values.

The optimization problem in (\ref{adm}) is non-convex since it contains quadratic term both in the objective and the constraint. By changing variable to $\lambda\odot \lambda = u$, we have the following equivalent optimization problem:
\begin{equation} \label{adm}
\min_{u \in \mathbb{R}^{n}} \sum_{i=1}^n \frac{d_i^2}{u_i},\\
s.t.~(Q\odot Q) u \leq \textbf{1$_m$},~u\geq0.
\end{equation}

\noi By introducing an auxiliary variable $v\in \mathbb{R}^{n}$, the optimization above can be reformulated as the following semidefinite program:
   \begin{eqnarray} \label{eq:sdp}
\min_{u\in \mathbb{R}^{n},u\in \mathbb{R}^{n}}~\sum_{i=1}^n v_i d_i^2,~s.t.~\left(Q\odot Q\right)u \leq \textbf{1$_m$},~
\begin{bmatrix}
u_i & 1 \\
1 & v_i
\end{bmatrix} \succeq 0,~\forall i \in [n]
\end{eqnarray}
which can be solved by off-the-shelf interior-point solvers.

 \begin{algorithm}[!h]
\caption{\label{alg:eigendesign} {\bf Adaptive Mechanism for Approximately Solving Problem (\ref{l2_app})}}
\begin{algorithmic}[1]
\STATE  Input: workload matrix $W\in \mathbb{R}^{m\times n}$
\STATE  Compute the SVD decomposition $W=QDP$ to obtain $Q\in \mathbb{R}^{m\times n}$ and $d=diag(D)$$\in \mathbb{R}^{n}$.
\STATE Solve the semidefinite program in Equation (\ref{eq:sdp}) and obtain $u$.
\STATE Compute $A'=Q diag(\sqrt{u}) \in \mathbb{R}^{m\times n}$ and $A''=diag(\sqrt{ \max(o) 1_n- o}) \in \mathbb{R}^{n\times n}$
   where $o_i=\|A_i'\|_2^2,~i=1,...n$, $o\in \mathbb{R}^{n}$.
\STATE \label{step:s4} Output the strategy matrix $A$: \begin{eqnarray}
A=\begin{bmatrix}
A' \\
A''
\end{bmatrix} \in \mathbb{R}^{(m+n)\times n}\nonumber
\end{eqnarray}
\end{algorithmic}
\end{algorithm}

The complete AM algorithm is summarized in Algorithm \ref{alg:eigendesign}. Given a workload matrix $W$, AM automatically selects a different set of ``eigen-queries'' $Q$ and use a nonnegative combination of $Q$ to compute the strategy matrix $A$ with respect to the workload matrix. First, in Step 2 the algorithm performs the SVD decomposition of $W$ to derive the eigen-queries $Q$. Based on the eigen-queries $Q$, AM aims to find the optimal linear combination $\lambda (\lambda\geq 0)$ with $\lambda=\sqrt{u}$ by solving the semidefinite program in Step 3. In Step 4, the matrix $A'$ that is constructed is a candidate strategy but may have one or more columns whose norm is less than the sensitivity. In this case, AM adds queries or completes columns in order to further reduce the expected error without raising the sensitivity. Essentially AM searches over a reduced subspace of $A$. Hence, the candidate strategy matrix $A'$ solved from the optimization problem in (\ref{l2_app}) does not guarantee to be the optimal  strategy since it is limited to a weighted nonnegative combination of the fixed eigen-queries $Q$ in Equation (\ref{computeA}).

%
%
%
%
%

\section{Asymptotic Error Bounds for LRM}\label{sec:appedinex:asymptotic:error:bound}

\subsection{LRM Error Bounds under $\epsilon$-Differential Privacy}\label{sec:appedinex:eps}

In this subsection, we prove the lower bound and upper bound of the error incurred by the optimal workload decomposition solved from Program (\ref{eqn:opt-problem}), and analyze the gap between the two bounds. First, we establish an error upper bound for LRM in the following lemma.

\begin{lemma}\label{lem:upper} \textbf{Error upper bound under $\epsilon$-differential privacy.} Given a workload matrix $W$ of rank $s$ with singular values $\{\lambda_1,\ldots,\lambda_s\}$, an upper bound of the expected squared error of $M_{LRM, \epsilon}(Q,D)$ w.r.t. the optimal decomposition $W=B^*L^*$ is $2\sum^s_{k=1}\lambda^2_k/\epsilon^2$.
\end{lemma}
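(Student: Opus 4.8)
The plan is to prove the upper bound by exhibiting one explicit, feasible workload decomposition whose induced error equals $2\sum_{k=1}^s\lambda_k^2/\epsilon^2$, and then appealing to the optimality of $(B^*,L^*)$. Recall from Lemma~\ref{lem:decomp_error} that \emph{any} decomposition $W=BL$ produces expected squared error exactly $\frac{2\Phi(B)\Delta(L)^2}{\epsilon^2}$, and that $(B^*,L^*)$ is by construction a minimizer of this quantity over all decompositions satisfying the constraint $\forall j\ \sum_i|L_{ij}|\le 1$ (Program~(\ref{eqn:opt-problem})). Hence it suffices to produce a single feasible pair $(B,L)$ with $\Delta(L)\le 1$ and $\Phi(B)=\sum_{k=1}^s\lambda_k^2$; the optimal decomposition can then only do at least as well.

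First I would take the ``noise-on-data'' decomposition $B=W$ (of size $m\times n$) and $L=I_n$ (the $n\times n$ identity, so $r=n$). Then $BL=W I_n=W$, and since each column of $I_n$ has a single nonzero entry equal to $1$, we get $\Delta(L)=\max_j\sum_i|(I_n)_{ij}|=1$; thus $(B,L)$ is feasible for Program~(\ref{eqn:opt-problem}). Next I would compute the scale: by Definition~\ref{def:scale}, $\Phi(B)=\mbox{tr}(B^TB)=\sum_{ij}W_{ij}^2=\|W\|_F^2$. Writing the SVD $W=U\Sigma V$ with $U^TU=I_s$, $VV^T=I_s$ and $\Sigma$ the diagonal matrix of singular values, the cyclicity of the trace gives $\|W\|_F^2=\mbox{tr}(V^T\Sigma U^TU\Sigma V)=\mbox{tr}(\Sigma^2 VV^T)=\mbox{tr}(\Sigma^2)=\sum_{k=1}^s\lambda_k^2$. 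Plugging $\Phi(B)=\sum_{k=1}^s\lambda_k^2$ and $\Delta(L)=1$ into Lemma~\ref{lem:decomp_error}, this decomposition has expected squared error $\frac{2\sum_{k=1}^s\lambda_k^2}{\epsilon^2}$.

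Finally, since $(B^*,L^*)$ minimizes $\frac{2\Phi(\cdot)\Delta(\cdot)^2}{\epsilon^2}$ over all feasible decompositions and $(W,I_n)$ is one such decomposition, the error of $M_{LRM,\epsilon}(Q,D)$ under the optimal decomposition is at most $\frac{2\sum_{k=1}^s\lambda_k^2}{\epsilon^2}$, as claimed. The only point I would be careful about — and the one closest to a genuine obstacle — is confirming that this decomposition actually lies in the feasible set of Program~(\ref{eqn:opt-problem}), i.e., that the number of intermediate queries $r$ may be taken as large as $n$; this is immediate from the formulation, where $r$ is a free parameter rather than fixed in advance, so the argument goes through. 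Everything else is a routine Frobenius-norm computation. (As a sanity remark, the bound is in general loose, since it is attained by the naive NOD strategy; the companion lower bound in Appendix~\ref{sec:appedinex:eps} quantifies the gap.)
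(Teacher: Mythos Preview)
Your proof is correct and follows essentially the same approach as the paper: exhibit the naive NOD decomposition $B=W$, $L=I_n$, check feasibility ($\Delta(L)=1$), compute $\Phi(B)=\|W\|_F^2=\sum_{k=1}^s\lambda_k^2$, and invoke Lemma~\ref{lem:decomp_error} together with the optimality of $(B^*,L^*)$. The paper's proof is slightly terser, but the argument is identical.
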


\begin{proof}
Consider the naive method NOD, which can be considered as a special case of LRM by setting $B=W$ and $L=I$ (i.e., identity matrix). Clearly, $\Delta(L)=1$. According to Lemma \ref{lem:decomp_error}, the expected squared error of this decomposition is:

$$2 \Phi(B) \Delta(L)^2 /\epsilon^2 = 2\|W\|^2_F/\epsilon^{2} = 2\sum^s_{k=1}\lambda^2_k/\epsilon^2$$

We reach the conclusion of the lemma.
\end{proof}

Next we derive a lower bound on the squared error for linear counting queries under $\epsilon$-differential privacy, using geometric analysis under orthogonal projection \cite{HT10}. To do this, we first present the following lemma, which is used later in our geometric analysis.

\begin{lemma} \label{lemma:orthvol:bound}
For all orthogonal $V \in \mathbb{R}^{s\times n}$, we have the following inequality:
$$ \Vol(VB^n_1) \geq  \Vol(B^s_2)\cdot n^{-\frac{s}{2}}  $$
where $\Vol(B^s_2)$ denotes the volume of unit Euclidean ball, and $\Vol(VB^n_1)$ denotes the volume of unit ball of the $\mathcal{L}_1$ norm on $\mathbb{R}^n$ after the orthogonal transformation under $V$.
\end{lemma}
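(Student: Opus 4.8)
The plan is to identify $VB^n_1$ as a symmetric polytope, show it contains a Euclidean ball of radius $n^{-1/2}$, and then take volumes; the inclusion is the whole content of the lemma.

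First I would observe that $B^n_1 = \mathrm{conv}(\pm e_1,\ldots,\pm e_n)$, so by linearity $VB^n_1 = \mathrm{conv}(\pm v_1,\ldots,\pm v_n)$, where $v_1,\ldots,v_n\in\mathbb{R}^s$ are the columns of $V$. Orthogonality of $V$ (orthonormal rows, which forces $s\le n$, consistent with $s=\mathrm{rank}(W)$) means $VV^T=I_s$, i.e. $\sum_{i=1}^n v_i v_i^T = I_s$. This single identity is the engine of the argument.

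Next I would pass to the polar body. Since $K\triangleq VB^n_1$ is a symmetric convex polytope with vertices $\pm v_i$, its polar is $K^\circ=\{\,y\in\mathbb{R}^s:\ |\langle y,v_i\rangle|\le 1\ \text{for all}\ i\,\}$. For any $y\in K^\circ$ we have $\|y\|_2^2 = y^T\big(\sum_i v_i v_i^T\big)y = \sum_i \langle y,v_i\rangle^2 \le n$, hence $K^\circ\subseteq\sqrt{n}\,B^s_2$. Dualizing this inclusion (both are symmetric convex bodies, and $(\sqrt{n}\,B^s_2)^\circ = n^{-1/2}B^s_2$) yields $n^{-1/2}B^s_2\subseteq K=VB^n_1$; in particular $VB^n_1$ is full-dimensional in $\mathbb{R}^s$. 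Taking volumes then gives $\Vol(VB^n_1)\ge\Vol(n^{-1/2}B^s_2)=n^{-s/2}\,\Vol(B^s_2)$, which is the claim.

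I do not expect a genuine obstacle: the only points requiring care are that polarity reverses inclusions for symmetric convex bodies and that $(rB^s_2)^\circ=r^{-1}B^s_2$. A duality-free alternative is to show directly that for every unit vector $u\in\mathbb{R}^s$, $\max_i|\langle u,v_i\rangle|\ge n^{-1/2}$, which follows from $\sum_i\langle u,v_i\rangle^2 = \|u\|_2^2 = 1$ together with the pigeonhole/averaging bound; this gives the same containment $n^{-1/2}B^s_2\subseteq VB^n_1$. Either way, everything reduces to the trace-type identity $\sum_i v_i v_i^T = I_s$ coming from the orthonormality of the rows of $V$.
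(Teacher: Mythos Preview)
Your proof is correct, but the paper reaches the same inclusion $n^{-1/2}B_2^s\subseteq VB_1^n$ by a shorter route. Rather than passing to the polar body, the paper works in $\mathbb{R}^n$ first: from Cauchy--Schwarz, $\|x\|_1\le\sqrt{n}\,\|x\|_2$, so $n^{-1/2}B_2^n\subseteq B_1^n$; applying $V$ and using that $VB_2^n=B_2^s$ for $V$ with orthonormal rows (the image of a Euclidean ball under an orthogonal projection is the lower-dimensional Euclidean ball of the same radius) gives $n^{-1/2}B_2^s\subseteq VB_1^n$ immediately. Your polar-duality argument (or its duality-free variant via $\sum_i\langle u,v_i\rangle^2=1$) is a genuinely different path to the same containment: it makes the role of the identity $\sum_i v_iv_i^T=I_s$ explicit and fits naturally into the convex-geometry toolbox, whereas the paper's argument is essentially a one-line norm comparison followed by a projection. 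Both are equally valid; the paper's is more elementary, yours is more structural.
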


\begin{proof}
By Cauchy-Schwarz inequality we have $\|x\|_1 \leq \sqrt{n} \|x\|$ for all $x\in \mathbb{R}^n$, therefore, the $n$-dimensional $\ell_1$ ball contains an $\ell_2$ ball of radius $n^{-\frac{1}{2}}$, i.e. $B_1^n \supseteq n^{-\frac{1}{2}} B_2^n$. Given an orthogonal transformation $V$, we obtain $VB_1^n \supseteq n^{-\frac{1}{2}} VB_2^n$. Moreover, because the orthogonal projection of a Euclidean ball is a lower-dimensional Euclidean ball of the same radius, it holds that $n^{-\frac{1}{2}} VB_2^n = n^{-\frac{1}{2}} B_2^s$. Therefore, the volume of $VB^n_1$ is bounded from below by:
\begin{eqnarray} \label{xxxxx}
\Vol(VB^n_1) &\geq& \Vol(n^{-\frac{1}{2}} B_2^s) \nonumber \\
&=&  \Vol(B_2^s) \cdot n^{-\frac{s}{2}} .\nonumber
\end{eqnarray}
\end{proof}

We are now ready to prove the error lower bound of LRM.

\begin{lemma}\label{lem:lower} \textbf{Error Lower Bound under $\epsilon$-differential privacy.} Given a workload matrix $W$ of rank $s$ with \emph{singular values} $\{\lambda_1,\ldots,\lambda_s\}$, the expected squared error of any $\epsilon$-differential privacy mechanism is at least
$$\Omega\left(\frac{s^4}{n}\left(\frac{2^s}{s!}\prod^s_{k=1}\lambda_k\right)^{2/s}/\epsilon^2\right)$$
\end{lemma}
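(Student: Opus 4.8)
The strategy is to invoke the Hardt–Talwar lower bound (cited as \cite{HT10}) for the squared error of any $\epsilon$-differentially private mechanism answering a batch of linear queries, and then specialize it to the low-rank setting using the geometric estimate in Lemma~\ref{lemma:orthvol:bound}. Recall that \cite{HT10} gives a lower bound of the form $\Omega(\epsilon^{-2} s^3 \Vol(W)^{2/s})$ for the total squared error, where $\Vol(W)$ is the volume of the convex body obtained by applying the workload transformation to the $\mathcal{L}_1$-unit ball; here the relevant dimension is the rank $s$ of $W$ rather than the ambient dimension $m$, since the image of $W$ lives in an $s$-dimensional subspace. So the first step is to write $W = U\Sigma V$ via SVD, with $V \in \mathbb{R}^{s\times n}$ orthogonal (rows orthonormal), $\Sigma = \mathrm{diag}(\lambda_1,\dots,\lambda_s)$, and $U \in \mathbb{R}^{m\times s}$ column-orthonormal. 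Then $W B_1^n = U \Sigma (V B_1^n)$, and since $U$ is an isometry onto its range, $\Vol(W B_1^n) = \Vol(\Sigma\, V B_1^n)$, the volume being taken in the $s$-dimensional sense.

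\textbf{Key steps in order.} First, factor out the singular values: $\Sigma\,(V B_1^n)$ is the image of the convex body $V B_1^n \subseteq \mathbb{R}^s$ under the diagonal scaling $\Sigma$, so $\Vol(\Sigma\, V B_1^n) = \left(\prod_{k=1}^s \lambda_k\right) \Vol(V B_1^n)$. Second, apply Lemma~\ref{lemma:orthvol:bound} to bound $\Vol(V B_1^n) \geq \Vol(B_2^s) \cdot n^{-s/2}$. Third, substitute the standard formula for the volume of the $s$-dimensional Euclidean unit ball, $\Vol(B_2^s) = \pi^{s/2}/\Gamma(s/2+1)$, and combine; the $2^s/s!$ factor in the statement suggests the authors instead compare against the volume of the $\mathcal{L}_1$-ball $B_1^s$ (whose volume is exactly $2^s/s!$) as an intermediate object, perhaps via $V B_1^n \supseteq$ some scaled copy, or they simply absorb the Gamma-function asymptotics. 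Either way, one obtains $\Vol(W B_1^n)^{2/s} \geq \Omega\!\left( \frac{s}{n}\left(\frac{2^s}{s!}\prod_{k=1}^s \lambda_k\right)^{2/s}\right)$ after using $\Gamma(s/2+1)^{2/s} = \Theta(s)$ (Stirling) and $(2^s/s!)^{2/s} = \Theta(1/s)$. Fourth, plug this into the $\Omega(\epsilon^{-2} s^3 \Vol(W)^{2/s})$ bound: the extra factor of $s$ from $s^3 \cdot s = s^4$ against the $1/n$ and the $(2^s/s!\prod\lambda_k)^{2/s}$ term yields exactly the claimed $\Omega\!\left(\frac{s^4}{n}\left(\frac{2^s}{s!}\prod_{k=1}^s\lambda_k\right)^{2/s}/\epsilon^2\right)$.

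\textbf{Main obstacle.} The delicate point is matching the precise constant-and-factorial bookkeeping in the statement — in particular, tracking whether the $2^s/s!$ enters via the volume of $B_2^s$ (through $\Gamma$) or through a direct comparison with $B_1^s$, and confirming that the dimension entering the Hardt–Talwar bound is the rank $s$ and not $m$. This requires checking that their geometric argument (which bounds error from below by the volume of the orthogonal projection of the noise-shaping body onto every $s$-dimensional coordinate subspace, then integrates) still goes through verbatim when $W$ has rank $s < m$: the image $W\mathbb{R}^n$ is $s$-dimensional, so the relevant ``hard directions'' all lie in that subspace and the projection-volume lemma (Lemma~\ref{lemma:orthvol:bound}, stated for orthogonal $V \in \mathbb{R}^{s\times n}$) is exactly the tool that supplies the $n^{-s/2}$ loss. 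I would expect the bulk of the write-up to be (i) a careful restatement of the Hardt–Talwar geometric lower bound adapted to rank-$s$ workloads, and (ii) the Stirling-type estimate $(2^s/s!)^{2/s} = \Theta(1/s)$ feeding the final simplification; the rest is the short chain of volume identities above.
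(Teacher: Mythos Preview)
Your proposal is correct and follows essentially the same route as the paper: invoke the Hardt--Talwar bound $\Omega(s^3\,\Vol(\cdot)^{2/s}/\epsilon^2)$ in the $s$-dimensional image, use the SVD $W=U\Sigma V$ to factor out $\prod_k\lambda_k$, apply Lemma~\ref{lemma:orthvol:bound} to bound $\Vol(VB_1^n)\ge\Vol(B_2^s)\,n^{-s/2}$, and then use Stirling to rewrite the Euclidean-ball volume in the form $(2^s/s!)^{2/s}=\Theta(1/s)$ so that the leading power becomes $s^4$. The only cosmetic difference is that the paper explicitly introduces the orthogonal projection $P=U^T$ and works with $\Vol(PWB_1^n)$ (citing Corollary~3.4 and Lemma~7.5 of \cite{HT10}), whereas you argue directly that $U$ is an isometry onto its range and take the $s$-dimensional volume of $WB_1^n$; these are the same computation.
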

\begin{proof}
Corollary 3.4 in \cite{HT10} proves that any $\epsilon$-differential privacy mechanism for linear counting queries incurs expected squared
error no less than: \footnote{\cite{HT10} used absolute errors, from which which we derived the squared errors.}
$$\Omega\left(k^3\left(\Vol(PWB^n_1)\right)^{2/k}/\epsilon^2\right)$$
In the formula above, $B^n_1$ is the $\mathcal{L}_1$-unit ball. $\Vol(PWB^n_1)$ is the volume of the unit ball after the linear
transformation $PW$, in which $P$ is any orthogonal linear
transformation matrix from $\mathbb{R}^n\mapsto\mathbb{R}^s$. To prove the lemma, we construct an orthogonal transformation $P=U^T$, where $U$ is obtained form the SVD decomposition of $W$ ($W=U\Sigma V$). According to properties of SVD decomposition, $U^TU$ and $VV^T$ are identity matrices. Thus, we have
$\Vol(PWB^n_1)=\Vol(PUVV^T\Sigma VB^n_1)=\Vol(V(V^T\Sigma
V)B^n_1)=\Vol(VB^n_1)\prod^s_{k=1}\lambda_k$. The last equality holds
due to Lemma 7.5 in \cite{HT10}. Consider the the convex body
$VB^n_1$. By Lemma \ref{lemma:orthvol:bound}, it has a lower bound $\Vol(B^s_2)\cdot  \left(n^{-\frac{s}{2}}\right) $. Note that $\Vol(B^s_2)$ can be computed
using the Gamma function \cite{ball1997elementary}: $\frac{\pi^{s/2}}{\Gamma(1+s/2)}$. Using the Stirling's formula, we know that $\Gamma(1+s/2)$ is roughly $\sqrt{2\pi} e^{-s/2} (s/2)^{s/2 + 1/2}$, so that $\Vol(B^s_2)$ is roughly $\left(\frac{2\pi e}{s}\right)^\frac{s}{2}$. Therefore, the
lower bound can be computed as: $\Omega\left( \frac{s^4}{n}(\frac{2^s}{s!}\prod^s_{k=1}\lambda_k)^{2/s}/\epsilon^2\right)$. We thus reach the conclusion of the lemma.
\end{proof}

Next we compare the error upper and lower bounds. The analysis involves a matrix-theory concept called the \emph{generalized condition number}.

\begin{definition}
\textbf{Generalized condition number.} Given a workload matrix $W$, the generalized condition number $\kappa(W)$ of $W$ defined as the product of the spectral norm of $W$ and that of its pseudo-inverse or equivalently, the ratio between the largest singular value of $W$ to the \emph{nonzero} smallest \cite{chen2005condition,beltran2011estimates}.
$$\kappa(W) \triangleq |||W|||_2 \cdot |||W^\dag|||_2 =\frac{\lambda_1}{\lambda_s}$$
Note that we always have $\kappa(W)\geq1$.
\end{definition}

\begin{theorem}\label{the:opt} When $s>5$, the gap between the upper and lower bounds of the error incurred by mechanism $M_{LRM, \epsilon}(Q,D)$ with the optimal decomposition $W=B^*L^*$ is $\mathcal{O}\left((\kappa(W))^2\frac{n}{s}\right)$.
\end{theorem}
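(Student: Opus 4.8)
The plan is to obtain the gap by dividing the upper bound of Lemma~\ref{lem:upper} by the lower bound of Lemma~\ref{lem:lower} and estimating the quotient in terms of the singular values $\lambda_1\geq\lambda_2\geq\cdots\geq\lambda_s>0$ of $W$. Write $U=2\sum_{k=1}^{s}\lambda_k^2/\epsilon^2$ for the upper bound and $L=\Omega\!\left(\frac{s^4}{n}\left(\frac{2^s}{s!}\prod_{k=1}^{s}\lambda_k\right)^{2/s}/\epsilon^2\right)$ for the lower bound; the $1/\epsilon^2$ factors and most of the dependence on the $\lambda_k$'s will cancel or collapse into $\kappa(W)$. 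Since $\lambda_1$ is the largest singular value, the numerator is immediately controlled by $\sum_{k=1}^{s}\lambda_k^2\leq s\lambda_1^2$, so $U\leq 2s\lambda_1^2/\epsilon^2$.

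First I would bound $L$ from below, which amounts to lower-bounding the two non-trivial factors $\left(\prod_{k}\lambda_k\right)^{2/s}$ and $\left(\frac{2^s}{s!}\right)^{2/s}$. For the first, the AM--GM inequality applied to the positive reals $\lambda_1,\dots,\lambda_s$ shows their geometric mean is at least their minimum, hence $\left(\prod_{k=1}^{s}\lambda_k\right)^{2/s}\geq\lambda_s^2$. For the second, Stirling's inequality $s!\leq e\sqrt{s}\,(s/e)^s$ (valid for all $s\geq1$) yields $(s!)^{1/s}\leq(s/e)\,e^{1/s}s^{1/(2s)}$, and $e^{1/s}s^{1/(2s)}$ is decreasing in $s$, so for $s>5$ it is bounded above by an absolute constant $C<1.4$; this is precisely where the hypothesis $s>5$ is used, turning the asymptotic Stirling estimate into one with an explicit constant. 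Consequently $(s!)^{2/s}\leq C^2 s^2/e^2$ and $\left(\frac{2^s}{s!}\right)^{2/s}=4/(s!)^{2/s}\geq 4e^2/(C^2 s^2)$. Substituting both estimates into $L$ gives
\[
L\;\geq\;\Omega\!\left(\frac{s^4}{n}\cdot\frac{4e^2}{C^2 s^2}\cdot\lambda_s^2/\epsilon^2\right)\;=\;\Omega\!\left(\frac{s^2\lambda_s^2}{n\,\epsilon^2}\right).
\]

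Finally I would form the ratio. Combining the bound on $U$ with the bound on $L$,
\[
\frac{U}{L}\;\leq\;\frac{2s\lambda_1^2/\epsilon^2}{\Omega\!\left(s^2\lambda_s^2/(n\epsilon^2)\right)}\;=\;\mathcal{O}\!\left(\frac{n\,\lambda_1^2}{s\,\lambda_s^2}\right)\;=\;\mathcal{O}\!\left(\frac{n}{s}\,\kappa(W)^2\right),
\]
where the last equality uses the definition $\kappa(W)=\lambda_1/\lambda_s$, giving exactly the claimed $\mathcal{O}\!\left(\kappa(W)^2\,\frac{n}{s}\right)$ bound on the gap. The main obstacle is not conceptual but a matter of careful bookkeeping: one must apply the Stirling estimate in the correct direction (since $s!$ sits in a denominator after passing to $\frac{2^s}{s!}$, an \emph{upper} bound on $s!$ is needed), verify that the resulting constant is genuinely absolute for every $s>5$, and likewise orient the AM--GM step so that it \emph{lower}-bounds the product term inside $L$. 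Once the directions are fixed, the rest is elementary arithmetic.
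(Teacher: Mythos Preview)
Your proposal is correct and follows essentially the same route as the paper: bound the numerator by $s\lambda_1^2$, lower-bound $\bigl(\prod_k\lambda_k\bigr)^{2/s}$ by $\lambda_s^2$, and control the factorial term so that the ratio collapses to $\mathcal{O}\bigl(\kappa(W)^2\,n/s\bigr)$. The only cosmetic differences are that the paper uses the elementary inequality $s!<(s/2)^s$ for $s>5$ in place of your Stirling estimate, and that the bound $\bigl(\prod_k\lambda_k\bigr)^{1/s}\geq\lambda_s$ is not ``AM--GM'' but the trivial fact that a geometric mean dominates the minimum term.
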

\begin{proof}
The theorem is established by comparing the upper and lower bounds in Lemmas \ref{lem:upper} and \ref{lem:lower}, as follows.

\begin{eqnarray}
 \frac{2\sum^s_{k=1}\lambda^2_k/\epsilon^2}{\frac{s^4}{n} \left(\frac{2^s}{s!}\prod^s_{k=1}\lambda_k\right)^{2/s}/\epsilon^2 }
&\leq& \frac{2\sum^s_{k=1}\lambda_1^2}{ \frac{s^4}{n}\left(\frac{2^s}{s!}\prod^s_{k=1}\lambda_s\right)^{2/s} }\nonumber\\
&\leq& \frac{2ns\lambda^2_1}{\left(\frac{2^s}{s!}\right)^{2/s}
\lambda_s^{2}s^4}\nonumber\\
&=&  \frac{2n\kappa(W)^2}{\left(\frac{2^s}{s!}\right)^{2/s}s^3}\nonumber\\
&\leq& \frac{1}{8} \kappa(W)^2\frac{n}{s} \nonumber
\end{eqnarray}

The last inequality holds due to the fact that $s!<\left(\frac{s}{2}\right)^s$ when $s>5$. Note that all the
inequalities above are tight, and the equalities hold when $\kappa(W)=1$,
i.e. $\lambda_1=\lambda_2=\ldots=\lambda_s$.
\end{proof}

From the theorem above, we draw the following interesting observations. (i) When the rank of the matrix is low (i.e., $s$ is small) and the batch queries are highly correlated ($\kappa(W) \gg 1$), then the ratio of the upper bound to the lower bound is large, meaning that LRM can potentially achieve lower error than NOD. (ii) Conversely, when the rank of the matrix is full rank ($s\rightarrow n~\text{and}~n \leq m$) and the batch queries are almost random or independent ($\kappa(W)\rightarrow 1$), then the achievable error rate of LRM converges to the upper error bound obtained by NOD. Therefore, in this situation, NOD might be good enough and no sophisticated algorithm is needed, which is validated by the experimental results in Section \ref{sec:vary_m}). These results are consistent with the work of \cite{ghosh2012universally}, who show that Laplace mechanism is optimal in a strong sense when answering a single linear query.

\subsection{LRM Error Bounds under ($\epsilon$, $\delta$)-Differential Privacy}\label{sec:appedinex:epsdel}

We first derive an upper bound for the error of LRM. Unlike the case of $\epsilon$-differential privacy, we have a tighter error upper bound than that obtained by naive methods. We introduce the concept of $\rho$-coherence of a matrix, which is similar to $\mu$-coherence \cite{candes2009exact} and $C$-coherence \cite{hardt2012beating} of a matrix in the low-rank optimization literature.

\begin{definition} \textbf{$\rho$-coherence of a matrix.} Given a matrix $W$ with its SVD decomposition that $W=U\Sigma V$, where $U\in \mathbb{R}^{m\times s}, \Sigma\in \mathbb{R}^{s\times s}, V \in \mathbb{R}^{s\times n}$. We say the matrix $W$ is $\rho$-coherent if
$$\rho(W) = \max_{i} \|V_i\|_2,~i=1,...,n$$
where $V_i$ is the $i$-th column of $V$. Note that we have $0<\rho(W)\leq1$.
\end{definition}

\begin{lemma}\label{lem:upper2} \textbf{Error Upper Bound under ($\epsilon$, $\delta$)-differential privacy.} Given a workload matrix $W$ of rank $s$ with singular values $\{\lambda_1,\ldots,\lambda_s\}$, an upper bound of the expected squared error of $M_{LRM, (\epsilon, \delta)}(Q,D)$ w.r.t. the optimal decomposition $W=B^*L^*$  is $(\rho(W))^2\sum^s_{k=1}\lambda_k^2/ h(\epsilon,\delta)^2$.
\end{lemma}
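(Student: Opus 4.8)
The plan is to exhibit one explicit \emph{feasible} decomposition of $W$ whose error can be evaluated in closed form, and then appeal to the optimality of $(B^*,L^*)$. The natural candidate comes from the SVD itself: writing $W=U\Sigma V$ with $U\in\mathbb{R}^{m\times s}$ having orthonormal columns, $\Sigma\in\mathbb{R}^{s\times s}$ diagonal with the singular values $\lambda_1,\ldots,\lambda_s$, and $V\in\mathbb{R}^{s\times n}$ having orthonormal rows, I would take $B=U\Sigma$ and $L=V$, so that $BL=U\Sigma V=W$ holds exactly.

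First I would verify that $(B,L)=(U\Sigma,V)$ is feasible for Program (\ref{eqn:opt-problem_app}); the only nontrivial constraint is $\forall j\ \sum_i L_{ij}^2\le 1$, i.e. $\|V_j\|_2\le 1$ for each column $V_j$ of $V$. Since $VV^T=I_s$, the matrix $V^TV$ is an orthogonal projection, so its diagonal entries lie in $[0,1]$; as the $j$-th diagonal entry equals $\|V_j\|_2^2$, we get $\|V_j\|_2\le 1$. The same computation identifies the $\mathcal{L}_2$ sensitivity of $L=V$ as $\Theta(L)=\max_j\|V_j\|_2=\rho(W)$ (and incidentally re-derives $\rho(W)\le 1$). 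Next I would compute the scale: because $U^TU=I_s$, $\Phi(B)=\|U\Sigma\|_F^2=\mathrm{tr}(\Sigma^TU^TU\Sigma)=\mathrm{tr}(\Sigma^T\Sigma)=\sum_{k=1}^s\lambda_k^2$.

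Plugging these two quantities into Lemma \ref{lem:decomp_error_app}, the expected squared error of $M_{LRM,(\epsilon,\delta)}$ under the decomposition $(U\Sigma,V)$ is $8\ln(2/\delta)\,\Phi(B)\,\Theta(L)^2/\epsilon^2=8\ln(2/\delta)\,\rho(W)^2\sum_{k=1}^s\lambda_k^2/\epsilon^2$, which, using $h(\epsilon,\delta)=\epsilon/\sqrt{8\ln(2/\delta)}$ (so that $8\ln(2/\delta)/\epsilon^2=1/h(\epsilon,\delta)^2$), equals $\rho(W)^2\sum_{k=1}^s\lambda_k^2/h(\epsilon,\delta)^2$. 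Since $(B^*,L^*)$ is by construction the minimizer of the expected squared error over all decompositions feasible for Program (\ref{eqn:opt-problem_app}), its error is at most that of the feasible candidate $(U\Sigma,V)$, giving the stated bound. If one insists on a decomposition literally satisfying $\Theta(L)=1$, rescaling via Lemma \ref{lem:rescale2} to $L'=V/\rho(W)$, $B'=\rho(W)U\Sigma$ produces an equivalent decomposition with the same error; this step is not needed for the inequality. I do not anticipate a real obstacle here: the only point needing a moment's care is the geometric fact that the columns of an orthonormal-row matrix have Euclidean norm at most one, which is exactly what simultaneously furnishes feasibility and pins the sensitivity of $L=V$ to $\rho(W)$.
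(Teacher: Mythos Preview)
Your proof is correct and essentially identical to the paper's: both exhibit the SVD-based decomposition and read off the error, the only cosmetic difference being that the paper rescales up front to $B=\rho(W)U\Sigma$, $L=V/\rho(W)$ so that $\Theta(L)=1$, whereas you keep $B=U\Sigma$, $L=V$ with $\Theta(L)=\rho(W)$ and mention the rescaling as an optional final step. Either way the computation $\Phi(B)\Theta(L)^2=\rho(W)^2\sum_k\lambda_k^2$ is the same.
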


\begin{proof}
To prove the lemma, we perform SVD decomposition of $W$, obtaining $W=U\Sigma V$. Then, we build a decomposition $B=\rho(W) U\Sigma$ and $L=\frac{1}{\rho(W)}V$. This is a valid decomposition of $W$, because $BL=\rho(W) U\Sigma \frac{1}{\rho(W)}V=U\Sigma V=W$.

Next we prove that $\Delta(L)=1$. According to properties of the SVD transformation, column vectors in $V$ are orthogonal vectors; hence, for every column $V_j$ in $V$, we have $\|V_j\|_2 \leq \rho(W)$. Therefore, $\Theta(L) = \max_j \left(\sum_i L_{ij}^2\right)^{1/2} = \max_j \frac{1}{\rho(W)} \|V_j\|_2 = 1$.

The expected squared error of this decomposition is then bounded by:

\begin{eqnarray}
\Phi(B) &=& \mbox{tr}(B^TB)/ h(\epsilon,\delta)^2\nonumber\\
 &=& \mbox{tr}((\rho(W) U\Sigma)^T(\rho(W) U\Sigma)) / h(\epsilon,\delta)^2 \nonumber\\
 &=& \rho(W)^2\mbox{tr}(\Sigma^TU^TU\Sigma))/ h(\epsilon,\delta)^2\nonumber\\
 &=& \rho(W)^2\sum^s_{k=1}\lambda^2_k / h(\epsilon,\delta)^2\nonumber
\end{eqnarray}

We thus reach the conclusion of the lemma.

\end{proof}

Note that since $\rho(W)\leq1$, the above error bound is no worse than the error obtained by NOD. Meanwhile, the proof essentially describes another simple solution whose accuracy is no worse than NOD.

We now focus on the error lower bound of LRM under ($\epsilon$, $\delta$)-differential privacy. This has already been studied in \cite{LiM13}, and we summarize their results with our notations in the following lemma.

\begin{lemma}\label{lem:lower2} \textbf{Error Lower Bound under ($\epsilon$, $\delta$)-differential privacy \cite{LiM13}.} Given a workload matrix $W$ of rank $s$ with \emph{singular values} $\{\lambda_1,\ldots,\lambda_s\}$, the expected squared error of $M_{LRM, (\epsilon, \delta)}(Q,D)$ w.r.t. the optimal decomposition $W=B^*L^*$ is at least
$$ \frac{1}{n h(\epsilon,\delta)^2 }\left( \sum_{i=1}^s \lambda_i \right)^2  $$
\end{lemma}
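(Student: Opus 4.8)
The plan is to obtain the bound from first principles, as the $\mathcal{L}_2$ analogue of the singular value bound of \cite{LiM13}; alternatively one could cite \cite{LiM13} directly and merely translate into the present notation. The starting point is that, by Lemma \ref{lem:rescale2}, the optimal decomposition may be taken with $\Theta(L^*)=1$, and then Lemma \ref{lem:decomp_error_app} together with $h(\epsilon,\delta)^{2}=\epsilon^{2}/(8\ln(2/\delta))$ shows that the error equals $\mbox{tr}\big((B^*)^{T}B^*\big)/h(\epsilon,\delta)^{2}$. Since Program (\ref{eqn:opt-problem_app}) minimises $\mbox{tr}(B^{T}B)$ over all $B$ with $BL=W$ and over all feasible $L$, and the minimum-Frobenius-norm solution of the linear system $BL=W$ in the unknown $B$ is $B=WL^{\dagger}$ --- which is a solution precisely when the row space of $W$ is contained in the row space of $L$, as feasibility requires --- it suffices to prove that $\mbox{tr}\big((L^{\dagger})^{T}W^{T}WL^{\dagger}\big)\geq\frac{1}{n}\big(\sum_{i=1}^{s}\lambda_i\big)^{2}$ for every $L$ with $\max_j\sum_i L_{ij}^{2}\leq 1$ whose row space contains that of $W$.

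Next I would change variables, putting $M:=L^{T}L\succeq 0$; note $\mathrm{range}(M)$ is exactly the row space of $L$. The standard pseudo-inverse identities give $\mbox{tr}\big((L^{\dagger})^{T}W^{T}WL^{\dagger}\big)=\mbox{tr}(M^{\dagger}W^{T}W)$, while the sensitivity constraint becomes $M_{jj}=\sum_i L_{ij}^{2}\leq 1$ for all $j$, hence $\mbox{tr}(M)\leq n$. The problem thus reduces to the purely linear-algebraic inequality $\mbox{tr}(M^{\dagger}W^{T}W)\geq\big(\sum_{i=1}^{s}\lambda_i\big)^{2}/n$ for every $M\succeq 0$ with $\mbox{tr}(M)\leq n$ and $\mathrm{range}(W^{T}W)\subseteq\mathrm{range}(M)$ (the latter being the feasibility condition, since $\mathrm{range}(W^{T}W)$ equals the row space of $W$).

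The heart of the argument is the trace inequality
\begin{equation}
\mbox{tr}(M^{\dagger}W^{T}W)\cdot\mbox{tr}(M)\ \geq\ \Big(\mbox{tr}\big((W^{T}W)^{1/2}\big)\Big)^{2}=\Big(\sum_{i=1}^{s}\lambda_i\Big)^{2}. \nonumber
\end{equation}
To prove it, use the SVD $W=U\Sigma V$ so that $W^{T}W=\sum_{i=1}^{s}\lambda_i^{2}v_i v_i^{T}$ with orthonormal right singular vectors $v_1,\dots,v_s$, each lying in $\mathrm{range}(M)$ by hypothesis. Then $\mbox{tr}(M^{\dagger}W^{T}W)=\sum_i\lambda_i^{2}\,v_i^{T}M^{\dagger}v_i$, and, extending $\{v_i\}$ to an orthonormal basis and using $M\succeq 0$, $\mbox{tr}(M)\geq\sum_i v_i^{T}Mv_i$. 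Applying Cauchy-Schwarz to these two sums, and then term by term the inequality $(v_i^{T}M^{\dagger}v_i)(v_i^{T}Mv_i)\geq(v_i^{T}v_i)^{2}=1$ --- which holds because $(M^{\dagger})^{1/2}M^{1/2}$ is the orthogonal projector onto $\mathrm{range}(M)$ and $v_i\in\mathrm{range}(M)$ --- yields the displayed bound. Dividing by $\mbox{tr}(M)\leq n$ gives $\mbox{tr}(M^{\dagger}W^{T}W)\geq\big(\sum_{i=1}^{s}\lambda_i\big)^{2}/n$, so the expected squared error of $M_{LRM,(\epsilon,\delta)}(Q,D)$ with the optimal decomposition is at least $\frac{1}{n\,h(\epsilon,\delta)^{2}}\big(\sum_{i=1}^{s}\lambda_i\big)^{2}$, as claimed.

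The trace inequality itself is just a double application of Cauchy-Schwarz; the part that needs care --- and the main obstacle --- is the bookkeeping for rank-deficient $L$ (the typical situation $r>s$): one must check that passing from $L$ to $M=L^{T}L$ loses nothing, that the feasibility condition $W=BL$ forces $\mathrm{range}(W^{T}W)\subseteq\mathrm{range}(M)$, and that $(M^{\dagger})^{1/2}M^{1/2}$ acts as the identity on each $v_i$, so that the pseudo-inverse manipulations and the term-wise bound are legitimate. Once those subspace inclusions are verified the result follows, and it coincides with the lower bound established in \cite{LiM13}.
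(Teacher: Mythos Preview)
Your proof is correct. The paper's own argument is considerably shorter: it observes that the column constraint $\sum_i L_{ij}^2\leq 1$ forces $\|L\|_F^2\leq n$, hence $\|B\|_F^2\geq \tfrac{1}{n}\|L\|_F^2\|B\|_F^2$, and then invokes the variational characterisation of the nuclear norm, $\|W\|_*=\min_{W=BL}\|B\|_F\|L\|_F$, to conclude $\|B\|_F^2\geq\tfrac{1}{n}\|W\|_*^2=\tfrac{1}{n}\big(\sum_i\lambda_i\big)^2$. Your route---reducing to $M=L^TL$ and proving $\mbox{tr}(M^\dagger W^TW)\cdot\mbox{tr}(M)\geq\big(\sum_i\lambda_i\big)^2$ by a double Cauchy--Schwarz with the projector identity $(M^\dagger)^{1/2}M^{1/2}v_i=v_i$---is essentially a direct, self-contained proof of the inequality $\|B\|_F\|L\|_F\geq\|W\|_*$ for factorisations $W=BL$, i.e., one direction of that variational formula. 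So the two proofs share the same skeleton (relax the per-column constraint to $\mbox{tr}(M)\leq n$, then bound below by the nuclear norm squared); the paper outsources the second step to a citation, while you carry it out by hand and in the process handle the rank-deficient bookkeeping explicitly. Your version is longer but fully self-contained; the paper's is a three-line argument once the nuclear-norm identity is granted.
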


The proof of the above result in \cite{LiM13} is rather complicated. In the following we provide a simple proof.

\begin{proof}
\begin{eqnarray}
\min_{W=BL,\atop \forall j \sum_i^r L_{ij}^2 \leq 1} \frac{1}{h(\epsilon,\delta)^2}\|B\|_F^2 &\geq& \frac{1}{n h(\epsilon,\delta)^2}\min_{W=BL} \|L\|_F^2\cdot\|B\|_F^2\nonumber\\
&=& \frac{1}{n h(\epsilon,\delta)^2} \left(\|W\|_*\right)^2\nonumber\\
&=& \frac{1}{n h(\epsilon,\delta)^2}\left( \sum_{i=1}^s \lambda_i \right)^2\nonumber
\end{eqnarray}
The first inequality is due to $\sum_{j}^n\left(\sum_i^r L_{ij}^2\right) \leq n$. Note that this inequality above is tight, and the equality holds when every column of $L$ lies on the surface of the unit ball. The first equality is due to the variational formulation of nuclear norm (see, e.g., \cite{srebro2004maximum}) that
$$\|W\|_* = \min_{B,L}~\|L\|_F\cdot||B||_F,~~s.t.~~W=BL.$$
We thus reach the conclusion of the lemma.
\end{proof}

We next compare the error upper bound and the error lower bound for LRM under ($\epsilon$, $\delta$)-differential privacy.

\begin{theorem}\label{the:opt} The ratio between the error upper and lower bounds of mechanism $M_{LRM, (\epsilon, \delta)}(Q,D)$ with the optimal decomposition $W=B^*L^*$ is bounded by $\mathcal{O}\left((\kappa(W))^2\frac{n}{s}\right)$.
\end{theorem}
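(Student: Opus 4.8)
The plan is to divide the error upper bound from Lemma~\ref{lem:upper2} by the error lower bound from Lemma~\ref{lem:lower2} and simplify. Both bounds carry the common factor $1/h(\epsilon,\delta)^2$, which cancels immediately, so the ratio reduces to
$$
\frac{(\rho(W))^2\,\sum_{k=1}^s \lambda_k^2}{\tfrac{1}{n}\left(\sum_{i=1}^s \lambda_i\right)^2}
\;=\; n\,(\rho(W))^2 \cdot \frac{\sum_{k=1}^s \lambda_k^2}{\left(\sum_{i=1}^s \lambda_i\right)^2}.
$$
The next step is to discard the coherence factor: since $0<\rho(W)\le 1$ by definition, $(\rho(W))^2\le 1$, so the ratio is at most $n\cdot \sum_{k}\lambda_k^2 / (\sum_{i}\lambda_i)^2$. (In passing, this shows the $(\epsilon,\delta)$-DP upper bound strictly improves on the naive bound whenever $W$ is incoherent, consistent with the remark following Lemma~\ref{lem:upper2}.)

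It then remains to bound the spectral ratio $\sum_k\lambda_k^2 / (\sum_i\lambda_i)^2$ using only the ordering $\lambda_1\ge\cdots\ge\lambda_s>0$. Bounding each of the $s$ terms in the numerator above by $\lambda_1^2$ gives $\sum_k\lambda_k^2\le s\lambda_1^2$; bounding each of the $s$ terms in the denominator's sum below by $\lambda_s$ gives $\sum_i\lambda_i\ge s\lambda_s$, hence $(\sum_i\lambda_i)^2\ge s^2\lambda_s^2$. Combining these,
$\sum_k\lambda_k^2 / (\sum_i\lambda_i)^2 \le \lambda_1^2/(s\lambda_s^2) = \kappa(W)^2/s$,
so the ratio is at most $n\,\kappa(W)^2/s = \mathcal{O}\!\left((\kappa(W))^2 \tfrac{n}{s}\right)$, which is the claim. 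The same chain of elementary inequalities that was used in the $\epsilon$-differential privacy analysis of Appendix~\ref{sec:appedinex:eps} thus applies here essentially verbatim.

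There is no substantive obstacle: the whole argument is a two-line chain of elementary inequalities once Lemmas~\ref{lem:upper2} and~\ref{lem:lower2} are available. The only points that require a moment of care are (i) invoking $\rho(W)\le 1$ to drop the coherence term — without it one would keep a harmless but non-standard extra $(\rho(W))^2$ factor — and (ii) checking tightness: every inequality above becomes an equality when $\rho(W)=1$ and $\lambda_1=\cdots=\lambda_s$, i.e. when $\kappa(W)=1$, in which case the ratio equals exactly $n/s$. This mirrors the exact-privacy case, so the qualitative interpretation transfers: LRM's advantage over naive mechanisms is governed jointly by the rank $s$ (smaller is better) and the conditioning $\kappa(W)$ (more ill-conditioned workloads leave more room for improvement).
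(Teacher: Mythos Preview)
Your proof is correct and follows essentially the same approach as the paper: form the ratio of Lemma~\ref{lem:upper2} over Lemma~\ref{lem:lower2}, cancel $h(\epsilon,\delta)^2$, and bound the spectral quotient via $\sum_k\lambda_k^2\le s\lambda_1^2$ and $(\sum_i\lambda_i)^2\ge s^2\lambda_s^2$. The only cosmetic difference is that the paper retains the $(\rho(W))^2$ factor through the final line, arriving at the slightly sharper $(\kappa(W)\rho(W))^2\,n/s$ before absorbing $\rho(W)\le 1$ into the big-$\mathcal{O}$, whereas you discard it up front; either order is fine.
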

\begin{proof}

We compare the upper and lower bounds in \ref{lem:upper2} and \ref{lem:lower2}, as follows.

\begin{eqnarray}
\frac{ \rho(W)^2\sum^s_{k=1}\lambda_k^2/ h(\epsilon,\delta)^2 }{ \frac{1}{n  }\left( \sum_{i=1}^s \lambda_i \right)^2 / h(\epsilon,\delta)^2}
&= & \frac{ \rho(W)^2\sum^s_{k=1}\lambda_k^2}{ \frac{1}{n  }\left( \sum_{i=1}^s \lambda_i \right)^2}  \nonumber\\
&\leq& \frac{ s\lambda^2_1 \rho(W)^2}{\frac{1}{n}
\lambda_s^{2}s^2}\nonumber\\
&=&  \left(\kappa(W) \rho(W)\right)^2 \frac{n} {s}\nonumber
\end{eqnarray}

We thus reach the conclusion of the theorem.
\end{proof}

The above theorem leads to similar conclusions as in the case of $\epsilon$-differential privacy, except that here we compare LRM with an improved version of NOD described in the proof of Lemma \ref{lem:upper2}. Meanwhile, the above ratio also involves an additional parameter $\rho$, i.e., the coherence number of the workload matrix.

\end{document}